\documentclass{IEEEojcsys}

\usepackage[colorlinks,urlcolor=blue,linkcolor=blue,citecolor=blue]{hyperref}
% This file defines the handy packages, notations, commands for .tex file
% To use this file, put this file in the same directory as the main .tex file, and add \input{preamble.tex} at the beginning
% Check LatexTemplate.tex for example illustrations

%%\usepackage[letterpaper, margin=1in]{geometry}
\usepackage[english]{babel}
\usepackage[utf8]{inputenc}
\usepackage{amsmath}
\usepackage{amsfonts}
\usepackage{amsthm}
\usepackage{xcolor}
%\usepackage{appendix}
% \usepackage[resetlabels]{multibib}

%------------------------------------------ Package for Non-italic Greek Letter -------------------------------------
\usepackage{upgreek}
%--------------------------------------------------------------------------------------------------------------------

%-------------------------------------------- Packages for displaying Image------------------------------------------
\usepackage{graphicx}
\ifCLASSOPTIONcompsoc
\usepackage[caption=false,font=normalsize,labelfon
t=sf,textfont=sf]{subfig}
\else
\usepackage[caption=false,font=footnotesize]{subfig}
\fi
\graphicspath{ {images/}{images/clustering/}{images/bisimulation/}{images/stab/}{images/LQR/}{images/newImages/}{images/photos/}}
%\usepackage{caption}
%--------------------------------------------------------------------------------------------------------------------

%----------------------------------------- Packages for Special Characters ------------------------------------------
\usepackage{pifont}
%--------------------------------------------------------------------------------------------------------------------

%%---------------------------------------------- Packages for Hyperlink ----------------------------------------------
%\usepackage{hyperref}
%\hypersetup{
	%	colorlinks,
	%	linkcolor={red!50!black},
	%	citecolor={blue!50!black},
	%	urlcolor={blue!80!black}
	%}
%%--------------------------------------------------------------------------------------------------------------------

%---------------------------------------------- Packages for Algorithm-----------------------------------------------
\usepackage[ruled, vlined, linesnumbered]{algorithm2e}
%\usepackage{algorithm2e}
%--------------------------------------------------------------------------------------------------------------------

%------------------------------------------ Package for String Operation --------------------------------------------
% I use the \IfStrEq command to judge if the input is number ''1''.
\usepackage{xstring}
%--------------------------------------------------------------------------------------------------------------------

%%---------------------------------------------- Packages for Citations-----------------------------------------------
\usepackage{cite}
\usepackage[resetlabels, labeled]{multibib} % Needs to be loaded after the "cite" package
\newcites{R}{Response References}
%\usepackage[numbers]{natbib}
%\bibliographystyle{apalike} % Author+Year
%% If encounter issue "natbib Error: Bibliography not compatible with author-year citations", delete .aux and .bbl 
%% file and recompile.
%%--------------------------------------------------------------------------------------------------------------------

%%------------------------------------------ Packages for Bibliographic Entries --------------------------------------
%% Put what the text shows up in the reference list into the text
%\usepackage{bibentry}
%\nobibliography*
%%--------------------------------------------------------------------------------------------------------------------

%------------------------------------------------ Packages for Tables------------------------------------------------
\usepackage{multirow}
\usepackage{calc}
\usepackage{adjustbox}
\usepackage{booktabs}

% By default, in the table, the content in the cell tend to move up a little rather than stay in the middle, \mytabspace command can correct his
% This command will not change the row height, but only move the cell content.
% This command can only be applied to a specific cell

% Commands used to ENLARGE the top and bottom margin of an entire row, and the row height will increase accordingly
       
% "top" strut; Change the last variable to adjust the top margin of a row in the table
 
% "bottom" strut; Change the first variable to adjust the bottom margin of a row in the table, less frequently used.
 % top&bottom struts

\usepackage{makecell}
%--------------------------------------------------------------------------------------------------------------------

%----------------------------- Packages for Theorems and Indicator Function -----------------------------------------

\usepackage{bbm}
\newtheorem{setup}{Setup}
\newtheorem{fact}{Fact}

\newtheorem{problem}{Problem}
\newtheorem*{problem*}{Problem}

\newtheorem{theorem}{Theorem}
\newtheorem{lemma}{Lemma}
\newtheorem{corollary}{Corollary}
\newtheorem{proposition}{Proposition}

\newtheorem{definition}{Definition}
\newtheorem{assumption}{Assumption}

%\newtheorem*{definition*}{Definition}
%\newtheorem*{assumption*}{Assumption}
%\newtheorem*{problem*}{Problem}
%\newtheorem*{theorem*}{Theorem}
%\newtheorem*{lemma*}{Lemma}
%\newtheorem*{corollary*}{Corollary}
%\newtheorem*{proposition*}{Proposition}
%\newtheorem*{remark*}{Remark}
%
%% The following defines command that can customize theorem labelling
%\newtheorem{innercustomgeneric}{\customgenericname}
%\providecommand{\customgenericname}{}
%\newcommand{\newcustomtheorem}[2]{%
	%	\newenvironment{#1}[1]
	%	{%
		%		\renewcommand\customgenericname{#2}%
		%		\renewcommand\theinnercustomgeneric{##1}%
		%		\innercustomgeneric
		%	}
	%	{\endinnercustomgeneric}
	%}
%\newcustomtheorem{customtheorem}{Theorem}
%\newcustomtheorem{customlemma}{Lemma}
%\newcustomtheorem{customproposition}{Proposition}
%--------------------------------------------------------------------------------------------------------------------

%------------------------------ Package for Changing the Bullet Style of Enumerate ----------------------------------

\usepackage{enumitem}
%\begin{enumerate}[label=(\roman*)]
%\item
%\end{enumerate}
%--------------------------------------------------------------------------------------------------------------------

%------------------------- Command for Allowing Align Environment to Break Between Pages ----------------------------
\allowdisplaybreaks
%\begin{align*}
%a\\
%b\\
%c\\
%d\\
%e\stepcounter{equation}\tag{\theequation}\label{eq_76}\\
%f\\
%g\\
%h\\
%i
%\end{align*}

%-------------------------------------------- Commands for Letters ----------------------------------------

\newcommand{\bv}[1]{\mathbf{#1}}		% Bold variable for English letters
\newcommand{\bvgrk}[1]{{\boldsymbol{#1}}}	% Bold variable for Greek letters
%%%%%%%%%%%%%%%%%%%%%%%%%%%%%%%%%%%%%%%%%%%%%%%%%%%%%%%%%%%%%%%%%%
% Small letters
%%%%%%%%%%%%%%%%%%%%%%%%%%%%%%%%%%%%%%%%%%%%%%%%%%%%%%%%%%%%%%%%%%
% Small letters with bar

% \newcommand{\hbar}{\bar{h}}

%\newcommand{\jbar}{\bar{j}}
\newcommand{\jbar}{\bar{j \phantom{\tiny \,}} \kern -0.1em}

% Small letters with hat
\newcommand{\ahat}{\hat{a}}

\newcommand{\ghat}{\hat{g}}

\newcommand{\jhat}{\hat{j \phantom{\tiny \,}} \kern -0.1em}

\newcommand{\phat}{\hat{p}}
\newcommand{\qhat}{\hat{q}}
\newcommand{\rhat}{\hat{r}}

% Small letters with tilde

% \newcommand{\jtil}{\tilde{j}}
\newcommand{\jtil}{\tilde{j \phantom{\tiny \,}} \kern -0.1em}

%%%%%%%%%%%%%%%%%%%%%%%%%%%%%%%%%%%%%%%%%%%%%%%%%%%%%%%%%%%%%%%%%%
% Small bold letters for vectors
%%%%%%%%%%%%%%%%%%%%%%%%%%%%%%%%%%%%%%%%%%%%%%%%%%%%%%%%%%%%%%%%%%
\newcommand{\va}{\bv{a}}
\newcommand{\vb}{\bv{b}}
\newcommand{\vc}{\bv{c}}

\newcommand{\vj}{\bv{j}}

\newcommand{\vs}{\bv{s}}

\newcommand{\vu}{\bv{u}}

\newcommand{\vw}{\bv{w}}
\newcommand{\vx}{\bv{x}}
\newcommand{\vy}{\bv{y}}

% Small bold letters with bar

% \newcommand{\vjbar}{\bar{\vj}}
\newcommand{\vjbar}{\bar{\vj \phantom{\tiny \,}} \kern -0.1em}

% Small bold letters with hat

% \newcommand{\vjhat}{\hat{\vj}}
\newcommand{\vjhat}{\hat{\vj \phantom{\tiny \,}} \kern -0.1em}

\newcommand{\vuhat}{\hat{\vu}}

\newcommand{\vxhat}{\hat{\vx}}

% Small bold letters with tilde

% \newcommand{\vjtil}{\tilde{\vj}}
\newcommand{\vjtil}{\tilde{\vj \phantom{\tiny \,}} \kern -0.1em}

%%%%%%%%%%%%%%%%%%%%%%%%%%%%%%%%%%%%%%%%%%%%%%%%%%%%%%%%%%%%%%%%%%
% Big letters
%%%%%%%%%%%%%%%%%%%%%%%%%%%%%%%%%%%%%%%%%%%%%%%%%%%%%%%%%%%%%%%%%%
% Big letters with bar
\newcommand{\Abar}{\bar{A}}
\newcommand{\Bbar}{\bar{B}}

\newcommand{\Jbar}{\bar{J}}

\newcommand{\Wbar}{\bar{W}}

% Big letters with hat

\newcommand{\Jhat}{\hat{J}}

% Big letters with tilde

\newcommand{\Ttil}{\tilde{T}}

% Big letters in calligraphy
\newcommand{\Acal}{\mathcal{A}}

\newcommand{\Dcal}{\mathcal{D}}

\newcommand{\Gcal}{\mathcal{G}}
\newcommand{\Hcal}{\mathcal{H}}

\newcommand{\Kcal}{\mathcal{K}}
\newcommand{\Lcal}{\mathcal{L}}

\newcommand{\Ocal}{\mathcal{O}}

\newcommand{\Rcal}{\mathcal{R}}
\newcommand{\Scal}{\mathcal{S}}
\newcommand{\Tcal}{\mathcal{T}}

\newcommand{\Xcal}{\mathcal{X}}

%%%%%%%%%%%%%%%%%%%%%%%%%%%%%%%%%%%%%%%%%%%%%%%%%%%%%%%%%%%%%%%%%%
% Big bold letters for Matrices
%%%%%%%%%%%%%%%%%%%%%%%%%%%%%%%%%%%%%%%%%%%%%%%%%%%%%%%%%%%%%%%%%%
\newcommand{\vA}{\bv{A}}
\newcommand{\vB}{\bv{B}}
\newcommand{\vC}{\bv{C}}
\newcommand{\vD}{\bv{D}}
\newcommand{\vE}{\bv{E}}
\newcommand{\vF}{\bv{F}}

\newcommand{\vH}{\bv{H}}
\newcommand{\vI}{\bv{I}}

\newcommand{\vK}{\bv{K}}

\newcommand{\vM}{\bv{M}}

\newcommand{\vO}{\bv{O}}
\newcommand{\vP}{\bv{P}}
\newcommand{\vQ}{\bv{Q}}
\newcommand{\vR}{\bv{R}}
\newcommand{\vS}{\bv{S}}
\newcommand{\vT}{\bv{T}}
\newcommand{\vU}{\bv{U}}
\newcommand{\vV}{\bv{V}}
\newcommand{\vW}{\bv{W}}
\newcommand{\vX}{\bv{X}}

% Big bold letters with bar
\newcommand{\vAbar}{\bar{\bv{A}}}
\newcommand{\vBbar}{\bar{\bv{B}}}

\newcommand{\vKbar}{\bar{\bv{K}}}

\newcommand{\vPbar}{\bar{\bv{P}}}

\newcommand{\vSbar}{\bar{\bv{S}}}
\newcommand{\vTbar}{\bar{\bv{T}}}
\newcommand{\vUbar}{\bar{\bv{U}}}

% Big bold letters with hat
\newcommand{\vAhat}{\hat{\bv{A}}}
\newcommand{\vBhat}{\hat{\bv{B}}}

\newcommand{\vKhat}{\hat{\bv{K}}}

\newcommand{\vPhat}{\hat{\bv{P}}}

\newcommand{\vShat}{\hat{\bv{S}}}
\newcommand{\vThat}{\hat{\bv{T}}}

% Big bold letters with tilde

% Big bold letters in calligraphy
\usepackage{bm}
\newcommand{\vAcal}{\bm{\mathcal{A}}}

\newcommand{\vAcalhat}{\hat{\bm{\Acal} \phantom{a}} \kern-0.5em}
\newcommand{\vAcalcheck}{\check{\bm{\Acal} \phantom{a}} \kern-0.5em}
\newcommand{\vAcalbar}{\bar{\bm{\Acal} \phantom{a}} \kern-0.5em}

%%%%%%%%%%%%%%%%%%%%%%%%%%%%%%%%%%%%%%%%%%%%%%%%%%%%%%%%%%%%%%%%%%
% Small Greek letters
%%%%%%%%%%%%%%%%%%%%%%%%%%%%%%%%%%%%%%%%%%%%%%%%%%%%%%%%%%%%%%%%%%

 	% Better looking Chi
% Small Greek letters with bar

\newcommand{\kappabar}{\bar{\kappa}}

\newcommand{\taubar}{\bar{\tau}}

% Small Greek letters with hat

\newcommand{\xihat}{\hat{\xi}}

\newcommand{\rhohat}{\hat{\rho}}

\newcommand{\omegahat}{\hat{\omega}}

% Small Greek letters with tilde

%%%%%%%%%%%%%%%%%%%%%%%%%%%%%%%%%%%%%%%%%%%%%%%%%%%%%%%%%%%%%%%%%%
% Small bold Greek letters
%%%%%%%%%%%%%%%%%%%%%%%%%%%%%%%%%%%%%%%%%%%%%%%%%%%%%%%%%%%%%%%%%%

\newcommand{\vmu}{\bvgrk{\upmu}}

\newcommand{\vpi}{\bvgrk{\uppi}}

% Small bold Greek letters with bar

% Small bold Greek letters with hat

\newcommand{\vmuhat}{\hat{\bvgrk{\upmu}}}

% Small bold Greek letters with tilde

%%%%%%%%%%%%%%%%%%%%%%%%%%%%%%%%%%%%%%%%%%%%%%%%%%%%%%%%%%%%%%%%%%
% Big Greek letters
%%%%%%%%%%%%%%%%%%%%%%%%%%%%%%%%%%%%%%%%%%%%%%%%%%%%%%%%%%%%%%%%%%
% Big Greek letters with bar

% Big Greek letters with hat

% Big Greek letters with tilde

%%%%%%%%%%%%%%%%%%%%%%%%%%%%%%%%%%%%%%%%%%%%%%%%%%%%%%%%%%%%%%%%%%
% Big bold Greek letters
%%%%%%%%%%%%%%%%%%%%%%%%%%%%%%%%%%%%%%%%%%%%%%%%%%%%%%%%%%%%%%%%%%

\newcommand{\vDelta}{\bvgrk{\Delta}}

\newcommand{\vLambda}{\bvgrk{\Lambda}}

\newcommand{\vSigma}{\bvgrk{\Sigma}}
\newcommand{\vPhi}{\bvgrk{\Phi}}

% Big bold Greek letters with bar

% Big Greek letters with hat

% Big Greek letters with tilde

%----------------------------------------------------------------------------------------------------------

%------------------------------------------------- Common Math Operater ---------------------------------------------
\newcommand{\norm}[1]{\|{#1} \|}

\newcommand{\curlybrackets}[1]{\{ #1 \}}
\newcommand{\squarebracketsbig}[1]{\left[ #1 \right]}

\newcommand{\parenthesesbig}[1]{\left( #1 \right)}

\newcommand{\indicator}[1]{\mathbf{1}_{\curlybrackets{#1}}}

		% Widehat
	% Widetilde

\newcommand{\R}{\mathbb{R}}				% Real line R
%\newcommand{\dm}[3]
%{
	%	\ifnum #1 = 1
	%	\R^{#2}
	%	\else
	%	\R^{#2 \x #3}
	%	\fi
	%}

\newcommand{\dm}[2]
{
	\IfStrEq{#2}{1}{\R^{#1}}{\R^{#1 \x #2}}
}

\newcommand{\N}{\mathcal{N}}			% Normal Distribution N
\newcommand{\onevec}{\mathbf{1}} 		% Bold 1, All 1 vector

\newcommand{\rank}{\textup{{rank}}} 			% rank
\newcommand{\tr}{\textup{\textbf{tr}}} 			% Trace
\newcommand{\diag}{\textup{\textbf{diag}}} 		% Diag
 		% Condition number
\newcommand{\vek}{\textup{\textbf{vec}}}			% Vectorization
\newcommand{\expctn}{\mathbb{E}} 	 	% Expectation
 % Expectation
%\newcommand{\inv}{{-1}} 				% Superscript inverse
\newcommand{\inv}{{\mskip-1mu \scalebox{1.5}[0.75]{-} \mskip-1mu 1}}	% Superscript inverse -- shorter minus
\newcommand{\invv}[1]{{\mskip-1mu \scalebox{1.5}[0.75]{-} \mskip-1mu #1}} %
 				% Superscript inverse
\newcommand{\prob}{\mathbb{P}} 				% The upright P to denote probability
				% The d in derivative
\newcommand{\fro}{\textup{F}}			% Subscript for Frobenius norm

% Define double bar epsilon
\usepackage{mathtools}
\makeatletter
\newcommand{\raisemath}[1]{\mathpalette{\raisem@th{#1}}}% \raisemath{<len>}{...}
\newcommand{\raisem@th}[3]{\raisebox{#1}{$#2#3$}}
\makeatother

% Transpose
\usepackage{amssymb}
\makeatletter
\newcommand*{\T}{{\mathpalette\@transpose{}}}
\newcommand*{\@transpose}[2]{\raisebox{\depth}{$\m@th#1\intercal$}}
\makeatother

% The product x sign used in dimension, e.g. R^{m \x n}
% Compared with \times, \x can save more space before and after the operator
\newcommand*{\x}{\mathsf{x}\mskip1mu} 	

% The minus sign
% Compared with -, \m can save more space before and after the operator 
\newcommand*{\minus}{\mathsf{-}\mskip1mu} 

%--------------------------------------------------------------------------------------------------------------------

%---------------------------------------------------- Other Commands ------------------------------------------------
% Allow inline math equation to break at commas
\newcommand{\splitatcommas}[1]{%
	\begingroup
	\begingroup\lccode`~=`, \lowercase{\endgroup
		\edef~{\mathchar\the\mathcode`, \penalty0 \noexpand\hspace{0pt plus .1em}}%
	}\mathcode`,="8000 #1%
	\endgroup
}
% $\splitatcommas{
	%   \frac{1}{2},\frac{3}{5},\frac{8}{13},\frac{21}{34},\frac{55}{89},
	%   \frac{144}{233},\frac{377}{610},\frac{987}{1597},\frac{2584}{4181},
	%   \frac{6765}{10946},\frac{17711}{28657},\frac{46368}{75025},
	%   \frac{121393}{196418},\frac{317811}{514229},\frac{832040}{1346269},
	%   \frac{2178309}{3524578},\frac{5702887}{9227465},
	%   \frac{14930352}{24157817},\frac{39088169}{63245986},\frac{102334155}{165580141}
	% }$

% Allow the equation in itemize or enumerate to align with the item labe
\newcommand{\Item}[1]{%
	\ifx\relax#1\relax  \item \else \item[#1] \fi
	\abovedisplayskip=0pt\abovedisplayshortskip=0pt~\vspace*{-\baselineskip}
} 

% The following commands are for review responses
% initialize counters for reviewer number and comment number
\newcounter{reviewer}
\setcounter{reviewer}{0}
\newcounter{point}[reviewer]
\setcounter{point}{0}
\makeatletter
\newcommand{\customlabel}[2]{%
   \protected@write \@auxout {}{\string \newlabel {#1}{{#2}{\thepage}{#2}{#1}{}} }%
   \hypertarget{#1}{#2}
}
\makeatother

% Set the format of reviewer comment numbers and responses.
% \renewcommand{\thepoint}{Q\,\thereviewer.\arabic{point}} 

% Set the format for reviewer headings

\let\debugmode\undefined %% set to false
% Comment the following line to get rid of all remarks before submission.
% \def\debugmode{} %% set to true

\ifdefined\debugmode
\setlength\marginparwidth{50pt}    % determines the width of margin notes and thus the length of lines typeset in the margin note.
\setlength\marginparsep{10pt}        % sets the gap (distance) between margin notes and the text of your document.
\newcommand{\red}[1]{{\color{red} #1}}

\definecolor{revColor}{RGB}{0, 0, 255}
\newcommand{\zheinline}[1]{{\color{red} (Zhe: #1)}}
\newcommand{\zhe}[1]{\marginpar{\hspace{-10pt}\color{blue}\tiny\ttfamily Zhe: #1}}
\newcommand{\ysatt}[1]{\marginpar{\color{red}\tiny\ttfamily YS: #1}}
\newcommand{\nec}[1]{\marginpar{\hspace{-10pt}\color{green}\tiny\ttfamily Nec: #1}}
\newcommand{\davv}[1]{\marginpar{\hspace{-10pt}\color{purple}\tiny\ttfamily DA: #1}}
\newcommand{\laura}[1]{{\color{cyan} #1}}
\newcommand{\laurac}[1]{\marginpar{\hspace{-10pt}\color{cyan}\tiny\ttfamily Laura: #1}}
\newcommand{\parentchild}[4]{}
\else
\newcommand{\red}[1]{}

\definecolor{revColor}{RGB}{0, 0, 0}
\newcommand{\zheinline}[1]{}
\newcommand{\zhe}[1]{}
\newcommand{\ysatt}[1]{}
\newcommand{\nec}[1]{}
\newcommand{\davv}[1]{}
\newcommand{\laura}[1]{}
\newcommand{\laurac}[1]{}

\newcommand{\parentchild}[4]{}
\fi

\newcommand{\numSys}{s}
\newcommand{\dimSt}{n}
\newcommand{\dimInput}{p}

\newcommand{\numCls}{r}

\usepackage{graphicx}

% \jvol{00}
% \jnum{XX}
% \paper{1234567}
% \pubyear{2021}
% \receiveddate{XX September 2021}
% \accepteddate{XX October 2021}
% \publisheddate{XX November 2021}
% \currentdate{XX November 2021}
% \doiinfo{OJCSYS.2021.Doi Number}

\setcounter{page}{1}

\begin{document}

\sptitle{Article Category}

\title{Mode Reduction for Markov Jump Systems} 

\editor{This paper was recommended by Associate Editor F. A. Author.}

\author{Zhe Du}

\author{\phantom{}Laura Balzano}

\author{Necmiye Ozay}

\affil{Department of Electrical Engineering and Computer Science, University of Michigan Ann Arbor, MI 48109, USA}

\corresp{CORRESPONDING AUTHOR: Zhe Du (e-mail: \href{mailto:zhedu@umich.edu}{zhedu@umich.edu})}
\authornote{N. Ozay and Z. Du were supported by ONR grants N00014-18-1-2501 and N00014-21-1-2431, L. Balzano and Z. Du were supported by AFOSR YIP award FA9550-19-1-0026 and NSF CAREER award CCF-1845076, and L. Balzano was supported by NSF BIGDATA award IIS-1838179.}

\markboth{MODE REDUCTION FOR MARKOV JUMP SYSTEMS}{DU {\itshape ET AL}.}

\begin{abstract}
Switched systems are capable of modeling processes with underlying dynamics that may change abruptly over time. To achieve accurate modeling in practice, one may need a large number of modes, but this may in turn increase the model complexity drastically. Existing work on reducing system complexity mainly considers state space reduction, whereas reducing the number of modes is less studied. In this work, we consider Markov jump linear systems (MJSs), a special class of switched systems where the active mode switches according to a Markov chain, and several issues associated with its mode complexity. Specifically, inspired by clustering techniques from unsupervised learning, we are able to construct a reduced MJS with fewer modes that approximates the original MJS well under various metrics. Furthermore, both theoretically and empirically, we show how one can use the reduced MJS to analyze stability and design controllers with significant reduction in computational cost while achieving guaranteed accuracy.
\end{abstract}

\begin{IEEEkeywords}
Markov Jump Systems; Model/Controller reduction; Learning for control; Clustering
\end{IEEEkeywords}

\maketitle

\section{Introduction}
As the control and machine learning communities build tools to model ever more complex dynamical systems, it will become increasingly important to identify redundant aspects of a model and remove them using various unsupervised learning techniques. State dimensionality reduction has long been common in control systems, using principal component analysis and similar techniques. In this paper we consider the setting where switched systems have redundant modes, and we apply clustering -- another fundamental unsupervised learning technique -- to remove redundancies. 

Switched systems generalize time-invariant systems by allowing the dynamics to switch over time. They have been used to model abrupt changes in the environment (e.g. weather and road surfaces), controlled plants (e.g. functioning statuses of different components), disturbances, or even control goals (e.g. cost functions in the optimal control). 
Switched system models are used in a variety of applications including controlling a Mars rover exploring an unknown heterogeneous terrain, solar power generation, investments in financial markets, and communications with packet losses \cite{blackmore2005combining,cajueiro2002stochastic,loparo1990probabilistic,svensson2008optimal,ugrinovskii2005decentralized, sinopoli2005lqg, truong2021analysis}.
However, these benefits are accompanied by new complexity challenges: the number of modes that is needed to model systems accurately and thoroughly may grow undesirably large. For example, for controlled plants composed of multiple components, if we model each combination of health statuses, e.g. working and faulty, of all components as a mode, then the number of modes grows exponentially with the number of components. Given this rate, there can be an huge amount of modes even with a moderate number of components. Given a system with this many modes, analysis can become computationally intractable. For example, in finite horizon linear quadratic regulator (LQR) problems with horizon $T$, the total number of controllers to be computed is $\numSys^T$ where $\numSys$ denotes the number of modes.
This lack of scalability calls for systematic and theoretically guaranteed ways to reduce the number of modes.

Existing work on (switched) system reduction mainly focuses on reducing the state dimension \cite{zhang2003h} or constructing finite abstractions for the continuous state space \cite{zamani2014approximately}. Reducing the mode complexity, however, is still mainly an uncharted territory. 

In this work, we study how one can perform mode reduction for Markov jump linear systems (MJS), a class of switched systems with individual modes given by state space linear time-invariant (LTI)  models and the mode switching governed by Markov chains. Our main contributions are the following:
\begin{itemize}
	\item We propose a clustering-based method that takes mode dynamics as features and use the estimated clusters to construct a mode-reduced MJS.
	\item The reduced MJS is proved to well approximate the original MJS under several approximation metrics.
	\item We show the reduced MJS can be used as a surrogate for the original MJS to analyze stability and design controllers with guaranteed performance and significant reduction in computational cost.
	\item Compared to the preliminary conference version \cite{zhe2021clustering}, this paper establishes stronger approximation guarantees (Section \ref{sec_MJSReduction_approxMetrics}), provides novel stability analysis (Section \ref{sec_MJSReduction_stabAnalysis}), includes all the proofs that were omitted in \cite{zhe2021clustering}, and includes a more thorough discussion and survey of related work, including a table that lays out relationships among existing work and our work.
\end{itemize}
Our work adds a new dimension, i.e., reduction of modes, to the research of switched system reduction.
This framework can be generalized to other problems such as robust and optimal control, invariance analysis, partially observed systems, etc.
Other than constructing and analyzing the reduced MJS, the technical tools we develop in this work regarding perturbations can be applied to cases when there are model mismatches, e.g. system estimation errors incurred when dynamics are learned in identification or data-driven adaptive control as in \cite{sattar2021identification}. 

\begin{table*}[ht]
\centering
\caption{Related Work on Reduction for Stochastic Switched Systems}\label{Table_MJSReduction_RelatedWork}
%\begin{adjustbox}{max width=\columnwidth}
{\renewcommand{\arraystretch}{1.2}
\begin{tabular}{||l|c|c|c|c|c||} 
		\hline
		\phantom{\emph{}}
		\makecell{\textbf{Reference}} & \makecell{\textbf{Model}} & \makecell{\textbf{Reduction} \\ \textbf{Target}} & \makecell{\textbf{Exact} \\ \textbf{Bisimulation} \\ \textbf{Condition}} & \makecell{\textbf{Reduction} \\ \textbf{Method}} & \makecell{\textbf{Approximation}\\ \textbf{Metric}}  \\ \hline \hline
		\cite{desharnais2004metrics} & Controlled Markov Process & \multirow{7}{*}{\makecell{State Space \\ Cardinality}} & \multirow{5}{*}{Yes} &  \multirow{2}{*}{\textbf{N.A.}} & Formula Metric\\
		\cline{1-2} \cline{6-6}
		\cite{tkachev2014approximation, bian2017relationship} & \multirow{2}{*}{\makecell{Labelled Markov Chains \\ (autonomous)}} & &  &  & \multirow{2}{*}{Trajectory}\\
		\cline{1-1} \cline{5-5}
		\cite{lun2018approximate} &  & &  &  \multirow{2}{*}{Clustering} & \\
		\cline{1-2} \cline{6-6}
		\cite{zhang2018state,du2019mode,bittracher2021probabilistic} & Markov Chains &  & & & \textbf{N.A.}\\
		\cline{1-2} \cline{5-6}
		\cite{zamani2014approximately} & Switching Stochastic Sys. & & & \multirow{3}{*}{\makecell{State  Space \\ Discretization}}  & Trajectory \\
		\cline{1-2} \cline{4-4} \cline{6-6}
		\cite{abate2011approximate} & \multirow{2}{*}{\makecell{Stochastic Hybrid \\ System (autonomous)} }  & & \multirow{8}{*}{\textbf{N.A.}} & & Transition Kernel \\
		\cline{1-1} \cline{6-6}
		\cite{soudjani2011adaptive} & & & & & Invariance Probability \\
		\cline{1-3} \cline{5-6}
		\cite{julius2006approximate,julius2009approximations,zamani2016approximations} & \makecell{Jump Linear \\ Stochastic System} & \multirow{4}{*}{\makecell{State  Space \\ Dimension \\ (Order)}} & & \makecell{(Bi)simulation \\ Function}  & Trajectory \\
		\cline{1-2} \cline{5-6}
		\cite{zhang2003h,shen2019model} & \multirow{3}{*}{MJS} & & & \multirow{1}{*}{$\Hcal_\infty$ Reduction}  & \multirow{2}{*}{$\Hcal_\infty$ norm} \\	
		\cline{1-1} \cline{5-5} 
		\cite{kotsalis2010balanced} &  &  & & Balanced Truncation &  \\
% 		\cline{1-1} \cline{6-6} 
% 		\cite{zhang2015balanced} & & &  &  & \textbf{N.A.}\\
% 		\cline{1-1} \cline{6-6} 
% 		\cite{zhang2015model} & & & &  & Dynamics Parameters\\
		\cline{1-1} \cline{5-6} 
		\cite{sun2016model} &  &  & & $\Hcal_2$ Reduction  & $\Hcal_2$ norm \\
		\hline
		This Work & MJS & Modes & Yes & Clustering  & Trajectory \\
		\hline
\end{tabular}}
%\end{adjustbox}
\end{table*}

\section{Related Work}
Depending on the problems of interest and methodologies, the work on reduction for stochastic (switched) systems can be roughly divided into three categories: bisimulation, symbolic abstraction, and order reduction.

\noindent \textbf{Bisimulation:} 
To evaluate the equivalency between two stochastic switched systems, notions of (approximate) probabilistic bisimulation are proposed in \cite{larsen1991bisimulation, desharnais2002bisimulation, desharnais2004metrics}. Approximation metrics \cite{abate2013approximation} from different perspectives are developed to compare two systems, e.g. one(multi)-step transition kernels \cite{abate2011approximate} and trajectories \cite{girard2007approximation, tkachev2014approximation, julius2009approximations}. 
Based on the approximate bisimulation notion in \cite{bian2017relationship}, a technique for reducing the state space of labeled Markov chains through state aggregation is proposed in \cite{lun2018approximate}.
Unlike existing work that typically defines the notions of (approximate) bisimulation on the state space, we provide an algorithm that constructs a reduced system by aggregating the mode space, which provably approximates the original one. Our work shares the idea of aggregation of Markov chains with \cite{lun2018approximate}, but we also seek to recover the best aggregation partition which is otherwise assumed as prior knowledge in \cite{lun2018approximate}.

\noindent \textbf{Symbolic Abstraction:} 
Given a system with continuous state space, abstraction \cite{alur2000discrete} considers discretizing the state space and then constructing a finite state symbolic model, which can be used as a surrogate for model verification \cite{clarke2018model, kurshan2014computer} or controller synthesis \cite{maler1995synthesis}. 
The work on abstraction for stochastic hybrid systems starts with the autonomous cases. Under uniform discretization, \cite{abate2010approximate} and \cite{abate2011approximate} provide approximation guarantees that depend on the discretization width. An adaptive partition scheme is proposed in \cite{soudjani2011adaptive}, which mitigates the curse of dimensionality suffered by uniform sampling. Since the systems under consideration are autonomous, these work mainly serves verification purposes, but falls short toward controller synthesis goals. \cite{zamani2014approximately} addressed this by allowing inputs in the systems. The idea of partitioning the continuous state space is similar to our work except that our partition is performed on the mode space, i.e., the discrete state space in hybrid systems, which provides a new yet closely related dimension to the existing abstraction work.

\noindent \textbf{Order Reduction:} 
Another important line of research on system reduction is order reduction \cite{gugercin2004survey}, where one seeks to reduce the dimension of the state space to satisfy certain criteria. With the help of linear matrix inequalities (LMIs), various methods have been applied for MJS, including $\Hcal_\infty$ reduction \cite{zhang2003h}, balanced truncation \cite{kotsalis2010balanced}, and $\Hcal_2$ reduction \cite{sun2016model}, etc. Order reduction is also applied to more complex models with time-varying transition probabilities \cite{shen2019model}.

% \noindent \textbf{Markov Chain Reduction:} 
The reduction of Markov chains, a class of simplified yet fundamental stochastic switched models, has also attracted the learning and statistics communities. Several notions of lumpability are proposed in \cite{buchholz1994exact}, which coincide with the notion of bisimulation in \cite{larsen1991bisimulation}. Lumpability allows one to reduce the original Markov chain to a smaller scale yet equivalent Markov chain by lumping the Markovian states. 
Similar research focusing on the equivalence metrics and bounding the difference of transition kernel, can be found in \cite{hoffmann2009bounding, schulman2001coarse, gaveau2005dynamical} under the name of coarse graining.
Compared with the bisimulation work for general stochastic systems, which is mostly conceptual, the restriction to Markov chains allows for practical ``low-rank + clustering'' methods \cite{meila2001random} to uncover the lumpability structure. \cite{zhang2018state} considers the case when the Markov matrix is estimated from a trajectory, and the approximate lumpability case are studied in \cite{du2019mode,bittracher2021probabilistic}. Furthermore \cite{du2019mode} studies the reduction of Markov chains that are embedded in switched autoregressive exogenous models, but the overall dynamical models are not reduced. Based on the ideas in \cite{du2019mode}, our work further extends the reduction to the overall MJS.

A comprehensive comparison of the related work together with our work is listed in Table \ref{Table_MJSReduction_RelatedWork}. 
The entry ``exact bisimulation condition" tells whether ideal case sufficient conditions are provided under which a system can be reduced without introducing any model inaccuracy, i.e., they are bisimilar. In practice, when the reduced system is constructed, these principled conditions can help gain more insight into the original system.
%As we can see in Table \ref{Table_MJSReduction_RelatedWork}, the existence of bisimulation conditions and developing reduction methods are mainly mutually exclusive among the listed work. However, these two indeed co-exist in our work as our clustering-based approach is exactly developed based on our bisimulation condition (also referred to as mode-reducible condition). 
In practice, these system reduction methods developed under different perspectives can be combined to achieve overall better performance. For example, for the continuous state space, one can apply order reduction followed by finite abstraction (the former can help remove the curse of dimensionality for the latter), and meanwhile our work can further help reduce the discrete mode space.

This work is organized as follows: we present the preliminaries and mode reduction problem setup in Section \ref{sec_MJSReduction_prelim}; an clustering-based reduction approach is proposed in Section \ref{sec_MJSReduction_AlgandTheory}; in Section \ref{sec_MJSReduction_approxMetrics}, we discuss how the reduced MJS approximates the original one under setups; Section \ref{sec_MJSReduction_stabAnalysis} and \ref{sec_MJSReduction_LQRControl} respectively show that one can use the reduced MJS as a surrogate to evaluate stability and design LQR controllers for the original MJS; simulation experiments are presented in Section \ref{sec_MJSReduction_Experiment}. 

\section{Preliminaries and Problem Setup} \label{sec_MJSReduction_prelim}
% In this paper, boldface and uppercase (lowercase) letters denote matrices (vectors); plain letters mainly denote scalars. 
For a matrix $\vE$, 
% $\vE(i,j)$ denotes the $(i,j)$th element in $\vE$, 
$\vE(i,:)$ denotes the $i$th row of $\vE$, and $\vE(i,j{:}k)$ denotes the $i$th row preserving only the $j$th to $k$th columns. For any index set $A$,  $\vE(i,A)$ denotes the $i$th row preserving columns given by $A$. 
Let $\sigma_i(\vE)$ ($\lambda_i(\vE)$) denote its $i$th largest singular (eigen) value.
% For eigenvalue decomposition, the eigenvalues are always arranged in descending order of magnitudes, and the same applies to the singular value decomposition.
% Norm without subscript, i.e., $\norm{\cdot}$, denotes the $\ell_2$-norm. 
For any $s \in \mathbb{N}$, we let $[s]:=\curlybrackets{1,2,\dots,s}$.
% and $[s]^k$ denote the k-ary Cartesian power of $[s]$.
% For a vector $\vv$, $\diag(\vv)$ denotes the diagonalization operator; for a matrix $\vE \in \dm{m}{n}$, we let $\vek(\vE):=[\vE(:,1)^\T,\dots, \vE(:,n)^\T]^\T$ denote the vectorization operator.
We say $\Omega_{1:\numCls}:=\curlybrackets{\Omega_1, \dots, \Omega_\numCls}$ is a $\numCls$-cluster partition of $[\numSys]$ if $\bigcup_{i=1}^\numCls \Omega_i = [\numSys]$, $\Omega_i \bigcap \Omega_j = \phi$ for any $i \neq j$, and $\Omega_i \neq \phi$. We let $\Omega_{(i)}$ denote the cluster with $i$th largest cardinality. For a sequence of variables $X_0, X_1, \dots, X_N$, let $X_{0:N}:=\curlybrackets{X_i}_{i=0}^N$. 
Notation $\otimes$ denotes the Kronecker product.
Notation $\vI_n$ denotes the $n$-dimensional identity matrix, and $\onevec_{n}$ denotes the $n$-dimensional all-ones vector.

\subsection{Preliminaries} \label{subsec_MJSReduction_prelim}
In this work, we consider Markov jump systems (MJSs) with dynamics given by
% \begin{equation}\label{eq_MJSReduction_MJS}
	% 	\Sigma:=
	% 	\left\{\begin{matrix}
		% 		\vx_{t+1} = \vA_{\omega_t}\vx_t + \vB_{\omega_t}\vu_t \\
		% 		\qquad \; \; \omega_t \sim \text{Markov Chain}(\vT),
		% 	\end{matrix}\right.
	% \end{equation}
\begin{equation}\label{eq_MJSReduction_MJS}
	\Sigma {:=} \curlybrackets{\vx_{t+1} {=} \vA_{\omega_t}\vx_t + \vB_{\omega_t}\vu_t, \ \omega_t \sim \text{MarkovChain}(\vT)}
\end{equation}
where $\vx_t \in \dm{\dimSt}{1}$ and $\vu_t \in \dm{\dimInput}{1}$ denote the state and input at time $t$. 
There are $\numSys$ modes parameterized by $\curlybrackets{\vA_i, \vB_i}_{i=1}^\numSys$ where $\vA_i \in \dm{\dimSt}{\dimSt}$ and $\vB_i \in \dm{\dimSt}{\dimInput}$ are state and input matrices for mode $i$. The active mode at time $t$ is indexed by $\omega_t \in [\numSys]$, and the mode switching sequence $\omega_{0:t}$ follows a Markov chain with Markov matrix $\vT \in \dm{\numSys}{\numSys}$, i.e., $\prob(\omega_{t+1}=j \mid \omega_t=i) = \vT(i,j)$. We assume the Markov chain $\vT$ is ergodic. By properties of ergodicity, $\vT$ has a unique stationary distribution $\vpi \in \dm{\numSys}{1}$, and we let $\pi_{\max}$ and $\pi_{\min}$ denote the largest and smallest element in $\vpi$. In the remainder of the paper, we use $\Sigma:= \text{MJS}(\vA_{1:\numSys}, \vB_{1:\numSys}, \vT)$ to denote the groundtruth MJS in \eqref{eq_MJSReduction_MJS} that we want to study, and similarly use notation MJS($\cdot, \cdot, \cdot$) to parameterize any MJS with expressions given in \eqref{eq_MJSReduction_MJS}. We introduce the following two special types of Markov chains, which are closely tied to the main focus of this work.

\begin{definition}[Lumpability and Aggregatability \cite{buchholz1994exact}]\label{def_MJSReduction_lumpable}
	Markov matrix $\vT \in \dm{\numSys}{\numSys}$ is \emph{lumpable} w.r.t. partition $\Omega_{1:\numCls}$ on $[\numSys]$ if for any $k,l \in [\numCls]$, and $i,i' \in \Omega_k$, we have
	% 	members in the same cluster have the same probabilities of visiting any cluster, i.e.,
	$
	\sum_{j\in \Omega_l} \vT(i,j) = \sum_{j\in \Omega_l} \vT(i',j).
	$
	As a special case, it is further \emph{aggregatable} if 
	% 	they have equal probabilities of visiting any member, i.e., 
	$\vT(i,:)=\vT(i',:)$.
\end{definition}
Lumpability of a Markov chain coincides with the definition of probabilistic bisimulation in \cite{desharnais2002bisimulation}, which describes an equivalence relation on $[\numSys]$, i.e., two members are equivalent if they belong to the same cluster. For a Markov chain $\vT$ that is lumpable with respect to partition $\Omega_{1:\numSys}$, we use $\zeta_t \in [\numCls]$ to index the active cluster at time $t$, i.e., $\zeta_t = k$ if and only if $\omega_t \in \Omega_k$, and use $\zeta_{0:t}$ to denote the active cluster sequence. 
%The implication of lumpability is that $\zeta_{0:t}$ also follows a Markov chain.

\begin{figure*}
	\centering
	\includegraphics[width=0.7\linewidth]{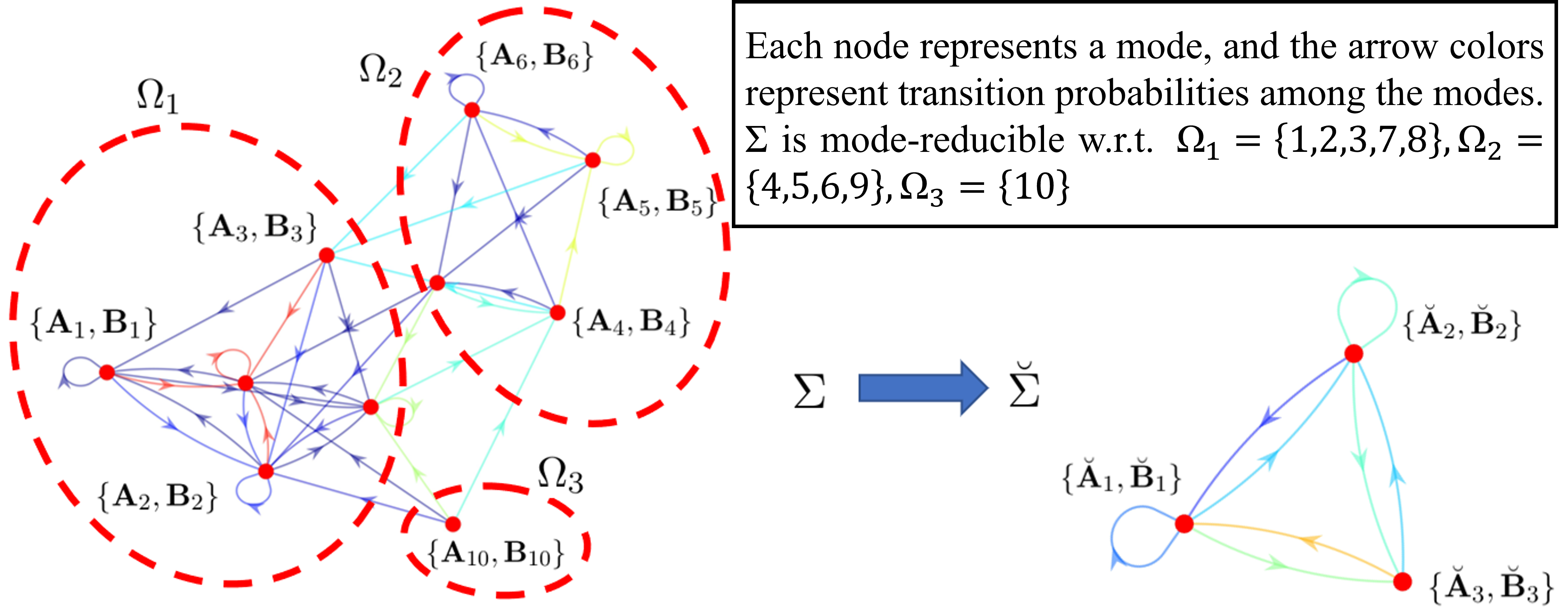}
	\caption{Illustration of reduction under mode-reducibility condition. 
		% 	Each node represents a mode, and the colors of the arrow between two modes represent the mode transition probabilities. System $\Sigma$ is mode-reducible with respect to $\Omega_{1:3}$ where $\Omega_1 = \curlybrackets{1,2,3,7,8}, \Omega_2 = \curlybrackets{4,5,6,9}, \Omega_3 = \curlybrackets{10}$
	}\label{fig_MJSReduction_ReductionIllustration}
\end{figure*}

\subsection{Problem Formulation}
With the notions of Markov chain lumpability and aggregatability, in this work, we consider reducing the number of modes for $\Sigma$ under the following two problem settings.
\begin{problem}[Lumpable Case]\label{problem_lumpableCase}
    Assume the dynamics of $\Sigma {=}
	\textup{MJS}(\vA_{1:\numSys}, \vB_{1:\numSys}, \vT)$ is known. 
	Suppose there exists a hidden partition $\Omega_{1:\numCls}$ on $[\numSys]$ and $\epsilon_\vA, \epsilon_\vB, \epsilon_\vT \geq 0$ such that
	\begin{equation} \label{eq_MJSReduction_approxDyn}
	\begin{split}
	    \sum_{k {\in} [\numCls]} \sum_{i, i' {\in} \Omega_k} \norm{\vA_i - \vA_{i'}}_\fro \leq \epsilon_\vA, \\ \sum_{k {\in} [\numCls]} \sum_{i, i' {\in} \Omega_k} \norm{\vB_i - \vB_{i'}}_\fro \leq \epsilon_\vB, \\
	\end{split}    
	\end{equation}
	\begin{equation}
	    \sum_{k,l {\in} [\numCls]} \sum_{i, i' {\in} \Omega_k}  \Big| \sum_{j \in \Omega_l} \vT(i,j) - \sum_{j \in \Omega_l} \vT(i',j) \Big| \leq \epsilon_\vT. \label{eq_MJSReduction_approxLump} 
	\end{equation}
	Then, given $\curlybrackets{\vA_i, \vB_i}_{i=1}^\numSys, \vT$ and $\numCls$, we seek to estimate the partition $\Omega_{1:\numCls}$ by clustering the modes, construct a reduced MJS for $\Sigma$, and provide guarantees on the behavior difference incurred by the reduction.	
\end{problem}

Throughout this work, we refer to $\epsilon_\vA, \epsilon_\vB$, and $\epsilon_\vT$ as \emph{perturbations}. 
When $\epsilon_\vT = 0$, by definition $\vT$ is lumpable. Thus, in condition \eqref{eq_MJSReduction_approxLump}, one can view $\vT$ as \emph{approximately} lumpable. 
One can think of $\curlybrackets{\omega_t, \vx_t} \in [\numSys] \times \dm{\dimSt}{1}$ as a hybrid state \cite{abate2011approximate}. Then, condition \eqref{eq_MJSReduction_approxLump} guarantees the existence of an approximate equivalence relation in the discrete domain $[\numSys]$, while condition \eqref{eq_MJSReduction_approxDyn} guarantees this in the continuous domain $\dm{\dimSt}{1}$. 
% Note that the summands in \eqref{eq_MJSReduction_approxDyn} can be viewed as the distances between mode $i$ and $i'$ in terms of $\vA$ and $\vB$ matrices, which resemble the definition of distance between LTI systems in \cite{afsari2012group}.
More discussions for the special case when $\epsilon_\vA =  \epsilon_\vB = \epsilon_\vT = 0$ follow in the next subsection.

For the aggregatable case, we separately formulate a similar problem in Problem \ref{problem_aggregatableCase}. 
\begin{problem}[Aggregatable Case] \label{problem_aggregatableCase}
	In Problem \ref{problem_lumpableCase}, replace \eqref{eq_MJSReduction_approxLump} with
	$
	\sum_{k {\in} [\numCls]} \sum_{i, i' {\in} \Omega_k} \norm{\vT(i,:)^\T - \vT(i',:)^\T}_1 \leq \epsilon_\vT.
	$	
% 	$
% 	\norm{\vT(i,:)^\T - \vT(i',:)^\T}_1 \leq \epsilon_\vT.
% 	$
\end{problem}
\ref{problem_aggregatableCase} differs from \ref{problem_lumpableCase} in terms of $\epsilon_\vT$, which quantifies the violation of lumpability and aggregatability respectively for the matrix $\vT$. Consider a 3-state Markov matrix $\vT$ with rows $\vT(1,:) {=} [0.2, 0.4, 0.4]$, $\vT(2,:) {=} [0.7, 0.1, 0.2]$, and $\vT(3,:) {=} [0.7, 0, 0.3]$. Then for the partition $\Omega_1 = \{1\}, \Omega_2 = \{2,3\}$, we obtain $\epsilon_\vT = 0$ in \ref{problem_lumpableCase} but $\epsilon_\vT = 0.4$ in \ref{problem_aggregatableCase}. In other words, $\vT$ is exactly lumpable but non-aggregatable with a violation level of $0.4$. In \ref{problem_aggregatableCase}, $\epsilon_\vT = 0$ only when $\vT(i,:) = \vT(i',:)$ for all $i,i' \in \Omega_k$, i.e. the rows are equal. On the other hand, in \ref{problem_lumpableCase}, $\epsilon_\vT = 0$ is possible even if no row equalities exist. Hence, being $\epsilon_\vT$-aggregatable in \ref{problem_aggregatableCase} is a stronger assumption than being $\epsilon_\vT$-lumpable in \ref{problem_lumpableCase}. As a result, in Section \ref{sec_MJSReduction_AlgandTheory}, the clustering guarantee for \ref{problem_aggregatableCase} is stronger and more interpretable than that of \ref{problem_lumpableCase}.

% Even though solutions to \ref{problem_lumpableCase} automatically solve \ref{problem_aggregatableCase}, as we shall see in the next section, algorithm theoretical guarantees for \ref{problem_aggregatableCase} require milder and more interpretable assumptions than \ref{problem_lumpableCase}.

\subsection{Equivalency between MJSs}\label{subsec_MJSReduction_ModereducibleCond}
To compare the original and mode-reduced MJSs as mentioned in \ref{problem_lumpableCase}, we need a notion of equivalency between two MJSs with different numbers of modes. This is provided below via a surjection from modes of the larger MJS to the smaller one, which extends the bijection idea in \cite{julius2009approximations, zhang2003h} that can only compare two MJSs with equal amounts of modes.
% When comparing two switched systems with the same number of modes, one can define their equivalence or difference metrics as follows: a bijection is first established between modes of the two systems, then a composite switched system is constructed by concatenating the mode dynamics according to the bijection, with an additional output defined as the state difference of the two systems. \cite{julius2009approximations} and \cite{zhang2003h} respectively uses this idea to define stochastic bisimulation function and $\Hcal_\infty$ norm between two switched systems. Bijection, however, is impossible when two MJSs under comparison have different number of modes. In this case, we instead resort to a surjection from the larger MJS to the smaller one, which is characterized by a partition in the following definition.

\begin{definition}[Equivalency between MJSs]\label{def_MJSReduction_EquivalenceMJS}
	Consider two \textup{MJSs} $\Sigma_1$ and $\Sigma_2$ with the same state and input dimensions $\dimSt$, $\dimInput$, but different number of modes $s_1$ and $s_2$ respectively. WLOG, assume $s_1 > s_2$. Let $\curlybrackets{\vx_t^{(1)}, \vu_t^{(1)}, \omega_t^{(1)}}$ and $\curlybrackets{\vx_t^{(2)}, \vu_t^{(2)}, \omega_t^{(2)}}$ denote their respective state, input, and mode index. $\Sigma_1$ and $\Sigma_2$ are equivalent if there exists a partition $\Omega_{1:s_2}$ on $[s_1]$ such that $\Sigma_1$ and $\Sigma_2$ have the same transition kernels, i.e. for any time $t$, any mode $k, k' \in [s_2]$, any $\vx, \vx' \in \dm{\dimSt}{1}$, and any $\vu \in \dm{\dimInput}{1}$
	\begin{multline}
		\prob \big(\omega_{t+1}^{(1)} {\in} \Omega_{k'}, \vx_{t+1}^{(1)} {=} \vx' \mid \ \omega_t^{(1)} {\in} \Omega_{k}, \vx_t^{(1)} {=} \vx, \vu_t^{(1)} {=} \vu \big) \\
		=
		\prob \big(\omega_{t+1}^{(2)} {=} k', \vx_{t+1}^{(2)} {=} \vx'  \mid \ \omega_t^{(2)} {=} k, \vx_t^{(2)} {=} \vx, \vu_t^{(2)} {=} \vu \big).
	\end{multline}
\end{definition}
% If one views the discrete mode and the continuous state $\curlybrackets{\omega_t^{(1)}, \vx_t^{(1)}}$ as a \textit{hybrid} state \cite{abate2011approximate}, then Definition \ref{def_MJSReduction_EquivalenceMJS} generalizes Definition \ref{def_MJSReduction_lumpable} from Markov chains to MJSs.
The trivial perturbation-free case, i.e., $\epsilon_\vA, \epsilon_\vB, \epsilon_\vT {=} 0$, provides a sufficient condition that guarantees that an MJS can be reduced to a smaller MJS with equivalency between them.

\begin{definition}[Mode-reducibility Condition]\label{def_MJSReduction_modeReducible}
    If in \ref{problem_lumpableCase}, $\epsilon_\vA, \epsilon_\vB, $
    $\epsilon_\vT {=} 0$, we say
    $\Sigma$ is \textit{mode-reducible} with respect to $\Omega_{1:\numCls}$.
\end{definition}

If this condition holds for $\Sigma$, we can construct a mode-reduced MJS $\breve{\Sigma} {:=} \textup{MJS}(\breve{\vA}_{1:\numCls}, \breve{\vB}_{1:\numCls}, \breve{\vT})$ such that for any $k,l \in [\numCls]$, any $i\in \Omega_k$, $\breve{\vA}_k = \vA_i$, $\breve{\vB}_k = \vB_i$, and $\breve{\vT} \in \dm{\numCls}{\numCls}$ with $\breve{\vT}(k,l) = \sum_{j \in \Omega_l} \vT(i, j)$, which is illustrated in Fig. \ref{fig_MJSReduction_ReductionIllustration}.
Let $\curlybrackets{\breve{\vx}_t, \breve{\vu}_t, \breve{\omega}_t}$ denote the state, input, and mode index for the reduced $\breve{\Sigma}$. Then, the following fact shows that $\breve{\Sigma}$ and $\Sigma$ are equivalent according to Definition \ref{def_MJSReduction_EquivalenceMJS}. 

\begin{fact}\label{fact_MJSReduction_existReducedSys}
	Suppose $\Sigma$ is mode-reducible and $\breve{\Sigma}$ is constructed as above.
	Consider the case when $\Sigma$ and $\hat{\Sigma}$ have (i) initial mode distributions satisfy $\prob(\omega_0 \in \Omega_{k}) = \prob(\breve{\omega}_0 = k)$ for all $k \in [\numCls]$, (ii)
	the same initial states $(\vx_0 = \breve{\vx}_0)$, and (iii) the same input sequences $(\vu_{0:t-1} = \breve{\vu}_{0:t-1})$. Then, these two MJSs have the same mode and state transition kernels, i.e. $\prob(\omega_t \in \Omega_k, \vx_{t} {=} \vx ) = \prob(\breve{\omega}_t{=}k, \breve{\vx}_{t} {=} \vx)$ for all $t$, all $k \in [\numCls]$ and $\vx \in \dm{\dimSt}{1}$. Particularly, there exists a special type of reduced $\breve{\Sigma}$ such that the modes are synchronized: for all $t$, $\breve{\omega}_t=\zeta_t$. In this case, $\breve{\vx}_t = \vx_t$ for all $t$.
\end{fact}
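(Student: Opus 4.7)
The plan is to prove both statements by induction on $t$, exploiting the two consequences of mode-reducibility: (a) for every cluster $k \in [\numCls]$ and every $i \in \Omega_k$, $\vA_i = \breve{\vA}_k$ and $\vB_i = \breve{\vB}_k$, so the state update rule depends only on the cluster label; and (b) for every $k, l \in [\numCls]$ and every $i \in \Omega_k$, $\sum_{j \in \Omega_l} \vT(i,j) = \breve{\vT}(k,l)$, so the cluster-to-cluster transition kernel of the original chain is well-defined and equals $\breve{\vT}$ by construction.

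For the distributional identity, the base case $t = 0$ is immediate: assumption (i) gives $\prob(\zeta_0 = k) = \prob(\breve{\omega}_0 = k)$ for every $k \in [\numCls]$, and assumption (ii) identifies the initial states, so the joint distributions of $(\zeta_0, \vx_0)$ and $(\breve{\omega}_0, \breve{\vx}_0)$ agree. For the inductive step, I would condition on $(\zeta_t, \vx_t) = (k, \vx)$ and compute the one-step joint kernel of $(\zeta_{t+1}, \vx_{t+1})$. Decomposing over the latent $\omega_t \in \Omega_k$, applying the Markov property, and using (a)--(b) together with assumption (iii) that $\vu_t = \breve{\vu}_t$, one obtains
\begin{equation*}
\prob(\zeta_{t+1} {=} l, \vx_{t+1} {\in} \cdot \mid \zeta_t {=} k, \vx_t {=} \vx, \vu_t) = \breve{\vT}(k,l)\, \delta_{\breve{\vA}_k \vx + \breve{\vB}_k \vu_t}(\cdot),
\end{equation*}
and the analogous computation on $\breve{\Sigma}$ yields the same right-hand side. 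Combining with the induction hypothesis and marginalizing gives $\prob(\zeta_{t+1} {=} l, \vx_{t+1} {=} \vx') = \prob(\breve{\omega}_{t+1} {=} l, \breve{\vx}_{t+1} {=} \vx')$, closing the induction.

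For the synchronization claim, I would construct a specific $\breve{\Sigma}$ by a coupling argument: couple $\breve{\omega}_0$ to $\zeta_0$ (possible by (i)), and at each step define $\breve{\omega}_{t+1}$ to be the cluster label of the realized $\omega_{t+1}$. By lumpability (b), the induced conditional law of $\breve{\omega}_{t+1}$ given $\breve{\omega}_t = k$ equals $\breve{\vT}(k,\cdot)$, so this is a valid realization of the reduced Markov chain. Under this coupling $\breve{\omega}_t = \zeta_t$ for every $t$; then (a) together with (iii) yields inductively $\breve{\vx}_{t+1} = \breve{\vA}_{\breve{\omega}_t}\breve{\vx}_t + \breve{\vB}_{\breve{\omega}_t}\breve{\vu}_t = \vA_{\omega_t}\vx_t + \vB_{\omega_t}\vu_t = \vx_{t+1}$.

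The only subtlety I anticipate is bookkeeping around the continuous component: the one-step kernel places a Dirac mass at $\breve{\vA}_k \vx + \breve{\vB}_k \vu_t$, so $\prob(\vx_t {=} \vx)$ should be interpreted in the measure-theoretic sense (or equivalently in terms of characteristic functions), and the marginalization over $\vx$ in the induction step requires a touch of care with integration against this mixture of point masses. Beyond that, the argument is a direct unraveling of lumpability and of the within-cluster identity $(\vA_i,\vB_i) = (\breve{\vA}_k,\breve{\vB}_k)$ granted by mode-reducibility.
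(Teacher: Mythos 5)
Your proof is correct, and it follows exactly the route the paper intends: the paper states this Fact without a displayed proof (treating it as a direct consequence of the construction), and its surrounding discussion of the coupling $\breve{\omega}_t=\zeta_t$ matches your second part, while your induction on the joint law of $(\zeta_t,\vx_t)$ using the two consequences of mode-reducibility is the standard unraveling of lumpability that justifies the transition-kernel identity. The one point worth making explicit is that lumpability makes the one-step kernel of $(\zeta_{t+1},\vx_{t+1})$ independent of which $\omega_t\in\Omega_k$ is realized (and of the past beyond $\omega_t$), so $(\zeta_t,\vx_t)$ is itself Markov with the same kernel as $(\breve{\omega}_t,\breve{\vx}_t)$ --- your decomposition over the latent $\omega_t$ already contains this, so no gap remains.
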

Fact \ref{fact_MJSReduction_existReducedSys} first shows the equivalency between $\Sigma$ and $\breve{\Sigma}$ in terms of the transition kernels, which is then extended to trajectory realizations if certain synchrony exists between $\zeta_{0:t}$ and $\breve{\omega}_{0:t}$. The condition $\breve{\omega}_t=\zeta_t$ in Fact \ref{fact_MJSReduction_existReducedSys} essentially establishes a coupling between the Markov chains $\omega_{0:t}$ and $\breve{\omega}_{0:t}$ such that $\prob(\omega_t \in \Omega_k, \breve{\omega}_t=k) = \prob(\omega_t \in \Omega_k) = \prob(\breve{\omega}_t=k)$. Establishing coupling between the stochastic systems usually allows for stronger equivalency and approximation result. Similar coupling scheme is implicitly used in \cite{julius2009approximations, zhang2003h}; an optimal coupling by minimizing Wasserstein distance is discussed in \cite{tkachev2014approximation}; and a weaker coupling using the idea of HMM is discussed in \cite{shen2019model}.

% Specifically, at time $t$, no matter which mode $\omega_t$ in cluster $\Omega_k$ is active, we have (i) the same $\vx_{t+1}$ due to the same state/input matrices, and (ii) the same probabilities for the next mode $\omega_{t+1}$ belonging to any cluster $\Omega_l$ due to lumpability. They together allow us to construct a mode-reduced and equivalent MJS by grouping modes within the same cluster as a single mode.

In Definition \ref{def_MJSReduction_EquivalenceMJS} and Fact \ref{fact_MJSReduction_existReducedSys}, one can view  $\curlybrackets{\omega_t, \vx_t} \in [\numSys] \times \dm{\dimSt}{1}$ as a hybrid state \cite{abate2011approximate}. $\vT$ being lumpable guarantees the existence of an equivalence relation in the discrete domain $[\numSys]$ as in Definition \ref{def_MJSReduction_lumpable}, while state/input matrices being the same guarantees this in the continuous domain $\dm{\dimSt}{1}$. 

\section{Clustering-based Mode Reduction for MJS}\label{sec_MJSReduction_AlgandTheory}
In this section, we first propose Algorithm \ref{Alg_MJSReduction} to estimate the latent partition $\Omega_{1:\numCls}$ and construct the reduced MJS for Problem \ref{problem_lumpableCase} and \ref{problem_aggregatableCase}, and then provide its theoretical guarantees for partition estimation in Section \ref{sec_MJSReduction_AlgandTheory}-\ref{subsec_MJSReduction_theoryClustering}.

{
\SetAlgoNoLine%
\begin{algorithm}[ht]
	\LinesNumbered
	\KwIn{$\vA_{1:\numSys}, \vB_{1:\numSys}, \vT, \vpi$, $\numCls$, and non-negative tuning weights $\alpha_\vA, \alpha_\vB, \alpha_\vT$ that sum to $1$}
	Construct feature matrix $\vPhi$: $\forall i \in [s]$,\\
	\uCase{Problem \ref{problem_aggregatableCase}\label{algline_MJSReduction_11}}
	{
		$\vPhi(i,:) {=} [\alpha_\vA \vek(\vA_i)^\T, \alpha_\vB \vek(\vB_i)^\T, \alpha_\vT \vT(i,:)]$.  \label{algline_MJSReduction_7}\\
	}
	\uCase{Problem \ref{problem_lumpableCase}} 
	{
		$\vH = \diag(\vpi)^\frac{1}{2} \vT \diag(\vpi)^{\invv{\frac{1}{2}}}$ \label{algline_MJSReduction_12}\\
		$\vW_\numCls \leftarrow \text{ top } \numCls \text{ left singular vectors of } \vH$ \\			
		$\vS_\numCls = \diag(\vpi)^{\invv{\frac{1}{2}}} \vW_\numCls$ \label{algline_MJSReduction_1}\\
		$\vPhi(i,:) = [\alpha_\vA \vek(\vA_i)^\T, \alpha_\vB \vek(\vB_i)^\T, \alpha_\vT \vS_\numCls(i,:)]$ \label{algline_MJSReduction_13}\\
	}
	$\vU_{\numCls} \leftarrow \text{ top } \numCls \text{ left singular vectors of }\vPhi$ \\
	Solve k-means problem:
	$
	\hat{\Omega}_{1:\numCls}, \hat{\vc}_{1:\numCls} = 
	\arg \min_{\hat{\Omega}_{1:\numCls}, \hat{\vc}_{1:\numCls}}
	\sum_{k \in [\numCls]} \sum_{i \in \hat{\Omega}_k} 
	\norm{\vU_{\numCls}(i,:) - \hat{\vc}_k}^2
	$\\	
	\label{algline_MJSReduction_10}
	Construct $\hat{\Sigma}$, $\forall k,l \in [r]$
	$
	\hat{\vA}_k = \frac{1}{|\hat{\Omega}_{k}|} \sum_{i\in \hat{\Omega}_{k}} \vA_i, \quad
	\hat{\vB}_k = \frac{1}{|\hat{\Omega}_{k}|} \sum_{i\in \hat{\Omega}_{k}} \vB_i, $ $\quad
	\hat{\vT}(k,l) = \frac{1}{|\hat{\Omega}_{k}|} \sum_{i\in \hat{\Omega}_{k}, j \in \hat{\Omega}_{l}} \vT(i, j)
	$\\	\label{algline_MJSReduction_20}
	\KwOut{$\hat{\Sigma}: \textup{MJS}(\hat{\vA}_{1:\numCls}, \hat{\vB}_{1:\numCls}, \hat{\vT})$}
	\caption{System Reduction for MJS} \label{Alg_MJSReduction}
\end{algorithm}}

We treat the estimation of partition $\Omega_{1:\numCls}$ essentially as a mode clustering problem with the dynamics matrices $\vA_i$, $\vB_i$ and transition distribution $\vT(i,:)$ serving as features for mode $i$. In Algorithm \ref{Alg_MJSReduction}, we first construct the feature matrix $\vPhi$ from Line \ref{algline_MJSReduction_11} to Line \ref{algline_MJSReduction_13}, with $\vPhi(i,:)$ denoting the features of mode $i$. For the aggregatable case in Problem \ref{problem_aggregatableCase}, we simply stack the vectorized $\vA_i$, $\vB_i$ and $\vT(i,:)$, and use $\alpha_\vA, \alpha_\vB, \alpha_\vT$ to denote their weights respectively. One way to choose these weights is as a normalization, e.g. $\alpha_\vA \propto 1 / \max_i \norm{\vA_i}$, so that these three features would have the same scales. Though in the aggregatable case \ref{problem_aggregatableCase}, similarities among the rows of $\vT$  shed light on the groundtruth partition $\Omega_{1:\numCls}$, this is no longer valid in the lumpable case \ref{problem_lumpableCase} as  two modes belonging to the same cluster can still have different transition probabilities $\vT(i,:)$, even if $\epsilon_\vT = 0$. According to \eqref{eq_MJSReduction_approxLump}, the groundtruth partition $\Omega_{1:\numCls}$ is only embodied in the mode-to-cluster transition probabilities $\sum_{j \in \Omega_l}\vT(i, j)$ constructed using the groundtruth partition itself. This leaves us in a ``chicken-and-egg" dilemma. To deal with this, from Line \ref{algline_MJSReduction_12} to \ref{algline_MJSReduction_13}, we compute the first $\numCls$ left singular vectors $\vW_\numCls$ of matrix $\diag(\vpi)^\frac{1}{2} \vT \diag(\vpi)^{\invv{\frac{1}{2}}}$, and then weight it by $\diag(\vpi)^{\invv{\frac{1}{2}}}$ to obtain matrix $\vS_\numCls \in \dm{\numSys}{\numCls}$, which is used to construct features in $\vPhi$ for the lumpable case \ref{problem_lumpableCase}. We will later justify using $\vS_\numCls$ as features by showing row similarities in $\vS_\numCls$ reflect the partition under certain assumptions. 
% Note that, if the stationary distribution $\vpi$ in the input to Algorithm \ref{Alg_MJSReduction} is not known a priori, it can be computed using $\vT$ via power iteration.

With the feature matrix $\vPhi$, to recover the partition, we resort to k-means: in Line \ref{algline_MJSReduction_10}, k-means is applied to the first $\numCls$ left singular vector $\vU_\numCls$ of $\vPhi$. 
The typical algorithm for k-means is Lloyd's algorithm, where the cluster centers and partition membership are updated alternately. 
% Other methods include gradient descent and Newton's method \cite{bottou1994convergence}.
% Using the low rank structure $\vU_\numCls$ in k-means not only de-noises the perturbation, but also allows us to build theoretical guarantees. 
Based on the solution $\hat{\Omega}_{1:\numCls}$ obtained via k-means, we construct the reduced $\hat{\Sigma}$ by averaging modes within the same estimated cluster. 
A more subtle averaging scheme is through the weights provided in the stationary distribution $\vpi$ which describes the frequency of each mode being active in the long run. $\hat{\Sigma}$ generated by this scheme (or any weighted averaging) would have the same performance guarantees as the uniform averaging, which is provided in Section \ref{sec_MJSReduction_AlgandTheory}-\ref{subsec_MJSReduction_theoryClustering}.

% The clustering and averaging schemes in Algorithm \ref{Alg_MJSReduction} resemble the ideas in \cite{afsari2012group}, where the clustering and averaging of independent LTI systems are considered. 
% Unlike Algorithm \ref{Alg_MJSReduction} that uses the singular vector matrix $\vU_r$ to solve for the partition, \cite{afsari2012group} formulates the k-means problem directly using the feature matrix $\vPhi$ (without the $\vT$ portion). The use of $\vU_r$ demonstrates improved empirical performance and makes it easier to study theoretical properties of the algorithm.

In practice, if we have no good prior knowledge which model of Problem \ref{problem_lumpableCase} and \ref{problem_aggregatableCase} would yield the best system reduction performance, we can first obtain the partitions for both cases and then pick the one yields smaller hindsight perturbations $\epsilon_\vA, \epsilon_\vB, \epsilon_\vT$ in Problem \ref{problem_lumpableCase} and \ref{problem_aggregatableCase}.
When one picks $\alpha_\vA = \alpha_\vB = 0$, i.e only the Markov matrix $\vT$ is used to cluster the modes, then our clustering scheme under the aggregatable case \ref{problem_aggregatableCase} is equivalent to \cite{zhang2018state} which studies clustering for Markov matrices that is estimated from a single trajectory. The lumpable case \ref{problem_lumpableCase}, on the other hand, is based on preliminary analysis in \cite{meila2001random}.

We note that several aspects of this algorithm we have guarantees for do not directly consider the metrics important to this problem; for example averaging dynamics matrices within the same cluster may not yield the dynamics that gives an optimal fit for prediction or controller design. That said, even for this straightforward approach, Section \ref{sec_MJSReduction_approxMetrics} provides several strong approximation guarantees. We are hopeful that future generalizations will be able to build on this theory and further improve the control performance of our mode-reduction approach.

%Note that for either Problem \ref{problem_lumpableCase} and \ref{problem_aggregatableCase}, in the ideal case, i.e., $\epsilon_\vA=\epsilon_\vB=\epsilon_\vT=0$ (zero inner-cluster spread), modes $i$ and $j$ belong to the same cluster would have $\vPhi(i,:)=\vPhi(j,:)$. This ``$=$'' would become ``$\approx$'' in the general case when perturbations exist, i.e., $\epsilon_\vA, \epsilon_\vB, \epsilon_\vT \neq 0$, so we resort to k-means: in Line \ref{algline_MJSReduction_10}, k-means is applied to the first $\numCls$ left singular vector $\vU_\numCls$ of $\vPhi$. Using the low rank structure $\vU_\numCls$ in k-means not only de-noises the perturbation, but also allows us to build theoretical guarantees. Based on the solution $\hat{\Omega}_{1:\numCls}$ obtained via k-means, we construct the reduced MJS($\hat{\vA}_{1:\numCls}, \hat{\vB}_{1:\numCls}, \hat{\vT}$) by averaging modes within the same estimated cluster.

% In practice, if we have no good prior knowledge which model of Problem \ref{problem_lumpableCase} and \ref{problem_aggregatableCase} would yield the best system reduction performance, we can first obtain the partitions for both cases and then pick the one produces smaller hindsight perturbations $\epsilon_\vA, \epsilon_\vB, \epsilon_\vT$ in \eqref{eq_MJSReduction_approxDyn}, \eqref{eq_MJSReduction_approxLump}, and $\eqref{eq_MJSReduction_approxAggre}$.

\subsection{Theoretical Guarantees for Clustering}\label{subsec_MJSReduction_theoryClustering}
In this section, we discuss the clustering performance by comparing the estimated partition $\hat{\Omega}_{1:\numCls}$ and the true $\Omega_{1:\numCls}$.
As k-means algorithms are known to have local convergence properties \cite{bottou1994convergence},
we instead assume for the k-means problem in Algorithm \ref{Alg_MJSReduction}, a $(1+\epsilon)$ approximate solution can be obtained, i.e., 
$
\sum_{k \in [\numCls], i \in \hat{\Omega}_k} \norm{\vU_{\numCls}(i,:)  - \hat{\vc}_k}^2
\leq
(1+\epsilon)\min_{\Omega_{1:\numCls}', \vc_{1:\numCls}'} $ $
\sum_{k \in [\numCls], i \in \Omega_k'} 
\norm{\vU_{\numCls}(i,:)  - \vc_k'}^2.
$
Many efficient algorithms have been developed that can provide  $(1+\epsilon)$ approximate solutions. For $\epsilon=1$, a linear time (in terms of $\numCls$ and $\numSys$) algorithm is provided in \cite{gonzalez1985clustering}. For smaller $\epsilon$, \cite{kumar2004simple} proposes a linear time algorithm using random sampling; \cite{song2010fast} gives a polynomial time algorithm with computational complexity independent of $\numSys$.
We later show how $\epsilon$ affects the overall clustering performance. 
	
To evaluate the performance of partition estimation, we define misclustering rate (MR) as 
$
\textup{MR}(\hat{\Omega}_{1:\numCls}) = \min_{h \in \mathcal{H}} \sum_{k \in [\numCls]} \frac{|\curlybrackets{i: i \in \Omega_k, i \notin \hat{\Omega}_{h(k)}}|}{|\Omega_k|},
$
where $\mathcal{H}$ is the set of all bijections from $[\numCls]$ to $[\numCls]$ so that the comparison finds the best cluster label matching.
The error metric $\textup{MR}$ counts the total misclustered modes normalized by the cluster sizes, which implies clustering errors occurring in smaller clusters would yield larger $\textup{MR}$.
% Note that in \eqref{eq_MJSReduction_9}, each summand has numerator no larger than the its denominator, so $\textup{MR}(\hat{\Omega}_{1:\numCls}) \leq \numCls$ trivially. Actually, for arbitrary $\hat{\Omega}_{1:\numCls}$, we can show $\textup{MR}(\hat{\Omega}_{1:\numCls}) \leq \numCls -1$, which corresponds to the case when there is only a single cluster one automatically has $\textup{MR}(\hat{\Omega}_{1:\numCls})=0$. We will omit this trivial upper bound when bounding to $\textup{MR}$ in the results. 
% The derivation of this upper bound is provided in the proof of Lemma \ref{lemma_MJSReduction_MRWorstCaseBound} in the appendix.

We define the following averaged feature matrix $\bar{\vPhi}$ based on the underlying partition $\Omega_{1:\numCls}$: for all $i \in [\numSys]$ (suppose $i \in \Omega_k$ for some $k \in [\numCls]$), 
$
% \begin{equation}
	% \label{eq_MJSReduction_averagedFeatureMatrix}
	\bar{\vPhi}(i,:) = \frac{1}{|{\Omega}_{k}|} \sum_{i' \in {\Omega}_{k}} \vPhi(i',:).
	% \end{equation}
$
By construction, there are up to $\numCls$ unique rows in $\bar{\vPhi}$, hence $\rank(\bar{\vPhi}) \leq \numCls$. 
We first present the clustering guarantee for Problem \ref{problem_aggregatableCase}, i.e., the aggregatable case.

\begin{theorem}\label{thrm_MJSReduction_AggreCase}
	Consider Problem \ref{problem_aggregatableCase} and Algorithm \ref{Alg_MJSReduction}. Suppose $\hat{\Omega}_{1:\numCls}$ is a $(1+\epsilon)$ k-means solution. Let $\epsilon_{Agg}^2 := \alpha_\vA^2 \epsilon_\vA^2 + \alpha_\vB^2 \epsilon_\vB^2 + \alpha_\vT^2 \epsilon_\vT^2$. Then, if $\rank(\bar{\vPhi}) = \numCls$ and $\epsilon_{Agg} \leq \frac{\sigma_\numCls(\bar{\vPhi}) \sqrt{|\Omega_{(\numCls)}| + |\Omega_{(1)}|}}{8 \sqrt{ (2+\epsilon)  |\Omega_{(1)}|}}$, we have
	\vspace{-0em} % -0.4em
	\begin{equation}
			\label{eq_MJSReduction_aggregatableMR}
		\textup{MR}(\hat{\Omega}_{1:\numCls})
		\leq 
		64(2+\epsilon) \sigma_\numCls(\bar{\vPhi})^{\invv{2}}
		\epsilon_{Agg}^2.
	\end{equation}
	Additionally, if $\epsilon_{Agg} \leq \frac{\sigma_\numCls(\bar{\vPhi})}{8 \sqrt{ (2+\epsilon) |\Omega_{(1)}|}}$, then $\textup{MR}(\hat{\Omega}_{1:\numCls}) = 0$.
\end{theorem}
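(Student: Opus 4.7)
The plan is to invoke a standard spectral clustering pipeline consisting of three stages: (i) bound the Frobenius distance between the perturbed feature matrix $\vPhi$ and its cluster-averaged counterpart $\bar{\vPhi}$ in terms of $\epsilon_{Agg}$; (ii) use a Davis--Kahan/Wedin perturbation bound to transport this distance to the top-$\numCls$ left singular subspaces $\vU_\numCls$ and $\bar{\vU}_\numCls$; and (iii) apply a $(1+\epsilon)$-approximate k-means robustness lemma to translate the subspace perturbation into a misclustering-rate bound.

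For step (i), I observe that for any $i \in \Omega_k$ the row $\vPhi(i,:) - \bar{\vPhi}(i,:)$ decomposes block-wise into weighted centered differences of $\vA$, $\vB$, and $\vT(i,:)$. Using the variance identity $\sum_{i \in \Omega_k} \|\vA_i - \bar{\vA}_k\|_\fro^2 = \tfrac{1}{2|\Omega_k|}\sum_{i,i' \in \Omega_k}\|\vA_i - \vA_{i'}\|_\fro^2$, the fact that $|\Omega_k| \geq 1$, and the elementary inequality $(\sum_j a_j)^2 \geq \sum_j a_j^2$ for $a_j \geq 0$ applied to the hypothesis on $\epsilon_\vA$, I get $\sum_k \sum_{i \in \Omega_k} \|\vA_i - \bar{\vA}_k\|_\fro^2 \leq \tfrac{1}{2}\epsilon_\vA^2$, with analogous bounds for $\vB$ and the rows of $\vT$ (the aggregatable form of $\epsilon_\vT$ is crucial here, since it controls row-wise differences directly). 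Combining the three blocks yields $\|\vPhi - \bar{\vPhi}\|_\fro \leq \epsilon_{Agg}/\sqrt{2}$.

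For step (ii), the assumption $\rank(\bar{\vPhi}) = \numCls$ forces $\sigma_{\numCls+1}(\bar{\vPhi}) = 0$, so Wedin's theorem gives an orthogonal $\vO$ with $\|\vU_\numCls \vO - \bar{\vU}_\numCls\|_\fro \leq \sqrt{2}\,\|\vPhi - \bar{\vPhi}\|_\fro / \sigma_\numCls(\bar{\vPhi}) \leq \epsilon_{Agg}/\sigma_\numCls(\bar{\vPhi})$. I then characterize $\bar{\vU}_\numCls$: writing $\bar{\vPhi} = \vZ \vM$ with $\vZ \in \{0,1\}^{\numSys \times \numCls}$ the cluster-membership matrix and $\vM$ of full row-rank $\numCls$, a short calculation exploiting $\bar{\vU}_\numCls^\T \bar{\vU}_\numCls = \vI$ shows that the rows of $\bar{\vU}_\numCls$ are block-constant within each $\Omega_k$ and take a value of the form $\vc_k^\star = \vQ(k,:)^\T/\sqrt{|\Omega_k|}$ for an orthonormal system $\{\vQ(k,:)\}_{k=1}^\numCls$. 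In particular, the pairwise center separation is $\|\vc_k^\star - \vc_{k'}^\star\|^2 = 1/|\Omega_k| + 1/|\Omega_{k'}|$, so the minimum gap seen by any mode in cluster $k$ satisfies $\delta_k^2 \geq 1/|\Omega_k| + 1/|\Omega_{(1)}|$.

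For step (iii), I invoke a $(1+\epsilon)$ k-means perturbation lemma in the style of Lei--Rinaldo: any mode misclassified relative to the ideal $\bar{\vU}_\numCls$ after the optimal label permutation must satisfy $\|\vU_\numCls(i,:)\vO - \vc_{k(i)}^\star\|^2 \geq \delta_{k(i)}^2/(16(2+\epsilon))$. Counting such modes per cluster, using the budget $\|\vU_\numCls \vO - \bar{\vU}_\numCls\|_\fro^2 \leq \epsilon_{Agg}^2/\sigma_\numCls(\bar{\vPhi})^2$, normalizing by $|\Omega_k|$, and summing produces $\textup{MR}(\hat{\Omega}_{1:\numCls}) \leq 64(2+\epsilon)\sigma_\numCls(\bar{\vPhi})^{-2}\epsilon_{Agg}^2$. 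The two smallness conditions on $\epsilon_{Agg}$ reflect two consistency requirements: the weaker threshold with $\sqrt{|\Omega_{(\numCls)}| + |\Omega_{(1)}|}$ ensures that the k-means output admits a well-defined bijective label matching with the ground-truth clusters (via a pigeonhole-style argument combining the per-cluster bounds with $\delta_{(\numCls)}^2 = 1/|\Omega_{(\numCls)}| + 1/|\Omega_{(1)}|$), so the per-cluster counts aggregate correctly; the stronger threshold $\epsilon_{Agg} \leq \sigma_\numCls(\bar{\vPhi})/(8\sqrt{(2+\epsilon)|\Omega_{(1)}|})$ pushes the bound strictly below $1/|\Omega_{(1)}|$, and since a single misclustered mode contributes at least $1/|\Omega_{(1)}|$ to $\textup{MR}$, this forces $\textup{MR} = 0$. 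The main obstacle I anticipate is tracking sharp constants through the k-means step, especially coupling the label permutation $h$ appearing in the definition of $\textup{MR}$ with the orthogonal rotation $\vO$ from Wedin's theorem so that the per-cluster counts recombine into the precise $|\Omega_{(\numCls)}| + |\Omega_{(1)}|$ factor rather than a cruder worst-case substitute such as $|\Omega_{(1)}|$ alone.
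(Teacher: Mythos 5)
Your proposal is correct and follows essentially the same route as the paper: bound $\norm{\vPhi - \bar{\vPhi}}_\fro$ by $\epsilon_{Agg}$ (your variance-identity version is in fact slightly tighter than the paper's triangle-inequality/Cauchy--Schwarz bound), pass to the top-$\numCls$ left singular subspace via a Wedin-type bound with $\sigma_{\numCls+1}(\bar{\vPhi})=0$, use the block-constant structure and $1/|\Omega_k|+1/|\Omega_l|$ separation of the rows of $\bar{\vU}_\numCls$, and finish with the Lei--Rinaldo $(1+\epsilon)$ k-means lemma; the paper packages these exact ingredients as Lemmas \ref{lemma_MJSReduction_singvecPerturb}--\ref{lemma_MJSReduction_kmeans2}. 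Your reading of the two smallness conditions (applicability of the k-means lemma vs.\ pushing the bound below the minimal nonzero value $1/|\Omega_{(1)}|$ of $\textup{MR}$) also matches the paper's argument, and the minor constant discrepancies only make your intermediate bounds tighter than needed.
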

The key term $\epsilon_{Agg}$ measures how modes within the same cluster differ from each other, i.e., inner-cluster distance. On the other hand, the singular value $\sigma_\numCls(\bar{\vPhi})$ measures the differences of modes from different clusters, i.e., inter-cluster distance. This is because when modes belonging to different clusters have similar features, their corresponding rows in the averaged feature matrix $\bar{\vPhi}$ will also be similar, which could give small $\sigma_\numCls(\bar{\vPhi})$. Particularly, if two different clusters share the same features, then $\rank(\bar{\vPhi})=\numCls-1$ and $\sigma_\numCls(\bar{\vPhi})=0$. In the theorem, when the inner-cluster distance is small compared to the inter-cluster distance, the misclustering rate can be bounded by their ratio $\epsilon_{Agg}/\sigma_\numCls(\bar{\vPhi})$.
%The misclustering rate can be bounded by the inner-cluster spread $\epsilon_{Agg}$ when it is small enough. As for the inter-cluster distance, i.e., dissimilarity of $\vA_i, \vB_i$ and $\vT(i,:)$ of modes in different clusters, it affects the guarantee through the term $\sigma_\numCls(\bar{\vPhi})$. This is because when two modes belonging different clusters have similar features, their rows in the averaged feature matrix $\bar{\vPhi}$ would also be similar, which could lead to small $\sigma_\numCls(\bar{\vPhi})$ and larger error bound. 
By definition of misclustering rate, the smallest nonzero value it can take is given by $\frac{1}{|\Omega_{(1)}|}$. Therefore, whenever the upper bound in \eqref{eq_MJSReduction_aggregatableMR} is smaller than $\frac{1}{|\Omega_{(1)}|}$, one can guarantee $\textup{MR}(\hat{\Omega}_{1:\numCls})=0$, which yields the final claim in Theorem \ref{thrm_MJSReduction_AggreCase}.

The clustering guarantee for the lumpable case in Problem \ref{problem_lumpableCase} is more involved than the aggregatable case. We first provide a few more notions and definitions that can help the exposition. We say a Markov matrix $\vT$ is \emph{reversible} if there exists a distribution $\vpi \in \dm{\numSys}{1}$ such that $\vpi(i)\vT(i,j) = \vpi(j)\vT(j,i)$ for all $i,j \in [\numSys]$. This condition translates to $\diag(\vpi) \vT = \vT^\T \diag(\vpi)$ when $\vT$ is ergodic with stationary distribution $\vpi$. For a reversible Markov matrix that is also lumpable, we have the following property.
\begin{lemma}[Appendix A in \cite{meila2001random}]\label{lemma_MJSReduction_lumpableInformativeSpectrum_shorVer}
	For a reversible Markov matrix $\vT$ that is also lumpable with respect to partition $\Omega_{1:\numCls}$, it is diagonalizable with real eigenvalues. Let $\vS \in \dm{\numSys}{\numSys}$ denote an arbitrary eigenvector matrix of $\vT$. Then, there exists an index set $\Acal \subseteq [\numSys]$ with $|\Acal| = \numCls$ such that for all $k\in [\numCls]$, for all $i, i' \in \Omega_k$, we have $\vS(i, \Acal) = \vS(i', \Acal)$.
\end{lemma}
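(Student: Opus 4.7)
The plan is to exploit reversibility to obtain a real-spectrum diagonalization and to exploit lumpability to produce an $\numCls$-dimensional $\vT$-invariant subspace whose members are exactly the vectors constant on clusters, then identify $\numCls$ columns of $\vS$ living in that subspace. I would introduce the cluster indicator $\vV \in \dm{\numSys}{\numCls}$ with $\vV(i,k) = 1$ iff $i \in \Omega_k$ and set $V := \textup{range}(\vV)$, noting that a vector $\vs \in \R^\numSys$ is constant on each $\Omega_k$ iff $\vs \in V$. The lemma then reduces to exhibiting $\numCls$ columns of $\vS$ lying in $V$.

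For diagonalizability, the reversibility identity $\diag(\vpi)\vT = \vT^\T \diag(\vpi)$ forces $\vH := \diag(\vpi)^{1/2}\vT\diag(\vpi)^{-1/2}$ to be symmetric, so $\vT$ is similar to a real symmetric matrix and its eigenvalues are real. Next, lumpability is equivalent to the existence of $\breve{\vT} \in \dm{\numCls}{\numCls}$ with $\vT\vV = \vV\breve{\vT}$, which shows $V$ is $\vT$-invariant. Equipping $\R^\numSys$ with the $\vpi$-weighted inner product $\langle \va, \vb \rangle_\vpi := \va^\T \diag(\vpi)\vb$ and letting $V^{\perp_\vpi}$ be its orthogonal complement, a second use of reversibility gives, for $\vu \in V$ and $\vv \in V^{\perp_\vpi}$, $\langle \vu, \vT\vv \rangle_\vpi = \langle \vT\vu, \vv \rangle_\vpi = 0$ since $\vT\vu \in V$. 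So $V^{\perp_\vpi}$ is also $\vT$-invariant, and each eigenspace splits as $E_\lambda = (E_\lambda \cap V) \oplus (E_\lambda \cap V^{\perp_\vpi})$ with $\sum_\lambda \dim(E_\lambda \cap V) = \dim V = \numCls$. An explicit in-$V$ eigenbasis is $\{\vV\breve{\vs}_k\}_{k=1}^\numCls$ where $\{\breve{\vs}_k\}$ diagonalizes $\breve{\vT}$; the latter inherits reversibility with stationary distribution $\breve{\vpi}(k) = \sum_{i \in \Omega_k} \vpi(i)$, so it is itself diagonalizable with real eigenvalues.

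The main delicate point is the word \emph{arbitrary}. When all eigenvalues of $\vT$ are simple, each eigenspace is one-dimensional and lies wholly in $V$ or wholly in $V^{\perp_\vpi}$, so the $\numCls$ columns forming $\Acal$ are determined up to scaling and any eigenvector matrix works. In the presence of repeated eigenvalues, though, a genuinely arbitrary basis of $E_\lambda$ may contain no vectors in $E_\lambda \cap V$; the statement must therefore be read as asserting that one is free to select, within each eigenspace, a basis aligned with the $V \oplus V^{\perp_\vpi}$ decomposition established above. Aggregating the in-$V$ vectors across all eigenvalues into $\Acal$ yields $|\Acal|=\numCls$ columns with $\vS(:,a) \in V$ for every $a \in \Acal$, which is precisely $\vS(i,\Acal) = \vS(i',\Acal)$ for all $i,i' \in \Omega_k$. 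Handling this multiplicity subtlety cleanly (or confining to the generic simple-spectrum case that motivates the algorithm's use of the top-$\numCls$ singular vectors of $\vH$) is the main obstacle; the remainder is routine bookkeeping of the similarity implicit in $\vH$.
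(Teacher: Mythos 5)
The paper does not actually prove this lemma---it is imported verbatim as a citation to Appendix~A of \cite{meila2001random}---so there is no in-paper argument to compare against; your proposal has to stand on its own, and it does. The skeleton is sound: reversibility makes $\vT$ self-adjoint under $\langle \va,\vb\rangle_\vpi = \va^\T\diag(\vpi)\vb$ (equivalently, similar to the symmetric $\vH$), giving diagonalizability with real spectrum; lumpability is correctly identified with the intertwining relation $\vT\vV=\vV\breve{\vT}$, so $V=\textup{range}(\vV)$ is invariant, and self-adjointness makes $V^{\perp_\vpi}$ invariant too, whence every eigenspace splits as $(E_\lambda\cap V)\oplus(E_\lambda\cap V^{\perp_\vpi})$ with the in-$V$ dimensions summing to $\numCls$. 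Your check that $\breve{\vT}$ inherits reversibility with $\breve{\vpi}(k)=\sum_{i\in\Omega_k}\vpi(i)$, so that $\{\vV\breve{\vs}_k\}$ furnishes an explicit in-$V$ eigenbasis, is correct. Your flag on the word \emph{arbitrary} is also well taken and is a defect of the statement rather than of your proof: if some $E_\lambda$ meets both $V$ and $V^{\perp_\vpi}$ nontrivially (e.g.\ $\vT=\frac{1}{4}\onevec_4\onevec_4^\T$ with two clusters of size two, where $E_0$ is three-dimensional with a one-dimensional intersection with $V$), a generic eigenbasis of $E_\lambda$ contains no vector constant on clusters and no index set $\Acal$ of size $\numCls$ works. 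The conclusion therefore holds for a suitably chosen eigenvector matrix (or whenever the $\numCls$ ``lumped'' eigenvalues are simple as eigenvalues of $\vT$), which is exactly the regime the paper restricts to anyway through its \emph{informative spectrum} assumption and its use of the top-$\numCls$ singular vectors of $\vH$. The only cosmetic slip is the typo ``$\omega_h\in\Omega_h$'' where you mean membership in the cluster indexed by the corresponding reduced mode; it does not affect the argument.
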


We say $\vT$ in Lemma \ref{lemma_MJSReduction_lumpableInformativeSpectrum_shorVer} has \emph{informative spectrum} if $\Acal = [\numCls]$ and $|\lambda_{\numCls}(\vT)| > |\lambda_{\numCls+1}(\vT)|$, which implies that the $\numCls$ eigenvectors that carry partition information in Lemma \ref{lemma_MJSReduction_lumpableInformativeSpectrum_shorVer} correspond to the $\numCls$ leading eigenvalues. 
% The strict inequality in condition $|\lambda_{\numCls}(\vT)| > |\lambda_{\numCls+1}(\vT)|$ eliminates the tie case so that the leading $\numCls$ eigenvectors are uniquely defined. 
For lumpable Markov matrices, we define the $\epsilon_\vT$-neighborhood of $\vT$:
\begin{equation}\label{eq_NeighborhoodT}
\begin{split}
\Lcal(\vT, \Omega_{1:\numCls}, \epsilon_\vT) :=  \Big\{\vT_0{\in} \dm{\numSys}{\numSys} :  \vT_0 \text{ is Markovian}, \hspace{2em} \\ 
\norm{\vT_0 - \vT}_\infty \leq \epsilon_\vT, \norm{\vT_0 - \vT}_\fro \leq \epsilon_\vT,   \hspace{2em}\\  \hspace{-0.5em}
\forall k,l {\in} [\numCls], \forall i {\in} \Omega_k, \  \sum_{j \in \Omega_l} \vT_0(i,j) = \frac{1}{|\Omega_k|} \sum_{\substack{i' \in \Omega_k\\j \in \Omega_l}} \vT(i', j) \Big\}.
\end{split}
\end{equation}
Under the approximate lumpability condition in \eqref{eq_MJSReduction_approxLump}, one can show this neighborhood set is non-empty. Related discussions are provided in Appendix \ref{appendix_lumpable}. To find such a $\vT_0 \in \Lcal(\vT, \Omega_{1:\numCls}, \epsilon_\vT)$, one only needs to solve a feasibility linear programming problem.
Then we provide the clustering guarantee for the lumpable case.

\begin{theorem}\label{thrm_MJSReduction_LumpCase}
	Consider Problem \ref{problem_lumpableCase} and Algorithm \ref{Alg_MJSReduction}. 
	Let 
	$\gamma_1 {:=} \sum_{i=2}^{\numSys} \frac{1}{1 - \lambda_i(\vT)}$, 
	$\gamma_2 {:=} \min \curlybrackets{\sigma_{\numCls}(\vH) - \sigma_{\numCls+1}(\vH), 1}, 
	\gamma_3 {:=}$		
	$ \frac{16 \gamma_1 \sqrt{\numCls \pi_{\max}} \norm{\vT}_\fro}{\gamma_2 \pi_{\min}^2} $, and 
	$\epsilon_{Lmp}^2 {:=} \alpha_\vA^2 \epsilon_\vA^2 + \alpha_\vB^2 \epsilon_\vB^2 + \alpha_\vT^2 \gamma_3^2 \epsilon_\vT^2$.
	Assume there exists an ergodic and reversible $\vT_0 \in \Lcal(\vT, \Omega_{1:\numCls}, \epsilon_\vT)$ with informative spectrum.
	Suppose $\hat{\Omega}_{1:\numCls}$ is a $(1+\epsilon)$ k-means solution. Then, if $\rank(\bar{\vPhi}) {=} \numCls$,
	$\epsilon_\vT {\leq} \frac{\pi_{\min}}{\gamma_1}$,  	
	$\epsilon_{Lmp} {\leq} \hspace{-0.1em} \frac{\sigma_{\numCls}\hspace{-0.1em}(\hspace{-0.1em}\bar{\vPhi}\hspace{-0.1em}) \sqrt{|\Omega_{(\numCls)}| {+} |\Omega_{(1)}|}}{8 \sqrt{\numSys (2+\epsilon) |\Omega_{(1)}| }}$, we have
	\vspace{-0em}   % -0.4em
	\begin{equation}
    	\textup{MR}(\hspace{-0.1em} \hat{\Omega}_{1:\numCls} \hspace{-0.1em})
    	{\leq}
    	64 (2 {+} \epsilon) \sigma_{\numCls}(\bar{\vPhi})^{\invv{2}}  \epsilon_{Lmp}^2.
	\end{equation}
	Additionally, if $\epsilon_{Lmp} \leq \frac{\sigma_{\numCls}(\bar{\vPhi})}{8 \sqrt{(2+\epsilon) |\Omega_{(1)}|}}$, then $\textup{MR}(\hat{\Omega}_{1:\numCls}) = 0$.
\end{theorem}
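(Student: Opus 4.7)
The plan is to reduce the lumpable setting of this theorem to the aggregatable setting of Theorem \ref{thrm_MJSReduction_AggreCase} by quantifying how much the spectral block $\vS_\numCls$ of the feature matrix deviates from an ``ideal'' block-constant matrix. Since the $\vA$ and $\vB$ portions of $\vPhi$ are identical to the aggregatable case, the only new work is to convert the $\epsilon_\vT$-approximate lumpability of $\vT$ into a Frobenius bound on the within-cluster variation of $\vS_\numCls(i,:)$ that scales as $\gamma_3 \epsilon_\vT$.

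First, I would invoke the hypothesized $\vT_0 \in \Lcal(\vT, \Omega_{1:\numCls}, \epsilon_\vT)$, which is ergodic, reversible, lumpable, and has informative spectrum. Let $\vpi_0$ denote its stationary distribution and $\vH_0 = \diag(\vpi_0)^{\frac{1}{2}} \vT_0 \diag(\vpi_0)^{\invv{\frac{1}{2}}}$; by reversibility, $\vH_0$ is symmetric. Collect its top $\numCls$ eigenvectors into $\vW_\numCls^{(0)}$ and set $\vS_\numCls^{(0)} = \diag(\vpi_0)^{\invv{\frac{1}{2}}} \vW_\numCls^{(0)}$. By Lemma \ref{lemma_MJSReduction_lumpableInformativeSpectrum_shorVer}, applied with the informative-spectrum index set equal to $[\numCls]$, the rows of $\vS_\numCls^{(0)}$ are \emph{exactly} constant on each $\Omega_k$. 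If we used $\vS_\numCls^{(0)}$ inside the feature matrix of Line \ref{algline_MJSReduction_13}, the $\vT$-contribution to the inner-cluster variance would vanish and we would be back in the aggregatable regime.

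Second, I would bound $\norm{\vS_\numCls - \vS_\numCls^{(0)} \vQ}_\fro$ for some orthogonal $\vQ$, which is harmless because k-means applied to $\vU_\numCls$ is rotation-invariant. Two perturbation pieces feed in: (i) a stationary-distribution bound $\norm{\vpi_0 - \vpi}_1 \lesssim \gamma_1 \epsilon_\vT$ obtained from the standard Markov chain perturbation identity involving the fundamental/group-inverse matrix, whose spectrum produces $\gamma_1 = \sum_{i \geq 2} 1/(1 - \lambda_i(\vT))$; the hypothesis $\epsilon_\vT \leq \pi_{\min}/\gamma_1$ keeps $\vpi_0$ bounded away from $0$ so that $\diag(\vpi_0)^{\invv{\frac{1}{2}}}$ is well conditioned; and (ii) a Davis--Kahan/Wedin bound on the top-$\numCls$ singular subspaces of $\vH$ and $\vH_0$, driven by $\norm{\vH - \vH_0}_\fro \lesssim \epsilon_\vT$ and by the spectral gap $\gamma_2$. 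Combining (i) and (ii) while carrying the norm factors $\sqrt{\numCls \pi_{\max}} \norm{\vT}_\fro / \pi_{\min}^2$ incurred when swapping $\diag(\vpi_0)^{\pm 1/2}$ for $\diag(\vpi)^{\pm 1/2}$ assembles precisely the constant $\gamma_3 = 16 \gamma_1 \sqrt{\numCls \pi_{\max}} \norm{\vT}_\fro / (\gamma_2 \pi_{\min}^2)$, yielding $\norm{\vS_\numCls - \vS_\numCls^{(0)} \vQ}_\fro \leq \gamma_3 \epsilon_\vT$.

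Third, once this inequality is in hand, the feature matrix $\vPhi$ of Line \ref{algline_MJSReduction_13} differs from an ideal block-constant matrix (using $\vS_\numCls^{(0)} \vQ$ in place of $\vS_\numCls$) by at most $\sqrt{\alpha_\vA^2 \epsilon_\vA^2 + \alpha_\vB^2 \epsilon_\vB^2 + \alpha_\vT^2 \gamma_3^2 \epsilon_\vT^2} = \epsilon_{Lmp}$ in Frobenius norm, up to an extra $\sqrt{\numSys}$ that surfaces when the row-averaged $\bar\vPhi$ is the benchmark (this explains the $\sqrt{\numSys}$ in the smallness condition on $\epsilon_{Lmp}$). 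I can then replay the proof of Theorem \ref{thrm_MJSReduction_AggreCase} verbatim: Wedin's theorem gives that $\vU_\numCls$ is $O(\epsilon_{Lmp}/\sigma_\numCls(\bar\vPhi))$-close (in some rotation) to its block-constant counterpart, and the standard $(1+\epsilon)$ k-means analysis yields $\textup{MR}(\hat\Omega_{1:\numCls}) \leq 64(2+\epsilon)\sigma_\numCls(\bar\vPhi)^{\invv{2}}\epsilon_{Lmp}^2$. Comparing with the smallest nonzero value of $\textup{MR}$, namely $1/|\Omega_{(1)}|$, converts the stronger hypothesis $\epsilon_{Lmp} \leq \sigma_\numCls(\bar\vPhi)/(8\sqrt{(2+\epsilon)|\Omega_{(1)}|})$ into exact recovery $\textup{MR} = 0$.

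The main obstacle is the second step: tracking constants through \emph{two} simultaneous perturbations---of $\vH$ itself and of the rescaling diagonal $\diag(\vpi)^{\pm 1/2}$---and verifying that the cumulative effect on $\vS_\numCls$ really scales as $\gamma_3 \epsilon_\vT$ with no hidden dependence on $\numSys$ beyond the declared $\sqrt{\numCls}$. A secondary technical point is that Algorithm \ref{Alg_MJSReduction} never touches $\vT_0$; the argument must only rely on the \emph{existence} of such a reversible, informative-spectrum element of $\Lcal(\vT, \Omega_{1:\numCls}, \epsilon_\vT)$ (deferred to Appendix \ref{appendix_lumpable}) so that $\vS_\numCls^{(0)}$ is a purely analytical object used to rewrite $\vS_\numCls$ as ``ideal $+$ small perturbation.''
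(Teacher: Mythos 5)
Your proposal follows essentially the same route as the paper's proof: it invokes the reversible, informative-spectrum $\vT_0 \in \Lcal(\vT, \Omega_{1:\numCls}, \epsilon_\vT)$ to define an exactly block-constant spectral embedding, bounds $\norm{\vS_\numCls - \vS_{0,\numCls}\vO^\star}_\fro \lesssim \gamma_3\epsilon_\vT$ by combining the stationary-distribution perturbation (yielding $\gamma_1$) with a Wedin-type singular-subspace bound on $\vH$ versus $\vH_0$ (yielding $\gamma_2$), and then replays the aggregatable-case argument (singular-vector perturbation plus the $(1+\epsilon)$ k-means guarantee) to obtain the misclustering bound; this is exactly the structure of Lemma~\ref{lemma_MJSReduction_SminusSbar} and the main proof in Appendix~\ref{appendix_lumpable}. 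The only cosmetic difference is that the paper benchmarks against the row-averaged $\vSbar_\numCls$ and pays a factor of $2$ via a block-diagonal averaging argument rather than comparing directly to $\vS_{0,\numCls}\vO^\star$, but this does not change the substance of the argument.
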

Theorem \ref{thrm_MJSReduction_LumpCase} for the lumpable case is similar to Theorem \ref{thrm_MJSReduction_AggreCase} for the aggregatable case with an additional $\gamma_3$ term. This is a result of using $\vS_\numCls$ and $\vT$ to construct features in Algorithm \ref{Alg_MJSReduction} for these two cases. $\gamma_3$ describes how much the lumpability perturbation $\epsilon_\vT$ on $\vT$ affects the row equalities of its spectrum-related matrix $\vS_\numCls$ in Lemma \ref{lemma_MJSReduction_lumpableInformativeSpectrum_shorVer}. The assumption on the existence of $\vT_0$ with informative spectrum guarantees (i) the partition $\Omega_{1:\numCls}$ information is carried by the leading eigenvectors of $\vT_0$ as introduced in Lemma \ref{lemma_MJSReduction_lumpableInformativeSpectrum_shorVer}, and (ii) this information can still be preserved in $\vS_\numCls$ as long as $\vT$ is close to $\vT_0$. Because of this, Theorem \ref{thrm_MJSReduction_LumpCase} may not hold for arbitrary lumpable $\vT$, but only those close to Markov matrices with informative spectra. 

% Recall aggregatibility is a special case of lumpability, so for Problem \ref{problem_aggregatableCase}, one can either follow (a) the proposed Line \ref{algline_MJSReduction_7} in Algorithm \ref{Alg_MJSReduction} or (b) Line \ref{algline_MJSReduction_12} to Line \ref{algline_MJSReduction_13} without exploiting the fact of approximate aggregatibility. However, theoretical guarantee (Theorem \ref{thrm_MJSReduction_AggreCase}) when following (a) would be much more broadly applicable than the guarantee (Theorem \ref{thrm_MJSReduction_LumpCase}) of (b) since the former one does not require informative spectrum assumptions.

\section{Approximation Guarantees} \label{sec_MJSReduction_approxMetrics}
With perturbations $\epsilon_\vA, \epsilon_\vB, \epsilon_\vT$, the reduced $\hat{\Sigma}$ may not be equivalent to the original $\Sigma$ as in Fact \ref{fact_MJSReduction_existReducedSys}. In this case, if certain approximation guarantees can be established, they can be used in verification tasks such as safety \cite{julius2009approximations} and invariance \cite{soudjani2011adaptive} evaluations.
In this section, we show that the reduced system $\hat{\Sigma}$ can be guaranteed to well approximate the original system $\Sigma$ under metrics such as transition kernels (distributions) and trajectory realizations. Particularly, these metrics reach $0$ when $\epsilon_\vA, \epsilon_\vB, \epsilon_\vT=0$, i.e., the mode-reducibility condition in Definition \ref{def_MJSReduction_modeReducible} holds.
We have shown in Theorem \ref{thrm_MJSReduction_AggreCase} and \ref{thrm_MJSReduction_LumpCase} that $\textup{MR}(\hat{\Omega}_{1:\numCls})=0$ when perturbations $\epsilon_\vA, \epsilon_\vB, \epsilon_\vT$ are small. Hence, in this section together with Section \ref{sec_MJSReduction_stabAnalysis} and \ref{sec_MJSReduction_LQRControl}, we assume $\Omega_{1:\numCls} {=} \hat{\Omega}_{1:\numCls}$ for simplicity. In these sections, the theory holds for perturbations $\epsilon_\vA, \epsilon_\vB, \epsilon_\vT$ introduced in either \ref{problem_lumpableCase} or \ref{problem_aggregatableCase}.

We study the approximation in a setup where $\Sigma$ and $\hat{\Sigma}$ start with the same initial condition and are driven by the same input.
\begin{setup}[Initialization-Excitation Setup]\label{setup_InitializationExcitation}
	Systems $\Sigma$ and $\hat{\Sigma}$ have (i) initial mode distributions satisfy $\prob(\omega_0 \in \hat{\Omega}_k) = \prob(\omegahat_0 = k)$ for all $k \in [\numCls]$; (ii) the same initial states, i.e., $\vx_0 = \hat{\vx}_0$, and (iii) the same inputs $\vu_t = \hat{\vu}_t$ for all $t$.
\end{setup}
Note that when $\Sigma$ and $\hat{\Sigma}$ have fixed and shared initial conditions and inputs as in setup \ref{setup_InitializationExcitation}, we can at most evaluate the difference between $\vx_t$ and $\vxhat_t$ in terms of their distributions (or, the transition kernels of $\Sigma$ and $\hat{\Sigma}$).
However, the actual realizations of $\vx_t$ and $\vxhat_t$, i.e., when we only generate a single sample for each, can be very different. This is because $\vx_t$ and $\vxhat_t$ are driven not only by the input excitation, but also the mode switching sequences $\omega_{0:t-1}$ and $\omegahat_{0:t-1}$; thus, $\vx_t$ and $\vxhat_t$ are likely to be far away from each other if the realizations of $\omega_{0:t-1}$ and $\omegahat_{0:t-1}$ are different. On the other hand, if the reduced model $\hat{\Sigma}$ is to be used online to predict the future behavior of $\Sigma$, and if the mode $\omega_t$ can be observed at run-time, we can assume the following and derive stronger relations on the state realization difference $\norm{\vx_t - \vxhat_t}$. 
%According to Fact \ref{fact_MJSReduction_existReducedSys}, we know in the mode-reducible case, i.e., $\epsilon_\vA, \epsilon_\vB, \epsilon_\vT=0$, a sufficient condition to guarantee $\vx_t = \vxhat_t$ is the synchrony relation between $\omega_t$ and $\omegahat_t$. Therefore, when evaluating the trajectory difference $\norm{\vx_t - \vxhat_t}$, we additionally enforce the following mode synchrony setup, which essentially establishes a coupling between $\omega_{0:t-1}$ and $\omegahat_{0:t-1}$ as discussed in Section \ref{sec_MJSReduction_prelim}-\ref{subsec_MJSReduction_ModereducibleCond}. 
\begin{setup}[Mode Synchrony]\label{setup_modeSynchrony}
	Mode $\omegahat_t$ of $\hat{\Sigma}$ is synchronous to $\omega_t$ of $\Sigma$, i.e., for all $t$, if $\omega_t \in \hat{\Omega}_k$ then $\omegahat_t=k$.
\end{setup} 
Mode synchrony setup essentially establishes the strongest possible coupling between $\omega_{0:t}$ and $\omegahat_{0:t}$ as discussed in Section \ref{sec_MJSReduction_prelim}-\ref{subsec_MJSReduction_ModereducibleCond}. When the mode sequence $\omegahat_{0:t}$ of $\hat{\Sigma}$ is synchronized with that of $\Sigma$, this amounts to having $\hat{\Sigma}$ being driven by an external switching signal $\omega_{0:t}$. 
%For the case when $\hat{\Omega}_{1:\numCls} {=} \Omega_{1:\numCls}$, this further reduces to $\omegahat_{0:t} = \zeta_{0:t}$.

% Recall by definition $\zeta_t=k$ if and only if $\omega_t=k$, so \ref{setup_modeSynchrony} also interprets to $\omegahat_t = \zeta_t$, which essentially establishes a coupling between $\omega_{0:t}$ and $\omegahat_{0:t}$ as discussed in Section \ref{sec_MJSReduction_prelim}-\ref{subsec_MJSReduction_ModereducibleCond}. One can also view $\hat{\Sigma}$ as being driven by an external switching signal $\zeta_{0:t}$. Note that the marginal distribution of $\omegahat_{0:t}$ follows from the Markov chain $\vThat$ when $\epsilon_\vT=0$, but not necessarily so otherwise.
% With \ref{setup_modeSynchrony}, a theoretical upper bound for $\norm{\vx_t - \vxhat_t}$ provides us a confidence region for the actual location of $\vx_t$. 
% One can refer to \cite{abate2013approximation, tkachev2014approximation, shen2019model} for more discussions on how different couplings allow for different types of approximation metrics.

In the following, we provide
bounds on how close $\hat{\Sigma}$ is to $\Sigma$ in terms of the following approximation metrics: (i)  under the mean-square stability of $\Sigma$, the difference $\norm{\vx_t - \vxhat_t}$ in trajectories (Theorem \ref{thrm_MJSReduction_stateDistSyncMSS}); (ii) under uniform stability, the difference in trajectories (Theorem \ref{thrm_MJSReduction_bisimUS} \ref{enum_bisimUS_stateDistSync}) and the difference of transition kernels (Theorem \ref{thrm_MJSReduction_bisimUS} \ref{enum_bisimUS_WasDisAutonomousSys}).

\subsection{Result with Mean-square Stability}\label{subsec_MJSReduction_MSS}
Due to the stochasticity of mode switching, the stability for MJS is typically studied in mean-square sense. We say $\Sigma$ is mean-square stable (MSS), if there exists $\vSigma_{\infty}$ such that $\lim_{t \rightarrow \infty}\expctn[\vx_t \vx_t^\T] = \vSigma_{\infty}$. Define the augmented state matrix $\vAcal \in \dm{\numSys \dimSt^2}{\numSys \dimSt^2}$ with its $ij$-th $\dimSt^2 \times \dimSt^2$ block given by $[\vAcal]_{ij}:= \vT(j,i) \cdot \vA_j \otimes \vA_j$, and let $\rho(\vAcal)$ denote the spectral radius of $\vAcal$. Then, for the autonomous case, i.e., $\vu_t = 0$, $\Sigma$ being MSS is equivalent to $\rho(\vAcal)<1$. It is known that MSS does not imply each individual mode is stable, and vice versa \cite{costa2006discrete}.
For any $\rho \geq \rho(\vAcal)$, let us define
$
\tau(\vAcal, \rho):= \sup_{k\in \mathbb{N}} \norm{\vAcal^k}/\rho^k
$.
This term compares the convergence of $\norm{\vAcal^k}$ and $\rho^k$. In the remainder of this paper, we use $\tau$ as a shorthand notation for $\tau(\vAcal, \rho)$, but keep in mind that $\tau$ depends on the choice of the free parameter $\rho$.
It is easy to see (i) $\tau \geq 1$, (ii) $\norm{\vAcal^k} \leq \tau \rho^k$, and (iii) by Gelfand's formula, $\tau$ is finite for any $\rho > \rho(\vAcal)$. 
When $\vAcal$ is diagonalizable with decomposition $\vAcal = \vV \vLambda \vV^\inv$, a generic property of matrices in $\dm{\numSys \dimSt^2}{\numSys \dimSt^2}$, $\tau$ is also finite for $\rho = \rho(\vAcal)$. In this case, we have $\tau \leq \norm{\vV} \norm{\vV^\inv}$.
We let $\Abar := \max_i \norm{\vA_i}$, $\Bbar := \max_i \norm{\vB_i}$.
The following theorem provides an upper bound for $\norm{\vx_t - \vxhat_t}$ under MSS.

\begin{theorem}\label{thrm_MJSReduction_stateDistSyncMSS}
	Consider setup \ref{setup_InitializationExcitation} and \ref{setup_modeSynchrony} where the shared initial state $\vx_0$ and inputs $\vu_{0:t}$ can be arbitrary as long as for all $t$, $\vu_t$ is bounded, i.e., $\norm{\vu_t} \leq \bar{u}$.
	Assume $\Sigma$ is MSS and $\hat{\Omega}_{1:\numCls} = \Omega_{1:\numCls}$ in Algorithm \ref{Alg_MJSReduction}. 
	For any $\rho \in [\rho(\vAcal),1)$ and its corresponding $\tau$, let $\rho_0 := \frac{1+\rho}{2}$.
	For perturbation, assume $\epsilon_\vA \leq \min \curlybrackets{\Abar, \frac{1-\rho}{6\tau \overset{{} }{\Abar} \norm{\vT} }}$ and $\epsilon_\vB \leq \Bbar$.
	Then, $\expctn[\norm{\vx_t - \vxhat_t}] \leq 4 \sqrt{\dimSt \sqrt{\numSys}} \tau \epsilon^{mss}_t$ where
%	\begin{equation}\label{eq_MJSReduction_epsMSS}
	$	
		\epsilon^{mss}_t:= \rho_0^{\frac{t-1}{2}} \sqrt{t \Abar \norm{\vT} \epsilon_\vA} \norm{\vx_0} + \sqrt{\Bbar}  \bar{u} \big( \frac{\sqrt{\rho_0}}{(1-\sqrt{\rho_0})^2} \sqrt{\Abar \norm{\vT} \epsilon_\vA} + \frac{\sqrt{2}}{1-\sqrt{\rho_0}} \sqrt{\epsilon_\vB} \big).
	$
%	\end{equation}
\end{theorem}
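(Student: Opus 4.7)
Set $\vdelta_t := \vx_t - \vxhat_t$. Under setups \ref{setup_InitializationExcitation} and \ref{setup_modeSynchrony} we have $\vdelta_0 = \vnot$ and, because mode synchrony means $\hat{\omega}_t$ is a deterministic function of $\omega_t$, $\hat{\vT}$ never enters the propagation of $\vdelta_t$. Adding and subtracting $\vA_{\omega_t}\vxhat_t$ yields the recursion
\begin{equation*}
\vdelta_{t+1} = \vA_{\omega_t}\vdelta_t + \vxi_t^A + \vxi_t^B, \quad \vxi_t^A := (\vA_{\omega_t} - \hat{\vA}_{\hat{\omega}_t})\vxhat_t, \quad \vxi_t^B := (\vB_{\omega_t} - \hat{\vB}_{\hat{\omega}_t})\vu_t,
\end{equation*}
so $\vdelta_t = \sum_{\tau=0}^{t-1} \vA_{\omega_{t-1}}\vA_{\omega_{t-2}}\cdots \vA_{\omega_{\tau+1}}(\vxi_\tau^A + \vxi_\tau^B)$. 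Since $\hat{\vA}_k$ is the uniform average over $\vA_i$ for $i\in\hat{\Omega}_k$, the bound \eqref{eq_MJSReduction_approxDyn} implies $\sum_i \norm{\vA_i - \hat{\vA}_{k(i)}}^2_\fro \leq \epsilon_\vA^2$ and likewise for $\vB$, which will drive all perturbation terms.

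\textbf{Mean-square decay of the unperturbed mode dynamics.} For any deterministic $\vv\in\dm{\dimSt}{1}$ and any distribution of $\omega_0$, I would show, by stacking $\vek$ of $\expctn[\vA_{\omega_{t-1}\!:\omega_0}\vv\vv^\T\vA_{\omega_0\!:\omega_{t-1}}^\T \mathbf{1}_{\omega_{t-1}=i}]$ and recognizing it as $\vAcal^t$ applied to an initial stacked vector, that $\expctn[\norm{\vA_{\omega_{t-1}}\cdots \vA_{\omega_0}\vv}^2] \leq \dimSt\sqrt{\numSys}\,\tau\,\rho^t\norm{\vv}^2$. Applied term-by-term to the unrolling of $\vdelta_t$ (after conditioning on $\omega_{0:\tau}$ to separate $\vxi_\tau$ from the subsequent $\vA$-product via the Markov property), Cauchy–Schwarz on the sum over $\tau$ gives
\begin{equation*}
\sqrt{\expctn[\norm{\vdelta_t}^2]} \leq \sqrt{\dimSt\sqrt{\numSys}\,\tau}\sum_{\tau=0}^{t-1}\rho^{(t-1-\tau)/2}\sqrt{\expctn[\norm{\vxi_\tau^A}^2 + \norm{\vxi_\tau^B}^2]}.
\end{equation*}
The input term is immediately $\expctn[\norm{\vxi_\tau^B}^2]\leq \bar{u}^2\max_i\norm{\vB_i-\hat{\vB}_{k(i)}}^2\leq \bar{u}^2 \epsilon_\vB^2$, or summed-square bound $\leq \epsilon_\vB^2\bar u^2$ after weighting by $\prob(\omega_\tau=i)$.

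\textbf{Bounding $\expctn[\norm{\vxhat_t}^2]$ by perturbation of $\vAcal$.} The state $\vxhat_t$ under synchrony evolves through the modified augmented operator $\hat{\vAcal}^{sync}$ with blocks $\vT(j,i)\,\hat{\vA}_{k(j)}\otimes\hat{\vA}_{k(j)}$. Using $\norm{\vA_j\otimes\vA_j - \hat{\vA}_{k(j)}\otimes\hat{\vA}_{k(j)}}\leq 2\bar{A}\norm{\vA_j - \hat{\vA}_{k(j)}}$ and block-wise summation controlled by $\norm{\vT}$, I would show $\norm{\hat{\vAcal}^{sync} - \vAcal}\leq 2\bar{A}\norm{\vT}\epsilon_\vA$. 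Combined with $\norm{\vAcal^k}\leq \tau\rho^k$ and the hypothesis $\epsilon_\vA \leq (1-\rho)/(6\tau\bar{A}\norm{\vT})$, a standard Neumann-series perturbation argument yields $\rho(\hat{\vAcal}^{sync})\leq \rho_0 = (1+\rho)/2$ with its own decay constant comparable to $\tau$. This bounds $\expctn[\norm{\vxhat_t}^2]$ by $C(\rho_0^t\norm{\vx_0}^2 + \bar u^2/(1-\rho_0))$, so $\expctn[\norm{\vxi_\tau^A}^2]$ is controlled by $\epsilon_\vA^2$ times this bound; applying Cauchy–Schwarz jointly to the weighting over modes $i$ of $\norm{\vA_i - \hat{\vA}_{k(i)}}^2\cdot\prob(\omega_\tau=i)$ introduces the factor $\bar A\norm{\vT}\epsilon_\vA$ (rather than $\epsilon_\vA^2$), explaining the square roots and the $\bar A\norm{\vT}$ factor in $\epsilon^{mss}_t$.

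\textbf{Assembly and the main obstacle.} Plugging the bounds on $\expctn[\norm{\vxi_\tau^A}^2]$ and $\expctn[\norm{\vxi_\tau^B}^2]$ into the summation, splitting the initial-state contribution (which accumulates $t$ copies of $\rho_0^{(t-1)/2}$ giving the $\sqrt{t}\,\rho_0^{(t-1)/2}$ factor) from the steady-state input contribution (which sums the geometric series $\sum\rho_0^{\tau/2}$ to produce $1/(1-\sqrt{\rho_0})$ and $\sqrt{\rho_0}/(1-\sqrt{\rho_0})^2$), and finally applying Jensen's inequality $\expctn[\norm{\vdelta_t}]\leq\sqrt{\expctn[\norm{\vdelta_t}^2]}$, would produce the stated bound with the prefactor $4\sqrt{\dimSt\sqrt{\numSys}}\,\tau$. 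The principal technical difficulty is keeping the coupling between the random perturbation $\vxi_\tau$ and the downstream mode-dependent product $\vA_{\omega_{t-1}}\cdots\vA_{\omega_{\tau+1}}$ under strict control: one needs to condition on $\omega_\tau$ and exploit the Markov property so that the $\tau$-measurable noise can be separated from the future MSS-decaying operator, and to pass from the aggregate Frobenius-norm bound $\sum_i\norm{\vA_i-\hat{\vA}_{k(i)}}^2\leq \epsilon_\vA^2$ to the $\sqrt{\bar A\norm{\vT}\epsilon_\vA}$-type term requires carefully combining this with the bound on $\vxhat_\tau$'s second moment rather than a naive per-term estimate.
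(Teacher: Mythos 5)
Your overall route is genuinely different from the paper's: you unroll the error recursion $\vdelta_{t+1}=\vA_{\omega_t}\vdelta_t+\vxi_t^A+\vxi_t^B$ and bound each forcing term, whereas the paper works entirely at the level of second moments of a stacked system $[\vx_t^\T,\vxhat_t^\T]^\T$ (Proposition \ref{prop_MJSReduction_stateDistSyncMSS_autonomous}): it compares the covariances of two doubled MJSs via $\vs_t=\vAcal^t\vs_0$, bounds $\norm{\vAcalcheck^t-\vAcalbar^t}$ by a perturbation of the augmented operator, handles inputs by superposition into per-input autonomous subproblems, and only takes a square root at the very end. Several pieces of your plan are sound: the mean-square decay estimate $\expctn[\norm{\vA_{\omega_{t-1}}\cdots\vA_{\omega_0}\vv}^2]\leq \dimSt\sqrt{\numSys}\,\tau\rho^t\norm{\vv}^2$ follows from Lemma \ref{lemma_MJSReduction_MJSCovDynamics} exactly as you sketch, the Minkowski step over the unrolled sum is fine, and conditioning on $\omega_{0:\tau}$ does legitimately decouple $\vxi_\tau$ from the downstream product.

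The gap is in the step that is supposed to produce the $\sqrt{\Abar\norm{\vT}\epsilon_\vA}$ and $\sqrt{\epsilon_\vB}$ factors. Your recursion naturally yields $\expctn[\norm{\vxi_\tau^A}^2]\leq \epsilon_\vA^2\,\expctn[\norm{\vxhat_\tau}^2]$ and $\expctn[\norm{\vxi_\tau^B}^2]\leq \epsilon_\vB^2\bar{u}^2$, i.e.\ a bound on $\sqrt{\expctn[\norm{\vdelta_t}^2]}$ that is \emph{linear} in $\epsilon_\vA$ and $\epsilon_\vB$; the claim that ``Cauchy--Schwarz over the weighting over modes'' converts $\epsilon_\vA^2$ into $\Abar\norm{\vT}\epsilon_\vA$ is not a derivation — no averaging over $\prob(\omega_\tau=i)$ turns a squared perturbation into a product of the perturbation with $\Abar\norm{\vT}$. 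In the paper that factor appears because the perturbation is applied to the \emph{augmented} operator ($\norm{\vAcalcheck-\vAcalbar}\leq 3\Abar\norm{\vT}\epsilon_\vA$, since each block differs by $\vT(j,i)(\vA_j\otimes\vA_j-\vAbar_j\otimes\vAbar_j)$), so the covariance difference is linear in $\Abar\norm{\vT}\epsilon_\vA$ and the square root is taken only at the last line; similarly $\sqrt{\epsilon_\vB}$ arises from $\sqrt{(\norm{\vx_0}+\epsilon_0)\epsilon_0}$ with $\epsilon_0=\epsilon_\vB\bar u$. Executed honestly, your route gives a bound with a different shape — better in its $\epsilon_\vA,\epsilon_\vB$ scaling but with prefactor $\dimSt\sqrt{\numSys}\,\tau$ (the two factors of $\sqrt{\dimSt\sqrt{\numSys}\tau}$, one from the decay of the product and one from $\expctn[\norm{\vxhat_\tau}^2]$) rather than the stated $4\sqrt{\dimSt\sqrt{\numSys}}\,\tau$ — so it neither reproduces nor obviously implies the theorem as written. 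To salvage the proposal you would need either to carry out the comparison of constants showing your bound dominates $4\sqrt{\dimSt\sqrt{\numSys}}\,\tau\,\epsilon_t^{mss}$, or to move the perturbation analysis to the second-moment level as the paper does.
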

In this theorem, $\epsilon_t^{mss}$ is the key element in the upper bound. In its definition, the first term describes the effect of $\epsilon_\vA$ through initial state $\vx_0$.
Since $\rho<1$ due to $\Sigma$ being MSS, we know $\rho_0<1$, which implies exponential decay. The rest of the terms in $\epsilon_t^{mss}$ characterize the effects of $\epsilon_\vA$ and $\epsilon_\vB$ through the inputs. And if there is no input, the trajectory difference $\norm{\vx_t - \vxhat_t}$ converges to 0 exponentially with $t$. 
The condition  $\epsilon_\vA \leq \frac{1-\rho}{6\sqrt{\numSys} \tau \overset{{} }{\Abar}}$ is used to guarantee perturbation $\epsilon_\vA$ is small such that $\hat{\Sigma}$ is still MSS, as otherwise the difference will grow exponentially, and no meaningful results can be established in this case.
%Under the condition $\epsilon_\vA \leq \frac{1-\rho}{6\sqrt{\numSys} \tau \overset{{} }{\Abar}}$, we can show $\rho_0 = \frac{1+\rho}{2}$ upper bounds the spectral radius of the augmented state matrix of the reduced $\hat{\Sigma}$. Since $\rho<1$ due to MSS of $\Sigma$, this implies $\rho_0<1$ thus $\hat{\Sigma}$ is MSS. 
Conditions $\epsilon_\vA < \Abar$ and $\epsilon_\vB \leq \Bbar$ are only used to simplify the expressions, and similar bounds can be established without them.
%In \eqref{eq_MJSReduction_epsMSS} that determines the state difference bound, the first term and second term account for the influence of perturbation $\vA$ via initial state $\vx_0$ and input at each time step, respectively; the third term characterizes how the accumulation of perturbation $\epsilon_\vB$ affects the difference.

Fact \ref{fact_MJSReduction_existReducedSys} provides a sanity check for Theorem \ref{thrm_MJSReduction_stateDistSyncMSS}: when $\epsilon_\vA =\epsilon_\vB= 0$, we have $\vx_t = \vxhat_t$.
In the autonomous case, i.e., $\vu_t=0$, as a direct corollary of Theorem \ref{thrm_MJSReduction_stateDistSyncMSS}, we can further obtain a probabilistic bound on the difference over an entire trajectory using Markov inequality: with  probability at least $1-\delta$, $\sum_{t=0}^\infty \norm{\vx_t - \vxhat_t} \leq \frac{4 \sqrt{\dimSt \dimInput} \tau \norm{\vx_0} \sqrt{\Abar \epsilon_\vA}}{\delta(1-\sqrt{\rho_0})^2}$.

\subsection{Results with Uniform Stability} \label{subsec_MJSReduction_US}
MSS in Section \ref{sec_MJSReduction_approxMetrics}-\ref{subsec_MJSReduction_MSS} is a weak notion of stability in that it only requires stability in expectation while still allowing a set of mode switching sequences that result in explosive $\vx_t$, even a set with nonzero probability.
In this section, we consider uniform stability, which guarantees stable $\vx_t$ even with an arbitrary switching sequence. Uniform stability allows us to further build approximation results without enforcing mode synchrony as in \ref{setup_modeSynchrony}.

We let $\xi(\vA_{1:\numSys})$ denote the joint spectral radius of the set of state matrices $\vA_{1:\numSys}$ of $\Sigma$ , i.e.,
$
	\xi(\vA_{1:\numSys}) {:=} \lim_{k \rightarrow \infty} \max_{\omega_{1:k} \in [\numSys]^{k}} $ $ \norm{\vA_{\omega_1} \cdots \vA_{\omega_k}}^{\frac{1}{k}}.
$
We say $\Sigma$ is uniformly (and exponentially) stable if $\xi(\vA_{1:\numSys}){<}1$.
For any $\xi {\geq} \xi(\vA_{1:\numSys})$, we define $\kappa(\vA_{1:\numSys}, \xi)$ to compare the convergence of $\norm{\vA_{\omega_1} \cdots \vA_{\omega_k}}$ and  $\xi^k$ for any mode switching sequence $\omega_{1:k}$:
$
	\kappa(\vA_{1:\numSys}, \xi) := \sup_{k\in \mathbb{N}} $ $ \max_{\omega_{1:k} \in [\numSys]^{k}}\norm{\vA_{\omega_1} \cdots \vA_{\omega_k}} / \xi^k.
$
In the remainder of this paper, we use $\kappa$ as a shorthand notation for $\kappa(\vA_{1:\numSys}, \xi)$.
Note that the pair $\curlybrackets{\xi, \kappa}$ for uniform stability is just the counterpart of $\curlybrackets{\rho, \tau}$ defined earlier for MSS. And similarly, we have (i) $\kappa \geq 1$, (ii) $\norm{\vA_{\omega_1} \cdots \vA_{\omega_k}} \leq \kappa \xi^k$, and (iii) $\kappa$ is finite for any $\xi > \xi(\vA_{1:\numSys})$. 
Furthermore, we let $\bar{\Tcal}:= \max_{i,j} \vT(i,j)$.

We first formally define the transition kernels for $\Sigma$ and $\hat{\Sigma}$ and their distance. Under fixed initial state $\vx_0$ and input sequence $\vu_{0:t-1}$, we define the reachable set $\Xcal_t := \curlybrackets{\vx_t: \vx_{0:t} \text{ is a solution to \eqref{eq_MJSReduction_MJS} } \forall \ \omega_{0:t-1} \in [\numSys]^t}$. 
Then we define the $t$-step transition kernel as $p_t(\vx) := \prob(\vx_t = \vx)$ for all $\vx \in \Xcal_t$. Note that both $\Xcal_t$ and $p_t(\vx)$ depend on the choice of the initial state and input sequence as well. We omit this dependency in the notation not only for simplicity but also because the approximation results we provide hold for arbitrary initial state and input sequence. Similarly, for the reduced $\hat{\Sigma}$, we use $\hat{\Xcal}_t$ to denote the reachable set at time $t$, and for $\vxhat \in \hat{\Xcal}_t$, we let $\phat_t(\vxhat):= \prob(\vxhat_t = \vxhat)$. Then, for $\ell \geq 1$ the $\ell$-Wasserstein distance $W_\ell(p_t, \phat_t)$, between distributions $p_t$ and $\phat_t$ is defined as the optimal objective value of the following mass transportation problem:
\begin{equation}
	\label{eq_MJSReduction_WassDistDef}
	\begin{split}
		\min_{f \geq 0}  \quad& \big( {\textstyle\sum_{\vx \in \Xcal_t, \vxhat \in \hat{\Xcal}_t}} f(\vx, \vxhat) \norm{\vx-\vxhat}^\ell \big)^{1/\ell} \\
		\text{s.t.} \quad & {\textstyle\sum_{\vx \in \Xcal_t}} f(\vx, \vxhat) = \phat_t(\vxhat), \forall \ \vxhat \\
		& {\textstyle\sum_{\vxhat \in \hat{\Xcal}_t}} f(\vx, \vxhat) = p_t(\vx), \forall \ \vx. \\
	\end{split}
\end{equation}
The constraints describe the transportation of probability mass distributed according $p_t$ to the support of $\phat_t$ so that the mass after transportation distributes the same as $\phat_t$. We can view $f(\vx, \vxhat)$ as the mass that is transported from point $\vx$ to $\vxhat$ and $\norm{\vx-\vxhat}$ as the distance it travels. 
When $\ell=1$, the goal is to minimize the total weighted travel distance, and the resulting $W_1$ is also known as the earth mover's distance.
Now we are ready to present our results for the uniform stability assumption. 

\begin{theorem}\label{thrm_MJSReduction_bisimUS}
	Consider setup \ref{setup_InitializationExcitation} where the shared initial state $\vx_0$ and inputs $\vu_{0:t}$ can be arbitrary as long as for all $t$, $\vu_t$ is bounded, i.e., $\norm{\vu_t} \leq \bar{u}$. 
	Assume $\Sigma$ is uniformly stable and $\hat{\Omega}_{1:\numCls} = \Omega_{1:\numCls}$ in Algorithm \ref{Alg_MJSReduction}. 
	For any $\xi \in [\xi(\vA_{1:\numSys}),1)$ and its corresponding $\kappa$, let $\xi_0 := \frac{1+\xi}{2}$.
	For perturbation, we assume $\epsilon_\vA \leq \frac{1-\xi}{2 \kappa}$ and $\epsilon_\vB \leq \Bbar$.	
	Then, we have the following results.
	\begin{enumerate}[label=\textup{(T\arabic*)}]
		\item Under \ref{setup_modeSynchrony}, $\norm{\vx_t - \vxhat_t} \leq \epsilon^{us}_{t}:= t \xi_0^{t-1} \kappa^2 \norm{\vx_0} \epsilon_\vA
			+ \frac{2(1 + t \xi_0^t) \kappa^2 \Bbar \bar{u}}{1 - \xi_0}  \epsilon_\vA
			+ \frac{\kappa \bar{u}}{1 - \xi} \epsilon_\vB$ almost surely.
		\label{enum_bisimUS_stateDistSync}
		\item Consider the autonomous case, i.e., $\vB_{1:\numSys}=0$. (\ref{setup_modeSynchrony} is not mandatory.) Then, 
		$
			W_\ell(p_{t}, \phat_{t})
			\leq 
            t \xi_0^{t \minus 1} \kappa^2 \norm{\vx_0} \epsilon_\vA 
        	 +
        	2 \numCls^2 t \kappa \norm{\vx_0} \numCls^t (\kappa \epsilon_\vA {+} \xi)^t 
        	(\bar{\Tcal} {+} \epsilon_\vT)^{(t-2)/\ell} \epsilon_\vT^{1/\ell}.
% 			t \xi_0^{t-1} \kappa^2 \norm{\vx_0} \epsilon_\vA
% 			+
% 			2\numCls^2 t [\numCls (\bar{\Tcal} + \epsilon_\vT)(\kappa \epsilon_\vA + \xi)]^{t-2} \kappa \norm{\vx_0} \epsilon_\vT.
		$
		\label{enum_bisimUS_WasDisAutonomousSys}
	\end{enumerate}
\end{theorem}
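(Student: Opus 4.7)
The plan is to derive both (T1) and (T2) from a single error recursion for $\vx_t - \vxhat_t$, backed by (a) a uniform-stability bound that the reduced $\hat\Sigma$ inherits from $\Sigma$ under $\epsilon_\vA \le (1-\xi)/(2\kappa)$, and (b), only for (T2), a Markov coupling that enforces mode synchrony with high probability. I would begin by recording the per-mode perturbation estimate $\|\vA_i - \hat\vA_k\| \le \epsilon_\vA$ for every $i\in \hat\Omega_k$ (and the analogous one for $\vB$): since $\hat\vA_k$ is the cluster average from Line~\ref{algline_MJSReduction_20} of Algorithm~\ref{Alg_MJSReduction} and $\hat\Omega_{1:\numCls}=\Omega_{1:\numCls}$ by assumption, this is just the triangle inequality applied to the sums in \ref{problem_lumpableCase}/\ref{problem_aggregatableCase}. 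To transfer uniform stability to $\hat\Sigma$, I write $\hat\vA_{\hat\omega_s} = \vA_{\omega_s} + \vE_s$ with $\|\vE_s\|\le\epsilon_\vA$, expand the $k$-fold product $\hat\vA_{\hat\omega_1}\cdots \hat\vA_{\hat\omega_k}$ as a sum over subsets $S\subseteq[k]$ of ``perturbation positions,'' bound each expanded summand by $\kappa^{|S|+1}\xi^{k-|S|}\epsilon_\vA^{|S|}$ by applying the definition of $\kappa$ on each maximal $\vA$-block, and conclude via the binomial theorem that $\|\hat\vA_{\hat\omega_1}\cdots\hat\vA_{\hat\omega_k}\| \le \kappa(\xi + \kappa\epsilon_\vA)^k \le \kappa\xi_0^k$. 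This yields the auxiliary bound $\|\vxhat_s\| \le \kappa\xi_0^s\|\vx_0\| + 2\kappa\Bbar\bar u/(1-\xi_0)$, which feeds into (T1).

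For (T1), subtracting the two dynamical equations under \ref{setup_modeSynchrony} gives the recursion
\[
\vx_{t+1}-\vxhat_{t+1} = \vA_{\omega_t}(\vx_t-\vxhat_t) + (\vA_{\omega_t}-\hat\vA_{\hat\omega_t})\vxhat_t + (\vB_{\omega_t}-\hat\vB_{\hat\omega_t})\vu_t.
\]
Unrolling from $\vx_0=\vxhat_0$ (setup~\ref{setup_InitializationExcitation}) expresses $\vx_t-\vxhat_t$ as a sum of two perturbation-driven terms pre-multiplied by products of $\vA$'s. I would then apply $\|\vA_{\omega_{t-1}}\cdots\vA_{\omega_{s+1}}\| \le \kappa\xi^{t-1-s}$, the per-mode perturbation bounds, and the $\|\vxhat_s\|$ estimate from Paragraph~1. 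Splitting the result into an initial-state contribution, an input-through-$\epsilon_\vA$ contribution, and an input-through-$\epsilon_\vB$ contribution, and using geometric-sum estimates with $1-\xi = 2(1-\xi_0)$, recovers the three summands of $\epsilon_t^{us}$.

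For (T2), the core idea is to couple $\omega_{0:t-1}$ and $\hat\omega_{0:t-1}$ so that $\hat\omega_s$ coincides with the cluster label $\zeta_s$ of $\omega_s$ whenever possible. Since $\vT$ is only approximately lumpable, the conditional law of $\zeta_{s+1}$ given $\omega_s=i$ (namely $l\mapsto \sum_{j\in\hat\Omega_l}\vT(i,j)$) differs from that of $\hat\omega_{s+1}$ given $\hat\omega_s=k$ (namely $\hat\vT(k,\cdot)$) by a total-variation gap controlled by \eqref{eq_MJSReduction_approxLump}; I would build the coupling using the maximal-coupling construction at each transition so that the synchrony event $\{\hat\omega_s=\zeta_s\ \forall s<t\}$ has probability at least $1 - O(\epsilon_\vT)$ per step. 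On the synchrony event, (T1) with $\vu\equiv 0$ gives $\|\vx_t-\vxhat_t\| \le t\xi_0^{t-1}\kappa^2\|\vx_0\|\epsilon_\vA$, producing the first summand. On the complementary event, the crude uniform-stability bounds $\|\vx_t\| \le \kappa\xi^t\|\vx_0\|$ and $\|\vxhat_t\|\le \kappa\xi_0^t\|\vx_0\|$ control $\|\vx_t-\vxhat_t\|^\ell$ pointwise; enumerating over the $\numCls^t$ cluster trajectories and using the entrywise bound $\hat\vT(k,l)\le \bar\Tcal+\epsilon_\vT$ on the $t-2$ non-mismatching transitions together with a single factor of $\epsilon_\vT$ at the mismatching step produces the $\numCls^t(\kappa\epsilon_\vA+\xi)^t(\bar\Tcal+\epsilon_\vT)^{(t-2)/\ell}\epsilon_\vT^{1/\ell}$ factor. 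Plugging into $W_\ell^\ell(p_t,\phat_t) \le \expctn\|\vx_t-\vxhat_t\|^\ell$, which holds because the coupling is admissible in \eqref{eq_MJSReduction_WassDistDef}, closes the argument.

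The principal obstacle is the probabilistic bookkeeping in (T2): one must define the coupling so that its marginals are exactly those of $\omega_{0:t-1}$ and $\hat\omega_{0:t-1}$, control the per-step TV gap via approximate lumpability, and then aggregate this over $t$ steps without losing the structural factors $\numCls^t$ and $(\bar\Tcal+\epsilon_\vT)^{(t-2)/\ell}$. The combinatorial product bound in Paragraph~1 is conceptually routine but must be executed carefully so that the binomial sum collapses to the clean rate $\kappa\xi_0^k$; a less careful aggregation would blow up the constants and destroy the sharp exponential decay of $\epsilon_t^{us}$ in (T1).
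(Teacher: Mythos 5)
Your treatment of \ref{enum_bisimUS_stateDistSync} is essentially the paper's argument in a different notation: the paper decomposes $\vx_t$ and $\vxhat_t$ into the initial-state contribution plus one term per input step and then invokes Lemma \ref{lemma_MJSReduction_JSRPerturb} (whose proof is exactly your binomial expansion over ``perturbation positions''), which is algebraically the same telescoping identity as unrolling your error recursion. One caution: summing $\kappa\xi^{t-1-s}\epsilon_\vA\|\vxhat_s\|$ with the crude geometric bound $\|\vxhat_s\|\le\kappa\xi_0^s\|\vx_0\|+2\kappa\Bbar\bar u/(1-\xi_0)$ produces a middle term of order $\kappa^2\Bbar\bar u\epsilon_\vA/(1-\xi_0)^2$, which is not dominated by the stated $2(1+t\xi_0^t)\kappa^2\Bbar\bar u\epsilon_\vA/(1-\xi_0)$ when $\xi_0$ is close to $1$ and $t$ is moderate; to land on the displayed $\epsilon_t^{us}$ you need to keep the factor $(t-t'-1)(\kappa\epsilon_\vA+\xi)^{t-t'-2}$ from the product-difference bound inside the sum over $t'$, as the paper does, rather than collapsing it to a geometric series first.

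For \ref{enum_bisimUS_WasDisAutonomousSys} you take a genuinely different route. The paper does \emph{not} couple the processes dynamically: it enumerates all $\numSys^t$ and $\numCls^t$ mode sequences, proves the static measure-discrepancy bound $|\qhat_t(\ghat)-\sum_{g\triangleright\ghat}q_t(g)|\le(t-1)(\bar\Tcal+\epsilon_\vT)^{t-2}\epsilon_\vT$ via a product-perturbation argument on the $t$-step sequence probabilities (Lemma \ref{eq_MJSReduction_PtwiseDist_3}, using Corollary \ref{corollary_MJSReduction_scalarMultiplicationPerturbation}), and then writes down an explicit feasible transport plan for \eqref{eq_MJSReduction_WassDistDef} that prioritizes synchronous mass and ships only the surplus asynchronously; the factors $\numCls^t$ and $(\bar\Tcal+\epsilon_\vT)^{t-2}$ come precisely from summing that discrepancy over all $\numCls^t$ reduced sequences. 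Your step-by-step maximal coupling of $\hat\omega_{0:t-1}$ to the cluster labels $\zeta_{0:t-1}$ (conditioning on the full $\omega$-history, which is necessary since $\zeta_{0:t}$ is not Markov when $\epsilon_\vT\neq 0$) is a legitimate alternative, and the induced joint law of $(\vx_t,\vxhat_t)$ is indeed admissible in \eqref{eq_MJSReduction_WassDistDef}. The gap is in your last step: a maximal coupling naturally yields $\prob(\text{desynchronization})\lesssim t\,\epsilon_\vT$ and hence a bound of the form $D_s+(t\epsilon_\vT)^{1/\ell}D_a$, with no $\numCls^t$ enumeration and no $(\bar\Tcal+\epsilon_\vT)^{(t-2)/\ell}$ factor; your claim that the per-step bookkeeping ``produces'' the specific factor $2\numCls^{2}t\kappa\|\vx_0\|\numCls^t(\kappa\epsilon_\vA+\xi)^t(\bar\Tcal+\epsilon_\vT)^{(t-2)/\ell}\epsilon_\vT^{1/\ell}$ is asserted rather than derived, and the two expressions are not comparable in general (neither dominates the other for all $\numCls,\numSys,\bar\Tcal,t,\ell$). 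So your route proves a valid Wasserstein bound of the same qualitative shape, but it does not establish the inequality as stated; to recover the theorem verbatim you would need to follow the paper's sequence-level accounting, or else restate the conclusion with your coupling-based constant.
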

% \vspace{-1em}
In Theorem \ref{thrm_MJSReduction_bisimUS}, condition $\epsilon_\vA \leq \frac{1-\xi}{2 \kappa}$ guarantees the reduced $\hat{\Sigma}$ is uniformly stable with joint spectral radius upper bounded by $\xi_0$. The condition $\epsilon_\vB \leq \Bbar$ simplifies the expression, and similar results can be obtained when it is relaxed. 
\ref{enum_bisimUS_stateDistSync} upper bounds the realization difference with the mode synchrony setup. We can see the similarity between the upper bounds $\epsilon_{t}^{us}$ and $\epsilon_{t}^{mss}$ of Theorem \ref{thrm_MJSReduction_stateDistSyncMSS} under MSS assumption.
The fact that uniform stability and MSS upper bound $\norm{\vx_t - \vxhat_t}$ deterministically and in expectation respectively is a manifestation of the difference between these two stability notions for MJS.

In \ref{enum_bisimUS_WasDisAutonomousSys}, we bound the Wasserstein distance between $p_t$ and $\phat_t$. This bound depends on both perturbations $\epsilon_\vA$ and $\epsilon_\vT$. Let $\vmu$ and $\vS$ denote the mean and covariance for $\vx_t$; and similarly define $\vmuhat$ and $\vShat$ for $\vxhat_t$.  From \cite[Theorem 4]{kuhn2019wasserstein}, we obtain $\norm{\vmu - \vmuhat}^2 + d(\vS, \vShat) \leq W_2(p_{t}, \phat_{t})^2$, where $d(\vS, \vShat):= \tr(\vS + \vShat - 2 (\vS^{\frac{1}{2}} \vShat \vS^{\frac{1}{2}})^{\frac{1}{2}})$ is a metric between $\vS$ and $\vShat$. Hence, by setting $\ell = 2$ in \ref{enum_bisimUS_WasDisAutonomousSys}, we also obtain upper bounds for the differences between $p_t$ and $\phat_t$ in terms of their first and second order moments, i.e., $\norm{\vmu - \vmuhat}$ and $d(\vS, \vShat)$. These metrics can be used to obtain performance bounds in other control problems such as covariance steering \cite{chen2015optimal, goldshtein2017finite, okamoto2018optimal} and ensemble control \cite{li2015ensemble}.
% Furthermore, distributionally robust control \cite{coulson2019regularized, kim2020minimax, yang2020wasserstein} based on prior knowledge of system distributional differences can also be possible if we could generalize the Wasserstein bounds in \ref{enum_bisimUS_WasDisAutonomousSys} from the autonomous MJSs to general MJSs.

\section{Stability Analysis}\label{sec_MJSReduction_stabAnalysis}
In this section, we study whether the stability properties of $\Sigma$ can be deduced from those of $\hat{\Sigma}$.
Recall that MSS of $\Sigma$ depends on $\rho(\vAcal)$, the spectral radius of its augmented state matrix $\vAcal$, and its uniform stability depends on $\xi(\vA_{1:\numSys})$, the joint spectral radius of state matrices $\vA_{1: \numSys}$. Similarly, for $\hat{\Sigma}$, we define its augmented state matrix $\vAcalhat \in \dm{\numCls \dimSt^2}{\numCls \dimSt^2}$ with its $ij$-th $\dimSt^2 \times \dimSt^2$ block given by $[\vAcalhat]_{ij} := \vThat(j,i) \cdot \vAhat_j \otimes \vAhat_j$ and let $\rho(\vAcalhat)$ denote its spectral radius; we let $\xi(\vAhat_{1: \numCls})$ denote the joint spectral radius of state matrices $\vAhat_{1: \numCls}$. With these notations, we want to analyze when $\rho(\vAcalhat)$ (or $\xi(\vAhat_{1: \numCls})$) can be taken as an approximation for $\rho(\vAcal)$ (or $\xi(\vA_{1:\numSys})$) since computing or approximating $\rho(\vAcalhat)$ and $\xi(\vAhat_{1: \numCls})$ may require much less computation compared with $\rho(\vAcal)$ and $\xi(\vA_{1:\numSys})$ as $\hat{\Sigma}$ has much fewer number of modes than $\Sigma$. 
% Of course, on top of the computation advantages, having reasonable differences $|\hat{\rho} - \rho|$ and $|\hat{\xi} - \xi|$ is of the uttermost importance, thus is the main focus of this section.

To begin with, we first construct an intermediate MJS by \textit{expanding} the reduced $\hat{\Sigma}$: we let $\bar{\Sigma}:=\text{MJS}(\vAbar_{1:\numSys}, \vBbar_{1:\numSys}, \vTbar)$ such that $\vTbar \in \Lcal(\vT, \hat{\Omega}_{1:\numCls}, \epsilon_\vT)$, and for all $i\in [\numSys]$ (suppose $i \in \hat{\Omega}_k$), $\vAbar_i = \vAhat_k$, $\vBbar_i = \vBhat_k$. 
By definition of $\Lcal(\vT, \hat{\Omega}_{1:\numCls}, \epsilon_\vT)$, we can solve for $\vTbar$ through a linear programming feasibility problem with constraints given by the definition of $\Lcal(\cdot, \cdot, \cdot)$ in \eqref{eq_NeighborhoodT}. Particularly, if it is the aggregatable case \ref{problem_aggregatableCase}, it suffices to let $\vTbar(i,:) :=|\hat{\Omega}_k|^\inv \sum_{i \in \hat{\Omega}_k} \vT(i,:)$ if $i \in \hat{\Omega}_k$.
Note that by construction, $\bar{\Sigma}$ is mode-reducible with respect to $\hat{\Omega}_{1:\numCls}$ and can be reduced to $\hat{\Sigma}$. According to Fact \ref{fact_MJSReduction_existReducedSys}, $\bar{\Sigma}$ has the same dynamics as $\hat{\Sigma}$. Since $\bar{\Sigma}$ has the same number of modes as $\Sigma$, we can use $\bar{\Sigma}$ as a bridge to compare $\Sigma$ and $\hat{\Sigma}$. 
We let $\rho(\vAcalbar)$ denote the spectral radius of $\vAcalbar \in \dm{\numSys \dimSt^2}{\numSys \dimSt^2}$ whose $ij$-th $\dimSt^2 \times \dimSt^2$ block is given by $[\vAcalbar]_{ij} := \vTbar(j,i) \cdot \vAbar_j \otimes \vAbar_j$ and let $\xi(\bar{\vA}_{1:\numSys})$ denote the joint spectral radius of $\bar{\vA}_{1:\numSys}$. The following preliminary result (proof omitted due to its simplicity) says $\bar{\Sigma}$ and $\hat{\Sigma}$ have the same stability properties.
\begin{lemma}\label{lemma_MJSReduction_expandedMJSSpectralRadius}
 	For $\hat{\Sigma}$ and $\bar{\Sigma}$, we have $\rho(\vAcalhat) = \rho(\vAcalbar)$ and $\xi(\hat{\vA}_{1:\numSys}) = \xi(\bar{\vA}_{1:\numSys})$.
\end{lemma}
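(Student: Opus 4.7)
My plan is to treat the two equalities separately, exploiting that by construction $\bar{\Sigma}$ is mode-reducible with respect to $\hat{\Omega}_{1:\numCls}$: the matrices $\bar{\vA}_i, \bar{\vB}_i$ depend only on the cluster index $l(i)$ (the unique $k$ with $i \in \hat{\Omega}_k$), and $\vTbar$ is lumpable with respect to $\hat{\Omega}_{1:\numCls}$ with quotient $\vThat$ (this is exactly what the definition of $\Lcal(\vT, \hat{\Omega}_{1:\numCls}, \epsilon_\vT)$ in \eqref{eq_NeighborhoodT} enforces, since all rows of $\vTbar$ inside the same cluster have identical mode-to-cluster sums). Consequently Fact \ref{fact_MJSReduction_existReducedSys} applies to the pair $(\bar{\Sigma}, \hat{\Sigma})$.

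For the joint spectral radius equality, I would argue purely at the set level: as an (unordered) set, $\{\bar{\vA}_i: i \in [\numSys]\} = \{\vAhat_{l(i)}: i \in [\numSys]\} = \{\vAhat_k : k \in [\numCls]\}$, the last equality being surjectivity of the cluster map $l$, which holds because every $\hat{\Omega}_k$ is nonempty. The joint spectral radius is defined as a supremum of normed products, so it depends only on the underlying set of matrices and not on multiplicities in the indexed family. Therefore $\xi(\bar{\vA}_{1:\numSys}) = \xi(\vAhat_{1:\numCls})$ is immediate.

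For the spectral radius of the augmented matrix I plan to use a scaling argument together with the standard equivalence ``MJS is MSS iff spectral radius of its augmented matrix $< 1$''. Applying Fact \ref{fact_MJSReduction_existReducedSys} to $\bar{\Sigma}$ and $\hat{\Sigma}$ under the synchronous coupling yields $\expctn[\bar{\vx}_t \bar{\vx}_t^\T] = \expctn[\hat{\vx}_t \hat{\vx}_t^\T]$ for every $t$ and every admissible initial condition, so $\bar{\Sigma}$ is MSS if and only if $\hat{\Sigma}$ is. Now for any $c > 0$, replace every state matrix by its $1/c$-scaling; the resulting MJSs $\bar{\Sigma}^{(c)}, \hat{\Sigma}^{(c)}$ remain mode-reducible to each other with the same partition and transition matrices, so Fact \ref{fact_MJSReduction_existReducedSys} still yields MSS-equivalence between them. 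The augmented matrices of the scaled systems are $\vAcalbar/c^2$ and $\vAcalhat/c^2$, so the MSS-equivalence reads $\rho(\vAcalbar) < c^2 \Leftrightarrow \rho(\vAcalhat) < c^2$ for every $c > 0$, which forces $\rho(\vAcalbar) = \rho(\vAcalhat)$.

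The only real subtlety I anticipate is justifying that the MSS property of $\bar{\Sigma}$ genuinely matches that of $\hat{\Sigma}$ \emph{uniformly over initial conditions}, since the MSS definition is a global condition on the augmented dynamics rather than on one trajectory. Fact \ref{fact_MJSReduction_existReducedSys} is strong enough for this because it is stated for \emph{every} consistent initial mode distribution and state, so the equality of $\expctn[\vx_t \vx_t^\T]$ between the two systems persists over the full range of initial data needed to probe MSS. With that in hand, the scaling argument closes the proof cleanly without requiring an explicit construction of a lifting between $\vAcalbar$ and $\vAcalhat$.
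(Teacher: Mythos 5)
Your argument is correct; note that the paper itself omits the proof of this lemma (``due to its simplicity''), so there is no official argument to match against. Both halves of your proposal hold up: the joint-spectral-radius equality really is just the observation that $\xi$ depends only on the underlying \emph{set} of matrices and $\{\vAbar_i : i \in [\numSys]\} = \{\vAhat_k : k \in [\numCls]\}$ because every $\hat{\Omega}_k$ is nonempty; and your identification that $\vTbar \in \Lcal(\vT, \hat{\Omega}_{1:\numCls}, \epsilon_\vT)$ forces $\vTbar$ to be exactly lumpable with quotient $\vThat$, so that Fact \ref{fact_MJSReduction_existReducedSys} applies to the pair $(\bar{\Sigma}, \hat{\Sigma})$, is the key structural point. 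Your scaling trick for $\rho(\vAcalbar) = \rho(\vAcalhat)$ is valid and correctly handles the quantification over initial conditions; its only external input is the standard equivalence ``autonomous MJS is MSS iff the augmented spectral radius is $<1$,'' which the paper asserts, and the possible failure of that equivalence exactly at the boundary $c^2 = \rho(\cdot)$ is harmless since two open rays agreeing off finitely many points have the same endpoint. The more direct route the authors likely have in mind is purely linear-algebraic: with $\vW \in \dm{\numCls \dimSt^2}{\numSys \dimSt^2}$ the block aggregation matrix ($\vW_{ki} = \indicator{i \in \hat{\Omega}_k} \vI_{\dimSt^2}$), lumpability of $\vTbar$ and $\vAbar_j = \vAhat_{l(j)}$ give the intertwining $\vW \vAcalbar = \vAcalhat \vW$ with $\vW$ surjective, so $\rho(\vAcalhat) \leq \rho(\vAcalbar)$; and since $\vAcalbar$ preserves the solid cone of positive-semidefinite block vectors, its spectral radius is attained at an eigenvector $\vs$ in that cone, whence $\vW \vs \neq 0$ is an eigenvector of $\vAcalhat$ with the same eigenvalue, giving the reverse inequality. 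That version avoids any appeal to the MSS characterization, at the price of invoking Perron--Frobenius theory for cone-preserving maps; your probabilistic version buys a proof that reuses only facts already stated in the paper. Either is acceptable.
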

One implication of Lemma \ref{lemma_MJSReduction_expandedMJSSpectralRadius} is that if an MJS is mode-reducible, the reduced MJS has the same MSS and uniform stability as the original MJS in terms of (joint) spectral radius. 
When $\Sigma$ is not exactly mode-reducible, Lemma \ref{lemma_MJSReduction_expandedMJSSpectralRadius} allows us to compare the stability properties of $\hat{\Sigma}$ and $\Sigma$ via the intermediate expanded $\bar{\Sigma}$ as presented in Theorem \ref{thrm_MJSReduction_stabilityAnalysis}. For analysis purposes, similar to $\tau$ and $\kappa$ defined for $\Sigma$, 
we define $\bar{\tau} := \sup_{k\in \mathbb{N}} \norm{\vAcalbar^k}/\rhohat^k$ for any $\rhohat \geq \rho(\vAcalhat)$ and $\bar{\kappa}:= \sup_{k\in \mathbb{N}} \max_{\omega_{1:k} \in [\numCls]^{k}}\norm{\vAbar_{\omega_1} \cdots \vAbar_{\omega_k}} / \xihat^k$ for any $\xihat \geq \xi(\vAhat_{1:\numSys})$

\begin{theorem}[Stability Analysis] \label{thrm_MJSReduction_stabilityAnalysis}
	Assume $\hat{\Omega}_{1:\numCls} = \Omega_{1:\numCls}$ in Algorithm \ref{Alg_MJSReduction}, then $\Sigma$ and $\hat{\Sigma}$ have the following relations.
 	\begin{enumerate}[label=\textup{(T\arabic*)}]
        \item (MSS) For any $\rho \geq \rho(\vAcal)$ and its corresponding $\tau$, any $\rhohat \geq \rho(\vAcalhat)$ and its corresponding $\bar{\tau}$,
        we have
        \begin{equation}
        \begin{split}
            \rho(\vAcalhat) - \rho(\vAcal) &\leq \tau \epsilon_\rho + (\rho - \rho(\vAcal)) \\
            \rho(\vAcal) - \rho(\vAcalhat) & \leq \bar{\tau} \epsilon_\rho + (\rhohat - \rho(\vAcalhat))
        \end{split}
        \end{equation}
        % $
        % \rho(\vAcalhat) - \rho(\vAcal) \leq \tau \epsilon_\rho + (\rho - \rho(\vAcal))$
        % and 
        % $
        % \rho(\vAcal) \leq \bar{\tau} \epsilon_\rho + \rhohat.
        % $
        % $
        % \rho(\vAcalhat) - \tau \epsilon_\rho \leq \rho \leq \hat{\rho} + \bar{\tau} \epsilon_\rho.
        % $	 		
        where $\epsilon_\rho:= \sqrt{\numSys} ((2 \Abar + \epsilon_\vA) \epsilon_\vA + \Abar^2 \epsilon_\vT)$.
        \label{enum_MSSStabilityComparison}
        \item (Uniform stability) For any $\xi \geq \xi(\vA_{1:\numSys})$ and its corresponding $\kappa$, any $\xihat \geq \xi(\vAhat_{1:\numCls})$ and its corresponding $\bar{\tau}$,
        we have
        \begin{equation}
        \begin{split}
            \xi(\vAhat_{1:\numCls}) - \xi(\vA_{1:\numSys}) &\leq \kappa \epsilon_\vA + (\xi - \xi(\vA_{1:\numSys})) \\
            \xi(\vA_{1:\numSys}) - \xi(\vAhat_{1:\numCls}) & \leq \bar{\kappa} \epsilon_\vA + (\hat{\xi} - \xi(\vAhat_{1:\numCls})).
        \end{split}
        \end{equation}
        % $
        % \xi(\vAhat_{1:\numCls}) \leq \kappa \epsilon_\vA + \xi
        % $
        % and
        % $
        % \xi(\vA_{1:\numSys}) \leq \bar{\kappa} \epsilon_\vA + \hat{\xi}.
        % $	 		
        % $
        % \hat{\xi} - \kappa \epsilon_\vA \leq \xi \leq \hat{\xi} + \bar{\kappa} \epsilon_\vA.
        % $
        \label{enum_USStabilityComparison}
    \end{enumerate}
\end{theorem}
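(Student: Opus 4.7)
The plan is to use Lemma \ref{lemma_MJSReduction_expandedMJSSpectralRadius} to replace $\hat{\Sigma}$ by the expanded intermediate system $\bar{\Sigma}$, which lives in the same ambient space $\dm{\numSys\dimSt^2}{\numSys\dimSt^2}$ (for MSS) or uses the same alphabet $[\numSys]$ (for uniform stability) as $\Sigma$. This turns both claims into a perturbation comparison between $\vAcal$ and $\vAcalbar$ (resp.\ between the sets $\vA_{1:\numSys}$ and $\vAbar_{1:\numSys}$), for which the clustering-error assumption \eqref{eq_MJSReduction_approxDyn} and the membership $\vTbar \in \Lcal(\vT,\hat{\Omega}_{1:\numCls},\epsilon_\vT)$ give the quantitative control we need. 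Throughout, the basic preparatory estimate is $\sum_{j}\norm{\vA_j-\vAbar_j}\le\epsilon_\vA$, obtained by writing $\vAbar_j=\vAhat_k$ for $j\in\hat{\Omega}_k$, using triangle inequality on the average, and summing over clusters to recover the right-hand side of \eqref{eq_MJSReduction_approxDyn}.

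For \ref{enum_MSSStabilityComparison} (MSS), I would first bound $\norm{\vAcal-\vAcalbar}$ by using the factorization $\vAcal=(\vT^{\T}\otimes\vI_{\dimSt^2})\,\textup{bdiag}(\vA_1\otimes\vA_1,\dots,\vA_\numSys\otimes\vA_\numSys)$ and the analogous one for $\vAcalbar$, splitting the difference into a $\vT$-perturbation piece and an $\vA$-perturbation piece. The Kronecker identity $\vA_j\otimes\vA_j-\vAbar_j\otimes\vAbar_j=(\vA_j-\vAbar_j)\otimes\vA_j+\vAbar_j\otimes(\vA_j-\vAbar_j)$ together with $\norm{\vAbar_j}\le\Abar+\epsilon_\vA$ gives the factor $(2\Abar+\epsilon_\vA)$, while $\norm{\vT-\vTbar}\le\norm{\vT-\vTbar}_\fro\le\epsilon_\vT$ and $\norm{\vA_j\otimes\vA_j}\le\Abar^2$ yield the term $\Abar^2\epsilon_\vT$; the $\sqrt{\numSys}$ factor arises from the (loose but sufficient) bound $\norm{\vTbar}\le\sqrt{\numSys}$ on a Markov matrix. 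This gives $\norm{\vAcal-\vAcalbar}\le\epsilon_\rho$. Next, I would prove a resolvent-type lemma: if $\norm{\vAcal^k}\le\tau\rho^k$ for all $k$, then for any $|\lambda|>\rho+\tau\norm{\vDelta}$ the Neumann series $(\lambda\vI-\vAcal-\vDelta)^{-1}=(\lambda\vI-\vAcal)^{-1}\sum_{m\ge 0}\bigl(\vDelta(\lambda\vI-\vAcal)^{-1}\bigr)^m$ converges, because $\norm{(\lambda\vI-\vAcal)^{-1}}\le\sum_{k\ge 0}\norm{\vAcal^k}/|\lambda|^{k+1}\le\tau/(|\lambda|-\rho)$. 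Hence $\rho(\vAcal+\vDelta)\le\rho+\tau\norm{\vDelta}$. Applying this with $\vDelta=\vAcalbar-\vAcal$ and combining with Lemma \ref{lemma_MJSReduction_expandedMJSSpectralRadius} yields $\rho(\vAcalhat)-\rho(\vAcal)\le\tau\epsilon_\rho+(\rho-\rho(\vAcal))$; the reverse direction is identical with the roles of $\vAcal,\vAcalbar$ swapped, producing the second inequality in \ref{enum_MSSStabilityComparison}.

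For \ref{enum_USStabilityComparison} (uniform stability), I would prove the one-sided bound $\xi(\vAbar_{1:\numSys})\le\xi+\kappa\epsilon_\vA$ directly by expanding an arbitrary product $\vAbar_{\omega_1}\cdots\vAbar_{\omega_k}=\prod_i(\vA_{\omega_i}+\vDelta_{\omega_i})$ with $\vDelta_j:=\vAbar_j-\vA_j$. Distributing over the $k$ factors gives a sum over subsets $S\subseteq[k]$ of products in which the slots in $S$ are filled by $\vDelta$'s and the rest by $\vA$'s. Each such mixed product decomposes into at most $|S|+1$ contiguous runs of $\vA$'s, and each run of length $m$ has norm at most $\kappa\xi^m$ by definition of $\kappa$; together with $\norm{\vDelta_j}\le\epsilon_\vA$ this yields the bound $\kappa^{|S|+1}\xi^{k-|S|}\epsilon_\vA^{|S|}$. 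Summing over subsets collapses to the binomial identity $\norm{\vAbar_{\omega_1}\cdots\vAbar_{\omega_k}}\le\kappa(\xi+\kappa\epsilon_\vA)^k$, so taking the $k$-th root and $k\to\infty$ gives $\xi(\vAbar_{1:\numSys})\le\xi+\kappa\epsilon_\vA$. Combined with $\xi(\vAhat_{1:\numCls})=\xi(\vAbar_{1:\numSys})$ this gives the first bound; the reverse is obtained by expanding $\prod(\vAbar_{\omega_i}-\vDelta_{\omega_i})$ and replacing $\kappa,\xi$ by $\bar{\kappa},\hat{\xi}$.

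The main obstacle I expect is keeping the constants in the $\norm{\vAcal-\vAcalbar}$ estimate clean enough to match the stated $\epsilon_\rho$: the Kronecker cross-term $\norm{\vAbar_j\otimes(\vA_j-\vAbar_j)}$ must be bounded with $\norm{\vAbar_j}\le\Abar+\epsilon_\vA$ (rather than $\Abar$) to produce the $(2\Abar+\epsilon_\vA)$ factor, and the bookkeeping between block operator norms and Frobenius bounds on $\vT-\vTbar$ must be done carefully so that the single scalar $\sqrt{\numSys}$ absorbs both terms. The joint-spectral-radius part is conceptually cleaner, but care is needed to count the runs of $\vA$'s correctly when $S$ has consecutive indices, which is exactly what makes the clean binomial bound $\kappa(\xi+\kappa\epsilon_\vA)^k$ hold with no combinatorial penalty.
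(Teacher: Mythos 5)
Your proposal is correct and follows essentially the same route as the paper's proof: pass to the expanded system $\bar{\Sigma}$ via Lemma \ref{lemma_MJSReduction_expandedMJSSpectralRadius}, bound $\norm{\vAcal-\vAcalbar}\leq\epsilon_\rho$ through the same Kronecker cross-term decomposition (your operator-norm bookkeeping with $\norm{\vTbar}\leq\sqrt{\numSys}$ in fact gives a slightly tighter constant than the stated $\epsilon_\rho$, which is harmless), and the run-counting binomial expansion you describe for the joint spectral radius is exactly the paper's Lemma \ref{lemma_MJSReduction_JSRPerturb}. The only deviation is in \ref{enum_MSSStabilityComparison}: the paper converts the power bound $\norm{\vAcalbar^k}\leq\tau(\tau\epsilon_\rho+\rho)^k$ (Corollary \ref{corollary_MJSReduction_singleMatrixPerturbation}) into a spectral-radius bound via Gelfand's formula, while you obtain $\rho(\vAcal+\vDelta)\leq\rho+\tau\norm{\vDelta}$ directly by a resolvent/Neumann-series argument — both are valid and yield the same constant.
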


% \vspace{-2em}
\begin{proof}
From Lemma \ref{lemma_MJSReduction_expandedMJSSpectralRadius}, it suffices to prove
\begin{gather}
\rho(\vAcalbar) \leq \tau \epsilon_\rho + \rho, \quad
\rho(\vAcal) \leq \taubar \epsilon_\rho + \rhohat. \label{eq_MJSReduction_33}\\
\xi(\vAbar_{1:\numSys}) \leq \kappa \epsilon_\vA + \xi, \quad
\xi(\vA_{1:\numSys}) \leq \kappabar \epsilon_\vA + \xihat. \label{eq_MJSReduction_34}
\end{gather}
% \begin{gather}
% \bar{\rho} - \tau \epsilon_\rho \leq \rho \leq \bar{\rho} + \bar{\tau} \epsilon_\rho \label{eq_MJSReduction_33}\\
% \bar{\xi} - \kappa \epsilon_\vA \leq \xi \leq \bar{\xi} + \bar{\kappa} \epsilon_\vA. \label{eq_MJSReduction_34}
% \end{gather}
Since we assume $\hat{\Omega}_{1:\numCls} = \Omega_{1:\numCls}$, then for $\Sigma$ and $\bar{\Sigma}$, we have $\norm{\vAbar_i - \vA_i} \leq \epsilon_\vA$,  $\norm{\vBbar_i - \vB_i} \leq \epsilon_\vB$, and $\norm{\vTbar - \vT}_{\infty} \leq \epsilon_\vT$. Consider matrix $\vAcalbar$ and $\vAcal$, we have 
$
[\vAcalbar]_{ij} - [\vAcal]_{ij}
= \vTbar(j,i) \vAbar_j \otimes \vAbar_j - \vT(j,i) \vA_j \otimes \vA_j
= \vTbar(j,i) (\vAbar_j \otimes \vAbar_j - \vA_j \otimes \vA_j) + (\vTbar(j,i) - \vT(j,i)) \vA_j \otimes \vA_j. 
$
Note that
$
\vAbar_j \otimes \vAbar_j - \vA_j \otimes \vA_j
= (\vAbar_j - \vA_j) \otimes \vA_j + \vA_j \otimes (\vAbar_j - \vA_j) + (\vAbar_j - \vA_j) \otimes (\vAbar_j - \vA_j),
$
which gives
$
\norm{\vAbar_j \otimes \vAbar_j - \vA_j \otimes \vA_j}
\leq (2 \Abar + \epsilon_\vA) \epsilon_\vA.
$
Then, we have
$
		\norm{[\vAcalbar]_{ij} - [\vAcal]_{ij}} 
		\leq \vTbar(j,i) (2 \Abar + \epsilon_\vA) \epsilon_\vA + | \vTbar(j,i) - \vT(j,i) | \Abar^2.
$
To simplify the notation, we let $c_1 := (2 \Abar + \epsilon_\vA) \epsilon_\vA$ and $c_2 := \Abar^2$. By Cauchy-Schwarz inequality, we have $\sum_i \norm{[\vAcalbar]_{ij} - [\vAcal]_{ij}}^2 \leq (c_1 \norm{\vTbar(j,:)} + c_2 \norm{ \vTbar(j,:) - \vT(j,:)})^2$. Thus, $\norm{\vAcalbar - \vAcal} \leq \sqrt{\numSys} \max_j (\sum_i \norm{[\vAcalbar]_{ij} - [\vAcal]_{ij}})^{0.5} \leq \sqrt{\numSys} (c_1+ c_2 \epsilon_\vT) =: \epsilon_\rho$.
% \begin{equation}
% 		\begin{split}
%  			\norm{\vAcalbar - \vAcal}^2 
%  			\leq& \sum_{i,j}  \norm{[\vAcalbar]_{ij} - [\vAcal]_{ij}}^2 \\
%  			\leq& \sum_{i,j} c_1^2 \vTbar(j,i)^2 + c_2^2 (\vTbar(j,i) - \vT(j,i))^2 \\
%  			&+ 2 c_1 c_2 \vTbar(j,i) | \vTbar(j,i) - \vT(j,i) | \\
%  			\leq& \sum_{j} c_1^2 \norm{\vTbar(j,:)}^2 + c_2^2 \norm{ \vTbar(j,:) - \vT(j,:)}^2 \\
%  			&+ 2 c_1 c_2  \norm{\vTbar(j,:)} \norm{ \vTbar(j,:) - \vT(j,:) } \\		
%  			=& \sum_{j} (c_1 \norm{\vTbar(j,:)} + c_2 \norm{ \vTbar(j,:) - \vT(j,:)})^2 \\
%  			\leq& \numSys (c_1 \min \curlybrackets{\sqrt{\numSys} \bar{\Tcal},1}+ c_2 \min \curlybrackets{\sqrt{\numSys} \bar{\Tcal},\epsilon_\vT})^2.
%  		\end{split}
% 	\end{equation}
% Taking square root, we have
% $
% \norm{\vAcalbar - \vAcal} \leq \epsilon_\rho.
% $

With Corollary \ref{corollary_MJSReduction_singleMatrixPerturbation} in the appendix, we have $\norm{\vAcalbar^k} \leq \tau (\tau \epsilon_\rho + \rho)^k$. By Gelfand's formula, $\rho(\vAcalbar) {=} \limsup_{k \rightarrow \infty} \norm{\vAcalbar^k}^{\frac{1}{k}} {\leq} \tau \epsilon_\rho + \rho$, which shows the left inequality of \eqref{eq_MJSReduction_33}. If we use Corollary \ref{corollary_MJSReduction_singleMatrixPerturbation} the other way, we have $\norm{\vAcal^k} \leq \bar{\tau} (\bar{\tau} \epsilon_\rho + \hat{\rho})^k$, which similarly implies $\rho(\vAcal) \leq \bar{\tau} \epsilon_\rho + \hat{\rho}$. With these results, \eqref{eq_MJSReduction_33} is proved. 
\eqref{eq_MJSReduction_34} can be shown similarly by noticing $\norm{\vAbar_i - \vA_i} \leq \epsilon_\vA$ and then using Lemma \ref{lemma_MJSReduction_JSRPerturb} in the appendix.
% To show \eqref{eq_MJSReduction_34}, note that with $\norm{\vAbar_i - \vA_i} \leq \epsilon_\vA$, Lemma \ref{lemma_MJSReduction_JSRPerturb} gives 
% $ \norm{\vAbar_{\omega_1} \cdots \vAbar_{\omega_k}} 
% \leq \kappa (\kappa \epsilon_\vA + \xi)^k,
% $
% for any $\omega_{1:k} \in [\numSys]^{k}$.
% By the definition joint spectral radius, we have 
% $\hat{\xi} 
% = \lim_{k \rightarrow \infty} \max_{\omega_{1:k} \in [\numSys]^{k}}$ $\norm{\vAbar_{\omega_1} \cdots \vAbar_{\omega_k}}^{\frac{1}{k}}
% \leq \kappa \epsilon_\vA + \xi.
% $
% Using Lemma \ref{lemma_MJSReduction_JSRPerturb} the other way, we can show
% $ \norm{\vA_{\omega_1} \cdots \vA_{\omega_k}} 
% \leq \bar{\kappa} (\bar{\kappa} \epsilon_\vA + \bar{\xi})^k
% $
% and
% $
% \xi 
% \leq \bar{\kappa} \epsilon_\vA + \bar{\xi}.
% $
% With these results, \eqref{eq_MJSReduction_34} is proved.
\end{proof}

Theorem \ref{thrm_MJSReduction_stabilityAnalysis} provides upper bounds on $|\rho(\vAcalhat) - \rho(\vAcal)|$ and $|\xi(\vAhat_{1:\numCls}) - \xi(\vA_{1:\numSys})|$.
By definition, $\tau$ decreases when $\rho$ increases, and the same applies to the pairs $\{\taubar, \rhohat\}$, $\{\kappa, \xi\}$, and $\{\kappabar, \xihat\}$. Hence, for fixed $\epsilon_\rho$ and $\epsilon_\vA$, by tuning the free parameters $\rho$, $\rhohat$, $\xi$, and $\xihat$, one may obtain tighter upper bounds in Theorem \ref{thrm_MJSReduction_stabilityAnalysis}. 
When $\rho = \rho(\vAcal)$, $\rhohat = \rho(\vAcalhat)$, the bound in \ref{enum_MSSStabilityComparison} becomes tight at $\epsilon_\rho = 0$ as the upper and lower bounds meet at $0$.
Note that these results hold for both stable and unstable $\Sigma$, and does not require perturbation $\epsilon_\vA, \epsilon_\vB, \epsilon_\vT$ to be small, which is in contrast to approximation results in Theorem \ref{thrm_MJSReduction_stateDistSyncMSS} and \ref{thrm_MJSReduction_bisimUS}.

Now we briefly compare the complexities for computing or approximating $\rho(\vAcal), \rho(\vAcalhat), \xi(\vA_{1:\numSys})$, and $\xi(\vAhat_{1:\numCls})$. Since $\vAcal$ has dimension $\numSys \dimSt^2 \times \numSys \dimSt^2$, the complexity to compute its spectral radius $\rho(\vAcal)$ is $\Ocal(\numSys^3 \dimSt^6)$, but it only requires $\Ocal(\numCls^3 \dimSt^6)$ for $\rho(\vAcalhat)$. Computation of the joint spectral radius is in general undecidable \cite{jungers2009joint}. An iterative approach \cite{parrilo2008approximation} provides an approximation for $\xi(\vA_{1:\numSys})$ with computational complexity $\mathcal{O}(\numSys)$, whereas it only requires $\mathcal{O}(\numCls)$ for $\xi(\vAhat_{1:\numCls})$. 

\section{Controller Design with Case Study on LQR}\label{sec_MJSReduction_LQRControl}
When the mode of an MJS can be measured at run-time, one can use \textit{mode-dependent} controllers. A mode-dependent controller is essentially a collection of individual controllers, one per mode, and the deployed controller switches with corresponding modes. Therefore, if we can reduce the modes, that would also reduce the number of controllers in a mode-dependent control. That is, with the reduced $\hat{\Sigma}$, we can design mode-dependent controller $\vKhat_{1:\numCls}$ for $\hat{\Sigma}$ and then associate every mode $i$ in $\Sigma$ with $\vKhat_k$ if $i \in \hat{\Omega}_k$.
% For modes that share similar dynamics as in  \ref{problem_lumpableCase} and \ref{problem_aggregatableCase}, one would speculate that their associated controllers would be also similar. This redundancy leaves room for the simplification of controller designs by letting these modes share a single controller while still achieving near-desired performance. 
% This is possible with the reduced $\hat{\Sigma}$ --- one can design mode-dependent controller $\vKhat_{1:\numCls}$ for $\hat{\Sigma}$ and then associate every mode $i$ in $\Sigma$ with $\vKhat_k$ if $i \in \hat{\Omega}_k$. 
Since $\hat{\Sigma}$ has a smaller scale than $\Sigma$, the computational cost may be reduced but the question is how this simplified controller performs on the original system $\Sigma$. 
%This idea resembles the abstraction based symbolic control in \cite{tabuada2008approximate, zamani2014symbolic}, where a finite state symbolic model is first constructed by partitioning the state space, then the controller synthesized with the symbolic model is applied back to the original system according to the partitioning membership.
In this section, we show how this idea can be used for linear quadratic regulator (LQR) for MJS and provide suboptimality guarantees for the reduced controller.

In the infinite horizon MJS LQR problems, given positive definite cost matrices $\vQ$ and $\vR$, we define quadratic cumulative cost $J_T = \expctn \squarebracketsbig{ \sum_{t=0}^{T-1}  \parenthesesbig{\vx_t^\T \vQ \vx_t + \vu_t^\T \vR \vu_t} + \vx_T^\T \vQ \vx_T}$. The goal is to design inputs to minimize the infinite time average cost $\limsup_{T\rightarrow \infty} \frac{1}{T} J_T$ under $\Sigma$. To ease the exposition, we let $\mathbb{S}_\numSys^+ := \curlybrackets{\vX_{1:\numSys}: \forall i \in [\numSys], \vX_i \in \dm{\dimSt}{\dimSt}, \vX_i \succeq 0}$. For $\vX_{1:\numSys} \in \mathbb{S}_\numSys^+$, for all $i \in [\numSys]$, define three operators $\varphi_i(\vX_{1:\numSys}) {:=} \sum_{j \in [\numSys]} \vT(i,j) \vX_j$, $\Kcal_i(\vX_{1:\numSys}) {:=} {-} \hspace{-0.3em} \parenthesesbig{\vR + \vB_i^\T \varphi_i(\vX_{1:\numSys}) \vB_i}^{\inv} \hspace{-0.3em} \parenthesesbig{\vB_i^\T \varphi_i(\vX_{1:\numSys}) \vA_i}$, and 
\begin{multline}
    \Rcal_i(\vX_{1:\numSys}) {:=} \vQ + \vA_i^\T \varphi_i(\vX_{1:\numSys}) \vA_i - \vA_i^\T \varphi_i(\vX_{1:\numSys})^\T \vB_i \\
	\qquad \cdot \parenthesesbig{\vR + \vB_i^\T \varphi_i(\vX_{1:\numSys}) \vB_i}^{\inv} \vB_i^\T \varphi_i(\vX_{1:\numSys}) \vA_i. \label{eq_MJSReduction_RiccatiOperator}
\end{multline}
Then, the solution to the infinite horizon LQR is the following: we first solve for the coupled Riccati equations
$
\vP_i = \Rcal_i(\vP_{1:\numSys}), \forall i \in [\numSys],
$
and then if $\omega_t = i$ at time $t$, we let input $\vu_t = \vK_i \vx_t$ where
$
\vK_i = \Kcal_i(\vP_{1:\numSys}).
$
The solution existence, uniqueness, and optimality can be guaranteed by the following assumption according to \cite{costa2006discrete}.
\begin{assumption}\label{assmp_LQR}
	$\Sigma$ is mean-square stabilizable. Cost matrices $\vQ \succ 0$ and $\vR \succ 0$.
\end{assumption}

To design controllers with the reduced $\hat{\Sigma}$, we can first compute controller $\vKhat_{1:\numCls}$ by solving LQR problem with $\hat{\Sigma}$ as the MJS dynamics. This requires solving $\numCls$ coupled Riccati equations, each of which is parameterized by $\vAhat_i, \vBhat_i,$ $\vThat(i,:), \vQ, \vR$. To apply $\vKhat_{1:\numCls}$ to the original $\Sigma$, we simply let $\vu_t = \vKhat_k \vx_t$ if $\omega_t = k$. Since the number of coupled Riccati equations is the same as the number of modes, the computational cost for $\Sigma$ is $\Ocal(\numSys)$ while only $\Ocal(\numCls)$ for $\hat{\Sigma}$, thus the saving is prominent when $\numCls \ll \numSys$.

Next, we analyze the suboptimality when applying controllers computed with $\hat{\Sigma}$. 
To begin with, similar to the notations for $\Sigma$, for $\hat{\Sigma}$ we define $\hat{\varphi}_{1:\numCls}, \hat{\Rcal}_{1:\numCls}$, $\hat{\Kcal}_{1:\numCls}$, $\vPhat_{1:\numCls}$, and $\vKhat_{1:\numCls}$. Particularly, $\vPhat_{1:\numCls}$ denotes the Riccati solution such that $\vPhat_i = \hat{\Rcal}_i(\vPhat_{1:\numCls})$, and $\vKhat_{1:\numCls}$ is computed such that $\vKhat_i = \hat{\Kcal}_i(\vPhat_{1:\numCls})$.
We will take the expanded and mode-reducible MJS $\bar{\Sigma}$ constructed with $\hat{\Sigma}$ and $\hat{\Omega}_{1:\numCls}$ in Section \ref{sec_MJSReduction_stabAnalysis} as a bridge. For $\bar{\Sigma}$, we similarly define $\bar{\varphi}_{1:\numSys}, \bar{\Rcal}_{1:\numSys}$, $\bar{\Kcal}_{1:\numSys}$, $\vPbar_{1:\numSys}$, and $\vKbar_{1:\numSys}$.
In terms of LQR solutions, the relation between $\hat{\Sigma}$ and $\bar{\Sigma}$ is given below.
\begin{lemma}\label{lemma_MJSReduction_LQR}
	Assume the Riccati solution $\vPbar_{1:\numSys}$ exists and $\vPbar_i \succ 0$ for all $i$. Then, (i) there exists a unique Riccati solution $\vPhat_{1:\numCls}$ in $\mathbb{S}_\numCls^+$; (ii) $\vPhat_k {=} \vPbar_i, \vKhat_k {=} \vKbar_i$ for any $i \in \hat{\Omega}_k$ for any $k$.
\end{lemma}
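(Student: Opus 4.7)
The plan rests on one structural observation about $\bar{\Sigma}$: by construction, $\vAbar_i = \vAhat_k$ and $\vBbar_i = \vBhat_k$ whenever $i \in \hat{\Omega}_k$, and by the definition of $\Lcal(\vT, \hat{\Omega}_{1:\numCls}, \epsilon_\vT)$ the mode-to-cluster probabilities satisfy $\sum_{j \in \hat{\Omega}_l} \vTbar(i,j) = \vThat(k,l)$ for every $i \in \hat{\Omega}_k$. Hence $\vTbar$ is exactly lumpable with respect to $\hat{\Omega}_{1:\numCls}$ with lump $\vThat$. First I would distill this into an \emph{invariance identity}: if $\vX_{1:\numSys}$ is cluster-wise constant, meaning $\vX_i = \vY_k$ for all $i \in \hat{\Omega}_k$ and some $\vY_{1:\numCls}$, then for every $i \in \hat{\Omega}_k$,
\[
\bar{\varphi}_i(\vX_{1:\numSys}) = \sum_l \sum_{j \in \hat{\Omega}_l} \vTbar(i,j)\,\vY_l = \sum_l \vThat(k,l)\,\vY_l = \hat{\varphi}_k(\vY_{1:\numCls}),
\]
and therefore $\bar{\Rcal}_i(\vX_{1:\numSys}) = \hat{\Rcal}_k(\vY_{1:\numCls})$ and $\bar{\Kcal}_i(\vX_{1:\numSys}) = \hat{\Kcal}_k(\vY_{1:\numCls})$. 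Both expressions depend only on the cluster index $k$.

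Next I would show that $\vPbar_{1:\numSys}$ is itself cluster-wise constant via the Riccati recursion $\vP^{(n+1)}_i := \bar{\Rcal}_i(\vP^{(n)}_{1:\numSys})$ initialized at $\vP^{(0)}_i = \vnot$: a straightforward induction combined with the invariance identity gives that each iterate is cluster-wise constant. Under Assumption \ref{assmp_LQR} together with the standing hypothesis that $\vPbar_{1:\numSys}$ exists with $\vPbar_i \succ 0$, the standard MJS-LQR convergence theorem yields $\vP^{(n)} \to \vPbar$, so $\vPbar$ inherits the cluster-wise constancy. Define $\vPhat_k := \vPbar_i$ for any $i \in \hat{\Omega}_k$, which is well-defined and satisfies $\vPhat_k \succ 0$. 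Applying the invariance identity to $\vPbar$ gives $\vPhat_k = \bar{\Rcal}_i(\vPbar_{1:\numSys}) = \hat{\Rcal}_k(\vPhat_{1:\numCls})$, so $\vPhat_{1:\numCls} \in \mathbb{S}_\numCls^+$ solves the coupled Riccati equations for $\hat{\Sigma}$; and $\vKhat_k = \hat{\Kcal}_k(\vPhat_{1:\numCls}) = \bar{\Kcal}_i(\vPbar_{1:\numSys}) = \vKbar_i$ by the same calculation, yielding (ii).

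For uniqueness in (i), I would reverse the construction. Suppose $\vPhat'_{1:\numCls} \in \mathbb{S}_\numCls^+$ is another solution of $\hat{\Sigma}$'s Riccati system, and lift it to $\vXbar_i := \vPhat'_k$ for $i \in \hat{\Omega}_k$. The invariance identity immediately yields $\vXbar_i = \hat{\Rcal}_k(\vPhat'_{1:\numCls}) = \bar{\Rcal}_i(\vXbar_{1:\numSys})$, so $\vXbar_{1:\numSys}$ is a positive semidefinite Riccati solution for $\bar{\Sigma}$. The standard uniqueness result for MJS-LQR under Assumption \ref{assmp_LQR} forces $\vXbar = \vPbar$, whence $\vPhat'_k = \vPbar_i = \vPhat_k$.

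The only non-bookkeeping obstacle is invoking the classical MJS Riccati theory correctly: convergence of value iteration from $\vnot$ to the Riccati fixed point, and uniqueness of the positive semidefinite solution. Both are available under Assumption \ref{assmp_LQR} combined with the hypothesis that $\vPbar$ exists and is positive definite. If one prefers to sidestep the value iteration argument, the lift from $\hat{\Sigma}$ to $\bar{\Sigma}$ in paragraph three can be executed first: mean-square stabilizability of $\hat{\Sigma}$ inherited from $\bar{\Sigma}$ (via Lemma \ref{lemma_MJSReduction_expandedMJSSpectralRadius}) yields existence of some $\vPhat_{1:\numCls}$, and the lift together with uniqueness for $\bar{\Sigma}$ directly identifies $\vPhat_k$ with $\vPbar_i$, after which (ii) follows.
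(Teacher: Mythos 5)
Your proposal is correct and follows essentially the same route as the paper: both establish that the Riccati operator preserves cluster-wise constancy (via the identity $\sum_{j\in\hat{\Omega}_l}\vTbar(i,j)=\vThat(k,l)$), propagate this through the value iteration by induction, and use convergence of the iteration together with uniqueness of the positive-semidefinite solution for $\bar{\Sigma}$ to identify $\vPhat_k$ with $\vPbar_i$. Your lifting argument for uniqueness simply makes explicit the contradiction the paper leaves implicit, and the choice of initialization ($\vnot$ versus the paper's $\vQ$) is immaterial since both are cluster-wise constant.
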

\begin{proof}
    We consider the Riccati operator iteration defined as follows: $\vPbar_i^{(0)} = \vQ$, $\vPbar_i^{(h+1)} = \bar{\Rcal}_i (\vPbar_i^{(h)})$ for all $i \in [\numSys], h \in \mathbb{N}$ and $\vPhat_k^{(0)} = \vQ$, $\vPhat_i^{(h+1)} = \hat{\Rcal}_i (\vPhat_i^{(h)})$ for all $k \in [\numCls], h \in \mathbb{N}$. Then, note that by construction, for all $i \in \hat{\Omega}_k$ and all $l \in [\numCls]$, we have $\sum_{j \in \hat{\Omega}_l}\vTbar(i,j) = \vThat(k,l)$. Through induction and algebra, it is easy to show that for all $h \in \mathbb{N}$, and for any $i, i' \in \hat{\Omega}_k$ for any k, we have $\vPbar_i^{(h)} = \vPbar_{i'}^{(h)} = \vPhat_k^{(h)}$.
    
    Since $\vPbar_i {\succ} 0$, by \cite[Fact 4]{du2021certainty}, we know $\vPbar_{1:\numSys}$ is the unique solution among $\mathbb{S}_\numSys^+$, and $\vKbar_{1:\numSys}$ stabilizes $\bar{\Sigma}$.
    According to \cite[Proposition A.23]{costa2006discrete}, the stabilizability of $\bar{\Sigma}$ and the fact $\vQ, \vR {\succ} 0$ imply $\lim_{h \rightarrow \infty} \vPbar_i^{(h)} {=} \vPbar_i$. Combining this convergence result with the Riccati iteration results we just showed, we further have, for any $i, i' {\in} \hat{\Omega}_k$ and any k, we have $\vPbar_i {=} \vPbar_{i'} {=} \vPhat_k$. Then, it is easy to show that $\vKbar_i {=} \vKbar_{i'} {=} \vKhat_k$.
    % by reusing the relation $\sum_{j \in \hat{\Omega}_l}\vTbar(i,j) = \vThat(k,l)$ for all $i \in \hat{\Omega}_k$ and all $l \in [\numCls]$.
    The uniqueness of $\vPhat_{1:\numSys}$ can be shown by contradiction.
% 	To show the uniqueness of $\vPhat_{1:\numSys}$, suppose there exists another Riccati solution $\vOhat_{1:\numCls} \in \mathbb{S}_\numCls^+$ such that $\vOhat_{1:\numCls} \neq \vPhat_{1:\numCls}$ yet also satisfies $\vOhat_i = \hat{\Rcal}_i(\vOhat_{1:\numCls})$. Define $\vObar_{1:\numSys}$ such that $\vObar_i = \vOhat_k$ if $i \in \hat{\Omega}_k$, then we see $\vObar_{1:\numSys} \in \mathbb{S}_\numSys^+$, $\vObar_{1:\numSys} \neq \vPbar_{1:\numSys}$ yet also satisfies $\vObar_i = \bar{\Rcal}_i(\vObar_{1:\numSys})$, which violates the assumption regarding the uniqueness of Riccati solution $\vPbar_{1:\numSys}$. Therefore, $\vPhat_{1:\numCls}$ uniquely exists, and (i) is proved. Finally, (ii) can be shown simply by noticing that $\vKhat_k = \hat{\Kcal}_k(\vPhat_{1:\numCls}) = \bar{\Kcal}_k(\vPbar_{1:\numSys}) = \vKbar_i$ where the second equality follows from similar arguments in the finite time case.
\end{proof}
With this lemma, we have the following suboptimality guarantees in terms of applying controller $\vKhat_{1:\numCls}$ to $\Sigma$.

\begin{theorem}[LQR Suboptimality]\label{thrm_MJSReduction_LQRSuboptimality}
	Assume \ref{assmp_LQR} holds for $\Sigma$, and $\Sigma$ has additive Gaussian noise $\N(0, \sigma_\vw^2 \vI_\dimSt)$ that is independent of the mode switching. Let $J^\star$ and $\Jhat$ respectively denote the infinite time average cost incurred by the optimal controller $\vK_{1:\numSys}$ and controller $\vKhat_{1:\numCls}$ (at time $t$, $\vu_t = \vKhat_k \vx_t$ if $\omega_t \in \hat{\Omega}_k$). Then, there exists constants $\bar{\epsilon}_{\vA,\vB}$, $ \bar{\epsilon}_\vT$, $C_{\vA, \vB}$, and $C_{\vT}$,
	% 	, all of which depend on $\Sigma$, $\vQ$, and $\vR$, 
	such that when $\max\curlybrackets{\epsilon_\vA, \epsilon_\vB} \leq \bar{\epsilon}_{\vA,\vB}$ and $\epsilon_\vT \leq \bar{\epsilon}_\vT$,
	\begin{equation}\label{eq_lqrSubopt_avgCost}
	    \Jhat - J^\star \leq \sigma_\vw^2 (C_{\vA, \vB} \max\curlybrackets{\epsilon_\vA, \epsilon_\vB} + C_{\vT} \epsilon_\vT)^2
	\end{equation}
	Let $J^\star_\infty$ and $\Jhat_\infty$ denote the infinite time cumulative cost incurred by $\vK_{1:\numSys}$ and $\vKhat_{1:\numCls}$ respectively. Then, when $\sigma_\vw = 0$, $\epsilon_\vT \leq \bar{\epsilon}_\vT$, and $\max\curlybrackets{\epsilon_\vA, \epsilon_\vB} \leq \bar{\epsilon}_{\vA,\vB}$,
	\begin{equation}\label{eq_lqrSubopt_cumCost}
	    \Jhat_\infty - J^\star_\infty
	    \leq (C'_{\vA, \vB} \max\curlybrackets{\epsilon_\vA, \epsilon_\vB} + C'_{\vT} \epsilon_\vT) \norm{\vx_0}^2,
	\end{equation}
	for some constants $C'_{\vA, \vB}$ and $C'_{\vT}$.
\end{theorem}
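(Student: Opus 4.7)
The plan is to route the analysis through the intermediate mode-reducible system $\bar{\Sigma}$ from Section \ref{sec_MJSReduction_stabAnalysis}. By Lemma \ref{lemma_MJSReduction_LQR}, executing the reduced policy $\vu_t = \vKhat_k \vx_t$ whenever $\omega_t \in \hat{\Omega}_k$ is pointwise identical to executing the mode-dependent gain $\vKbar_i$ (when $\omega_t = i$) that is LQR-optimal for $\bar{\Sigma}$. Therefore $\Jhat$ is exactly the cost of applying the optimal policy for $\bar{\Sigma}$ on the true system $\Sigma$, and $\Jhat - J^\star$ becomes a standard certainty-equivalence gap in which the parametric perturbation satisfies $\norm{\vAbar_i - \vA_i} \leq \epsilon_\vA$, $\norm{\vBbar_i - \vB_i} \leq \epsilon_\vB$, and $\norm{\vTbar - \vT}_\infty \leq \epsilon_\vT$ by construction of $\bar{\Sigma}$.

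The first step is a Riccati perturbation bound. Linearizing the coupled Riccati map $\vP_i = \Rcal_i(\vP_{1:\numSys})$ around the true solution, and using the contraction/monotonicity properties of $\Rcal_i$ under the mean-square stabilizability hypothesis \ref{assmp_LQR} in the style of \cite{du2021certainty}, I obtain, for sufficiently small perturbations, that $\norm{\vPbar_i - \vP_i} \lesssim \max\curlybrackets{\epsilon_\vA, \epsilon_\vB} + \epsilon_\vT$. Propagating this through the closed-form gain $\Kcal_i$ yields a Lipschitz bound $\norm{\vKbar_i - \vK_i} \lesssim \max\curlybrackets{\epsilon_\vA, \epsilon_\vB} + \epsilon_\vT$. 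Preliminary to this, I will verify that the closed-loop augmented matrix under $\vKbar_{1:\numSys}$ remains mean-square stable on $\Sigma$: the required perturbation thresholds $\bar{\epsilon}_{\vA,\vB}, \bar{\epsilon}_\vT$ follow from combining the controller closeness with the augmented-matrix spectral-radius perturbation in Theorem \ref{thrm_MJSReduction_stabilityAnalysis}\ref{enum_MSSStabilityComparison}.

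The second step converts the controller perturbation into a cost perturbation. For the noisy average-cost bound \eqref{eq_lqrSubopt_avgCost}, I use a second-order expansion of the MJS LQR cost around the optimum: the cost functional is smooth in the gains, and because its gradient vanishes at $\vK_{1:\numSys}$, the gap satisfies $\Jhat - J^\star \lesssim \sigma_\vw^2 \max_i \norm{\vKbar_i - \vK_i}^2$, which combined with the Lipschitz Riccati bound yields \eqref{eq_lqrSubopt_avgCost} with constants $C_{\vA,\vB}, C_\vT$ collecting the sensitivity factors. For the noise-free cumulative bound \eqref{eq_lqrSubopt_cumCost}, I use the Lyapunov characterization: both $J^\star_\infty$ and $\Jhat_\infty$ equal $\expctn[\vx_0^\T \vM_{\omega_0} \vx_0]$ for appropriate coupled Lyapunov solutions $\vM_{1:\numSys}$ determined by the feedback law; a first-order Lyapunov-equation sensitivity argument then gives $\norm{\vM_i^{\vKbar} - \vM_i^{\vK}} \lesssim \max\curlybrackets{\epsilon_\vA,\epsilon_\vB} + \epsilon_\vT$, which is linear-in-$\epsilon$, quadratic-in-$\norm{\vx_0}$, as claimed.

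The main obstacle is the Riccati perturbation step, for two reasons. First, the perturbation in $\vT$ enters the coupled Riccati equation through the cross-mode averaging operator $\varphi_i$, so the perturbation is not localized to a single equation and one must bound the entire coupled system of deviations simultaneously; this forces a global contraction-mapping argument rather than a per-mode implicit-function calculation. Second, one must ensure that $\vPbar_{1:\numSys}$ remains the positive-definite stabilizing Riccati solution so that the identification in Lemma \ref{lemma_MJSReduction_LQR} is valid, which is what pins down the explicit smallness thresholds $\bar{\epsilon}_{\vA,\vB}$ and $\bar{\epsilon}_\vT$. The constants $C_{\vA,\vB}, C_\vT, C'_{\vA,\vB}, C'_\vT$ emerge from tracking these sensitivity factors through the chain and depend on $\vA_{1:\numSys}, \vB_{1:\numSys}, \vT, \vQ, \vR$, and $\norm{\vP_{1:\numSys}}$.
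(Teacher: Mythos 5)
Your proposal follows the same skeleton as the paper's proof: both route through the expanded mode-reducible system $\bar{\Sigma}$, invoke Lemma \ref{lemma_MJSReduction_LQR} to identify the deployed policy $\vKhat_{1:\numCls}$ with the gains $\vKbar_{1:\numSys}$ that are LQR-optimal for $\bar{\Sigma}$, and thereby reduce both claims to a certainty-equivalence perturbation problem with parameter errors $\epsilon_\vA$, $\epsilon_\vB$, $\epsilon_\vT$. Where you diverge is in how that perturbation problem is handled. The paper treats it as a black box: it cites \cite[Theorem 6]{du2021certainty} directly for both the average-cost suboptimality and the gain closeness $\norm{\vKbar_{1:\numSys}-\vK_{1:\numSys}}$, and for the cumulative bound it uses the exact cost-difference identity $\Jbar_\infty - J^\star_\infty = \sum_t \expctn[\norm{\vM_{\omega_t}(\vKbar_{\omega_t}-\vK^\star_{\omega_t})\vx_t}^2]$ from \cite[Theorem 4.5]{costa2006discrete} together with the state-energy bound of \cite[Lemma 10]{sattar2021identification}. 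You instead re-derive the inner machinery: a coupled-Riccati Lipschitz bound, a vanishing-gradient second-order expansion for the average cost (which, incidentally, is what actually justifies the quadratic dependence appearing in \eqref{eq_lqrSubopt_avgCost}; the paper's own proof text drops the square), and a coupled-Lyapunov sensitivity argument for the cumulative cost in place of the exact identity. Both routes are sound; the paper's is shorter because it leans on prior results, while yours is more self-contained and makes the source of the quadratic scaling explicit. One small imprecision: you propose to certify closed-loop mean-square stability of $\Sigma$ under $\vKbar_{1:\numSys}$ via Theorem \ref{thrm_MJSReduction_stabilityAnalysis}\ref{enum_MSSStabilityComparison}, but that result compares the \emph{open-loop} augmented matrices $\vAcal$ and $\vAcalhat$; you need the same perturbation argument (Corollary \ref{corollary_MJSReduction_singleMatrixPerturbation}) applied to the closed-loop augmented matrix built from $\vA_i+\vB_i\vKbar_i$, which is how the thresholds $\bar{\epsilon}_{\vA,\vB},\bar{\epsilon}_\vT$ are actually pinned down in \cite{du2021certainty}.
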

\begin{proof} We will use Lemma \ref{lemma_MJSReduction_LQR} and $\bar{\Sigma}$ and $\vKbar_{1:\numSys}$ as a bridge to compare $\vKhat_{1:\numCls}$ and $\vK_{1:\numSys}$.	
First, we prove \eqref{eq_lqrSubopt_avgCost}.
Comparing $\bar{\Sigma}$ and $\Sigma$, one can see $\norm{\vAbar_i - \vA_i} \leq \epsilon_\vA$, $\norm{\vBbar_i - \vB_i} \leq \epsilon_\vB$, and $\norm{\vTbar - \vT}_\infty \leq \epsilon_\vT$. Then, from \cite[Theorem 6]{du2021certainty} we know when $\max\curlybrackets{\epsilon_\vA, \epsilon_\vB} \leq \bar{\epsilon}_{\vA,\vB}$ and $\epsilon_\vT \leq \bar{\epsilon}_\vT$ for some constants $\bar{\epsilon}_{\vA,\vB}$ and $\bar{\epsilon}_\vT$, the Riccati solution $\vPbar_{1:\numSys}$ uniquely exists among $\mathbb{S}_\numSys^+$ and are positive definite, and the cost $\Jbar$ when applying $\vKbar_{1:\numSys}$ to $\Sigma$ has suboptimality $\Jbar - J^\star \leq \sigma_\vw^2 (C_{\vA, \vB} \max\curlybrackets{\epsilon_\vA, \epsilon_\vB} + C_{\vT} \epsilon_\vT)$ for some constants $C_{\vA, \vB}$ and $C_{\vT}$. Using Lemma \ref{lemma_MJSReduction_LQR}, we know $\vPhat_{1:\numCls}$ uniquely exists $\mathbb{S}_\numCls^+$, and $\vKhat_k = \vKbar_i$ for any $i$ belonging to any $\hat{\Omega}_k$, which implies applying $\vKbar_{1:\numSys}$ is equivalent to applying $\vKhat_{1:\numCls}$ as in the theorem statement. Thus $\Jbar {=} \Jhat$, and $\Jhat - J^\star {=} \Jbar - J^\star \leq \sigma_\vw^2 (C_{\vA, \vB} \max\curlybrackets{\epsilon_\vA, \epsilon_\vB} + C_{\vT} \epsilon_\vT)$.

Next, we prove \eqref{eq_lqrSubopt_cumCost}. Similar as above, we let $\Jbar_\infty$ denote the cumulative cost when applying $\vKbar_{1:\numSys}$ to $\Sigma$, then we have $\Jhat_\infty = \Jbar_\infty$. From the proof of \cite[Theorem 4.5]{costa2006discrete}, we have $\Jbar_\infty - J^\star_\infty = \sum_{t=0}^\infty \expctn[\norm{\vM_{\omega_t} (\vKbar_{\omega_t} - \vK^\star_{\omega_t}) \vx_t}^2 ]$ where $\vM_i = \vR + \vB_i^\T \varphi_i(\vP_{1:\numSys}) \vB_i$ and $\vx_t$ is driven by controller $\vKbar_{1:\numSys}$. From \cite[Theorem 6]{du2021certainty}, we know when $\max\curlybrackets{\epsilon_\vA, \epsilon_\vB} \leq \bar{\epsilon}_{\vA,\vB}$ and $\epsilon_\vT \leq \bar{\epsilon}_\vT$, then $\vKbar_{1:\numSys}$ is a stabilizing controller and $\norm{\vKbar_{1:\numSys} - \vK^\star_{1:\numSys}} \leq C^{\vK}_{\vA,\vB} \max\curlybrackets{\epsilon_\vA, \epsilon_\vB} + C^{\vK}_{\vT} \epsilon_{\vT}$ for some constants $C^{\vK}_{\vA,\vB}$ and $C^{\vK}_{\vT}$. Following \cite[Lemma 10]{sattar2021identification}, we know $\sum_{t=0}^\infty \expctn[\norm{\vx_t}^2] \leq C_\vx \norm{\vx_0}^2$ for some constant $C_\vx$. Combining these results, we have $\Jhat_\infty - J^\star_\infty \leq \norm{\vM_{1:\numSys}} \norm{\vKbar_{1:\numSys} - \vK^\star_{1:\numSys}} \sum_{t=0}^\infty \expctn[\norm{\vx_t}^2] \leq \norm{\vM_{1:\numSys}} C_\vx (C^{\vK}_{\vA,\vB} \max\curlybrackets{\epsilon_\vA, \epsilon_\vB} + C^{\vK}_{\vT} \epsilon_{\vT}) \norm{\vx_0}^2$.
\end{proof}
In Theorem \ref{thrm_MJSReduction_LQRSuboptimality}, constants $\bar{\epsilon}_{\vA,\vB}$, $ \bar{\epsilon}_\vT$, $C_{\vA, \vB}$, $C_{\vT}$, $C'_{\vA, \vB}$, and $C'_{\vT}$ only depend on the original MJS $\Sigma$ and cost matrices $\vQ$ and $\vR$, and their exact expressions can be obtained following the proof and corresponding references.
As a sanity check, when there is no perturbation, i.e., mode-reducible case, then we have $\Jhat = J^\star$ and $
\Jhat_\infty = J^\star_\infty$, which can also be implied from Lemma \ref{lemma_MJSReduction_LQR}.
For the reduced MJS $\hat{\Sigma}$, its Riccati solution $\vPhat_{1:\numCls}$ and thus controllers $\vKhat_{1:\numCls}$ are guaranteed to exist when perturbation $\epsilon_\vA, \epsilon_\vB$, and $\epsilon_\vT$ are small enough as required in Theorem \ref{thrm_MJSReduction_LQRSuboptimality}. 
The additive noise in Theorem \ref{thrm_MJSReduction_LQRSuboptimality} means the MJS dynamics is given by $\vx_{t+1} = \vA_{\omega_t}\vx_t + \vB_{\omega_t}\vu_t + \vw_t$ where $\vw_t \sim \N(0, \sigma_\vw^2 \vI_\dimSt)$. 
% The optimal controller for the noisy case is the same as the optimal controller for the noise-free case.
In the noisy case, both $\Jhat_\infty$ and $J^\star_\infty$ are infinite, so the cumulative suboptimality $\Jhat_\infty - J^\star_\infty$ is only studied for the noise-free case as in \eqref{eq_lqrSubopt_cumCost}. 
On the other hand, in the noise-free case, we have not only $J^\star = \Jhat$ as implied by \eqref{eq_lqrSubopt_avgCost}, but also $J^\star = \Jhat = 0$ as long as $\vKhat_{1:\numCls}$ is stabilizing.

\section{Numerical Experiments}\label{sec_MJSReduction_Experiment}
In this section, we present synthetic experiments to evaluate the main results in the paper.
We evaluate the clustering performance of Algorithm \ref{Alg_MJSReduction} and the LQR controller designed with the reduced MJS $\hat{\Sigma}$ as discussed in Section \ref{sec_MJSReduction_LQRControl}.
All the experiments are performed using MATLAB R2020a on a laptop with Xeon E3-1505M CPU. We use the kmeans() function from the Statistics and Machine Learning Toolbox in MATLAB for the k-means problem in Algorithm \ref{Alg_MJSReduction}.

\subsection{Clustering Evaluation}
We consider the uniform partition $\Omega_{1:\numCls}$, i.e. $|\Omega_i| {=} \bar{\numSys} {:=} \numSys / \numCls$ for any $i$. 
The system $\Sigma$ is randomly generated according to \ref{problem_lumpableCase} or \ref{problem_aggregatableCase} with desired levels of perturbation $\epsilon_\vA, \epsilon_\vB, \epsilon_\vT$ so that in \eqref{eq_MJSReduction_approxDyn} each summand $\norm{\vA_i - \vA_{i'}} \leq \epsilon_\vA/(\numCls \bar{\numSys}^2)$. The same applies to $\vB_{1:\numSys}$ and $\vT$.
Specifically, we first randomly generate a small scale MJS $\breve{\Sigma} = \textup{MJS}(\breve{\vA}_{1:\numCls}, \breve{\vB}_{1:\numCls}, \breve{\vT})$: we sample each matrix element in $\breve{\vA}_k$ and $\breve{\vB}_k$ from standard Gaussian distributions and then scale the matrices so that each $\norm{\breve{\vA}_k} = 0.5$ and $\norm{\breve{\vB}_k}=1$ unless otherwise mentioned; and each $\breve{\vT}(i,:)$ is sampled from the flat Dirichlet distribution. 
Then, we generate $\Sigma$ by augmenting $\breve{\Sigma}$. For every mode $i \in \Omega_k$, we let $\vA_i = \breve{\vA}_k + \vE_i$ and $\vB_i = \breve{\vB}_k + \vF_i$ where we sample elements in $\vE_i$ and $\vE_i$ from standard Gaussian and then scale them so that $\norm{\vE_i}_\fro = \frac{\epsilon_\vA}{2 \numCls \bar{\numSys}^2} $ and $\norm{\vF_i}_\fro = \frac{\epsilon_\vB}{2 \numCls \bar{\numSys}^2}$. The generation of $\vT$ is a bit involved.
For the aggregatable case \ref{problem_aggregatableCase}, we first generate a Markov matrix $\vTbar \in \dm{\numSys}{\numSys}$ such that for every $i \in \Omega_k$, $\vTbar(i, \Omega_l) = \va_{k,l} \breve{\vT}(k,l)$ where $\va_{k,l} \in \dm{1}{|\Omega_l|}$ is sampled from the flat Dirichlet distribution; then we let $\vT(i,:) = (1-\frac{\epsilon_\vT}{2 \numCls \bar{\numSys}^2})\vTbar(i,:) + \frac{\epsilon_\vT}{2 \numCls \bar{\numSys}^2} \vb_i$ where $\vb_i \in \dm{1}{\numSys}$ is again sampled from the flat Dirichlet distribution.
The same steps are used to generate $\vT$ for the lumpable case \ref{problem_lumpableCase} except that 
$\vTbar(i, \Omega_l) = \va_{i,l} \breve{\vT}(k,l)$.
Following these steps, $\Sigma$ satisfies the perturbation conditions in \ref{problem_lumpableCase} and \ref{problem_aggregatableCase}.

To evaluate Algorithm \ref{Alg_MJSReduction}, we fix $\dimSt{=}5$, $\dimInput{=}3$, $\numCls{=}4$ and record the misclustering rate (MR) defined in Section \ref{subsec_MJSReduction_theoryClustering}
over 100 runs. 
Fig. \ref{fig_ClusteringResult} presents the clustering performances under different number of modes $\numSys$ and perturbations $\epsilon_\vA, \epsilon_\vB$ and $\epsilon_\vT$.
In the plots, we normalized the perturbation on the x-axis by $\numSys^2$ so that the trends under different $\numSys$ can be better visualized. This also follows from the experiment setup: each summand in \eqref{eq_MJSReduction_approxDyn} has $\norm{\vA_i - \vA_{i'}} \leq \Ocal(\epsilon_\vA/\numSys^2)$.
It is clear that the clustering performance degrades with increasing $\numSys$ and perturbations. We can also observe that when the perturbation is small, there are no misclustered modes. 
\begin{figure} 
	\centering
	\captionsetup[subfloat]{captionskip=0pt, farskip=0pt}
	\subfloat[]{%
		\includegraphics[width=1.6in]{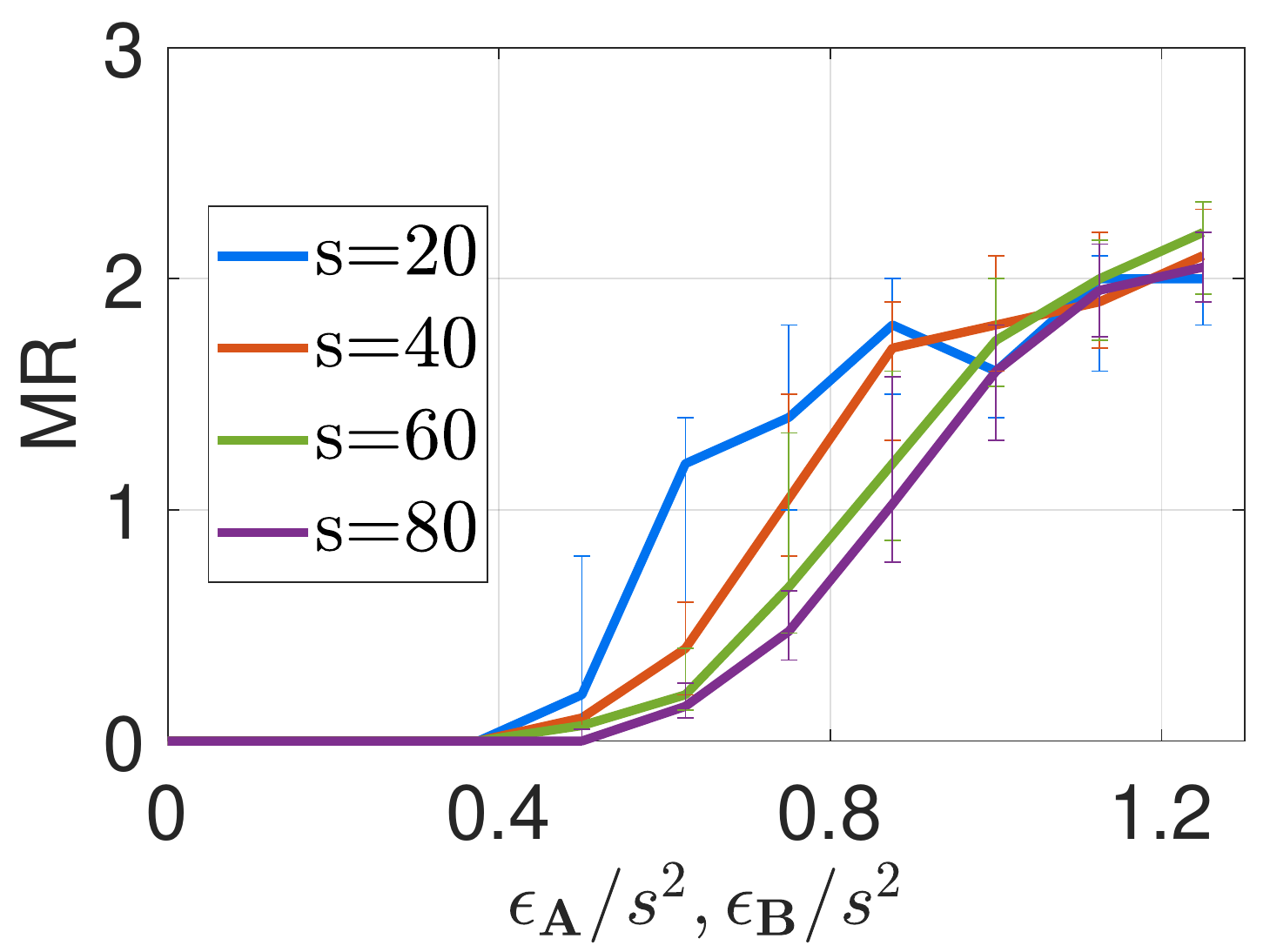}}
	\subfloat[]{%
		\includegraphics[width=1.6in]{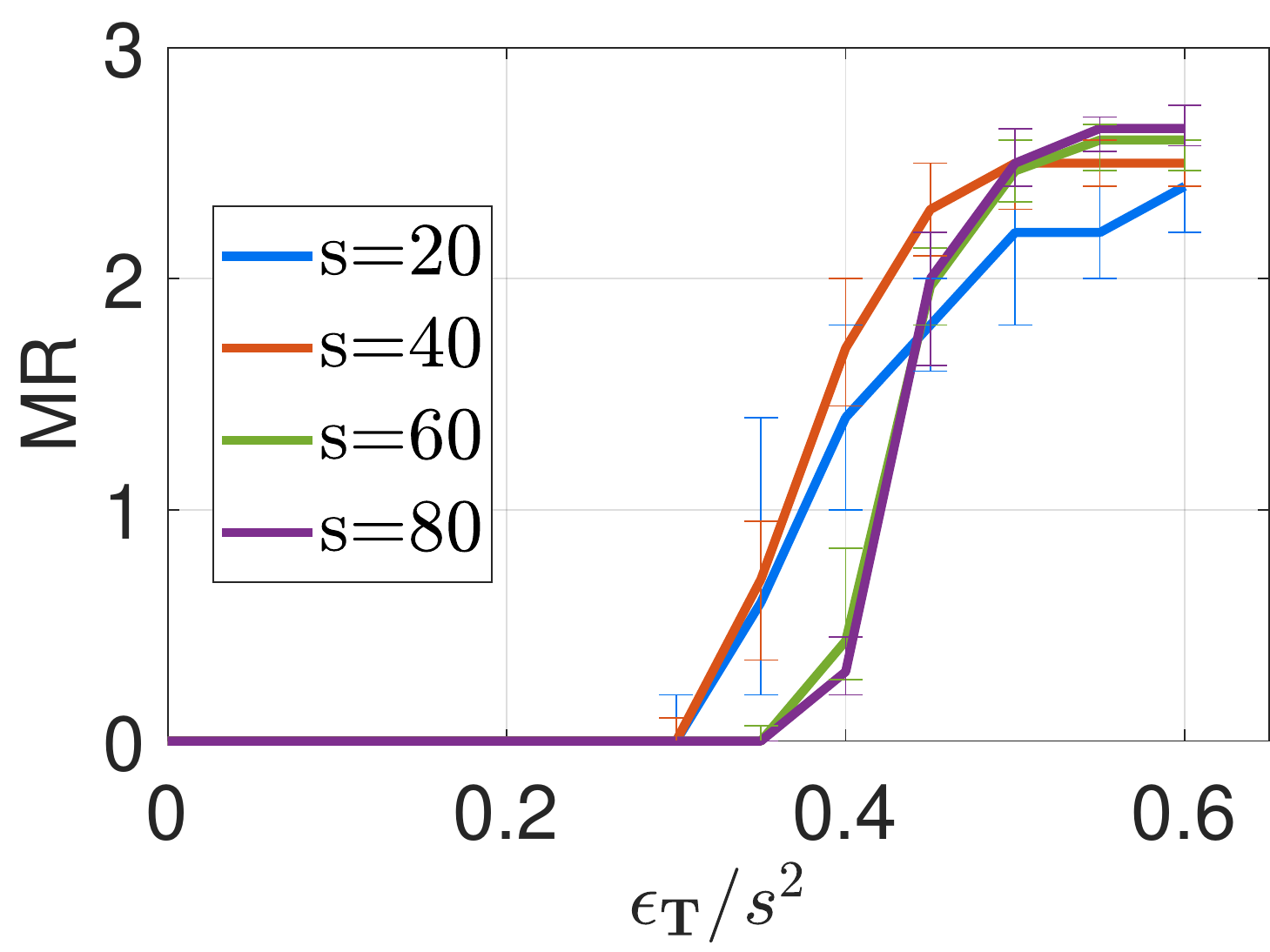}}
	\\
	\subfloat[]{%
		\includegraphics[width=1.6in]{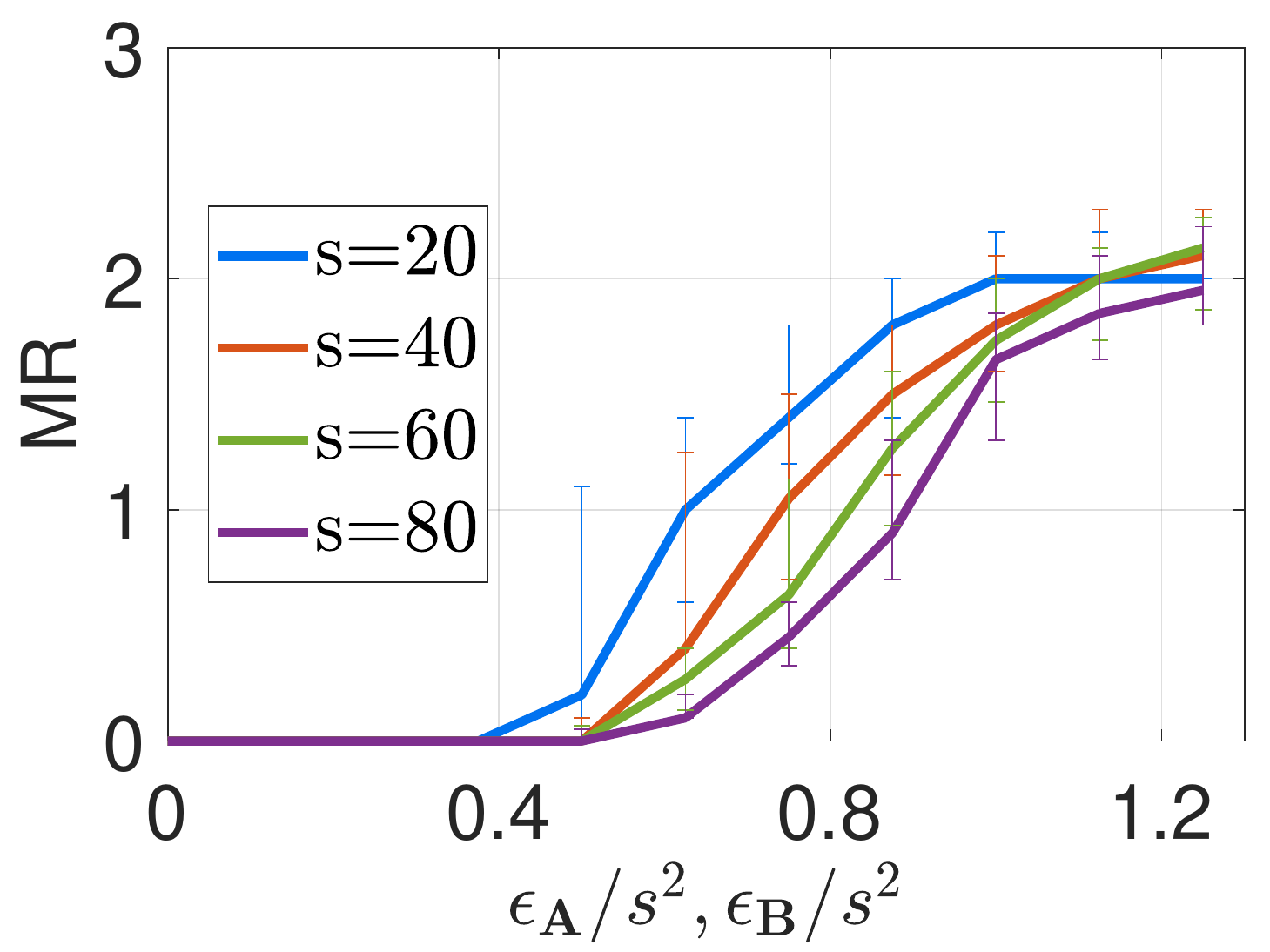}}
	\subfloat[]{%
		\includegraphics[width=1.6in]{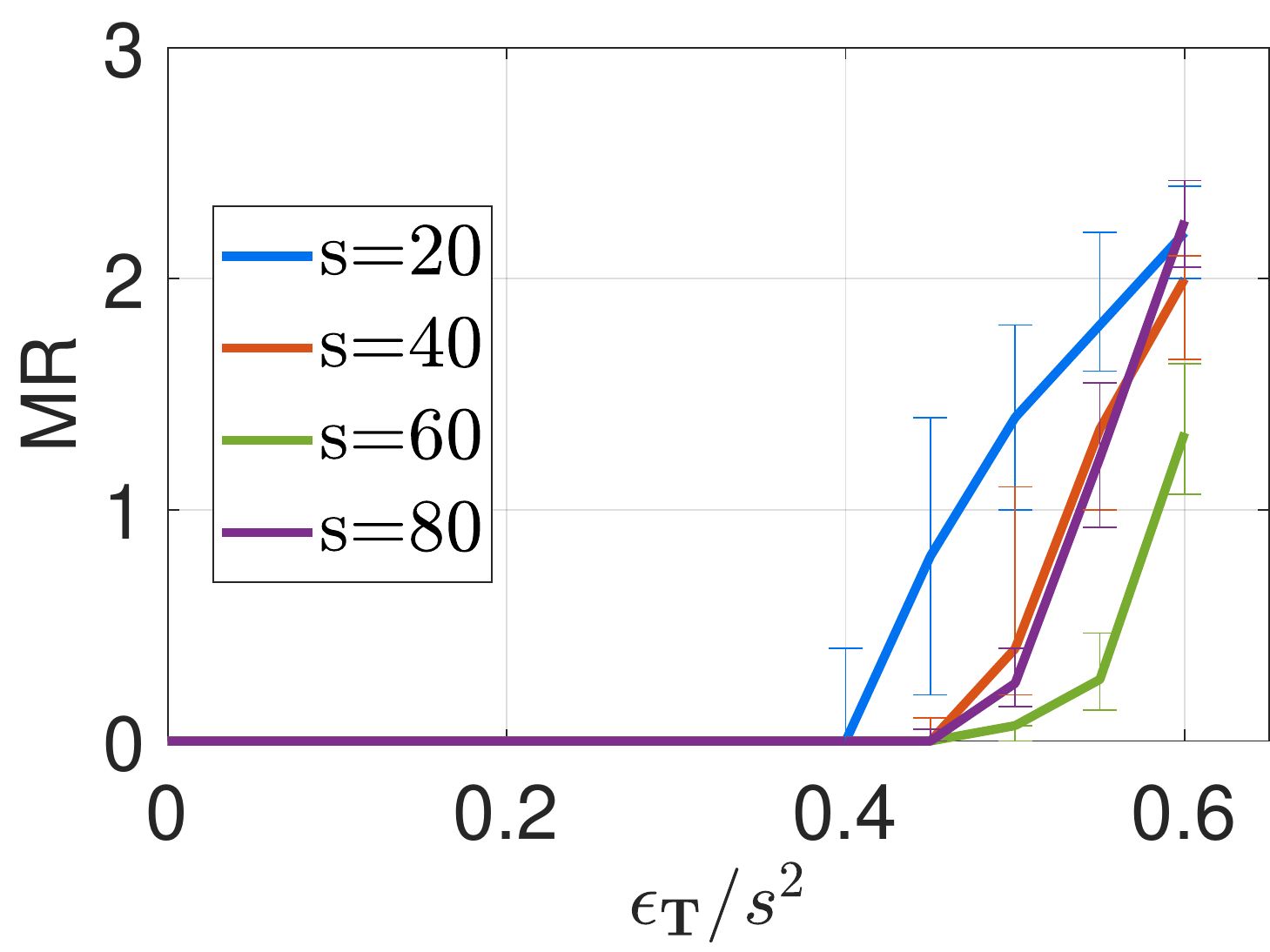}}
	\\	
	\caption{MR (median, first and third quartiles) vs number of modes $s$ vs perturbation level. First row: aggregatable case \ref{problem_aggregatableCase}. Second row: lumpable case \ref{problem_lumpableCase}. First column: $\epsilon_\vT$ ${=}0.5 \numSys^2$, $\alpha_\vA \hspace{-0.1ex} {\propto} 1 \hspace{-0.3ex} {/} \hspace{-0.4ex} \max_i \hspace{-0.5ex} \norm{\vA_i}$, $\alpha_\vB \hspace{-0.1ex} {\propto}1 \hspace{-0.3ex} {/} \hspace{-0.4ex} \max_i \hspace{-0.5ex} \norm{\vB_i}$, $\alpha_\vT \hspace{-0.1ex} {\propto} $ $ 0.01 \hspace{-0.3ex} {/} \hspace{-0.3ex} \norm{\vT}$. Second column: $\epsilon_\vA,\epsilon_\vB {=} 0.5 \numSys^2$, $\alpha_\vA {\propto} 0.01 \hspace{-0.3ex} {/} \hspace{-0.3ex} \max_i \hspace{-0.3ex} \norm{\vA_i}$, $\alpha_\vB {\propto} 0.01 \hspace{-0.3ex} {/} \hspace{-0.3ex} \max_i \hspace{-0.3ex} \norm{\vB_i}$, $\alpha_\vT {\propto} 1 \hspace{-0.3ex} {/} \hspace{-0.3ex} \norm{\vT}$.}
	\label{fig_ClusteringResult} 
\end{figure}

\subsection{LQR Controller Design}
Then, we implement the idea of designing LQR controllers for the original $\Sigma$ through the reduced $\hat{\Sigma}$ as discussed in Section \ref{sec_MJSReduction_LQRControl}. 
We let $\numCls=4$, $\dimSt=10$, $\dimInput=5$, and the system dynamics is generated the same way as the previous section. The process noise variance is $\sigma_\vw^2 = 0.1$, and the initial state is  $\vx_0 = \onevec$. Fig. \ref{subfig_LQR_suboptimality} shows the suboptimality against perturbations for $s=100$. As one would expect, the suboptimality increases with the perturbation levels and is $0$ when there is no perturbation. The trend on $\epsilon_\vT$ is evident when $\epsilon_\vA$ and $\epsilon_\vB$ are small but imperceptible 
% gets dominated
for larger values of $\epsilon_\vA$ and $\epsilon_\vB$. Fig. \ref{subfig_LQR_compTime} shows the time to compute controllers via Riccati iterations using $\Sigma$ and $\hat{\Sigma}$ as a function of $s$. The computation terminates when the controller difference between two consecutive iterations falls below $10^{-12}$. We see when $\numSys$ is large, $\Sigma$ needs significantly more time than $\hat{\Sigma}$.

\begin{figure} 
	\centering
	\captionsetup[subfloat]{captionskip=0pt, farskip=0pt}
	\subfloat[]{%
		\includegraphics[width=1.6in]{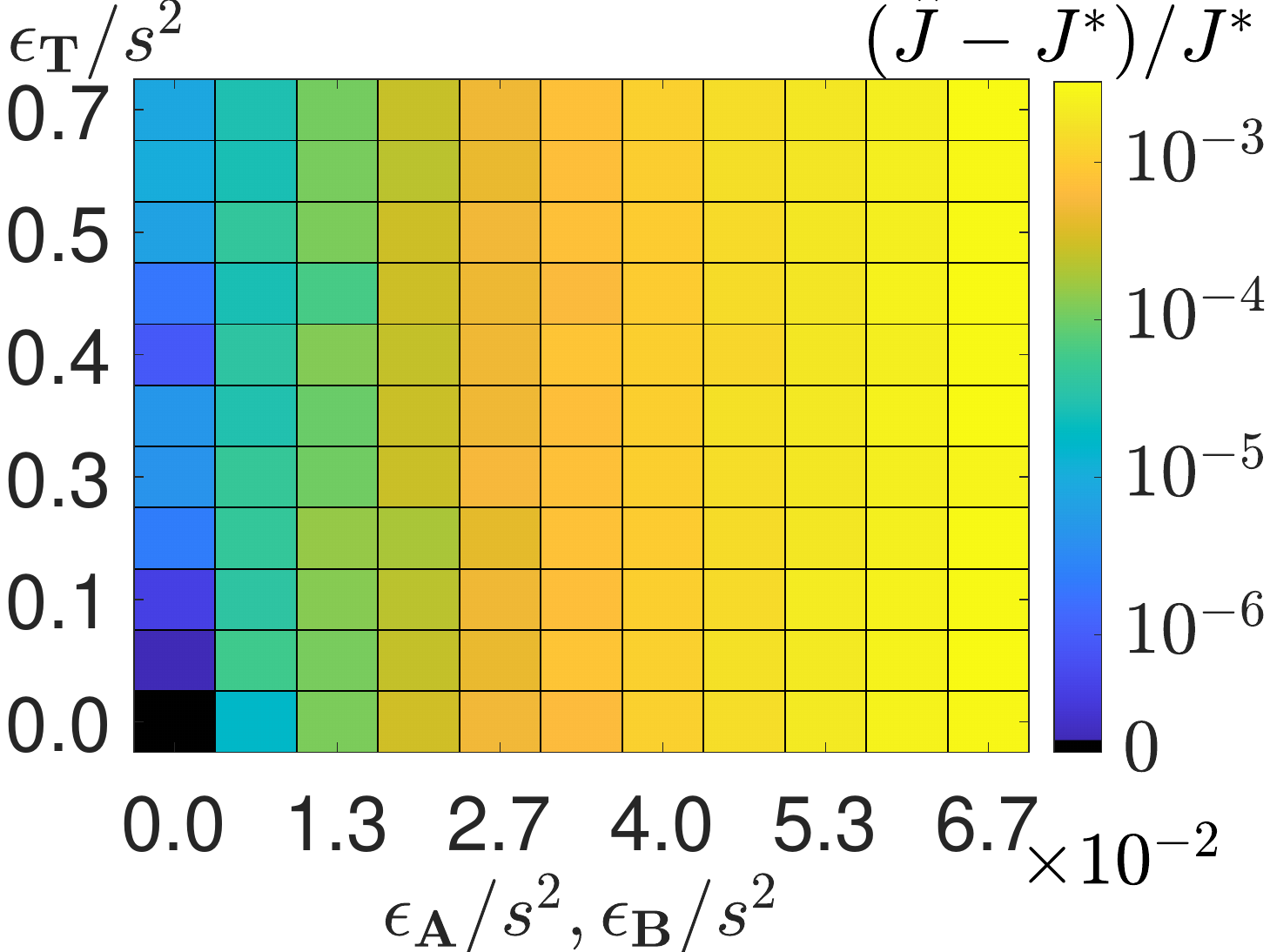}\label{subfig_LQR_suboptimality}}
	\hfill
	\subfloat[]{%
		\includegraphics[width=1.6in]{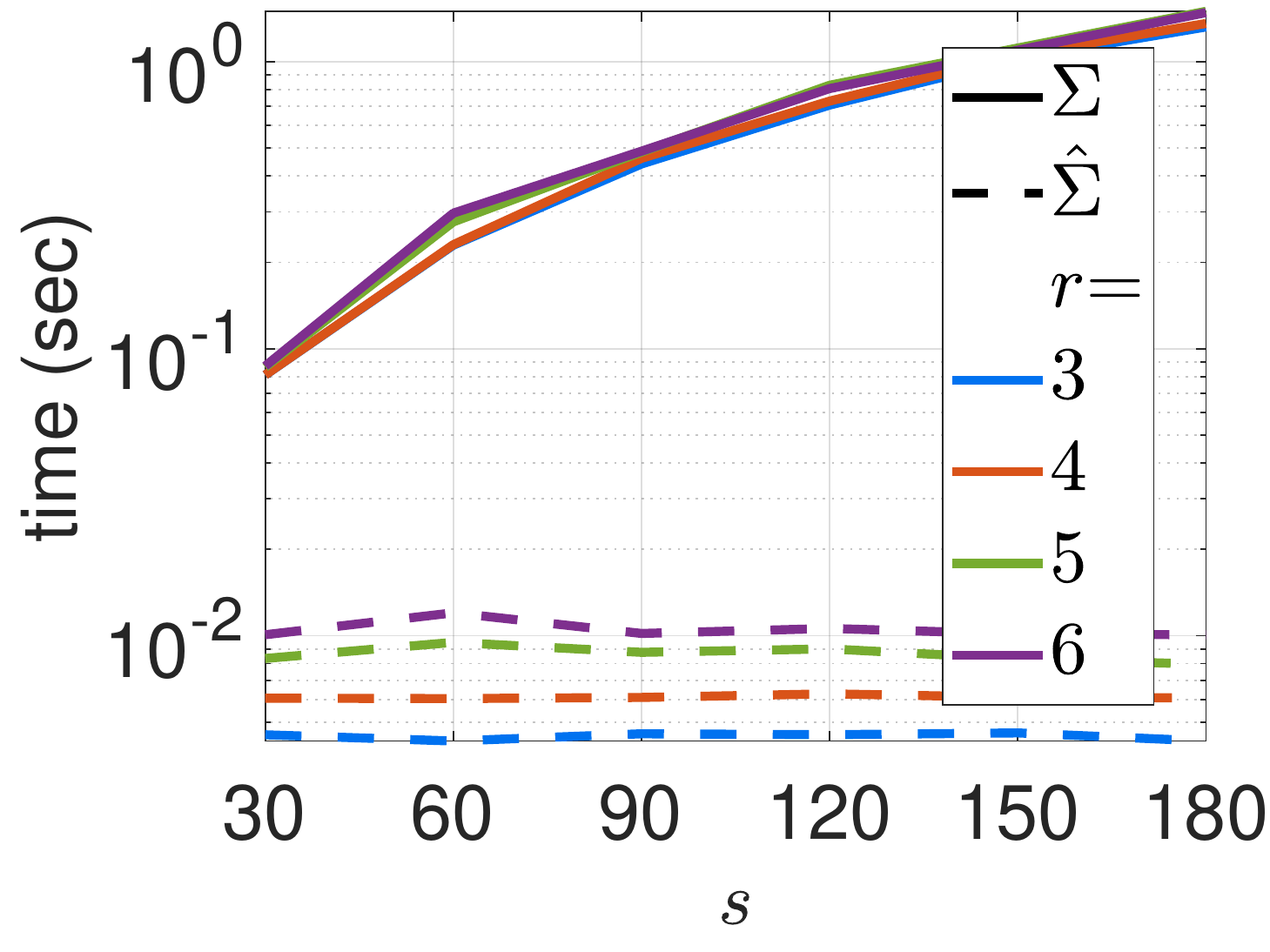}\label{subfig_LQR_compTime}}
	\\	
	\caption{(a) LQR suboptimality (median) vs perturbations; (b) LQR computation time (median) for the original MJS $\Sigma$ and the reduced MJS $\hat{\Sigma}$ with different number of modes and clusters. (We omit the quartiles as they are very close to the median.)}
	\label{fig_LQR} 
\end{figure}

Next, we consider a more practical scenario where one has no knowledge of the true number of cluster $\numCls$, and replace it in Algorithm \ref{Alg_MJSReduction} with a hyper-parameter $\hat{\numCls}$ as the number of modes in $\hat{\Sigma}$. We fix $\numSys = 100$ and $\numCls=30$, and the rest of the experiment setup is the same as Fig. \ref{fig_LQR}. We record the suboptimality and computation time under different choices of $\hat{\numCls}$ in Table \ref{Table_SuboptimalityandTimevsPickedCluster}. 
% The computation time grows with $\hat{\numCls}$. 
When increases $\hat{\numCls}$, the suboptimality achieves the minimum when $\hat{\numCls} = \numCls = 30$ and then gradually increases until $\hat{\numCls} = \numSys$, i.e. no reduction is performed at all. This comes as a bit of surprise as one would expect no worse performance when using more clusters than needed. Further investigation suggests that when $\hat{\numCls}>\numCls$, misclustering occurs more frequently than the case of $\hat{\numCls}=\numCls$, which is likely to account for the performance degradation. In practice, to find the best $\hat{\numCls}$, one could try multiple $\hat{\numCls}$ in Algorithm \ref{Alg_MJSReduction}, plug in the resulting partitions into \ref{problem_lumpableCase} and \ref{problem_aggregatableCase}, and select the one that gives the smallest perturbation $\epsilon_\vA, \epsilon_\vB, \epsilon_\vT$.

\begin{table}[ht]
\centering
\setlength\tabcolsep{4pt}
\caption{Suboptimality vs computation time vs selected number of modes in $\hat{\Sigma}$}\label{Table_SuboptimalityandTimevsPickedCluster}
{\renewcommand{\arraystretch}{1.2}
\begin{tabular}{||c | c | c | c | c | c ||} 
 \hline
 $\rhat$ & 10 & 20 & 30 & 40 & 50  \\ 
 \hline
%  \hline
%  \cline{1-1}
 $\frac{\Jhat - J^\star}{J^\star}$ & $1.2e{\minus}1$ & $3.8e{\minus}2$ & $4.1e{\minus}7$ & $6.9e{\minus}5$ & $8.9e{\minus}4$  \\ \hline
 Time (sec) & $3.7e{\minus}2$ & $6.8e{\minus}2$ & $1.4e{\minus}1$ & $2.2e{\minus}1$ & $3.2e{\minus}1$ \\  \hline \hline
 $\rhat$ & 60 & 70 & 80 & 90 & 100 \\ \hline
 $\frac{\Jhat - J^\star}{J^\star}$ & $7.1e{\minus}3$ & $1.6e{\minus}2$ & $2.1e{\minus}2$ & $1.5e{\minus}2$ & 0 \\ \hline
 Time (sec) & $6.8e{\minus}1$ & $8.4e{\minus}1$ & $1.0e0$ & $1.1e0$ & $1.2e0$ \\  
 \hline
%  [1ex] 
\end{tabular}}
\end{table}

\subsection{Trajectory Approximation}\label{subsec_experiments_trajDiff}
In this section, we evaluate the trajectory approximation results from Section \ref{sec_MJSReduction_approxMetrics}. Let $\theta = \pi/16$, $\breve{\vA}_1 = [[\cos(\theta), \sin(\theta)]^\T, [-\sin(\theta), \cos(\theta)]^\T]^\T$, 
$\breve{\vA}_2 = [[0.8, 0]^\T$
$, [0, 0.8]^\T]^\T$, and $\breve{\vA}_3 = [[1.2, 0]^\T, [0, 1.2]^\T]^\T$. Then, we construct an autonomous MJS $\Sigma$ with 6 modes: for $k = \curlybrackets{1,2,3}$, $\vA_{2k-1} = \breve{\vA}_k + [[0.1, 0]^\T, [0, 0.1]^\T]^\T$ and $\vA_{2k} = \breve{\vA}_k - [[0.1, 0]^\T, [0, 0.1]^\T]^\T$. The uniform partition $\{\{1,2\}, \{3,4\}, \{5,6\}\}$ gives $\epsilon_\vA = 0.6 \sqrt{2}$ according to \ref{problem_lumpableCase}. Define $\vT$ such that for all $i$, $\vT(i,j) = 0.2$ if $j \in \{1,2,3,4\}$ and $\vT(i,j) = 0.1$ if $j \in \{5, 6\}$. By relevant definitions in Section \ref{sec_MJSReduction_approxMetrics}, the constructed $\Sigma$ is MSS but not uniformly stable.

We fix the initial state $\vx_0 = [1,1]^\T$, generate $500$ independent trajectories for states $\vx_t$ and $\vxhat_t$, and record the difference $\norm{\vx_t - \vxhat_t}$. In Fig. \ref{fig_TrajDiff}, each thin solid line represents the difference, in log-scale, for each trajectory, the yellow dashed line shows their average, and the blue dashed line depicts the upper bound in Theorem \ref{thrm_MJSReduction_stateDistSyncMSS}. Throughout the time horizon, though not very tight, the theoretical upper bound stays above the averaged difference. Note that, for a given $\delta$, by Markov inequality, shifting the upper bound in the plot upward by $\log(\delta)$ would give a bound on the individual error trajectories with probability $1-\delta$. As seen in the figure, even the non-shifted version serves a good bound for individual error trajectories.

\begin{figure}[ht]
	\centering
	\includegraphics[width=2.2in]{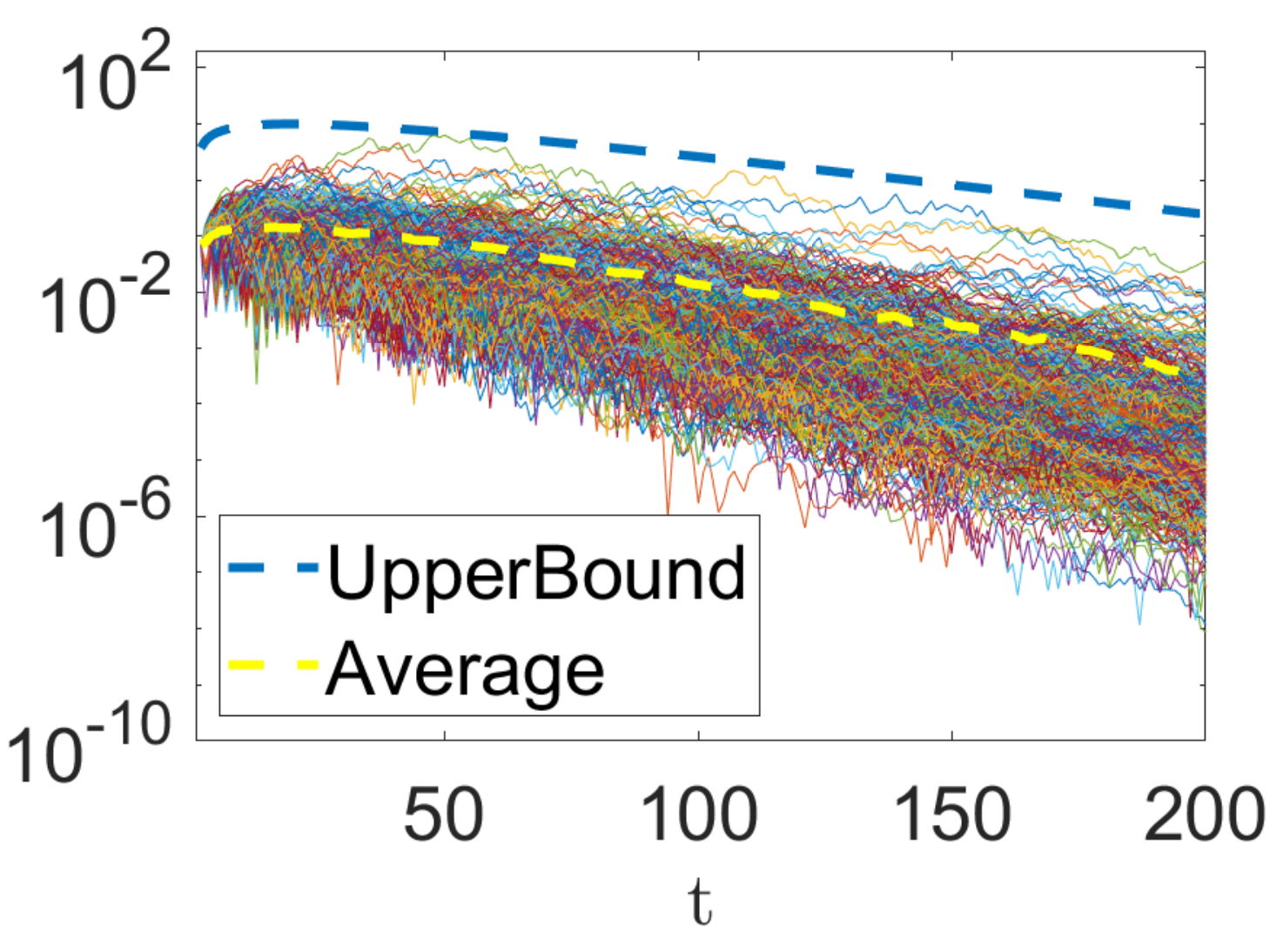}
	\caption{Trajectory difference and the upper bound}
	\label{fig_TrajDiff} 
\end{figure}

\section{Conclusion and Future Work}
In this work, we propose a clustering-based method to reduce the number of modes in an MJS. The reduced MJS provably well approximates the original MJS in terms of trajectory, transition kernels, stability, and controller optimality.
One future direction could be the generalization of the offline controller design scheme in Section \ref{sec_MJSReduction_LQRControl} to settings where controllers need to be computed in the runtime, such as model predictive control and adaptive control. In these problems, the savings of computation time would be even more prominent.
Another potential future direction could be the extension of the fully observed MJS in this work to partially observed MJS, i.e., the state $\vx_t$ is observed through $\vy_t = \vC_{\omega_t} \vx_t$ for some mode-dependent output matrices $\vC_{1:\numSys}$. As a side note, the similarity between MJS and Markov decision processes (MDP) hints that the framework and principles developed for MJS in this work may also help the complexity reduction of MDP and reinforcement learning problems. 

\appendices
\section{Aggregatable Clustering --- Proof for Theorem \ref{thrm_MJSReduction_AggreCase}}
We first provide several supporting lemmas. The first one is regarding the perturbation of the left singular vector space.
\begin{lemma}[Singular Vectors Perturbation Bound]\label{lemma_MJSReduction_singvecPerturb}
	Consider two arbitrary matrices $\bar{\vPhi}, \vPhi \in \dm{\numSys}{\numCls}$. Let $\vUbar, \vU \in \dm{\numSys}{\numCls}$ respectively denote the top-$\numCls$ left singular vectors of $\bar{\vPhi}$ and $\vPhi$ with $\vUbar^\T \vUbar = \vU^\T \vU = \vI_{\numCls}$. Then
	\begin{equation} \label{eq_MJSReduction_singvalPerturb}
		\min_{\vO \in \mathcal{O}(\numCls)} \norm{\vUbar \vO - \vU }_\fro 
		\leq 
		\frac{2 \sqrt{2} \norm{\bar{\vPhi} - \vPhi}_\fro}{\sigma_\numCls(\bar{\vPhi}) - \sigma_{\numCls+1}(\bar{\vPhi})},
	\end{equation}
	where $\mathcal{O}(\numCls)$ denotes the set of all $\numCls \x \numCls$ orthonormal matrices.
\end{lemma}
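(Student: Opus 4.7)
The statement is essentially a combination of Wedin's $\sin\Theta$ theorem with the orthogonal Procrustes identity, and my plan is to prove each piece separately and then multiply the constants. First I would characterize $\min_{\vO \in \mathcal{O}(\numCls)} \|\vUbar \vO - \vU\|_\fro$ in terms of the principal angles $\theta_1, \dots, \theta_\numCls$ between the column spaces of $\vUbar$ and $\vU$, so that $\cos\theta_i = \sigma_i(\vUbar^\T \vU)$. Expanding $\|\vUbar\vO - \vU\|_\fro^2 = 2\numCls - 2\tr(\vO^\T \vUbar^\T \vU)$ and solving the orthogonal Procrustes problem via the SVD of $\vUbar^\T \vU$, the minimum equals $2\sum_i (1 - \cos\theta_i)$. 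Using the elementary inequality $1 - \cos\theta \leq \sin^2\theta$ valid on $[0,\pi/2]$, this is at most $2\|\sin\Theta(\vUbar, \vU)\|_\fro^2$, yielding the Procrustes-to-sine bound $\min_\vO \|\vUbar\vO - \vU\|_\fro \leq \sqrt{2}\,\|\sin\Theta(\vUbar, \vU)\|_\fro$.

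Second, I would invoke Wedin's $\sin\Theta$ theorem applied to the top-$\numCls$ left singular subspace of $\bar{\vPhi}$. The version I have in mind (Stewart--Sun) yields $\|\sin\Theta(\vUbar, \vU)\|_\fro \leq \frac{2\|\bar{\vPhi} - \vPhi\|_\fro}{\sigma_\numCls(\bar{\vPhi}) - \sigma_{\numCls+1}(\bar{\vPhi})}$. The factor of $2$ arises because the perturbation simultaneously contributes residuals $\|(\vPhi - \bar{\vPhi})\vVbar\|_\fro$ and $\|\vUbar^\T(\vPhi - \bar{\vPhi})\|_\fro$, each controlled by $\|\vPhi - \bar{\vPhi}\|_\fro$ and combined via the triangle inequality (or via $\sqrt{a^2 + b^2} \leq \sqrt{2}\max\{a,b\}$). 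Multiplying the two pieces gives the claimed constant $\sqrt{2} \cdot 2 = 2\sqrt{2}$.

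The main subtlety I anticipate, and the place where I would need to be careful in the full write-up, is matching the form of the singular value gap. Many textbook statements of Wedin use a cross-gap $\sigma_\numCls(\bar{\vPhi}) - \sigma_{\numCls+1}(\vPhi)$ involving singular values of both matrices, whereas the lemma uses the single-matrix gap $\sigma_\numCls(\bar{\vPhi}) - \sigma_{\numCls+1}(\bar{\vPhi})$. I would resolve this either by appealing to the Stewart--Sun formulation, which permits the single-matrix version directly, or by combining Weyl's inequality with the cross-gap version; either path lands on the same bound provided the denominator is positive, which is the implicit non-degeneracy condition for the statement to be meaningful.
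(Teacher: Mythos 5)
Your decomposition is exactly the one the paper uses: the paper's ``proof'' is a one-line citation to Lemmas 10 and 11 of \cite{du2019mode}, which are precisely the two ingredients you reconstruct --- a Wedin/Davis--Kahan $\sin\Theta$ bound in Frobenius norm with the single-matrix gap (contributing the factor $2$), and the orthogonal-Procrustes-to-$\sin\Theta$ conversion (contributing the factor $\sqrt{2}$). Your flagged subtlety about the single-matrix versus cross gap is real but is resolved exactly as you suggest (the Yu--Wang--Samworth-style variant gives the single-gap factor-$2$ form unconditionally), so the argument is correct and matches the paper's route.
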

This result can be seen simply by combining Lemma 10 and Lemma 11 in \cite{du2019mode}, where Lemma 10 requires a trivial generalization from spectral norm to the Frobenius norm.
% Note that the basis for a subspace is only unique up to rotation transformation, hence when we study the difference of subspace, we consider the rotation that gives the best match between their corresponding basis. 
The next result says if a matrix has certain rows being identical, its singular vectors share the same identity pattern.
\begin{lemma}[Lemma 12 in \cite{du2019mode}]\label{lemma_MJSReduction_singvecIdentity}
	Consider a matrix $\bar{\vPhi} \in \dm{\numSys}{\numCls}$ and a partition $\Omega_{1:\numCls}$ on $[\numSys]$ such that for any $i, i' \in \Omega_k$, $\bar{\vPhi}(i,:) = \bar{\vPhi}(i',:)$. Assume $\rank(\bar{\vPhi}) = \numCls$. Let $\vUbar \in \dm{\numSys}{\numCls}$ denote the top-$\numCls$ left singular vectors of $\bar{\vPhi}$ with $\vUbar^\T \vUbar = \vI_{\numCls}$. Then for any $i \in \Omega_k$ and $j \in \Omega_l$, $\norm{\vUbar(i,:) - \vUbar(j,:)}=(\frac{1}{|\Omega_k|} + \frac{1}{|\Omega_l|})^{0.5}$ if $k \neq l$ and $0$ if $k=l$.
% 	\begin{equation} \label{eq_MJSReduction_singvecIdentity}
% 		\norm{\vUbar(i,:) - \vUbar(j,:)} 
% 		= 
% 		\begin{cases}
% 			0 & \text{ if } k=l \\ 
% 			\sqrt{\frac{1}{|\Omega_k|} + \frac{1}{|\Omega_l|}} & \text{ if } k \neq l 
% 		\end{cases}.
% 	\end{equation}
\end{lemma}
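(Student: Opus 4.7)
The plan is to exploit the block structure of $\bar{\vPhi}$ imposed by the within-cluster row equalities, write the top-$\numCls$ left singular vectors in closed form up to an orthogonal rotation, and then read off the pairwise row distances directly.

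First I would factor $\bar{\vPhi} = \vZ \vM$, where $\vZ \in \dm{\numSys}{\numCls}$ is the cluster membership matrix with $\vZ(i,k)=1$ iff $i\in\Omega_k$ (and $0$ otherwise), and $\vM \in \dm{\numCls}{\numCls}$ collects the $\numCls$ distinct cluster-row values of $\bar{\vPhi}$. The assumption $\rank(\bar{\vPhi}) = \numCls$ forces $\vM$ to be invertible, so $\text{range}(\bar{\vPhi}) = \text{range}(\vZ)$. Because the top-$\numCls$ left singular vectors of $\bar{\vPhi}$ form an orthonormal basis of $\text{range}(\bar{\vPhi})$, there must exist $\vY \in \dm{\numCls}{\numCls}$ with $\vUbar = \vZ \vY$.

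Second I would pin down $\vY$ using orthonormality. By construction, rows of $\vZ$ are standard basis vectors of $\dm{\numCls}{1}$, so $\vZ^\T \vZ = \vD$ with $\vD := \diag(|\Omega_1|,\ldots,|\Omega_\numCls|)$. The constraint $\vUbar^\T \vUbar = \vI_\numCls$ then reads $\vY^\T \vD \vY = \vI_\numCls$, i.e., $\vD^{1/2}\vY$ is orthogonal; since orthogonality from one side implies orthogonality from the other, we also get $\vY\vY^\T = \vD^{\invv{1}}$.

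Finally I would compute the row distances. For any $i \in \Omega_k$ the formula $\vUbar = \vZ \vY$ yields $\vUbar(i,:) = \vY(k,:)$, which depends only on the cluster label, so $\vUbar(i,:) = \vUbar(i',:)$ whenever $i,i' \in \Omega_k$, giving the $k=l$ case. For $i\in\Omega_k$, $j\in\Omega_l$ with $k\neq l$, expanding
\[
\norm{\vUbar(i,:) - \vUbar(j,:)}^2 = \norm{\vY(k,:)}^2 - 2 \vY(k,:)\vY(l,:)^\T + \norm{\vY(l,:)}^2
\]
and plugging in $\vY\vY^\T = \vD^{\invv{1}}$ (which gives $\norm{\vY(k,:)}^2 = 1/|\Omega_k|$, $\norm{\vY(l,:)}^2 = 1/|\Omega_l|$, and $\vY(k,:)\vY(l,:)^\T = 0$) delivers $\sqrt{1/|\Omega_k|+1/|\Omega_l|}$.

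The only step that needs care is the first one, justifying $\vUbar = \vZ \vY$: this relies on $\rank(\bar{\vPhi}) = \numCls$ to ensure the top $\numCls$ singular values are strictly positive while the rest vanish, so that the top-$\numCls$ left singular subspace coincides with $\text{range}(\bar{\vPhi}) = \text{range}(\vZ)$ rather than being a proper subspace of it. Everything after this reduction is a routine rewriting of Gram matrix identities.
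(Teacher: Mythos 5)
Your proof is correct, and since the paper gives no proof of this lemma itself---it cites Lemma 12 of \cite{du2019mode}---your argument is essentially the standard one from that line of work (cf. the membership-matrix argument in \cite{lei2015consistency}): factor $\bar{\vPhi} = \vZ \vM$ with $\vM$ invertible, deduce $\vUbar = \vZ\vY$ from $\mathrm{range}(\bar{\vPhi}) = \mathrm{range}(\vZ)$, and read off the distances from $\vY\vY^\T = \vD^{\invv{1}}$. You also correctly isolate the one delicate step, namely that $\rank(\bar{\vPhi}) = \numCls$ guarantees $\sigma_\numCls(\bar{\vPhi}) > 0$ so the top-$\numCls$ left singular subspace coincides with $\mathrm{range}(\vZ)$; nothing is missing.
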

% This result says, in the clustering analysis, if one has no idea of the inter-cluster distance in the original feature space as given by data matrix $\bar{\vPhi}$, the left singular vectors can provide such distances, and the distances only depend on the cluster sizes. This itself is double-edged: no matter how close two clusters are, their corresponding rows in the left singular vectors are fixed as in \eqref{eq_MJSReduction_singvecIdentity}; and when certain rows of different clusters are linearly dependent (rank assumption is violated), no matter how faraway two clusters are, this result will not hold. The differences in left singular vectors decrease with the cluster size, on the bright side, in square root sense. Another choice is matrix $\vUtil:=\vUbar \diag([\sigma_1(\bar{\vPhi}), \dots, \sigma_\numCls(\bar{\vPhi})])$, i.e., the projection of $\vPhi$ onto its top-$\numCls$ right singular vector space, and in this case, \eqref{eq_MJSReduction_singvecIdentity} gives $\sigma_\numCls(\bar{\vPhi}) \sqrt{|\Omega_k|^\inv + |\Omega_l|^\inv} \leq \norm{\vUtil(i,:) - \vUtil(j,:)} \leq \sigma_1(\bar{\vPhi}) \sqrt{|\Omega_k|^\inv + |\Omega_l|^\inv}$ if $k \neq l$ and $0$ otherwise.

The next lemma provides a preliminary result on the performance of k-means when it is applied to a data matrix with feature dimension same as the number of clusters. 
% Note that the matrices constructed with the top-$\numCls$ left singular vectors in Lemma \ref{lemma_MJSReduction_singvecPerturb} and Lemma \ref{lemma_MJSReduction_singvecIdentity} fall into this category.

\begin{lemma}[Lemma 5.3 in \cite{lei2015consistency}] \label{lemma_MJSReduction_kmeans1}
	Consider two arbitrary matrices $\vUbar, \vU \in \dm{\numSys}{\numCls}$ with $\Delta_\vU := \norm{\vUbar - \vU}_\fro$. Suppose there exists a partition $\Omega_{1:\numCls}$ on $[\numSys]$ such that for any $i,i' \in \Omega_k$, $\vUbar(i,:) = \vUbar(i',:)$.
	Define the inter-cluster distance for cluster $k$ as 
	%	$\delta_k:= \min_{i: i \in [\numSys]\backslash \Omega_k} \norm{\vUbar(k,:) - \vUbar(i,:)}$. 
	$\delta_k:= \min_{l \in [\numCls]\backslash k}\min_{i \in \Omega_k, j \in \Omega_l} $ $\norm{\vUbar(i,:) - \vUbar(j,:)}$.
	Let $\curlybrackets{\hat{\Omega}_{1:\numCls}, \hat{c}_{1:\numCls}}$ be a $(1+\epsilon)$ solution to the k-means problem on the rows of $\vU$. Then, when $\Delta_\vU \leq \frac{\min_k \sqrt{|\Omega_{k}|} \delta_k}{\sqrt{8(2+\epsilon)}}$, we have
	\begin{equation}
		\min_{h \in \mathcal{H}} \sum_{k \in [\numCls]} |\curlybrackets{i: i \in \Omega_k, i \notin \hat{\Omega}_{h(k)}}| \cdot \delta_k^2 \leq 8(2+\epsilon) \Delta_\vU^2,
	\end{equation} 
	where $\mathcal{H}$ is the set of all bijections from $[\numCls]$ to $[\numCls]$.
\end{lemma}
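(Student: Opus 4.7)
The plan is to translate the $(1+\epsilon)$ k-means optimality on $\vU$ into a small Frobenius error between $\vUbar$ and a ``centroid matrix'' built from the estimated solution, and then use the separation quantified by $\delta_k$ to convert that Frobenius error into a weighted misclustering count. Throughout, let $\mathbf{c}_k^\star := \vUbar(i,:)$ for any $i \in \Omega_k$ (well-defined by hypothesis), let $\hat\pi:[\numSys]\to[\numCls]$ be the label map induced by $\hat\Omega_{1:\numCls}$, and define $\hat\vU \in \dm{\numSys}{\numCls}$ by $\hat\vU(i,:) := \hat c_{\hat\pi(i)}$.

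First step (centroid bound): the true partition together with centers $\mathbf{c}_{1:\numCls}^\star$ achieves k-means distortion $\|\vU-\vUbar\|_\fro^2 = \Delta_\vU^2$ on the rows of $\vU$, so the $(1+\epsilon)$-approximation guarantee gives $\|\vU-\hat\vU\|_\fro^2 \leq (1+\epsilon)\Delta_\vU^2$. By the triangle inequality and $(a+b)^2 \leq 2(a^2+b^2)$,
\begin{equation}
\|\vUbar-\hat\vU\|_\fro^2 \leq \bigl(1+\sqrt{1+\epsilon}\bigr)^2\Delta_\vU^2 \leq 2(2+\epsilon)\Delta_\vU^2.
\end{equation}

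Second step (bijection $h$): for each $k$, the block identity $\sum_l |\Omega_k \cap \hat\Omega_l|\cdot\|\mathbf{c}_k^\star-\hat c_l\|^2 \leq \|\vUbar-\hat\vU\|_\fro^2$ combined with the hypothesis $2(2+\epsilon)\Delta_\vU^2 \leq |\Omega_k|\delta_k^2/4$ forces the existence of some $l$ with $\|\mathbf{c}_k^\star-\hat c_l\| < \delta_k/2$ (otherwise the left side would be at least $|\Omega_k|\delta_k^2/4$, and a standard boundary-case adjustment handles equality). Set $h(k):=l$. The map $h$ is injective, hence bijective: if $h(k)=h(k')=l$ for $k\neq k'$, then
\begin{equation}
\|\mathbf{c}_k^\star-\mathbf{c}_{k'}^\star\| \leq \|\mathbf{c}_k^\star-\hat c_l\| + \|\hat c_l-\mathbf{c}_{k'}^\star\| < \tfrac{\delta_k+\delta_{k'}}{2} \leq \max(\delta_k,\delta_{k'}),
\end{equation}
contradicting the two-sided inequality $\|\mathbf{c}_k^\star-\mathbf{c}_{k'}^\star\| \geq \max(\delta_k,\delta_{k'})$, which follows directly from the definitions of $\delta_k$ and $\delta_{k'}$.

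Third step (bound misclustering): for any misclustered $i \in \Omega_k$ with $\hat\pi(i) = h(l)$ for some $l\neq k$, the triangle inequality gives
\begin{equation}
\|\mathbf{c}_k^\star-\hat c_{h(l)}\| \geq \|\mathbf{c}_k^\star-\mathbf{c}_l^\star\| - \|\mathbf{c}_l^\star-\hat c_{h(l)}\| \geq \max(\delta_k,\delta_l) - \tfrac{\delta_l}{2} \geq \tfrac{\delta_k}{2}.
\end{equation}
Each such index therefore contributes at least $\delta_k^2/4$ to $\|\vUbar-\hat\vU\|_\fro^2$, yielding
\begin{equation}
\tfrac14 \sum_{k\in[\numCls]} \bigl|\{i: i\in\Omega_k,\ i\notin\hat\Omega_{h(k)}\}\bigr|\cdot\delta_k^2 \leq \|\vUbar-\hat\vU\|_\fro^2 \leq 2(2+\epsilon)\Delta_\vU^2,
\end{equation}
which rearranges to the claimed inequality after taking the minimum over $h\in\mathcal{H}$.

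The main obstacle is the construction of the bijection $h$: existence of a near match for each true center uses the centroid bound together with the hypothesis on $\Delta_\vU$, while injectivity crucially relies on the two-sided separation $\|\mathbf{c}_k^\star-\mathbf{c}_{k'}^\star\| \geq \max(\delta_k,\delta_{k'})$. Once this is in place, converting the centroid Frobenius bound into a weighted misclustering count is immediate bookkeeping.
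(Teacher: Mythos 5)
Your proof is correct. One caveat on the comparison: the paper itself gives no proof of this lemma --- it is imported verbatim as Lemma 5.3 of \cite{lei2015consistency} --- so the meaningful comparison is with that original argument. Both proofs rest on the same two pillars: the centroid bound $\norm{\vUbar-\hat\vU}_\fro^2\le 2(2+\epsilon)\Delta_\vU^2$ (your Step 1, identical to theirs) and the two-sided separation $\norm{\mathbf{c}_k^\star-\mathbf{c}_{k'}^\star}\ge\max(\delta_k,\delta_{k'})$. The organization differs: Lei--Rinaldo define the ``bad'' sets $S_k=\curlybrackets{i\in\Omega_k:\norm{\hat\vU(i,:)-\vUbar(i,:)}\ge\delta_k/2}$, deduce $\sum_k|S_k|\delta_k^2\le 8(2+\epsilon)\Delta_\vU^2$ immediately, and then show that outside $\bigcup_k S_k$ the estimated labels agree with the truth up to a permutation (each $\Omega_k\setminus S_k$ is nonempty under the hypothesis, and good nodes from distinct clusters cannot share a label by the separation bound), so misclustered nodes lie in the $S_k$'s. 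You run the same inequality in the other direction: build the permutation first by matching each true center to a nearby estimated center, then show each misclustered node contributes $\ge\delta_k^2/4$. Your version makes the bijection explicit, which is a small gain in transparency. The one loose end is exactly the spot you waved at: your existence and injectivity steps need strict inequalities, and the paper's hypothesis is non-strict, whereas Lei--Rinaldo's corresponding condition is strict --- which is precisely why their write-up needs no boundary-case adjustment. The equality case is in fact benign (tightness throughout forces all other clusters to be perfectly fitted, and one checks the conclusion holds trivially), but a complete write-up should either carry out that check or state the hypothesis strictly as the original does.
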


By combining Lemma \ref{lemma_MJSReduction_singvecPerturb}, \ref{lemma_MJSReduction_singvecIdentity}, and \ref{lemma_MJSReduction_kmeans1}, we obtain guarantee on the performance of k-means when it is applied to the left singular vectors of the data matrix, which is the key lemma we will use to show Theorem \ref{thrm_MJSReduction_AggreCase} and Theorem \ref{thrm_MJSReduction_LumpCase}.
\begin{lemma}[Approximate k-means error bound] \label{lemma_MJSReduction_kmeans2}
	Consider two arbitrary matrices $\bar{\vPhi}, \vPhi \in \dm{\numSys}{\numCls}$ with $\Delta_{\vPhi} := \norm{\bar{\vPhi} - \vPhi}_\fro$. Suppose there exists a partition $\Omega_{1:\numCls}$ on $[\numSys]$ such that for any $i,i' \in \Omega_k$, $\bar{\vPhi}(i,:) = \bar{\vPhi}(i',:)$. Assume $\rank(\bar{\vPhi}) = \numCls$. Let $\vU \in \dm{\numSys}{\numCls}$ denote the top-$\numCls$ left singular vectors of $\vPhi$ with $\vU^\T \vU = \vI_{\numCls}$. Let $\curlybrackets{\hat{\Omega}_{1:\numCls}, \hat{c}_{1:\numCls}}$ be a $(1+\epsilon)$ solution to the k-means problem on clustering the rows of $\vU$. Then, when $\Delta_{\vPhi} \leq \frac{\sigma_\numCls(\bar{\vPhi}) \sqrt{|\Omega_{(\numCls)}| + |\Omega_{(1)}|}}{8 \sqrt{(2+\epsilon) |\Omega_{(1)}|}}$, we have
	$
		\textup{MR}(\hat{\Omega}_{1:\numCls}) \leq \frac{64(2+\epsilon)}{\sigma_{\numCls}(\bar{\vPhi})^2} \Delta_{\vPhi}^2.
	$
\end{lemma}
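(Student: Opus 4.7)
The plan is to chain together the three supporting lemmas (singular-vector perturbation, identity pattern of singular vectors on a blockwise-constant matrix, and approximate k-means guarantee) with a transit via an optimal orthogonal alignment. Since $\bar{\vPhi}\in\dm{\numSys}{\numCls}$ has full column rank $\numCls$, its $(\numCls{+}1)$-th singular value vanishes, so Lemma~\ref{lemma_MJSReduction_singvecPerturb} simplifies to
\[
\min_{\vO\in\mathcal{O}(\numCls)} \norm{\vUbar\vO - \vU}_\fro \;\leq\; \frac{2\sqrt{2}\,\Delta_\vPhi}{\sigma_\numCls(\bar{\vPhi})},
\]
where $\vUbar$ denotes the top-$\numCls$ left singular vectors of $\bar{\vPhi}$. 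Let $\vO^\star$ achieve this minimum and set $\vUtil := \vUbar\vO^\star$; then $\Delta_\vU := \norm{\vUtil - \vU}_\fro \leq 2\sqrt{2}\,\Delta_\vPhi/\sigma_\numCls(\bar{\vPhi})$.

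The next step exploits two invariances. First, right-multiplication by an orthogonal matrix preserves rows-wise equality, so $\vUtil(i,:)=\vUtil(i',:)$ for all $i,i'\in\Omega_k$; by Lemma~\ref{lemma_MJSReduction_singvecIdentity} (applied to $\vUbar$ and then rotated) the pairwise distance of rows belonging to $\Omega_k$ and $\Omega_l$ with $k\neq l$ is exactly $\sqrt{1/|\Omega_k|+1/|\Omega_l|}$. Second, the k-means objective is invariant under right-orthogonal action on the data, so the $(1{+}\epsilon)$-approximate k-means partition $\hat\Omega_{1:\numCls}$ returned on the rows of $\vU$ is admissible in Lemma~\ref{lemma_MJSReduction_kmeans1} applied to the reference $\vUtil$ and the perturbed $\vU$. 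The per-cluster separation $\delta_k$ satisfies
\[
\delta_k \;=\; \min_{l\neq k}\sqrt{\tfrac{1}{|\Omega_k|}+\tfrac{1}{|\Omega_l|}} \;\geq\; \sqrt{\tfrac{1}{|\Omega_k|}+\tfrac{1}{|\Omega_{(1)}|}},
\]
which gives $\min_k \sqrt{|\Omega_k|}\,\delta_k \geq \sqrt{(|\Omega_{(1)}|+|\Omega_{(\numCls)}|)/|\Omega_{(1)}|}$. Combining this lower bound with the hypothesis on $\Delta_\vPhi$ exactly verifies the regime $\Delta_\vU \leq \min_k\sqrt{|\Omega_k|}\,\delta_k/\sqrt{8(2+\epsilon)}$ required by Lemma~\ref{lemma_MJSReduction_kmeans1}.

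Applying that lemma then yields
\[
\min_{h\in\mathcal{H}} \sum_{k\in[\numCls]} \big|\{i\in\Omega_k:\ i\notin \hat\Omega_{h(k)}\}\big|\cdot \delta_k^2 \;\leq\; 8(2+\epsilon)\,\Delta_\vU^2.
\]
To convert this into the MR bound, I use the trivial but crucial inequality $\delta_k^2 \geq 1/|\Omega_k|$, so that each summand above dominates $|\{i\in\Omega_k:\ i\notin \hat\Omega_{h(k)}\}|/|\Omega_k|$; summing and substituting the singular-vector perturbation bound gives
\[
\textup{MR}(\hat\Omega_{1:\numCls}) \;\leq\; 8(2+\epsilon)\,\Delta_\vU^2 \;\leq\; \frac{64(2+\epsilon)}{\sigma_\numCls(\bar{\vPhi})^2}\,\Delta_\vPhi^2,
\]
which is the desired conclusion.

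The proof is essentially a careful gluing of existing results rather than a deep argument, and the main obstacle I anticipate is bookkeeping around the orthogonal alignment $\vO^\star$: one must verify that distances between rows of $\vUtil$ (which inherit Lemma~\ref{lemma_MJSReduction_singvecIdentity}'s identity structure) and the k-means instance on $\vU$ interact correctly, and that the inequality chain between $\delta_k$, $\min_k\sqrt{|\Omega_k|}\delta_k$, and $1/|\Omega_k|$ simultaneously yields both the admissibility condition on $\Delta_\vPhi$ and the final MR bound with the correct constants $8$ and $64$.
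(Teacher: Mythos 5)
Your proposal is correct and follows essentially the same route as the paper's proof: align $\vUbar$ to $\vU$ via the optimal orthogonal matrix from Lemma \ref{lemma_MJSReduction_singvecPerturb} (using $\sigma_{\numCls+1}(\bar{\vPhi})=0$), transfer the row-identity/separation structure of Lemma \ref{lemma_MJSReduction_singvecIdentity} through the rotation, verify the admissibility condition of Lemma \ref{lemma_MJSReduction_kmeans1}, and convert to MR via $\delta_k^2 \geq 1/|\Omega_k|$. The constants $8$ and $64$ check out exactly as you computed.
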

\vspace{0em}     % 0.01em
\begin{proof}
	Let $\vUbar {\in} \dm{\numSys}{\numCls}$ denote the top-$\numCls$ left singular vectors of $\bar{\vPhi}$ with $\vUbar^\T \vUbar $ ${=} \vI_{\numCls}$. Then, Lemma \ref{lemma_MJSReduction_singvecPerturb} implies that exists $\vO^\star {\in} $
	$ \mathcal{O}(\numCls)$ 
	such that 
	$
		\norm{\vUbar \vO^\star - \vU }_\fro 
		{\leq} 
		\frac{2 \sqrt{2} \Delta_\Phi}{\sigma_\numCls(\bar{\vPhi})}.
	$
	Note that 
	$
	\| [\vUbar \vO^\star] (i,:) - $ 	
	$ [\vUbar\vO^\star] (j,:) \|
	{=}
	\norm{(\vUbar(i,:) {-} \vUbar(j,:)) \vO^\star} 
	=
	\norm{\vUbar(i,:) {-}\vUbar(j,:)}$.
By Lemma \ref{lemma_MJSReduction_singvecIdentity}, we know for any $i {\in} \Omega_k$, $j {\in} \Omega_l$, 
$\|{[\vUbar \vO^\star] (i,:) -} [\vUbar \vO^\star] (j,:) \| {=} \sqrt{\frac{1}{|\Omega_k|} {+} \frac{1}{|\Omega_l|}}$ if $k {\neq} l$ and $0$ if $k {=} l$.
% 	\begin{equation}\label{eq_MJSReduction_21}
% 		\norm{[\vUbar \vO^\star] (i,:) - [\vUbar \vO^\star] (j,:)} 
% 		=
% 		\begin{cases}
% 			0 & \text{ if } k=l \\ 
% 			\sqrt{\frac{1}{|\Omega_k|} + \frac{1}{|\Omega_l|}}& \text{ if } k \neq l 
% 		\end{cases}.
% 	\end{equation}
	Then, for any $k {\in} [\numCls]$, let $\delta_k{:=} \min_{l \in [\numCls]\backslash k}\min_{i \in \Omega_k, j \in \Omega_l} $ $ \norm{[\vUbar \vO^\star](i,:) - [\vUbar\vO^\star] (j,:)} $, we see 
	$\delta_k$ $\geq \sqrt{\frac{1}{|\Omega_k|} + \frac{1}{|\Omega_{(1)}|}}$.

    Note that when $\Delta_\vPhi {\leq} \frac{\sigma_\numCls(\bar{\vPhi}) \sqrt{|\Omega_{(\numCls)}| + |\Omega_{(1)}|}}{8 \sqrt{(2+\epsilon) |\Omega_{(1)}|}}$, one can check $\norm{\vUbar \vO^\star - \vU }_\fro \leq \frac{\min_k \sqrt{|\Omega_{k}|} \delta_k}{\sqrt{8(2+\epsilon)}}$. Then, by Lemma \ref{lemma_MJSReduction_kmeans1}, we have $\textup{MR}(\hat{\Omega}_{1:\numCls}) 
	= \min_{h \in \mathcal{H}} \sum_{k=1}^{\numCls} |\curlybrackets{i: i \in \Omega_k, i \notin \hat{\Omega}_{h(k)}}| \frac{1}{|\Omega_k|}
	\leq \min_{h \in \mathcal{H}} \sum_{k=1}^{\numCls} |\curlybrackets{i: i \in \Omega_k, i \notin \hat{\Omega}_{h(k)}}| \delta_k^2 \leq 64(2+\epsilon) \sigma_\numCls(\bar{\vPhi})^{\invv{2}} \Delta_{\vPhi}^2$.
\end{proof}

\begin{proof}[Main Proof for Theorem \ref{thrm_MJSReduction_AggreCase}]
	Consider $\vPhi$ in Algorithm \ref{Alg_MJSReduction} Line \ref{algline_MJSReduction_7} and its averaged version $\bar{\vPhi}$ defined in Section \ref{subsec_MJSReduction_theoryClustering}. 	
    Then, by definition, we have   
	$
		\norm{\bar{\vPhi} {-} \vPhi}_\fro^2
		=
		\sum_{k \in [\numCls]} \sum_{i \in \Omega_k}
		\norm{\bar{\vPhi}(i,:) {-} \vPhi(i,:) }_\fro^2
		=
		\alpha_\vT^2 \cdot \sum_{k \in [\numCls]} \sum_{i \in \Omega_k}  \| \vT(i,:) - |\Omega_k|^\inv \cdot \sum_{i' \in \Omega_k} \vT(i',:) \|^2 $
		$ 
		+ \alpha_\vA^2 \cdot \sum_{k \in [\numCls]} \sum_{i \in \Omega_k}  \|\vA_{i} - |\Omega_k|^\inv
		\cdot \sum_{i' \in \Omega_k} \vA_{i'} \|_\fro^2 
		$
		$
		+ \alpha_\vB^2 \cdot \sum_{k \in [\numCls]} \sum_{i \in \Omega_k}  \| \vB_{i} - |\Omega_k|^\inv \sum_{i' \in \Omega_k} \vB_{i'} \|_\fro^2 .
	$
    By the definitions of $\epsilon_\vA, \epsilon_\vB, \epsilon_\vT$ in Problem \ref{problem_aggregatableCase}, triangle inequality ,and Cauchy-Schwarz inequality, we have
	$
		\norm{\bar{\vPhi} - \vPhi}_\fro 
% 		{\leq} \sqrt{\numSys} \parenthesesbig{ \alpha_\vA \epsilon_\vA + \alpha_\vB \epsilon_\vB + \alpha_\vT \epsilon_\vT}
		\leq \epsilon_{Agg}
	$
	where $\epsilon_{Agg} := \sqrt{ \alpha_\vA^2 \epsilon_\vA^2 + \alpha_\vB^2 \epsilon_\vB^2 + \alpha_\vT^2 \epsilon_\vT^2 }$.
	By construction, in matrix $\bar{\vPhi}$, rows that belong to the same cluster are identical, thus we can apply Lemma \ref{lemma_MJSReduction_kmeans2} to $\curlybrackets{\bar{\vPhi}, \vPhi}$ and obtain that when $\epsilon_{Agg} \leq \frac{\sigma_\numCls(\bar{\vPhi}) \sqrt{|\Omega_{(\numCls)}| + |\Omega_{(1)}|}}{8 \sqrt{(2+\epsilon) |\Omega_{(1)}|}}$, we have
	$\textup{MR}(\hat{\Omega}_{1:\numCls}) \leq 64(2+\epsilon) \sigma_\numCls(\bar{\vPhi})^{\invv{2}} \epsilon_{Agg}^2$.
\end{proof}

\section{Lumpable Clustering --- Proof for Theorem \ref{thrm_MJSReduction_LumpCase}} \label{appendix_lumpable}
We first provide a supporting result regarding the perturbation of stationary distribution of Markov chains.
\begin{lemma}[Section 3.6 in \cite{cho2001comparison}]\label{lemma_MJSReduction_MCStationaryDistPerturbation}
	For two Markov matrices $\vT, \vT_0 {\in} \dm{\numSys}{\numSys}$ and their stationary distributions $\vpi,\vpi_0 {\in} \dm{\numSys}{1}$, 
	we have
	$
	\norm{\vpi {-} \vpi_0}_1 {\leq} \gamma_1 \norm{\vT{-}\vT_0}_\infty,
	$
	where 	
	$\gamma_1 {:=} \sum_{i=2}^{\numSys} $	
	$\frac{1}{1-\lambda_i(\vT)}$.
\end{lemma}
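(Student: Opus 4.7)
The plan is to follow the classical perturbation framework for stationary distributions of finite ergodic Markov chains, treating $\vT_0$ as a perturbation of $\vT$ and expressing the change in the left Perron vector in closed form via the fundamental matrix. The key auxiliary object will be the Kemeny--Snell fundamental matrix $\vZ := (\vI_\numSys - \vT + \onevec \vpi^\T)^{\inv}$, which exists because $\vT$ is ergodic and which satisfies the identity $\vZ(\vI_\numSys - \vT) = \vI_\numSys - \onevec\vpi^\T$. Since $\vT$ is diagonalizable in general position, $\vZ$ admits the spectral expansion $\vZ = \onevec\vpi^\T + \sum_{i=2}^{\numSys} \frac{1}{1-\lambda_i(\vT)} \vv_i \vu_i^\T$, where $\vv_i,\vu_i$ are the right and left eigenvectors of $\vT$ associated with $\lambda_i(\vT)$; in particular the eigenvalues of $\vZ$ are $1, \frac{1}{1-\lambda_2(\vT)}, \dots, \frac{1}{1-\lambda_\numSys(\vT)}$.

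The derivation begins by subtracting the two invariance equations $\vpi^\T\vT = \vpi^\T$ and $\vpi_0^\T\vT_0 = \vpi_0^\T$. Setting $\vE := \vT_0 - \vT$ and rearranging gives $(\vpi_0 - \vpi)^\T (\vI_\numSys - \vT) = \vpi_0^\T \vE$. Right-multiplying by $\vZ$ and using $\vZ(\vI_\numSys - \vT) = \vI_\numSys - \onevec\vpi^\T$ together with $(\vpi_0 - \vpi)^\T \onevec = 0$ (both $\vpi$ and $\vpi_0$ are probability vectors with unit $\ell_1$ mass) yields the exact first-order perturbation identity $(\vpi_0 - \vpi)^\T = \vpi_0^\T \vE \vZ$.

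Taking $\ell_1$ norms on both sides and using $\|\vpi_0\|_1 = 1$, I would bound $\|\vpi_0 - \vpi\|_1$ by a suitable operator norm of $\vE\vZ$. The crucial observation that trades the naive bound $\|\vZ\|_\infty$ for the stated coefficient $\gamma_1 = \sum_{i=2}^\numSys \frac{1}{1-\lambda_i(\vT)}$ is that $\vE \onevec = 0$ (both $\vT$ and $\vT_0$ are stochastic, so their rows sum to $1$), which annihilates the rank-one ergodic component $\onevec\vpi^\T$ of $\vZ$. Thus only the $\numSys - 1$ non-trivial spectral components of $\vZ$ contribute to $\vE\vZ$, and a careful column-by-column accounting — writing $\vE\vZ = \vE\sum_{i=2}^{\numSys}\frac{1}{1-\lambda_i(\vT)}\vv_i \vu_i^\T$, bounding each row in $\ell_1$ by $\|\vE\|_\infty$, and summing the reciprocal-gap coefficients — produces the advertised inequality $\|\vpi - \vpi_0\|_1 \leq \gamma_1 \|\vT - \vT_0\|_\infty$.

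The main obstacle is the last step: passing from the spectral representation of $\vE\vZ$ to an operator-norm bound whose coefficient is exactly $\sum_{i=2}^\numSys \frac{1}{1-\lambda_i(\vT)}$, rather than a cruder quantity like $\|\vZ\|_\infty$ or $(\numSys-1)\max_i \frac{1}{1-\lambda_i(\vT)}$. This is the algebraic content of the condition-number comparison in Cho--Meyer; I would reproduce that argument, which is purely matrix-analytic and uses only the ergodicity of $\vT$ already assumed in the setup, and does not require any additional probabilistic input.
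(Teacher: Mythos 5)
The paper never proves this lemma --- it is imported wholesale from Cho--Meyer (Section 3.6), so your attempt can only be judged on its own terms. Your setup is correct and is indeed the classical machinery behind the citation: for ergodic $\vT$ the fundamental matrix $\vZ = (\vI_\numSys - \vT + \onevec\vpi^\T)^{-1}$ exists, commutes with $\vI_\numSys - \vT$, and satisfies $\vZ(\vI_\numSys - \vT) = \vI_\numSys - \onevec\vpi^\T$; subtracting the two invariance equations and right-multiplying by $\vZ$, using $(\vpi_0 - \vpi)^\T \onevec = 0$, gives the exact identity $(\vpi_0 - \vpi)^\T = \vpi_0^\T \vE \vZ$ with $\vE := \vT_0 - \vT$, and $\vE\onevec = 0$ lets you replace $\vZ$ by the group inverse $\vA^{\#} = \vZ - \onevec\vpi^\T$, whose trace is precisely $\gamma_1 = \sum_{i=2}^{\numSys}\frac{1}{1-\lambda_i(\vT)}$. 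Note, however, that the estimate this identity yields for free is $\norm{\vpi_0 - \vpi}_1 \le \norm{\vE}_\infty \norm{\vA^{\#}}_\infty$, i.e.\ the Meyer-type condition number $\norm{\vA^{\#}}_\infty$; getting the \emph{trace} as the constant is the entire nontrivial content of the lemma.

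That is exactly where your plan breaks. Bounding the spectral expansion $\vE\vA^{\#} = \sum_{i\ge 2}\frac{1}{1-\lambda_i(\vT)}\,\vE\vv_i\vu_i^\T$ term by term gives $\norm{\vpi_0^\T \vE \vv_i \vu_i^\T}_1 \le \norm{\vE}_\infty \norm{\vv_i}_\infty \norm{\vu_i}_1$, and the biorthogonal normalization $\vu_i^\T\vv_i = 1$ puts no bound on $\norm{\vv_i}_\infty \norm{\vu_i}_1$: for nonnormal $\vT$ these eigenvector condition numbers can be arbitrarily large, so ``summing the reciprocal-gap coefficients'' produces $\sum_{i\ge2} s_i/|1-\lambda_i(\vT)|$ with uncontrolled factors $s_i \ge 1$, not $\gamma_1$. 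A second failure occurs for complex spectra: $\gamma_1$ is real only because conjugate terms $\frac{1}{1-\lambda_i}$ combine, whereas any triangle-inequality bookkeeping necessarily yields the moduli $1/|1-\lambda_i(\vT)|$ (which strictly exceed the paired real contributions), so the exact coefficient cannot be recovered this way; and your expansion additionally assumes $\vT$ diagonalizable, which is not among the lemma's hypotheses --- $\vT$ is given, so ``general position'' is not available. Closing the gap requires the finer structure of $\vA^{\#}$ for stochastic matrices (e.g.\ the mean-first-passage-time representation $\vA^{\#}(j,j) - \vA^{\#}(i,j) = \vpi(j) m_{ij}$ that underlies the Cho--Meyer comparisons), not generic per-eigenvector estimates. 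You concede this by saying you would ``reproduce that argument,'' but that deferred step is precisely what separates the claimed bound from the trivial $\norm{\vA^{\#}}_\infty$ bound your derivation already gives; as a blind proof, the attempt establishes the standard identity framework and then leaves the stated inequality itself resting on the citation.
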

When the difference $\norm{\vT - \vT_0}$ is small, we can further have the following corollary.
\begin{corollary}\label{corollary_MJSReduction_MCStationaryDistPerturbation}
	In Lemma \ref{lemma_MJSReduction_MCStationaryDistPerturbation}, let $\pi_{\min}:= \min_i \vpi(i), \pi_{\max}:= \max_i \vpi(i)$. Suppose $\norm{\vT - \vT_0}_\infty \leq \frac{\pi_{\min}}{\gamma_1}$, then we have
	\begin{gather}
		\max_i |\vpi(i) - \vpi_0(i)| \leq \frac{\pi_{\min}}{2}, \quad
		\label{eq_MJSReduction_29_1}\\
		\min_i \vpi_0(i) \geq \frac{\pi_{\min}}{2}, \quad \max_i \vpi_0(i) \leq \pi_{\max} + \frac{\pi_{\min}}{2} \label{eq_MJSReduction_29}\\
		\hspace{-0.4em}\max_i |\vpi(i)^{\minus \frac{1}{2}} {-} \vpi_0(i)^{\minus\frac{1}{2}}|  \leq (\sqrt{2}{-}1) \gamma_1 \pi_{\min}^{\minus\frac{3}{2}} \norm{\vT {-} \vT_0}_\infty \label{eq_MJSReduction_30}\\
		\max_i |\vpi(i)^{\frac{1}{2}} {-} \vpi_0(i)^{\frac{1}{2}}|  \leq	(1-\frac{\sqrt{2}}{2}) \gamma_1 \pi_{\min}^{\invv{\frac{1}{2}}} \norm{\vT {-} \vT_0}_\infty \label{eq_MJSReduction_31}.
	\end{gather}	
\end{corollary}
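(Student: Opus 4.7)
The plan is to combine Lemma~\ref{lemma_MJSReduction_MCStationaryDistPerturbation} with elementary manipulations that exploit the fact that $\vpi$ and $\vpi_0$ are probability distributions on $[\numSys]$.

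First, I would sharpen the $\ell_1$ bound of Lemma~\ref{lemma_MJSReduction_MCStationaryDistPerturbation} into an $\ell_\infty$ bound. Since $\sum_i (\vpi(i) - \vpi_0(i)) = 0$, splitting the indices by the sign of $\vpi(i) - \vpi_0(i)$ shows that the positive and negative parts each sum to $\tfrac{1}{2}\norm{\vpi - \vpi_0}_1$, so $\max_i|\vpi(i) - \vpi_0(i)| \leq \tfrac{1}{2}\norm{\vpi - \vpi_0}_1$. Combining this with Lemma~\ref{lemma_MJSReduction_MCStationaryDistPerturbation} and the hypothesis $\norm{\vT - \vT_0}_\infty \leq \pi_{\min}/\gamma_1$ yields \eqref{eq_MJSReduction_29_1}. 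Then \eqref{eq_MJSReduction_29} follows entry-wise by the reverse triangle inequality: $\vpi_0(i) \geq \vpi(i) - |\vpi(i) - \vpi_0(i)| \geq \pi_{\min} - \pi_{\min}/2 = \pi_{\min}/2$, and analogously for the upper bound.

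For \eqref{eq_MJSReduction_30} and \eqref{eq_MJSReduction_31}, I would exploit the algebraic identities
\begin{equation*}
a^{-1/2} - b^{-1/2} = \frac{b - a}{\sqrt{ab}\,(\sqrt{a}+\sqrt{b})}, \qquad a^{1/2} - b^{1/2} = \frac{a - b}{\sqrt{a}+\sqrt{b}}
\end{equation*}
applied with $a = \vpi(i)$ and $b = \vpi_0(i)$. I would lower-bound the denominators using $\vpi(i) \geq \pi_{\min}$ together with the newly-established $\vpi_0(i) \geq \pi_{\min}/2$, and upper-bound the numerators by the $\ell_\infty$ bound already derived, namely $|\vpi(i) - \vpi_0(i)| \leq \tfrac{1}{2}\gamma_1 \norm{\vT - \vT_0}_\infty$. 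The worst case binds at $a = \pi_{\min}$, $b = \pi_{\min}/2$, producing a factor $\tfrac{1}{\sqrt{2}+1}$ in each bound; rationalizing via $\tfrac{1}{\sqrt{2}+1} = \sqrt{2}-1$ yields exactly the advertised constants $\sqrt{2}-1$ in \eqref{eq_MJSReduction_30} and $1 - \tfrac{\sqrt{2}}{2}$ in \eqref{eq_MJSReduction_31}.

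The main obstacle is entirely bookkeeping: verifying that the binding lower bounds on the denominators are indeed $\pi_{\min}$ for $\vpi(i)$ and $\pi_{\min}/2$ for $\vpi_0(i)$ (rather than some mixed expression), and that the rationalization of $\tfrac{1}{\sqrt{2}+1}$ produces precisely the constants stated. Beyond this algebraic care, the argument requires no machinery beyond Lemma~\ref{lemma_MJSReduction_MCStationaryDistPerturbation} and the convexity/linearity of probability distributions.
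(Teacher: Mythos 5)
Your proposal is correct and follows essentially the same route as the paper's proof: the $\ell_1$-to-$\ell_\infty$ conversion via $\onevec^\T\vpi=\onevec^\T\vpi_0=1$, the entrywise lower bound $\vpi_0(i)\geq\pi_{\min}/2$, and the algebraic identities $a^{-1/2}-b^{-1/2}=\frac{b-a}{\sqrt{ab}(\sqrt{a}+\sqrt{b})}$ and $a^{1/2}-b^{1/2}=\frac{a-b}{\sqrt{a}+\sqrt{b}}$ with denominators bounded using \eqref{eq_MJSReduction_29}. The stated constants check out (for \eqref{eq_MJSReduction_31} the combined factor is $\frac{\sqrt{2}}{2(\sqrt{2}+1)}=1-\frac{\sqrt{2}}{2}$ rather than literally $\frac{1}{\sqrt{2}+1}$, but the final bound is exactly as advertised).
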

\begin{proof}
	Since $\onevec^\T \vpi = \onevec^\T \vpi_0 = 1$, we have $\max_i |\vpi(i) - \vpi_0(i)| \leq \frac{1}{2} \norm{\vpi - \vpi_0}_1 \leq \frac{\gamma_1}{2} \norm{\vT - \vT_0}_\infty \leq \frac{\pi_{\min}}{2} $. Then using triangle inequality, we can show \eqref{eq_MJSReduction_29_1} and \eqref{eq_MJSReduction_29}. Note that the LHS of \eqref{eq_MJSReduction_30} is equivalent to $\max_i \frac{|\vpi_0(i) - \vpi(i)|}{\sqrt{\vpi(i) \vpi_0(i)} (\sqrt{\vpi(i)} + \sqrt{\vpi_0(i)}) }$, then plugging in \eqref{eq_MJSReduction_29} gives \eqref{eq_MJSReduction_30}. 
	And \eqref{eq_MJSReduction_31} follows similarly.
% 	Note that the LHS of \eqref{eq_MJSReduction_31} is equivalent $\max_i \frac{|\vpi(i) - \vpi_0(i)|}{\sqrt{\vpi(i)} + \sqrt{\vpi_0(i)}}$, then plugging in the results in \eqref{eq_MJSReduction_29} can show \eqref{eq_MJSReduction_31}.		
\end{proof}

When the lumpability perturbation $\epsilon_\vT \neq 0$, matrix $\vS_\numCls$ in Algorithm \ref{Alg_MJSReduction} Line \ref{algline_MJSReduction_1} no longer has the row identity pattern as discussed in Lemma \ref{lemma_MJSReduction_lumpableInformativeSpectrum_shorVer}. The next result measures this effect.

\begin{lemma}\label{lemma_MJSReduction_SminusSbar}
	Consider an ergodic Markov matrix $\vT \in \dm{\numSys}{\numSys}$ with stationary distribution $\vpi$ and a partition $\Omega_{1:\numCls}$ such that it is approximately lumpable as in \eqref{eq_MJSReduction_approxLump} with perturbation $\epsilon_\vT$. 
	Consider the neighborhood of $\vT$ given by $\Lcal(\vT, \Omega_{1:\numCls}, \epsilon_\vT)$ defined in \eqref{eq_NeighborhoodT}.
	Assume there exists an ergodic and reversible $\vT_0 \in \Lcal(\vT, \Omega_{1:\numCls}, \epsilon_\vT)$ that has informative spectrum. Construct $\vS_\numCls \in \dm{\numSys}{\numCls}$ with $\vT$ and $\vpi$ as in Algorithm \ref{Alg_MJSReduction} Line \ref{algline_MJSReduction_1}.
	Construct $\vSbar_\numCls \in \dm{\numSys}{\numCls}$ such that for any $i \in [\numSys]$ (suppose $i \in \Omega_k$), $\vSbar_\numCls(i,:) = \frac{1}{|\Omega_k|} \sum_{i' \in \Omega_k} \vS_\numCls(i',:)$.
	Let $\pi_{\min}:=\max_i \vpi(i)$, $\pi_{\max}:=\min_i \vpi(i)$, $\gamma_1 := \sum_{i=2}^{\numSys} \frac{1}{1-\lambda_i(\vT)}$, $\gamma_2:= \min\curlybrackets{ \sigma_\numCls(\vH) - \sigma_{\numCls+1}(\vH), 1}$, and $\gamma_3 := \frac{16 \gamma_1 \sqrt{\numCls \pi_{\max}} \norm{\vT}_\fro}{\gamma_2 \pi_{\min}^2}$ where $\vH$ is defined in Algorithm \ref{Alg_MJSReduction}.
	Then, when perturbation $\epsilon_\vT \leq \frac{\pi_{\min}}{\gamma_1}$, we have
	$
		\norm{ \vS_\numCls - \vSbar_{\numCls} }_\fro \leq \gamma_3 \epsilon_\vT.
	$\
\end{lemma}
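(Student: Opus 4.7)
The plan is to exhibit a reference matrix $\vS_\numCls^0$ with exactly the target row-equality structure on clusters and then bound $\vS_\numCls$ against it by perturbation. Let $\vpi_0$ be the stationary distribution of $\vT_0$ and set $\vH_0 := \diag(\vpi_0)^{1/2} \vT_0 \diag(\vpi_0)^{-1/2}$. Because $\vT_0$ is reversible, $\vH_0$ is symmetric, so its top $\numCls$ left singular vectors $\vW_\numCls^0$ are eigenvectors of $\vH_0$ (up to signs). The map $\vw \mapsto \diag(\vpi_0)^{-1/2} \vw$ carries every eigenvector of $\vH_0$ to a right eigenvector of $\vT_0$ with the same eigenvalue, so the columns of $\vS_\numCls^0 := \diag(\vpi_0)^{-1/2} \vW_\numCls^0$ are right eigenvectors of $\vT_0$ corresponding to its $\numCls$ largest-magnitude eigenvalues. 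Since $\vT_0 \in \Lcal(\vT, \Omega_{1:\numCls}, \epsilon_\vT)$ is lumpable w.r.t.\ $\Omega_{1:\numCls}$ and has informative spectrum (so $\Acal = [\numCls]$ in Lemma~\ref{lemma_MJSReduction_lumpableInformativeSpectrum_shorVer}), the rows of $\vS_\numCls^0$ are cluster-constant: $\vS_\numCls^0(i,:) = \vS_\numCls^0(i',:)$ whenever $i,i' \in \Omega_k$.

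Next, the row-averaging map $P$ that replaces each row by its cluster mean is an orthogonal projection in the Frobenius inner product and commutes with right-multiplication by any matrix. Since $P \vS_\numCls = \vSbar_\numCls$ and $P \vS_\numCls^0 = \vS_\numCls^0$, for every $\vO \in \Ocal(\numCls)$,
\[ \norm{\vS_\numCls - \vSbar_\numCls}_\fro = \norm{\vS_\numCls \vO - P(\vS_\numCls \vO)}_\fro \leq \norm{\vS_\numCls \vO - \vS_\numCls^0}_\fro, \]
so it suffices to upper bound $\min_\vO \norm{\vS_\numCls \vO - \vS_\numCls^0}_\fro$. I would split the error as
\[ \vS_\numCls \vO - \vS_\numCls^0 = \diag(\vpi)^{-1/2} (\vW_\numCls \vO - \vW_\numCls^0) + (\diag(\vpi)^{-1/2} - \diag(\vpi_0)^{-1/2}) \vW_\numCls^0, \]
bound the first summand using $\norm{\diag(\vpi)^{-1/2}} \leq \pi_{\min}^{-1/2}$ together with the Wedin-type bound of Lemma~\ref{lemma_MJSReduction_singvecPerturb} applied to the pair $(\vH_0, \vH)$, whose denominator $\sigma_\numCls(\vH_0) - \sigma_{\numCls+1}(\vH_0)$ is controlled by $\gamma_2$ up to $\norm{\vH - \vH_0}$ via Weyl's inequality, and bound the second summand using $\norm{\vW_\numCls^0}_\fro = \sqrt{\numCls}$ together with inequality \eqref{eq_MJSReduction_30} of Corollary~\ref{corollary_MJSReduction_MCStationaryDistPerturbation}, which applies under $\epsilon_\vT \leq \pi_{\min}/\gamma_1$ since $\norm{\vT - \vT_0}_\infty \leq \epsilon_\vT$ by the definition of $\Lcal(\cdot)$.

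To finish, I would estimate $\norm{\vH - \vH_0}_\fro$ using the three-term telescoping identity
\[ \vH - \vH_0 = (\diag(\vpi)^{1/2} {-} \diag(\vpi_0)^{1/2}) \vT \diag(\vpi)^{-1/2} + \diag(\vpi_0)^{1/2} (\vT {-} \vT_0) \diag(\vpi)^{-1/2} + \diag(\vpi_0)^{1/2} \vT_0 (\diag(\vpi)^{-1/2} {-} \diag(\vpi_0)^{-1/2}), \]
together with $\norm{\vT - \vT_0}_\fro \leq \epsilon_\vT$, $\norm{\vT_0}_\fro \leq \norm{\vT}_\fro + \epsilon_\vT$, and the remaining inequalities \eqref{eq_MJSReduction_29}--\eqref{eq_MJSReduction_31} of the same corollary. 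The dominant contribution comes from the third piece, where the factors $\sqrt{\pi_{\max}}$, $\norm{\vT}_\fro$, and $\gamma_1 \pi_{\min}^{-3/2} \epsilon_\vT$ combine into exactly the shape of $\gamma_3$.

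The main obstacle is constant bookkeeping at the end: each of the three telescoping terms contributes a slightly different mixture of $\pi_{\min}$, $\pi_{\max}$, $\gamma_1$, and $\norm{\vT}_\fro$, and these must be compounded with the $2\sqrt{2}/\gamma_2$ factor from Wedin, the $\pi_{\min}^{-1/2}$ factor from the diagonal rescaling, and the $\sqrt{\numCls}$ factor from the second summand, all so as to fit inside the stated coefficient $16 \gamma_1 \sqrt{\numCls \pi_{\max}} \norm{\vT}_\fro / (\gamma_2 \pi_{\min}^2)$. The hypothesis $\epsilon_\vT \leq \pi_{\min}/\gamma_1$ is exactly what ensures Corollary~\ref{corollary_MJSReduction_MCStationaryDistPerturbation} applies and that $\gamma_2$ remains a valid surrogate for $\sigma_\numCls(\vH_0) - \sigma_{\numCls+1}(\vH_0)$ after Weyl's perturbation.
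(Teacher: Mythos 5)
Your skeleton matches the paper's proof almost step for step: the same reference matrix $\vS_{0,\numCls}=\diag(\vpi_0)^{\invv{\frac{1}{2}}}\vW_{0,\numCls}$ built from the symmetric $\vH_0$, the same appeal to Lemma \ref{lemma_MJSReduction_lumpableInformativeSpectrum_shorVer} plus informative spectrum for cluster-constant rows, the same two-term splitting of $\vS_\numCls\vO-\vS_{0,\numCls}$, the same three-term telescoping of $\vH-\vH_0$, and the same inputs from Corollary \ref{corollary_MJSReduction_MCStationaryDistPerturbation}. Two points of divergence are worth noting. First, your reduction of $\norm{\vS_\numCls-\vSbar_\numCls}_\fro$ to $\min_{\vO}\norm{\vS_\numCls\vO-\vS_{0,\numCls}}_\fro$ via the observation that cluster-row-averaging is an orthogonal projection commuting with right multiplication is cleaner than the paper's argument, which decomposes $\vS_\numCls-\vSbar_\numCls$ through two block matrices $\vD$ and $\vP$ and loses a factor of $2$; your route gives the same bound with constant $1$ and therefore more slack in the final bookkeeping. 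Second, your treatment of the Wedin step has a genuine (if small) gap as stated: you apply Lemma \ref{lemma_MJSReduction_singvecPerturb} with $\vH_0$ as the reference, so the denominator is $\sigma_\numCls(\vH_0)-\sigma_{\numCls+1}(\vH_0)$, and you propose to relate this to $\gamma_2$ (defined via $\vH$) by Weyl's inequality. But Weyl only gives $\sigma_\numCls(\vH_0)-\sigma_{\numCls+1}(\vH_0)\geq\sigma_\numCls(\vH)-\sigma_{\numCls+1}(\vH)-2\norm{\vH-\vH_0}$, and the hypothesis $\epsilon_\vT\leq\pi_{\min}/\gamma_1$ does not force $\norm{\vH-\vH_0}$ to be small relative to that gap, so the corrected denominator can be zero or negative and the bound cannot be folded into $\gamma_3$. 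The fix is to do what the paper does: since $\min_{\vO}\norm{\vW_{0,\numCls}\vO-\vW_\numCls}_\fro=\min_{\vO}\norm{\vW_\numCls\vO-\vW_{0,\numCls}}_\fro$, apply the lemma with $\vH$ in the role of $\bar{\vPhi}$, so the denominator is $\sigma_\numCls(\vH)-\sigma_{\numCls+1}(\vH)\geq\gamma_2$ directly and no Weyl correction is needed. With that one substitution your argument closes, and the remaining work is exactly the constant accounting you describe.
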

\begin{proof}
	We will start with analyzing $\vT_0$ and use it as a bridge to prove the claim. Let $\vpi_0 \in \dm{\numSys}{1}$ denote the stationary distribution of $\vT_0$. Since $\vT_0$ is ergodic, we know $\vpi_0$ is strictly positive. By definition of reversibility, we know $\diag(\vpi_0) \vT_0 = \vT_0^\T \diag(\vpi_0)$, and this further gives $\diag(\vpi_0)^\frac{1}{2} \vT_0 \diag(\vpi_0)^{\invv{\frac{1}{2}}} = \diag(\vpi_0)^{\invv{\frac{1}{2}}} \vT_0^\T \diag(\vpi_0)^\frac{1}{2}$. Let $\vH_0 := \diag(\vpi_0)^\frac{1}{2} \vT_0 \diag(\vpi_0)^{\invv{\frac{1}{2}}}$, then we see $\vH_0$ is symmetric. Let $\vW_{0, \numCls} \in \dm{\numSys}{\numCls}$ denote the top $\numCls$ left singular vectors of $\vH_0$, by spectral theorem, we know the columns of $\vW_{0, \numCls}$ also serve as the top $\numCls$ eigenvectors of $\vH_0$. Let $\vS_{0,\numCls}: = \diag(\vpi_0)^{\invv{\frac{1}{2}}} \vW_{0, \numCls}$, by definition of $\vH_0$, it is easy to see that the columns of $\vS_{0,\numCls}$ are also the top $\numCls$ eigenvectors of $\vT_0$. Then, by Lemma \ref{lemma_MJSReduction_lumpableInformativeSpectrum_shorVer} and the definition of informative spectrum, for any $i, i' \in \Omega_k$, we have $\vS_{0,\numCls}(i,:) = \vS_{0,\numCls}(i',:)$.
	
	Recall in Algorithm \ref{Alg_MJSReduction}, $\vW_\numCls$ denotes the top $\numCls$ left singular vectors of $\vH:= \diag(\vpi)^\frac{1}{2} \vT \diag(\vpi)^{ \invv{\frac{1}{2}}}$ and let $\vS_\numCls = \diag(\vpi)^{\invv{\frac{1}{2}}} \vW_\numCls$. Let $\vO^\star := \min_{\vO \in \mathcal{O}(\numCls)} \norm{\vW_{0, \numCls} \vO - \vW_\numCls}_\fro$, where $\mathcal{O}(\numCls)$ is the set of all $\numCls \times \numCls$ orthonormal matrices. Then, for any $i, i' \in \Omega_k$, we have $[\vS_{0,\numCls} \vO^\star] (i,:) = [\vS_{0,\numCls} \vO^\star] (i',:)$. Using this, for any $i \in [\numSys]$ (suppose $i \in \Omega_k$), we have
	\begin{equation}\label{eq_MJSReduction_32}
		\begin{split}
			&\vS_\numCls(i,:) - \vSbar_{\numCls}(i,:) \\
			=& \frac{|\Omega_k|-1}{|\Omega_k|} \vS_\numCls(i,:) -  \frac{1}{|\Omega_k|} \sum_{i': i' \in \Omega_k, i'\neq i} \vS_\numCls(i',:)\\
			\leq &\frac{|\Omega_k|-1}{|\Omega_k|} (\vS_\numCls(i,:) - [\vS_{0,\numCls} \vO^\star] (i,:)) \\
			&+ \frac{1}{|\Omega_k|} \sum_{i': i' \in \Omega_k, i'\neq i} \big( [\vS_{0,\numCls} \vO^\star] (i',:) - \vS_\numCls(i',:) \big).
		\end{split}
	\end{equation}
	WLOG, assume $\curlybrackets{1, \dots, |\Omega_1|} = \Omega_1$, $\curlybrackets{|\Omega_1|+1, \dots, |\Omega_1|+|\Omega_2|} = \Omega_2, \cdots$ and define block diagonal matrices $\vD, \vP \in \dm{\numSys}{\numSys}$ both with $\numCls$ diagonal blocks such that their $k$-th diagonal blocks $[\vD]_k, [\vP]_k \in \dm{|\Omega_k|}{|\Omega_k|}$ are given by
	\begin{equation}
		[\vD]_k {=} \frac{|\Omega_k|{-}1}{|\Omega_k|} \vI_{|\Omega_k|}, \ [\vP]_k {=} \frac{1}{|\Omega_k|} (\onevec_{|\Omega_k|} \onevec_{|\Omega_k|}^\T {-} \vI_{|\Omega_k|}).
	\end{equation}
	Then, stacking \eqref{eq_MJSReduction_32} for all $i$, one can verify that
	$
	\vS_\numCls- \vSbar_{\numCls}
	= \vD (\vS_\numCls - \vS_{0,\numCls} \vO^\star) + \vP (\vS_{0,\numCls} \vO^\star - \vS_\numCls)
	$.
	Note that for an arbitrary matrix $\vE \in \dm{\numSys}{\numSys}$, 	
	we have $\norm{\vP \vE}_\fro^2 = \tr(\vP^\T \vP \vE \vE^\T) \leq \tr(\vD^\T$
	$ \cdot \vD \vE \vE^\T) = \norm{\vD \vE}_\fro^2$ where the inequality holds since for each diagonal block we have
	$[\vP]_k^\T [\vP]_k \preceq [\vD]_k^\T [\vD]_k$. 
	Therefore, 	
	$
	\norm{ \vS_\numCls {-} \vSbar_{\numCls} }_\fro
			\leq 2\norm{\vD (\vS_\numCls {-} \vS_{0,\numCls} \vO^\star)}_\fro 
			\leq
			2 \max_k \frac{|\Omega_k|{-}1}{|\Omega_k|} \|\vS_\numCls - 
	$
	$  \vS_{0,\numCls} \vO^\star|_\fro 
			\leq 2 \norm{\vS_\numCls - \vS_{0,\numCls} \vO^\star}_\fro.
	$
% 	\begin{equation}\label{eq_MJSReduction_26}
% 		\begin{split}
% 			\norm{ \vS_\numCls - \vSbar_{\numCls} }_\fro
% 			&\leq 2\norm{\vD (\vS_\numCls - \vS_{0,\numCls} \vO^\star)}_\fro \\
% 			&\leq 2 \max_k \frac{|\Omega_k|-1}{|\Omega_k|} \norm{\vS_\numCls - \vS_{0,\numCls} \vO^\star}_\fro \\
% 			&\leq 2 \norm{\vS_\numCls - \vS_{0,\numCls} \vO^\star}_\fro.
% 		\end{split}
% 	\end{equation}
% 	Note that when the aforementioned partition assumption ``$\curlybrackets{1, \dots, |\Omega_1|} = \Omega_1, \cdots$'' is not true, one can simply left multiply matrix $\vS_\numCls- \vSbar_{\numCls}$ by a row permutation matrix such that after permutation, row $1$ to $|\Omega_1|$ belong to  $\Omega_1$, etc. Then, \eqref{eq_MJSReduction_26} still holds since Frobenius norm is invariant under row permutations.	
	To complete the proof, it suffices to study	$\norm{\vS_\numCls - \vS_{0,\numCls} \vO^\star}_\fro$.	
	\begin{equation}\label{eq_MJSReduction_24}{\hspace*{-0.5em}
		\begin{aligned}
			&\norm{\vS_\numCls - \vS_{0,\numCls} \vO^\star}_\fro \\
			=& \| \diag(\vpi)^{\invv{\frac{1}{2}}} (\vW_\numCls - \vW_{0, \numCls} \vO^\star) \\
			& + (\diag(\vpi)^{\invv{\frac{1}{2}}} - \diag(\vpi_0)^{\invv{\frac{1}{2}}})\vW_{0, \numCls}\vO^\star \|_\fro \\
			\leq& \pi_{\min}^{\invv{0.5}} \norm{\vW_\numCls {-} \vW_{0, \numCls} \vO^\star}_\fro 
			{+} \sqrt{\numCls} \max_i |\vpi(i)^{\invv{\frac{1}{2}}} {-} \vpi_0(i)^{\invv{\frac{1}{2}}}|.
		\end{aligned}}
	\end{equation}
	According to Lemma \ref{lemma_MJSReduction_singvecPerturb}, we know $\norm{\vW_\numCls - \vW_{0, \numCls} \vO^\star}_\fro 
	\leq \frac{2 \sqrt{2}}{ \sigma_\numCls(\vH) - \sigma_{\numCls+1}(\vH)} \norm{\vH - \vH_0}_\fro$. This together with the upper bound for $\max_i |\vpi(i)^{\invv{\frac{1}{2}}} - \vpi_0(i)^{\invv{\frac{1}{2}}}|$ in \eqref{eq_MJSReduction_30} gives
	\begin{multline}\label{eq_MJSReduction_28}
		\norm{\vS_\numCls - \vS_{0,\numCls} \vO^\star}_\fro
		\leq \frac{2 \sqrt{2}}{(\sigma_\numCls(\vH) - \sigma_{\numCls+1}(\vH)) \pi_{\min}^{0.5}} \norm{\vH - \vH_0}_\fro \\
		+ \frac{ (\sqrt{2} - 1 ) \gamma_1 \sqrt{\numCls} }{\pi_{\min}^{1.5} } \epsilon_\vT.
	\end{multline}	
	By the definitions of $\vH$ and $\vH_0$, we have
	$
	\norm{\vH {-} \vH_0}_\fro \leq $
	$	\|(\diag(\vpi)^{\frac{1}{2}} {-} \diag(\vpi_0)^\frac{1}{2}) \cdot \vT \cdot\diag(\vpi)^{\invv{\frac{1}{2}}}\|_\fro + 
	\|\diag(\vpi_0)^\frac{1}{2} $	
	$\cdot \vT \cdot (\diag(\vpi)^{\invv{\frac{1}{2}}} {-} \diag(\vpi_0)^{-\frac{1}{2})} \|_\fro
	+
	\|\diag(\vpi_0)^\frac{1}{2} \cdot (\vT {-} \vT_0) 
	$	
	$
	 \cdot \diag(\vpi_0)^{\invv{\frac{1}{2}}} \|_\fro.
	$	
% 	\begin{equation}{\hspace*{-0.5em}
% 		\begin{aligned}
% 			&\norm{\vH - \vH_0}_\fro \\
% 			\leq& \norm{(\diag(\vpi)^{\frac{1}{2}} - \diag(\vpi_0)^\frac{1}{2}) \cdot \vT \cdot\diag(\vpi)^{\invv{\frac{1}{2}}}}_\fro \\
% 			&+ \norm{\diag(\vpi_0)^\frac{1}{2} \cdot \vT \cdot (\diag(\vpi)^{\invv{\frac{1}{2}}} - \diag(\vpi_0)^{-\frac{1}{2})} }_\fro \\
% 			&+ \norm{\diag(\vpi_0)^\frac{1}{2} \cdot (\vT - \vT_0) \cdot \diag(\vpi_0)^{\invv{\frac{1}{2}}} } \\
% 			\leq& (\max_i |\vpi(i)^{\frac{1}{2}} - \vpi_0(i)^{\frac{1}{2}}|) \cdot \norm{\vT}_\fro \cdot (\min_i \vpi(i))^{\invv{\frac{1}{2}}}\\
% 			&+ (\max_i \vpi_0(i))^\frac{1}{2} \cdot \norm{\vT}_\fro \cdot (\max_i |\vpi(i)^{\invv{\frac{1}{2}}} - \vpi_0(i)^{\invv{\frac{1}{2}}}|)\\
% 			&+ (\max_i \vpi_0(i))^\frac{1}{2} \cdot \sqrt{\numSys} \norm{\vT - \vT_0}_\infty \cdot (\min_i \vpi_0(i))^{\invv{\frac{1}{2}}}	
% 		\end{aligned}}
% 	\end{equation}
	Applying Corollary \ref{corollary_MJSReduction_MCStationaryDistPerturbation} gives
	$
	\norm{\vH - \vH_0}_\fro
	\leq
	2.56 \gamma_1 \pi_{\max}^{0.5} \pi_{\min}^{-1.5} \norm{\vT}_\fro \epsilon_\vT.
	$
% 	\begin{equation}	
% 		\begin{split}\label{eq_MJSReduction_25}
% 			\norm{\vH - \vH_0}_\fro
% 			%		\leq& (1-\frac{\sqrt{2}}{2}) \gamma_1 \pi_{\min}^{-1} \norm{\vT}_\fro \norm{\vT - \vT_0}_\infty \\
% 			%			&+ (\sqrt{3} - \frac{\sqrt{6}}{2}) \gamma_1 \sqrt{\pi_{\max}} \pi_{\min}^{-1.5} \norm{\vT}_\fro \norm{\vT - \vT_0}_\infty \\
% 			%			&+ \sqrt{3} \sqrt{\numSys} \sqrt{\pi_{\max}} \pi_{\min}^{-0.5} \norm{\vT - \vT_0}_\infty \\
% 			\leq& 2.56 \gamma_1 \sqrt{\numSys} \pi_{\max}^{0.5} \pi_{\min}^{-1.5} \norm{\vT}_\fro \epsilon_\vT.
% 		\end{split}
% 	\end{equation}
    Plugging this into \eqref{eq_MJSReduction_28}, we have
    $
    \norm{\vS_\numCls - \vS_{0,\numCls} \vO^\star}_\fro
    \leq
    \frac{8 \gamma_1 \sqrt{\numCls} \sqrt{\pi_{\max}} \norm{\vT}_\fro}{\gamma_2 \pi_{\min}^2} \epsilon_\vT,
    $
% 	\begin{equation}\label{eq_MJSReduction_27}
% 		\norm{\vS_\numCls - \vS_{0,\numCls} \vO^\star}_\fro
% 		%		\leq \frac{7.24 \gamma_1 \sqrt{\numSys} \sqrt{\pi_{\max}} \norm{\vT}_\fro}{( \sigma_\numCls(\vH) - \sigma_{\numCls+1}(\vH)) \pi_{\min}^2} \epsilon_\vT
% 		%				+ \frac{ (\sqrt{2} - 1 ) \gamma_1 \sqrt{\numCls} }{\pi_{\min}^{1.5} } \epsilon_\vT.
% 		\leq \frac{8 \gamma_1 \sqrt{\numSys} \sqrt{\pi_{\max}} \norm{\vT}_\fro}{\gamma_2 \pi_{\min}^2} \epsilon_\vT.
% 	\end{equation}
	where $\gamma_2:= \min\curlybrackets{ \sigma_\numCls(\vH) - \sigma_{\numCls+1}(\vH), 1}$. This concludes the proof as we showed that $\norm{ \vS_\numCls - \vSbar_{\numCls} }_\fro \leq 2 \norm{\vS_\numCls - \vS_{0,\numCls} \vO^\star}_\fro$.
\end{proof}

\begin{proof}[Main Proof for Theorem \ref{thrm_MJSReduction_LumpCase}]
	Consider $\vPhi$ in Algorithm \ref{Alg_MJSReduction} Line \ref{algline_MJSReduction_13} and its averaged version $\bar{\vPhi}$ defined in Section \ref{subsec_MJSReduction_theoryClustering}. 
	Then, by definition, we have   
	$
		\norm{\bar{\vPhi} {-} \vPhi}_\fro^2
		=
		\alpha_\vT^2 \cdot \norm{\vS_\numCls {-} \vSbar_\numCls}_\fro^2 $
		$ 
		+ \alpha_\vA^2 \cdot \sum_{k \in [\numCls]} \sum_{i \in \Omega_k}  \|\vA_{i} - |\Omega_k|^\inv
		\cdot \sum_{i' \in \Omega_k} \vA_{i'} \|_\fro^2 
		$
		$
		+ \alpha_\vB^2 \cdot \sum_{k \in [\numCls]} \sum_{i \in \Omega_k}  \| \vB_{i} - |\Omega_k|^\inv \sum_{i' \in \Omega_k} \vB_{i'} \|_\fro^2 .
	$
	where $\vSbar_\numCls$ is defined in Lemma \ref{lemma_MJSReduction_SminusSbar}. 
	By Lemma \ref{lemma_MJSReduction_SminusSbar} and the definitions of $\epsilon_\vA$ and $\epsilon_\vB$ in Problem \ref{problem_lumpableCase}, we have
	$
		\norm{\bar{\vPhi} - \vPhi}_\fro $ $
% 		\leq \sqrt{\numSys} \big( \alpha_\vA \epsilon_\vA + \alpha_\vB \epsilon_\vB + \alpha_\vT \frac{16 \gamma_1 \sqrt{\pi_{\max}} \norm{\vT}_\fro}{\gamma_2 \pi_{\min}^2} \epsilon_\vT \big)
		\leq \epsilon_{Lmp}
	$
	where $\epsilon_{Lmp}: = \sqrt{\alpha_\vA^2 \epsilon_\vA^2 + \alpha_\vB^2 \epsilon_\vB^2 + \alpha_\vT^2 \gamma_3^2 \epsilon_\vT^2}$.
	By construction, in $\bar{\vPhi}$, rows that belong to the same cluster have the same rows, thus we can apply Lemma \ref{lemma_MJSReduction_kmeans2} to $\curlybrackets{\bar{\vPhi}, \vPhi}$ and obtain that when $\epsilon_{Lmp} {\leq} \frac{\sigma_\numCls(\bar{\vPhi}) \sqrt{|\Omega_{(\numCls)}| + |\Omega_{(1)}|}}{8 \sqrt{(2+\epsilon) |\Omega_{(1)}|}}$, we have
	$\textup{MR}(\hat{\Omega}_{1:\numCls}) {\leq} 64(2+\epsilon) \sigma_\numCls(\bar{\vPhi})^{\invv{2}} \epsilon_{Lmp}^2$.
\end{proof}

\subsection{Non-emptiness of \texorpdfstring{$\Lcal(\vT, \Omega_{1:\numCls}, \epsilon_\vT)$}{L}}
Note that both Lemma \ref{lemma_MJSReduction_SminusSbar} and Theorem \ref{thrm_MJSReduction_LumpCase} require the set $\Lcal(\vT, \Omega_{1:\numCls}, \epsilon_\vT)$, a neighborhood of $\vT$. Now, we show it is non-empty under the approximate lumpability condition \eqref{eq_MJSReduction_approxLump}.

Let $\vT_0:=\vT + \vDelta$ for $\vDelta \in \Dcal$ where
\begin{align}
\Dcal:=  \Big\{\vDelta{\in} \dm{\numSys}{\numSys} :  \forall k,l {\in} [\numCls], \forall i {\in} \Omega_k, \hspace{7em} \nonumber \\
-\vT(i,j) \leq \vDelta(i,j) \leq 1-\vT(i,j) \quad \forall j \in [\numSys], \label{eq_nonempty_1}\\
\sum_{j \in \Omega_l} \vDelta(i,j) = - \sum_{j \in \Omega_l} \vT(i,j) + |\Omega_k|^\inv \sum_{\substack{i' \in \Omega_k\\j \in \Omega_l}} \vT(i', j), \label{eq_nonempty_2}\\
\norm{\vDelta}_\fro \leq \epsilon_\vT, \quad \norm{\vDelta}_\infty \leq \epsilon_\vT. \hspace{1em}\Big\}  \hspace{0em} \nonumber
\end{align}
Then, we see to show there exists $\vT_0 \in \Lcal(\vT, \Omega_{1:\numCls}, \epsilon_\vT)$, i.e. $\Lcal(\vT, \Omega_{1:\numCls}, \epsilon_\vT)$ is non-empty, it is equivalent to show there exists $\vDelta \in \Dcal$. 

Note that \eqref{eq_nonempty_1} gives that for all $i \in \Omega_k, l \in [\numCls]$, $- \sum_{j \in \Omega_l} \vT(i,j) \leq \sum_{j \in \Omega_l} \vDelta(i,j) \leq |\Omega_l| - \sum_{j \in \Omega_l} \vT(i,j)$. This together with \eqref{eq_nonempty_1} and \eqref{eq_nonempty_2} imply that there exists $\vDelta$ satisfying both \eqref{eq_nonempty_1} and \eqref{eq_nonempty_2} such that among its elements $\curlybrackets{\vDelta(i,j)}_{j \in \Omega_l}$, the nonzero ones have the same signs as the RHS of \eqref{eq_nonempty_2}. Then, for all $i \in \Omega_k, l \in [\numCls]$, we have $\sum_{j \in \Omega_l} |\vDelta(i,j)| = |\sum_{j \in \Omega_l} \vDelta(i,j)| = |\text{RHS of \eqref{eq_nonempty_2}}| \leq |\Omega_k|^\inv \sum_{i' \in \Omega_k} |\sum_{j \in \Omega_l} \vT(i,j) - \sum_{j \in \Omega_l} \vT(i', j)|$. This further gives $\norm{\vDelta}_\fro \leq \sum_{k, l\in [\numCls]} \sum_{i \in \Omega_k, j \in \Omega_l} |\vDelta(i,j)| \leq |\Omega_k|^\inv \sum_{k,l {\in} [\numCls]} \sum_{i, i' {\in} \Omega_k} |\sum_{j {\in} \Omega_l} \vT(i,j) - \sum_{j \in \Omega_l} \vT(i', j)|$ $ \leq |\Omega_k|^\inv \epsilon_\vT$, where the last inequality follows from \eqref{eq_MJSReduction_approxLump}. These steps also show $\norm{\vDelta}_\infty \leq \epsilon_\vT$. We have shown $\vDelta \in \Dcal$, i.e. $\Dcal$ is non-empty, and so is $\Lcal(\vT, \Omega_{1:\numCls}, \epsilon_\vT)$.

\section{Approximation with MSS --- Proof for Theorem \ref{thrm_MJSReduction_stateDistSyncMSS}}
We first provide several supporting results regarding the perturbation of matrix product.
\begin{lemma}\label{lemma_MJSReduction_JSRPerturb}
 	Consider two sets of matrices $\vA_1, \dots, \vA_\numSys$ and $\vAhat_1, \dots, \vAhat_\numSys$ with $\norm{\vA_i - \vAhat_i} \leq \epsilon_\vA$ for all $i \in [\numSys]$. Assume there exists a pair $\curlybrackets{\xi, \kappa}$ such that for all $t \in \mathbb{N}$, $\max_{\sigma_{1:t} \in [\numSys]^{t}}\norm{\vA_{\sigma_1} \cdots \vA_{\sigma_t}}^{\frac{1}{t}} \leq \kappa \cdot \xi^t$. Then, for all $t$ and any sequence $\sigma_{1:t} \in [\numSys]^{t}$, we have (i) $\norm{\prod_{h=1}^t \vAhat_{\sigma_h}} \leq \kappa (\kappa \epsilon_\vA + \xi)^t$; (ii) $\norm{\prod_{h=1}^t \vAhat_{\sigma_h} - \prod_{h=1}^t \vA_{\sigma_h}} \leq \kappa^2 t (\kappa \epsilon_\vA + \xi)^{t-1} \epsilon_\vA$.
%  	\begin{align}
%  		\norm{\prod_{h=1}^t \vAhat_{\sigma_h}} & \leq \kappa (\kappa \epsilon_\vA + \xi)^t \label{eq_MJSReduction_JSRperturbednorm}
%  		\\
%  		\norm{\prod_{h=1}^t \vAhat_{\sigma_h} - \prod_{h=1}^t \vA_{\sigma_h}} & \leq \kappa^2 t (\kappa \epsilon_\vA + \xi)^{t-1} \epsilon_\vA \label{eq_MJSReduction_JSRperturbation}
%  	\end{align}
\end{lemma}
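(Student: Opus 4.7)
The plan is to prove (i) by a binomial-style expansion of $\prod_{h=1}^t \vAhat_{\sigma_h}$ around the nominal product $\prod_{h=1}^t \vA_{\sigma_h}$, and then derive (ii) from (i) via a standard telescoping identity. Write $\vAhat_{\sigma_h} = \vA_{\sigma_h} + \vDelta_h$ with $\norm{\vDelta_h} \leq \epsilon_\vA$. Distributing gives
\[
\prod_{h=1}^t \vAhat_{\sigma_h} = \sum_{S \subseteq [t]} T_S,
\]
where $T_S$ is the product in which $\vDelta_h$ is placed at positions $h \in S$ and $\vA_{\sigma_h}$ is kept elsewhere (order preserved). The $|S|$ insertions of $\vDelta$ split the remaining factors into $|S|+1$ contiguous $\vA$-blocks of lengths $j_0,\ldots,j_{|S|}$ with $\sum_\ell j_\ell = t - |S|$.

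For (i), bound each block individually: a nonempty $\vA$-block of length $j \geq 1$ has norm $\leq \kappa\xi^j$ by the assumption, while an empty block contributes $1 \leq \kappa$. Using $\kappa \geq 1$ uniformly gives $\norm{T_S} \leq \kappa^{|S|+1} \xi^{t-|S|} \epsilon_\vA^{|S|}$, and summing over $S$ and collecting by cardinality produces
\[
\Big\|\prod_{h=1}^t \vAhat_{\sigma_h}\Big\| \leq \kappa \sum_{k=0}^{t} \binom{t}{k}(\kappa\epsilon_\vA)^k \xi^{t-k} = \kappa(\kappa\epsilon_\vA + \xi)^t,
\]
which is exactly (i).

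For (ii), apply the standard telescoping identity
\[
\prod_{h=1}^t \vAhat_{\sigma_h} - \prod_{h=1}^t \vA_{\sigma_h} = \sum_{k=1}^{t} \Big(\prod_{h=1}^{k-1} \vAhat_{\sigma_h}\Big) (\vAhat_{\sigma_k} - \vA_{\sigma_k}) \Big(\prod_{h=k+1}^{t} \vA_{\sigma_h}\Big),
\]
which is verified by induction on $t$. Each summand's norm is bounded by (i) applied to the prefix of length $k-1$, by $\epsilon_\vA$ for the middle factor, and by the hypothesis for the $\vA$-suffix of length $t-k$, yielding $\kappa(\kappa\epsilon_\vA + \xi)^{k-1} \cdot \epsilon_\vA \cdot \kappa\xi^{t-k} \leq \kappa^2 \epsilon_\vA (\kappa\epsilon_\vA + \xi)^{t-1}$, since $\xi \leq \kappa\epsilon_\vA + \xi$. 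Summing over the $t$ terms gives the claimed bound in (ii).

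The main subtlety is in (i): one must handle the case when $\vDelta$-positions in $S$ are adjacent, so that some of the induced $\vA$-blocks are empty. The cleanest treatment is to exploit $\kappa \geq 1$ to absorb empty-block factors of $1$ into the common bound $\kappa^{|S|+1}\xi^{t-|S|}\epsilon_\vA^{|S|}$; this uniform bound is what makes the binomial summation collapse cleanly to $(\kappa\epsilon_\vA + \xi)^t$. A naive induction based only on submultiplicativity and triangle inequality fails here because the target bound has the perturbed-exponential form $(\kappa\epsilon_\vA + \xi)^t$ rather than being multiplicative step by step, so one cannot peel off a single factor $\vAhat_{\sigma_t}$ and preserve the bound.
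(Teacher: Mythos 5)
Your proof of part (i) is essentially identical to the paper's: both expand $\prod_h(\vA_{\sigma_h}+\vDelta_h)$ into $2^t$ ordered terms, bound a term with $i$ perturbation factors by $\kappa^{i+1}\xi^{t-i}\epsilon_\vA^i$ (using $\kappa\geq 1$ to absorb empty $\vA$-blocks, exactly as you note), and collapse the sum binomially to $\kappa(\kappa\epsilon_\vA+\xi)^t$. For part (ii) you take a genuinely different route. The paper stays inside the same expansion: it observes that the difference equals the sum of all terms with $i\geq 1$, bounds this by $\kappa(\kappa\epsilon_\vA+\xi)^t-\kappa\xi^t$, and then converts to the stated form via the convexity inequality $f(x+a)-f(x)\leq a\,f'(x+a)$ for $f(x)=x^t$. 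You instead use the telescoping identity $\prod\vAhat-\prod\vA=\sum_{k}\bigl(\prod_{h<k}\vAhat_{\sigma_h}\bigr)(\vAhat_{\sigma_k}-\vA_{\sigma_k})\bigl(\prod_{h>k}\vA_{\sigma_h}\bigr)$, bound each of the $t$ summands by $\kappa^2\epsilon_\vA(\kappa\epsilon_\vA+\xi)^{t-1}$ using (i) on the perturbed prefix and the hypothesis on the unperturbed suffix, and sum. Both arguments are correct and give the same constant; yours avoids the calculus step and makes the factor of $t$ appear combinatorially (one term per position of the first perturbation), while the paper's keeps everything in a single closed-form binomial bound, which is marginally more compact but less transparent about where the $t$ comes from.
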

\begin{proof}
 	Let $\vE_i {:=} \vAhat_i - \vA_i$, then we see $\norm{\vE_i} {\leq} \epsilon_\vA$ and $\prod_{h=1}^{t} \vAhat_{\sigma_h} = $
 	$\prod_{h=1}^{t}(\vA_{\sigma_h} + \vE_{\sigma_h})$. In the expansion of $\prod_{h=1}^{t}(\vA_{\sigma_h} + \vE_{\sigma_h})$,  	
 	for each $i=0, 1, \dots, t$, there are $\binom{t}{i}$ terms, each of which is a product where $\vE_{\sigma_h}$ has degree $i$ and $\vA_{\sigma_h}$ has degree $t-i$. We let $\vF_{i,j}$ with $i = 0,1,\dots, t$ and $j \in [\binom{t}{i}]$ to index these expansion terms. Note that $\norm{\vF_{i,j}} \leq \kappa^{i+1} \xi^{t-i} \epsilon_\vA^i$. Then, we have
 	$\norm{\prod_{h=1}^t \vAhat_{\sigma_h}}
		 			\leq \sum_{i=0}^t \sum_{j \in [\binom{t}{i}]} \norm{\vF_{i,j}}
		 			\leq \sum_{i=0}^t \binom{t}{i} \kappa^{i+1} \xi^{t-i} \epsilon_\vA^i
		 			\leq \kappa (\kappa \epsilon_\vA + \xi)^t
 	$.
%  	\begin{equation}
% 	 		\begin{split}
% 		 			\norm{\prod_{h=1}^t \vAhat_{\sigma_h}}
% 		 			\leq& \sum_{i=0}^t \sum_{j \in [\binom{t}{i}]} \norm{\vF_{i,j}} \\
% 		 			\leq& \sum_{i=0}^t \binom{t}{i} \kappa^{i+1} \xi^{t-i} \epsilon_\vA^i \\
% 		 			\leq& \kappa (\kappa \epsilon_\vA + \xi)^t,
% 		 		\end{split}
% 	 	\end{equation}
%  	which shows the first claim. 
 	
 	Similarly,
 	$
 	\norm{\prod_{h=1}^t \vAhat_{\sigma_h} {-} \prod_{h=1}^t \vA_{\sigma_h}}
 		= \|\sum_{i=1}^t \sum_{j \in [\binom{t}{i}]} $ 		
 		$\vF_{i,j}\|
 		\leq \sum_{i=0}^t \sum_{j \in [\binom{t}{i}]} \norm{\vF_{i,j}} - \norm{\vF_{0,1}}
 		\leq \kappa (\kappa \epsilon_\vA + \xi)^t - \kappa \xi^t
 		\leq \kappa^2 t (\kappa \epsilon_\vA + \xi)^{t-1} \epsilon_\vA,
 	$
%  	\begin{equation}
% 	 		\begin{split}
% 		 			\norm{\prod_{h=1}^t \vAhat_{\sigma_h} - \prod_{h=1}^t \vA_{\sigma_h}}
% 		 			=& \norm{\sum_{i=1}^t \sum_{j \in [\binom{t}{i}]} \vF_{i,j}} \\
% 		 			\leq& \sum_{i=0}^t \sum_{j \in [\binom{t}{i}]} \norm{\vF_{i,j}} - \norm{\vF_{0,1}} \\
% 		 			\leq& \kappa (\kappa \epsilon_\vA + \xi)^t - \kappa \xi^t \\
% 		 			\leq& \kappa^2 t (\kappa \epsilon_\vA + \xi)^{t-1} \epsilon_\vA,
% 		 		\end{split}
% 	 	\end{equation}
 	where the last line follows from the fact that for function $f(x) {:=} x^t$ and $x,a {\geq} 0$, $f(x) {\geq} f(x+a) - a\cdot f'(x+a)$.
\end{proof}
Based on Lemma \ref{lemma_MJSReduction_JSRPerturb}, we have the following corollaries, which will be used in different settings in later derivations.
% \begin{corollary}\label{corollary_MJSReduction_matrixMultiplicationPerturbation_JSR}
%     Lemma \ref{lemma_MJSReduction_JSRPerturb} holds if we choose $\xi$ be the joint spectral radius of $\vA_1, \dots, \vA_\numSys$ and let $\kappa := \sup_{k\in \mathbb{N}} \max_{\sigma_{1:k} \in [\numSys]^{k}}\norm{\vA_{\sigma_1} \cdots \vA_{\sigma_k}}^{\frac{1}{k}} / \xi^k$.
% \end{corollary}
\begin{corollary}\label{corollary_MJSReduction_singleMatrixPerturbation}
 	Consider two matrices $\vAcal$  and $\vAcalbar$ with $\norm{\vAcal - \vAcalbar} \leq \epsilon_{\vAcal}$. 
    % Let $\rho$ be the spectral radius of $\vAcal$, and let $\tau := \sup_{k\in \mathbb{N}} \norm{\vAcal^k}/\rho^k$.
    Suppose there exists a pair $\{\rho, \tau\}$ such that for all $k \in \mathbb{N}$, $\norm{\vAcal^k} \leq \tau \rho^k$.
 	Then, we have
 	$
		\norm{\vAcalbar^t} \leq \tau \parenthesesbig{\tau \epsilon_{\vAcal} + \rho}^t
	$
	and
	$
		\norm{\vAcalbar^t - \vAcal^t} \leq \tau^2 t \parenthesesbig{\tau \epsilon_{\vAcal} + \rho}^{t-1} \epsilon_{\vAcal}.
	$
\end{corollary}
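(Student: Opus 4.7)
The plan is to recognize this corollary as the single-matrix specialization ($\numSys = 1$) of Lemma \ref{lemma_MJSReduction_JSRPerturb}. First I would set $\vA_1 := \vAcal$ and $\vAhat_1 := \vAcalbar$ in that lemma, so its perturbation hypothesis $\|\vA_1 - \vAhat_1\| \leq \epsilon_\vA$ reads $\|\vAcal - \vAcalbar\| \leq \epsilon_{\vAcal}$. Because the only length-$t$ sequence when $\numSys = 1$ is $\sigma_h \equiv 1$, the lemma's spectral hypothesis $\max_{\sigma_{1:t}} \|\vA_{\sigma_1} \cdots \vA_{\sigma_t}\| \leq \kappa \xi^t$ collapses exactly to $\|\vAcal^t\| \leq \tau \rho^t$ after identifying $\kappa = \tau$ and $\xi = \rho$. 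The two conclusions of Lemma \ref{lemma_MJSReduction_JSRPerturb} then translate verbatim into the two bounds of the corollary, with $\vAhat_{\sigma_1} \cdots \vAhat_{\sigma_t} = \vAcalbar^t$ and $\vA_{\sigma_1} \cdots \vA_{\sigma_t} = \vAcal^t$.

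For a self-contained direct proof, I would write $\vAcalbar = \vAcal + \vE$ with $\|\vE\| \leq \epsilon_{\vAcal}$ and expand $\vAcalbar^t = (\vAcal + \vE)^t$ as a sum of $2^t$ non-commutative monomials. For each $i \in \{0, 1, \dots, t\}$, the $\binom{t}{i}$ monomials containing exactly $i$ copies of $\vE$ take the form $\vAcal^{a_0} \vE \vAcal^{a_1} \vE \cdots \vE \vAcal^{a_i}$ with $\sum_{j=0}^{i} a_j = t - i$, so submultiplicativity combined with the per-block bound $\|\vAcal^{a_j}\| \leq \tau \rho^{a_j}$ controls each monomial by $\tau^{i+1} \rho^{t-i} \epsilon_{\vAcal}^i$. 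Summing over all $i$ and invoking the binomial theorem delivers $\|\vAcalbar^t\| \leq \tau(\tau \epsilon_{\vAcal} + \rho)^t$. The difference bound is obtained by removing the $i = 0$ term from this expansion and applying the convexity inequality $x^t - y^t \leq t\,x^{t-1}(x - y)$ at $x = \tau \epsilon_{\vAcal} + \rho$ and $y = \rho$, which is precisely the tail-subtraction step at the end of the proof of Lemma \ref{lemma_MJSReduction_JSRPerturb}.

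There is essentially no obstacle here beyond careful bookkeeping. The only point that warrants a sanity check is the trivial block case $a_j = 0$, where $\vAcal^0 = \vI$ and the bound $\|\vAcal^0\| = 1 \leq \tau$ silently relies on $\tau \geq 1$, a fact recorded immediately after $\tau$ is introduced in the paper. Given this, the shortest presentation would simply cite Lemma \ref{lemma_MJSReduction_JSRPerturb} with $\numSys = 1$, which is why the paper elects to omit the proof.
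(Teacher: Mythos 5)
Your proposal is correct and matches the paper exactly: the paper gives no separate proof of this corollary, presenting it as an immediate specialization of Lemma \ref{lemma_MJSReduction_JSRPerturb} to a single matrix ($\numSys=1$, $\kappa=\tau$, $\xi=\rho$), which is precisely your first argument, and your self-contained expansion simply replays the lemma's own proof. Your remark that the trivial-block case $\vAcal^0=\vI$ relies on $\tau\geq 1$ is a fair observation, and that fact is indeed recorded in the paper right after $\tau$ is defined.
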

\begin{corollary}\label{corollary_MJSReduction_scalarMultiplicationPerturbation}
 	Consider two sets of scalars $a_1, \dots, a_s$ and $\ahat_1, \dots, \ahat_s$ with $|a_i {-} \hat{a}_i| {<} \epsilon_a$ and $|a_i| {<} \bar{a}$ for all $i {\in} [\numSys]$. Then, for all $t$ and any sequence $\sigma_{1:t} {\in} [\numSys]^{t}$, we have
 	$
	 		|\prod_{h=1}^t \ahat_{\sigma_h}| \leq (\epsilon_a + \bar{a})^t
	$
	and
	$
	 		|\prod_{h=1}^t \ahat_{\sigma_h} - \prod_{h=1}^t a_{\sigma_h}| \leq t (\epsilon_a + \bar{a})^{t-1} \epsilon_a.
	$
\end{corollary}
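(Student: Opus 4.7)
The plan is to observe that Corollary \ref{corollary_MJSReduction_scalarMultiplicationPerturbation} is the scalar specialization of Lemma \ref{lemma_MJSReduction_JSRPerturb}. Viewing each $a_i$ (resp.\ $\ahat_i$) as a $1 \times 1$ matrix with operator norm equal to $|a_i|$ (resp.\ $|\ahat_i|$), the hypothesis $|a_i - \ahat_i| < \epsilon_a$ is exactly the perturbation bound $\norm{\vA_i - \vAhat_i} \leq \epsilon_\vA$ with $\epsilon_\vA = \epsilon_a$. The hypothesis $|a_i| < \bar{a}$ gives $|a_{\sigma_1} \cdots a_{\sigma_t}| \leq \bar{a}^t$ for every length-$t$ switching sequence, so the assumption of Lemma \ref{lemma_MJSReduction_JSRPerturb} holds with $\kappa = 1$ and $\xi = \bar{a}$. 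Substituting $\kappa = 1$, $\xi = \bar{a}$, $\epsilon_\vA = \epsilon_a$ into the two conclusions of that lemma produces exactly the two claimed inequalities.

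If a self-contained derivation is preferred, I would mirror the binomial expansion used in the proof of Lemma \ref{lemma_MJSReduction_JSRPerturb}. Writing $\ahat_{\sigma_h} = a_{\sigma_h} + e_{\sigma_h}$ with $|e_{\sigma_h}| < \epsilon_a$ and expanding $\prod_{h=1}^t(a_{\sigma_h} + e_{\sigma_h})$, I would group the $2^t$ resulting monomials by the number $i$ of $e$-factors they contain. Each of the $\binom{t}{i}$ terms at degree $i$ is bounded in absolute value by $\bar{a}^{t-i} \epsilon_a^i$, and the binomial theorem then gives the first inequality $(\bar{a} + \epsilon_a)^t$ directly. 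For the second inequality, the $i=0$ monomial equals $\prod_h a_{\sigma_h}$, so $\prod_h \ahat_{\sigma_h} - \prod_h a_{\sigma_h}$ is the sum of the terms with $i \geq 1$, which is bounded by $(\bar{a}+\epsilon_a)^t - \bar{a}^t$. The same convexity estimate used at the end of the proof of Lemma \ref{lemma_MJSReduction_JSRPerturb} (namely $f(x+a) - f(x) \leq a f'(x+a)$ applied to $f(x) = x^t$ at $x = \bar{a}$, $a = \epsilon_a$) converts this quantity into $t(\bar{a} + \epsilon_a)^{t-1} \epsilon_a$.

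There is no real obstacle here: the corollary follows immediately by specialization of a result that has already been established, and the alternative self-contained proof reuses the exact algebraic identity and convexity bound that appear in the proof of Lemma \ref{lemma_MJSReduction_JSRPerturb}. The only minor bookkeeping point is that the strict inequalities $|a_i - \ahat_i| < \epsilon_a$ and $|a_i| < \bar{a}$ in the hypothesis yield the non-strict conclusions stated in the corollary.
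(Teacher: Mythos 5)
Your proposal is correct and matches the paper's intent exactly: the paper states this result as an immediate corollary of Lemma \ref{lemma_MJSReduction_JSRPerturb}, obtained precisely by the scalar ($1\times 1$ matrix) specialization with $\kappa = 1$ and $\xi = \bar{a}$ that you describe. Your self-contained binomial-expansion alternative is also the same argument as the paper's proof of that lemma, so there is nothing to add.
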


The next result considers the evolution of state $\vx_t$ in the mean-square sense for autonomous MJSs.
\begin{lemma}[Lemma 9 in \cite{sattar2021identification}]\label{lemma_MJSReduction_MJSCovDynamics}
	Consider \textup{MJS}$(\vA_{1:\numSys}, 0, \vT)$ and define matrix $\vAcal {\in} \dm{\numSys \dimSt^2}{\numSys \dimSt^2}$ with its $ij$-th $\dimSt^2 {\times} \dimSt^2$ block given by $[\vAcal]_{ij} {:=} \vT(j,i) {\cdot} \vA_j {\otimes} \vA_j$. 
	Let $\vSigma_t^{(i)} {:=} \expctn[\vx_t \vx_t^\T \indicator{\omega_t=i}]$ and $\vs_t {:=} [\vek(\vSigma_t^{(1)})^\T, \dots, \vek(\vSigma_t^{(\numSys)})^\T]^\T$. Then, 
	$
		\vs_t = \vAcal^t \vs_0.
	$
\end{lemma}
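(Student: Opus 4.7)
The plan is to derive a one-step recursion $\vs_{t+1} = \vAcal \vs_t$ and then iterate it. First I would expand $\vSigma_{t+1}^{(i)} = \expctn[\vx_{t+1}\vx_{t+1}^\T \indicator{\omega_{t+1}=i}]$ using the autonomous dynamics $\vx_{t+1} = \vA_{\omega_t}\vx_t$ to rewrite the expectand as $\vA_{\omega_t} \vx_t \vx_t^\T \vA_{\omega_t}^\T \indicator{\omega_{t+1}=i}$. Applying the tower property to condition on the full history $(\vx_{0:t},\omega_{0:t})$ and invoking the Markov property $\expctn[\indicator{\omega_{t+1}=i} \mid \omega_{0:t},\vx_{0:t}] = \vT(\omega_t,i)$, then partitioning the outer expectation according to the value of $\omega_t = j$, I expect to obtain the coupled Lyapunov-type recursion
\[
\vSigma_{t+1}^{(i)} = \sum_{j \in [\numSys]} \vT(j,i)\, \vA_j \, \vSigma_t^{(j)}\, \vA_j^\T.
\]

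Next I would vectorize both sides using the standard Kronecker identity $\vek(\vA\vX\vA^\T) = (\vA \otimes \vA)\vek(\vX)$, which turns the recursion into
\[
\vek(\vSigma_{t+1}^{(i)}) = \sum_{j \in [\numSys]} \vT(j,i)\, (\vA_j \otimes \vA_j)\, \vek(\vSigma_t^{(j)}).
\]
Reading this as the $i$-th block row of a block matrix-vector product, the $ij$-th $\dimSt^2 \times \dimSt^2$ block of the implicit operator is exactly $\vT(j,i)\cdot \vA_j \otimes \vA_j = [\vAcal]_{ij}$. Hence $\vs_{t+1} = \vAcal \vs_t$, and a one-line induction on $t$ finishes with $\vs_t = \vAcal^t \vs_0$.

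The only mildly delicate step is handling the indicator $\indicator{\omega_{t+1}=i}$ cleanly inside the nested expectation; this is resolved by conditioning on $(\vx_{0:t},\omega_{0:t})$ before the Markov reduction. Everything else is routine Kronecker algebra, and because the statement is cited verbatim from \cite{sattar2021identification}, no novelty beyond reproducing this textbook-style derivation is needed.
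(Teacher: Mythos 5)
Your derivation is correct: the one-step coupled recursion $\vSigma_{t+1}^{(i)} = \sum_{j} \vT(j,i)\,\vA_j \vSigma_t^{(j)} \vA_j^\T$, its vectorization via $\vek(\vA\vX\vA^\T) = (\vA\otimes\vA)\vek(\vX)$, and the identification of the resulting block operator with $\vAcal$ constitute the standard proof of this lemma. The paper itself gives no proof (it cites the result verbatim from the reference), and your argument matches the canonical derivation in that reference, including the correct handling of the indicator via conditioning on the history before invoking the Markov property.
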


Recall in Section \ref{sec_MJSReduction_approxMetrics}, for $\Sigma$, i.e., MJS($\vA_{1:\numSys}, \vB_{1:\numSys}, \vT$), we define the augmented state matrix $\vAcal \in \dm{\numSys \dimSt^2}{\numSys \dimSt^2}$ with its $ij$-th $\dimSt^2 \times \dimSt^2$ block given by $[\vAcal]_{ij}:= \vT(j,i) \cdot \vA_j \otimes \vA_j$; and for any $\rho \geq \rho(\vAcal)$ and all $k \in \mathbb{N}$, we have
$
\norm{\vAcal^k} \leq \tau \rho^k
$.
% We let $\rho$ denote the spectral radius of $\vAcal$ and let $\tau := \sup_{k\in \mathbb{N}} \norm{\vAcal^k}/\rho^k$. 
The next lemma is regarding the augmentation of two MJS with the same $\vA$ matrix.
\begin{lemma}\label{lemma_MJSReduction_concatenateTwoMJS}
	Construct matrix $\vAcalcheck \in \dm{4 \numSys \dimSt^2}{4 \numSys \dimSt^2}$ with its $ij$-th $4\dimSt^2 \times 4\dimSt^2$ block given by $[\vAcalcheck]_{ij}:= \vT(j,i) \cdot 
	\begin{bmatrix}
		\vA_j & \\ 
		& \vA_j
	\end{bmatrix} 
	\otimes
	\begin{bmatrix}
		\vA_j	& \\ 
		& \vA_j
	\end{bmatrix} 
	$. 
    Then, for all $k \in \mathbb{N}$, $\norm{\vAcalcheck^k} \leq \tau \rho^k$.
    % (i) $\vAcalcheck$ has spectral radius $\rho(\vAcal)$, (ii) $\norm{\vAcalcheck^k} {=} \norm{\vAcalcheck^k}$, and (iii) $\sup_{k\in \mathbb{N}} \norm{\vAcalcheck^k}/\rho^k {=} \tau$.
\end{lemma}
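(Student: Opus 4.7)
The plan is to show that $\vAcalcheck$ is orthogonally similar to $\vI_4 \otimes \vAcal$, which immediately transfers the norm bound from $\vAcal$ to $\vAcalcheck$. The core observation is that each block $\begin{bmatrix} \vA_j & 0\\ 0 & \vA_j\end{bmatrix}$ is just $\vI_2 \otimes \vA_j$, so each $4\dimSt^2\times 4\dimSt^2$ entry of $\vAcalcheck$ takes the form $\vT(j,i)\cdot(\vI_2\otimes\vA_j)\otimes(\vI_2\otimes\vA_j)$, and the $j$-dependent part of the Kronecker square can be rearranged into $\vI_4\otimes(\vA_j\otimes\vA_j)$ by a permutation that does not depend on $j$.

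Concretely, I would first invoke the perfect-shuffle identity: there exists a fixed permutation matrix $P\in\dm{4\dimSt^2}{4\dimSt^2}$ (depending only on the dimensions $2$ and $\dimSt$, not on $j$) such that
\begin{equation}
(\vI_2\otimes\vA_j)\otimes(\vI_2\otimes\vA_j)=P\bigl(\vI_4\otimes(\vA_j\otimes\vA_j)\bigr)P^\T,\qquad \forall j\in[\numSys].
\end{equation}
Applying this blockwise, $\vAcalcheck=(\vI_\numSys\otimes P)\,\widetilde{\vAcal}\,(\vI_\numSys\otimes P)^\T$, where $[\widetilde{\vAcal}]_{ij}:=\vT(j,i)\cdot\vI_4\otimes(\vA_j\otimes\vA_j)$. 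Next, I would exhibit a permutation matrix $Q$ (again dimension-dependent only) that reorders the indexing so that the ``$\vI_4$'' factor moves to the outermost position, giving $\widetilde{\vAcal}=Q^\T(\vI_4\otimes\vAcal)Q$. Both $P$ and $Q$ are orthogonal, so $\vAcalcheck$ is orthogonally similar to $\vI_4\otimes\vAcal$.

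Having established the similarity, the conclusion is immediate:
\begin{equation}
\norm{\vAcalcheck^k}=\norm{(\vI_4\otimes\vAcal)^k}=\norm{\vI_4\otimes\vAcal^k}=\norm{\vAcal^k}\leq\tau\rho^k,
\end{equation}
where the last inequality is the standing hypothesis on $\{\rho,\tau\}$.

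The only delicate step is verifying the permutation identities; these are elementary but notationally heavy. I would opt for an index-chasing argument: writing the $(i,a,b)$-$(j,a',b')$ entry of $\widetilde{\vAcal}$ (with $i,j\in[\numSys]$, $a,a'\in[4]$, $b,b'\in[\dimSt^2]$) as $\vT(j,i)\delta_{a,a'}[\vA_j\otimes\vA_j]_{b,b'}$, and comparing with the corresponding entry of $\vI_4\otimes\vAcal$ after swapping the $i$ and $a$ indices. This makes it transparent that a single block-wise row/column permutation suffices, and avoids any ambiguity about the ordering conventions of the Kronecker product.
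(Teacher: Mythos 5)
Your argument is correct and is essentially the same as the paper's: the paper simply asserts the existence of a permutation matrix $\vP$ with $\vP \vAcalcheck \vP^\T = \vI_4 \otimes \vAcal$ and concludes $\norm{\vAcalcheck^k} = \norm{\vAcal^k}$, which is exactly the similarity you establish. Your two-stage construction (the perfect-shuffle rearrangement of $(\vI_2\otimes\vA_j)\otimes(\vI_2\otimes\vA_j)$ followed by the index swap moving $\vI_4$ outermost) just makes the paper's asserted permutation explicit.
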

To see this result, first notice that there exists a permutation matrix $\vP$ such that $\vP \vAcalcheck \vP^\T = \vI_4 \otimes \vAcal$, where $\vI_4$ denotes the $4 \times 4$ identity matrix. This gives $\norm{\vAcalcheck^k} = \norm{\vAcal^k}$ and shows the claim.

To prove the result in Theorem \ref{thrm_MJSReduction_stateDistSyncMSS}, we first consider the simplified autonomous case but with potentially different initial states $\vx_0$ and $\vxhat_0$.
\begin{proposition}\label{prop_MJSReduction_stateDistSyncMSS_autonomous}
	Consider the setup in Theorem \ref{thrm_MJSReduction_stateDistSyncMSS} except that $\vu_t=0$ for all $t$, and $\vx_0$ and $\vxhat_0$ can be different such that $\norm{\vx_0 - \vxhat_0} \leq \epsilon_0$ for some $\epsilon_0 \geq 0$.
	For perturbation, assume $\epsilon_\vA \leq \min \curlybrackets{\Abar, \frac{1-\rho}{6\tau \overset{{} }{\Abar} \norm{\vT} }}$.
	Then, $\expctn[\norm{\vx_t - \vxhat_t}] \leq 4 \sqrt{\dimSt \sqrt{\numSys}} \tau \rho_0^{\frac{t-1}{2}} \big( \norm{\vx_0}\sqrt{t \Abar \norm{\vT} \epsilon_\vA} + \sqrt{(\norm{\vx_0} + \epsilon_0) \epsilon_0} \, \big).$
\end{proposition}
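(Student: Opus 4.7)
The plan is to decompose $\vy_t := \vx_t - \vxhat_t$ into a dynamics-perturbation piece and an initial-state-mismatch piece via an auxiliary trajectory $\vxtil_t$ of $\hat{\Sigma}$ initialized at $\vxtil_0 := \vx_0$, so that under \ref{setup_modeSynchrony} one has $\vxtil_{t+1} = \vAhat_{\omegahat_t}\vxtil_t$. Writing $\vy_t = (\vx_t-\vxtil_t) + (\vxtil_t - \vxhat_t) =: \vy_t^{(1)} + \vy_t^{(2)}$ and invoking the triangle inequality on $\expctn[\norm{\cdot}]$ reduces the claim to bounding the two pieces separately. The first vanishes when $\epsilon_\vA = 0$ (since then $\vAhat_k = \vA_i$ for every $i \in \hat\Omega_k$), and the second vanishes when $\epsilon_0 = 0$ (since then $\vxtil_0 = \vxhat_0$); thus the split cleanly decouples the two sources of error. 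Throughout we use $\vE_i := \vA_i - \vAhat_k$ for $i \in \hat\Omega_k$, which by the triangle inequality applied to \eqref{eq_MJSReduction_approxDyn} satisfies $\max_i \norm{\vE_i}_\fro \leq \epsilon_\vA$.

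For $\vy_t^{(2)}$: it satisfies an autonomous $\hat\Sigma$ recursion with initial norm at most $\epsilon_0$. Passing to the expanded MJS $\bar\Sigma$ of Section~\ref{sec_MJSReduction_stabAnalysis}, applying Corollary~\ref{corollary_MJSReduction_singleMatrixPerturbation} to $\curlybrackets{\vAcal,\vAcalbar}$ with $\norm{\vAcal-\vAcalbar}\leq \sqrt{\numSys}(2\Abar+\epsilon_\vA)\epsilon_\vA$ (from the Kronecker expansion appearing in the proof of Theorem~\ref{thrm_MJSReduction_stabilityAnalysis}), and exploiting the hypothesis $\epsilon_\vA\leq(1-\rho)/(6\tau\Abar\norm{\vT})$ yields $\norm{\vAcalbar^k}\leq 2\tau\rho_0^k$. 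Lemma~\ref{lemma_MJSReduction_MJSCovDynamics} then gives $\expctn[\norm{\vy_t^{(2)}}^2]\lesssim \sqrt{\dimSt\numSys}\,\tau\rho_0^t\epsilon_0^2$, and Jensen's inequality together with the crude slack $\epsilon_0\leq\sqrt{(\norm{\vx_0}+\epsilon_0)\epsilon_0}$ produces the second summand of the target bound.

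For $\vy_t^{(1)}$: one has $\vy_0^{(1)}=0$ and $\vy_{t+1}^{(1)} = \vA_{\omega_t}\vy_t^{(1)} + \vE_{\omega_t}\vxtil_t$. Form the joint process $\vw_t := [(\vy_t^{(1)})^\T,\vxtil_t^\T]^\T$, an MJS on $\dm{2\dimSt}{1}$ with chain $\vT$ and block-triangular state matrices
\[
\vAtil_i = \begin{bmatrix}\vA_i & \vE_i \\ \vnot & \vA_i-\vE_i\end{bmatrix} = \vAtil_i^\circ + \vDelta_i, \qquad \vAtil_i^\circ := \begin{bmatrix}\vA_i & \vnot \\ \vnot & \vA_i\end{bmatrix},
\]
with $\norm{\vDelta_i}\leq\sqrt{2}\norm{\vE_i}$. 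Let $\tilde{\vAcal}$ and $\tilde{\vAcal}^\circ$ denote the augmented matrices of $\vAtil_i$ and $\vAtil_i^\circ$, respectively. Lemma~\ref{lemma_MJSReduction_concatenateTwoMJS} supplies $\norm{(\tilde{\vAcal}^\circ)^k}\leq\tau\rho^k$; a Kronecker expansion combined with $\max_i\norm{\vE_i}\leq\epsilon_\vA$ yields $\norm{\tilde{\vAcal}-\tilde{\vAcal}^\circ}\lesssim \sqrt{\numSys}\Abar\norm{\vT}\epsilon_\vA$, and Corollary~\ref{corollary_MJSReduction_singleMatrixPerturbation} then produces $\norm{\tilde{\vAcal}^t - (\tilde{\vAcal}^\circ)^t}\lesssim \tau^2 t\rho_0^{t-1}\sqrt{\numSys}\Abar\norm{\vT}\epsilon_\vA$. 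Crucially, $\vw_0 = [\vnot^\T,\vx_0^\T]^\T$ gives $\norm{\vw_0\vw_0^\T}_\fro = \norm{\vx_0}^2$, and since $\vAtil_i^\circ$ is block-diagonal the nominal evolution keeps the top-left block of the augmented covariance at zero (i.e., the nominal $\vy_t^{(1)}$ is identically zero); hence only the difference $\tilde{\vAcal}^t - (\tilde{\vAcal}^\circ)^t$ contributes to $\expctn[\vy_t^{(1)}(\vy_t^{(1)})^\T\indicator{\omega_t=i}]$. Extracting this block via Lemma~\ref{lemma_MJSReduction_MJSCovDynamics} and the bound $\tr(\vSigma)\leq\sqrt{\dimSt}\norm{\vSigma}_\fro$ yields $\expctn[\norm{\vy_t^{(1)}}^2]\lesssim \sqrt{\dimSt\numSys}\,\tau^2 t\rho_0^{t-1}\Abar\norm{\vT}\epsilon_\vA\norm{\vx_0}^2$, whence Jensen's inequality delivers the first summand.

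The main technical obstacle will be verifying that the nominal block-diagonal augmented evolution genuinely preserves the zero in the top-left block of the covariance, since this is exactly what decouples the two summands and allows the $t$-linear perturbation factor in $\vy_t^{(1)}$ to multiply only $\norm{\vx_0}^2$ rather than a quantity involving $\epsilon_0$; in practice this is carried out by a careful expansion of the tensor-product block structure of $\tilde{\vAcal}^\circ$ acting on $\vek(\vw_0\vw_0^\T)$. Tracking the dimension-dependent constants $\sqrt{\dimSt}$ and $\sqrt{\numSys}$ through the repeated use of Lemma~\ref{lemma_MJSReduction_MJSCovDynamics} and Jensen's inequality then accounts for the factor $\sqrt{\dimSt\sqrt{\numSys}}\,\tau$ in the final bound.
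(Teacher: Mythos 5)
Your overall strategy is sound and uses the same machinery as the paper (the lifted second-moment dynamics of Lemma \ref{lemma_MJSReduction_MJSCovDynamics}, the perturbation bounds of Corollary \ref{corollary_MJSReduction_singleMatrixPerturbation} and Lemma \ref{lemma_MJSReduction_concatenateTwoMJS}, and Jensen), but the decomposition is genuinely different. The paper never introduces an intermediate trajectory: it stacks $[\vx_t^\T,\vxhat_t^\T]^\T$ and $[\vx_t^\T,\vx_t^\T]^\T$ into two augmented MJSs driven by the \emph{same} chain $\vT$, extracts $\vx_t-\vxhat_t$ with $\bar{\vC}=[\vI_\dimSt,-\vI_\dimSt]$, and performs the split $\norm{\check{\vs}_t-\bar{\vs}_t}\le\norm{\vAcalcheck^t-\vAcalbar^t}\norm{\check{\vs}_0}+\norm{\vAcalbar^t}\norm{\check{\vs}_0-\bar{\vs}_0}$ at the level of the second-moment vectors; the two summands of $\epsilon^{mss}_t$ fall out of that single inequality. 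Your trajectory-level split via $\vxtil_t$ achieves the same decoupling, and your initial-condition piece is in fact cleaner (you get $\epsilon_0^2$ where the paper gets $\sqrt{2}(\sqrt{3}\norm{\vx_0}+\epsilon_0)\epsilon_0$). Your observation that the block-diagonal nominal evolution preserves the zero top-left covariance block is correct and is the right mechanism for isolating the $\epsilon_\vA$ contribution.

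There are two quantitative snags to fix before the stated bound follows. First, the block-triangular matrices $\vAtil_i$ pay a price: $\norm{\vDelta_i}=\sqrt{2}\norm{\vE_i}$, so the augmented perturbation becomes $\norm{\vT}(2\Abar+\sqrt{2}\epsilon_\vA)\sqrt{2}\epsilon_\vA\le(2+2\sqrt{2})\Abar\norm{\vT}\epsilon_\vA$ rather than the $3\Abar\norm{\vT}\epsilon_\vA$ to which the hypothesis $\epsilon_\vA\le\frac{1-\rho}{6\tau\Abar\norm{\vT}}$ is calibrated; Corollary \ref{corollary_MJSReduction_singleMatrixPerturbation} then yields a contraction factor $\rho+\tfrac{2+2\sqrt{2}}{6}(1-\rho)\approx\rho+0.81(1-\rho)>\rho_0$, so the rate $\rho_0^{(t-1)/2}$ does not follow as written. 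The cheap repair is to replace the triangular joint system by the block-diagonal comparison of $[\vx_t^\T,\vxtil_t^\T]^\T$ against $[\vx_t^\T,\vx_t^\T]^\T$ with the error read off by $[\vI_\dimSt,-\vI_\dimSt]$ (i.e., the paper's construction with $\epsilon_0=0$), which keeps the per-mode perturbation at $\norm{\vE_i}\le\epsilon_\vA$. Second, for the $\vy_t^{(2)}$ piece you should not invoke the $\bar{\Sigma}$ of Section \ref{sec_MJSReduction_stabAnalysis} verbatim: that object carries $\vTbar\in\Lcal(\vT,\hat{\Omega}_{1:\numCls},\epsilon_\vT)$ and hence an $\Abar^2\epsilon_\vT$ term in $\norm{\vAcal-\vAcalbar}$ that the hypotheses of this proposition do not control. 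Under \ref{setup_modeSynchrony} the sequence $\omegahat_t$ is a deterministic function of $\omega_t$, so the correct augmented matrix for $\vy_t^{(2)}$ uses the state matrices $\vAbar_j=\vAhat_k$ ($j\in\Omega_k$) together with the original chain $\vT$, for which the perturbation is again $\le 3\Abar\norm{\vT}\epsilon_\vA$ and the hypothesis suffices. With those two adjustments the argument closes.
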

\begin{proof}
	First, we construct two autonomous switched systems: 
	\begin{equation}
		\check{\Pi}:=
		\left\{\begin{matrix}
			\check{\vx}_{t+1} = \check{\vA}_{\check{\omega}_t}\check{\vx}_t \\
			\check{\omega}_t = \omega_t, \
		\end{matrix}\right.
		, \quad
		\bar{\Pi}:=
		\left\{\begin{matrix}
			\bar{\vx}_{t+1} = \bar{\vA}_{\bar{\omega}_t}\bar{\vx}_t \\
			\bar{\omega}_t = \omega_t, \
		\end{matrix}\right.	
	\end{equation}
	where 
% 	$\omega_t$ is the mode index for the original MJS $\Sigma$, and 
	for $i {\in} [\numSys]$ (suppose $i {\in} \Omega_k$),
	$
		\check{\vA}_i := \begin{bmatrix} \vA_i	& \\ & \vA_i \end{bmatrix}, 
		\bar{\vA}_i := \begin{bmatrix} \vA_i	& \\ & \vAhat_k \end{bmatrix}.
	$
	Since $\omega_t$ of $\Sigma$ follows Markov chain $\vT$, systems $\check{\Pi}$ and $\bar{\Pi}$ can be viewed as MJS$(\check{\vA}_{1:\numSys}, 0, \check{\vT})$ and MJS$(\bar{\vA}_{1:\numSys}, 0, \bar{\vT})$ respectively with $\check{\vT} = \bar{\vT} = \vT$.
	We then define observations for $\check{\Pi}$ and $\bar{\Pi}$: $\check{\vy}_t = \check{\vC} \check{\vx}_t$ and $\bar{\vy}_t = \bar{\vC} \bar{\vx}_t$ where $\check{\vC} = \bar{\vC} = [\vI_n, - \vI_n]$. We set their initial states as $\check{\vx}_0 = [\vx_0^\T, \vx_0^\T]^\T$, $\bar{\vx}_0 = [\vx_0^\T, \vxhat_0^\T]^\T$ where $\vx_0$ and $\vxhat_0$ are the initial states of $\Sigma$ and $\hat{\Sigma}$ respectively. 
	
	By construction, we have, for all t, $\check{\vx}_t {=} [\vx_t^\T, \vx_t^\T]^\T$ and $\bar{\vx}_t {=} [\vx_t^\T, \vxhat_t^\T]^\T$, thus $\check{\vy}_t {=} 0$ and $\bar{\vy}_t {=} \vx_t {-} \vxhat_t$. Define $\check{\vSigma}_t := \expctn[\check{\vx}_t \check{\vx}_t^\T]$ and $\bar{\vSigma}_t := \expctn[\bar{\vx}_t \bar{\vx}_t^\T]$, then we have
	$\expctn[\norm{\vx_t - \vxhat_t}^2] 
	{=}
	$
	$
	\expctn[\bar{\vy}_t\bar{\vy}_t^\T] 
	{=} \expctn[\bar{\vy}_t\bar{\vy}_t^\T] {-} \expctn[ \check{\vy}_t\check{\vy}_t^\T] 
	{=} \tr(\bar{\vC}^\T \bar{\vC} \bar{\vSigma}_t) {-} \tr(\check{\vC}^\T \check{\vC} \check{\vSigma}_t)
	{=} $	
	$\tr(\bar{\vC}^\T \bar{\vC} (\bar{\vSigma}_t - \check{\vSigma}_t)).
	$
	Since $\bar{\vC}^\T \bar{\vC} \succeq 0$, we further have
	\begin{equation}\label{eq_MJSReduction_22}
		\hspace*{-0.5em}
		\expctn[\norm{\vx_t {-} \vxhat_t}^2] \leq \tr(\bar{\vC}^\T \bar{\vC}) \norm{\bar{\vSigma}_t {-} \check{\vSigma}_t} = 2 \dimSt \norm{\bar{\vSigma}_t {-} \check{\vSigma}_t}.
	\end{equation}
	Let $\check{\vSigma}_{t}^{(i)} := \expctn[\check{\vx}_t \check{\vx}_t^\T \indicator{\check{\omega}_t=i}]$, $\bar{\vSigma}_{t}^{(i)} := \expctn[\bar{\vx}_t \bar{\vx}_t^\T \indicator{\bar{\omega}_t=i}]$,  $\check{\vs}_t := [\vek(\check{\vSigma}_{t}^{(1)})^\T, \dots, \vek(\check{\vSigma}_{t}^{(\numSys)})^\T]^\T$ and $\bar{\vs}_t := [\vek(\bar{\vSigma}_{t}^{(1)})^\T, \dots $ $, \vek(\bar{\vSigma}_{t}^{(\numSys)})^\T]^\T$.
	Note that $\vek(\check{\vSigma}_t) = [\vI_{4\dimSt^2}, \dots, \vI_{4\dimSt^2}] \check{\vs}_t $ and $ \vek(\bar{\vSigma}_t) = [\vI_{4\dimSt^2}, \dots, \vI_{4\dimSt^2}] \bar{\vs}_t$, thus we have	
	$
	\norm{\check{\vSigma}_t - \bar{\vSigma}_t}
	\leq \norm{\check{\vSigma}_t - \bar{\vSigma}_t}_\fro
	= \norm{\vek(\check{\vSigma}_t - \bar{\vSigma}_t)}
	\leq \sqrt{\numSys} \norm{\check{\vs}_t - \bar{\vs}_t}.
	$		
	Plugging this into \eqref{eq_MJSReduction_22}, we have
	\begin{equation}
		\expctn[\norm{\vx_t - \vxhat_t}^2] \leq 2 \dimSt \sqrt{\numSys} \norm{\check{\vs}_t - \bar{\vs}_t}.
	\end{equation}
	By Lemma \ref{lemma_MJSReduction_MJSCovDynamics}, we have $\check{\vs}_t = \vAcalcheck^t \check{\vs}_0$ and $\bar{\vs}_t = \vAcalbar^t \bar{\vs}_0$, where $\vAcalcheck {\in} \dm{4 \numSys \dimSt^2}{4 \numSys \dimSt^2}$ is constructed such that its $ij$-th $4\dimSt^2 \times 4\dimSt^2$ block given by $[\vAcalcheck]_{ij} = \check{\vT}(j,i) \check{\vA}_j \otimes \check{\vA}_j$, and $\vAcalbar$ is constructed similarly. By triangle inequality, we further have
	\begin{equation} \label{eq_MJSReduction_35}
		\hspace*{-0.6em}
		\expctn[\norm{\vx_t {-} \vxhat_t}^2] {\leq} 2 \dimSt \sqrt{\numSys} (
		\norm{\vAcalcheck^t {-} \vAcalbar^t} \norm{\check{\vs}_0} {+} \norm{\vAcalbar^t} \norm{\check{\vs}_0 {-} \bar{\vs}_0}
		)
	\end{equation}
	To bound $\expctn[\norm{\vx_t {-} \vxhat_t}^2]$, we seek to bound the terms on the RHS individually.	
	Since $\check{\vs}_0 = [\vek(\check{\vx}_0 \check{\vx}_0^\T)^\T \cdot \prob(\omega_t=1), \dots, \vek(\check{\vx}_0 \check{\vx}_0^\T)^\T \cdot \prob(\omega_t=\numSys)]^\T$, we have $\norm{\check{\vs}_0} = \norm{\check{\vx}_0 \check{\vx}_0^\T}_\fro \cdot (\sum_{i \in [\numSys]} \prob(\omega_t=i)^2)^\frac{1}{2} \leq \norm{\check{\vx}_0 \check{\vx}_0^\T}_\fro = 2 \norm{\vx_0}^2$. Similarly, we have $ \norm{\check{\vs}_0 - \bar{\vs}_0} \leq \norm{\bar{\vx}_0 \bar{\vx}_0^\T - \check{\vx}_0 \check{\vx}_0^\T }_\fro \leq \norm{\bar{\vx}_0 (\bar{\vx}_0 - \check{\vx}_0)^\T}_\fro + \norm{ (\bar{\vx}_0 - \check{\vx}_0) \check{\vx}_0^\T }_\fro \leq \sqrt{2}(\sqrt{3} \norm{\vx_0}+\epsilon_0) \epsilon_0 $. 
	
	To bound $\norm{\vAcalbar^t}$ and $\norm{\vAcalcheck^t {-} \vAcalbar^t}$, we first evaluate $\norm{\vAcalcheck - \vAcalbar}$. Define $\vDelta_i := \check{\vA}_i \otimes \check{\vA}_i - \bar{\vA}_i \otimes \bar{\vA}_i$ for all $i$, and block diagonal matrix $\vDelta \in \dm{\numSys \dimSt^2}{\numSys \dimSt^2}$ such that the $i$th $\dimSt^2 {\times} \dimSt^2$ block is given by $\vDelta_i$. Then one can verify that $\vAcalcheck - \vAcalbar = (\vT \otimes \vI_{\dimSt^2}) \vDelta$, which gives $\norm{\vAcalcheck - \vAcalbar} \leq \norm{\vT} \max_i \norm{\vDelta_i}$. For $\norm{\vDelta_i}$, we have $\norm{\vDelta_i} \leq \norm{\check{\vA}_i} \norm{\check{\vA}_i - \bar{\vA}_i} +  \norm{\check{\vA}_i - \bar{\vA}_i} \norm{\bar{\vA}_i}$. It is easy to see $\norm{\check{\vA}_i} \leq \Abar$, $\norm{\check{\vA}_i - \bar{\vA}_i} \leq \epsilon_\vA$, and $\norm{\bar{\vA}_i} \leq$ $ \Abar + \epsilon_\vA \leq 2 \Abar$. These give $\norm{\vAcalcheck - \vAcalbar} \leq 3 \Abar \norm{\vT} \epsilon_\vA$.
	From Lemma \ref{lemma_MJSReduction_concatenateTwoMJS}, we know for all $k \in \mathbb{N}$, $\norm{\vAcalcheck^k} \leq \tau \rho^k$.
	Then, according to Corollary \ref{corollary_MJSReduction_singleMatrixPerturbation}, we have $\norm{\vAcalbar^t} \leq \tau (3 \tau \Abar \norm{\vT} \epsilon_\vA + \rho)^t \leq \tau \rho_0^t$ and 
	$\norm{\vAcalcheck^t - \vAcalbar^t} \leq 3 t (3 \tau \Abar \norm{\vT} \epsilon_\vA + \rho)^{t-1} \tau^2  \Abar \norm{\vT} \epsilon_\vA \leq 3t \rho_0^{t-1} \tau^2  \Abar \norm{\vT}  \epsilon_\vA$, where the premise $\epsilon_\vA {\leq} \frac{1-\rho}{6 \tau \Abar \norm{\vT} }$ and notation $\rho_0 {:=} \frac{1+\rho}{2}$ are used. 
	
	Finally, plugging in the bounds we just derived for each term on the RHS of \eqref{eq_MJSReduction_35} back, we have $\expctn[\norm{\vx_t - \vxhat_t}] \leq \sqrt{\expctn[\norm{\vx_t - \vxhat_t}^2]} \leq 4 \sqrt{\dimSt \sqrt{\numSys}} \tau \rho_0^{\frac{t-1}{2}} \big( \norm{\vx_0}\sqrt{t \Abar \norm{\vT} \epsilon_\vA} + \sqrt{(\norm{\vx_0} + \epsilon_0) \epsilon_0} \, \big)$, which concludes the proof.
\end{proof}

\begin{proof}[Main Proof for Theorem \ref{thrm_MJSReduction_stateDistSyncMSS}]
	We first decompose $\vx_t$ in terms of the contribution from $\vx_0$, $\vu_{0:t-1}$: define $\vx_t^{(0')} {:=} \big(\prod_{h=0}^{t-1} \vA_{\omega_h} \big) \vx_0$, for $l{=}0,\dots,t-2$, 	
	define $\vx_t^{(l)} {:=} \big( \prod_{h=l+1}^{t-1} \vA_{\omegahat_h} \big) \vB_{\omega_{l}} \vu_l$, and $\vx_t^{(t-1)} {:=}\vB_{\omega_{t-1}} $
	$\vu_{t-1} $. Then it is easy to see $\vx_t {=} \vx_t^{(0')} {+} \sum_{l=0}^{t-1} \vx_t^{(l)}$. Similarly, we define $\vxhat_t^{(0')}$ and $\vxhat_t^{(l)}$ for $\vxhat_t$ such that $\vxhat_t = \vxhat_t^{(0')} + \sum_{l=0}^{t-1} \vxhat_t^{(l)}$. According to Proposition \ref{prop_MJSReduction_stateDistSyncMSS_autonomous}, we have
	\begin{equation}\label{eq_MJSReduction_36}
    		\expctn[\norm{\vx_t^{(0')} {-} \vxhat_t^{(0')}}] {\leq} 4 {\textstyle\sqrt{\dimSt\sqrt{\numSys}}} \tau \rho_0^{\frac{t-1}{2}} {\textstyle\sqrt{t \Abar \norm{\vT} \epsilon_\vA}} \norm{\vx_0} . 
	\end{equation}
	Note that $\vx_t^{(l)}$ and $\vxhat_t^{(l)}$ can be viewed as the states at time $t-l$ with respective initial states $\vB_{\omega_l} \vu_l$ and $\vBhat_{\omega_l} \vu_l$ and zero inputs. Therefore, applying Proposition \ref{prop_MJSReduction_stateDistSyncMSS_autonomous} again, we have
	\begin{multline}\label{eq_MJSReduction_37}
		\expctn[\norm{\vx_t^{(l)} - \vxhat_t^{(l)}}] \leq 4 {\textstyle\sqrt{\dimSt \sqrt{\numSys}}} \tau \rho_0^{\frac{t-l-1}{2}} {\textstyle\sqrt{\Bbar}} \bar{u} \\\cdot \big({\textstyle\sqrt{(t-l-1) \Abar \norm{\vT} \epsilon_\vA}} + \sqrt{2 \epsilon_\vB}\big),
	\end{multline}
	where the premise $\epsilon_\vB {\leq} \Bbar$ is applied. 	
	With \eqref{eq_MJSReduction_36} and \eqref{eq_MJSReduction_37}, we have	
	$\expctn[\norm{\vx_t - \vxhat_t}] \leq \expctn[\norm{\vx_t^{(0')} - \vxhat_t^{(0')}}] + \sum_{l=0}^{t-1}$	
	$  \expctn[\norm{\vx_t^{(l)} {-} \vxhat_t^{(l)}}] {\leq}
	4 \sqrt{\dimSt \sqrt{\numSys}} \tau \rho_0^{\frac{t-1}{2}} \sqrt{t \Abar \norm{\vT} \epsilon_\vA} \norm{\vx_0} + 4 \sqrt{\dimSt \sqrt{\numSys} \Bbar}\tau  \bar{u} (\frac{\sqrt{\rho_0}}{(1-\sqrt{\rho_0})^2} \sqrt{\Abar \norm{\vT} \epsilon_\vA} + \frac{\sqrt{2}}{1-\sqrt{\rho_0}} \sqrt{\epsilon_\vB})$, which concludes the proof.
\end{proof}

\section{Approximation with Unif. Stability --- Proof for Theorem \ref{thrm_MJSReduction_bisimUS}}
\begin{proof}[Proof for Theorem \ref{thrm_MJSReduction_bisimUS} \ref{enum_bisimUS_stateDistSync}]
$\vx_t$ and $\vxhat_t$ can be decomposed as:
$
\vx_t 
	= \big(\prod_{h=0}^{t-1} \vA_{\omega_h} \big)\vx_0 + 
$
$
	 \sum_{t'=0}^{t-2} \big( \prod_{h=t'+1}^{t-1} \vA_{\omega_h} \big) \vB_{\omega_{t'}} \vu_{t'} {+} \vB_{\omega_{t-1}} \vu_{t-1}, 
$
$
\vxhat_t
{=} \big(\prod_{h=0}^{t-1} 
$
$
 \vAhat_{\omegahat_h} \big)\vxhat_0  + \sum_{t'=0}^{t-2} \big( \prod_{h=t'+1}^{t-1} \vAhat_{\omegahat_h} \big) \vBhat_{\omegahat_{t'}} \vuhat_{t'} + \vBhat_{\omegahat_{t-1}} \vuhat_{t-1}.
$
% \begin{equation}\label{eq_MJSReduction_2}
% 	\begin{split}
% 	\vx_t 
% 	&{=} \big(\prod_{h=0}^{t-1} \vA_{\omega_h} \big) \vx_0 {+} \sum_{t'=0}^{t-2} \big( \prod_{h=t'+1}^{t-1} \vA_{\omega_h} \big) \vB_{\omega_{t'}} \vu_{t'} {+} \vB_{\omega_{t-1}} \vu_{t-1}, \\
% 	\vxhat_t
% 	&{=} \big(\prod_{h=0}^{t-1} \vAhat_{\omegahat_h} \big) \vxhat_0 {+} \sum_{t'=0}^{t-2} \big( \prod_{h=t'+1}^{t-1} \vAhat_{\omegahat_h} \big) \vBhat_{\omegahat_{t'}} \vuhat_{t'} {+} \vBhat_{\omegahat_{t-1}} \vuhat_{t-1}.
% 	\end{split}
% \end{equation}
Since in Algorithm \ref{Alg_MJSReduction} we let $\vAhat_k {=} |\hat{\Omega}_k|^\inv \sum_{i \in \hat{\Omega}_k} \vA_i$, and the premise gives $\hat{\Omega}_{1:\numCls} {=} \Omega_{1:\numCls}$, we have
$\norm{\vAhat_k - \vA_i} {\leq}$
$ \epsilon_\vA$ for all $i \in \Omega_k$. Based on the mode synchrony Setup \ref{setup_modeSynchrony}, i.e., $\omega_t \in \Omega_{\omegahat_t}$, we further have $\norm{\vAhat_{\omegahat_t} {-} \vA_{\omega_t}} \leq \epsilon_\vA$. Similarly, we obtain $\norm{\vBhat_{\omegahat_t} {-} \vB_{\omega_t}} \leq \epsilon_\vB$. Then, by Lemma \ref{lemma_MJSReduction_JSRPerturb}
: (i) $\norm{\prod_{h=t'+1}^{t-1} \vAhat_{\omegahat_h}} {\leq} \kappa(\kappa \epsilon_\vA + \xi)^{t-t'-1}$
and (ii) $
\|\prod_{h=t'+1}^{t-1} \vA_{\omega_h} $
${-} \prod_{h=t'+1}^{t-1} \vAhat_{\omegahat_h}\| {\leq} \kappa^2 (t{-}t'{-}1) (\kappa \epsilon_\vA + \xi)^{t-t'-2} \epsilon_\vA.
$
% \begin{align}
% 	\norm{\prod_{h=t'+1}^{t-1} \vAhat_{\omegahat_h}}
% 	&{\leq}
% 	\kappa(\kappa \epsilon_\vA + \xi)^{t-t'-1}, \label{eq_MJSReduction_15}\\
% 	\norm{\prod_{h=t'+1}^{t-1} \vA_{\omega_h} {-} \prod_{h=t'+1}^{t-1} \vAhat_{\omegahat_h}}
% 	&{\leq}
% 	\kappa^2 (t{-}t'{-}1) (\kappa \epsilon_\vA + \xi)^{t-t'-2} \epsilon_\vA. \label{eq_MJSReduction_16}
% \end{align}

With (i) and (ii), and the fact that 
$\prod_{h=t'+1}^{t-1} \vA_{\omega_h} \vB_{\omega_{t'}} - $
$\prod_{h=t'+1}^{t-1} \vAhat_{\omegahat_h} \vBhat_{\omegahat_{t'}} {=}$ $ \big( \prod_{h=t'+1}^{t-1} \vA_{\omega_h} {-} \prod_{h=t'+1}^{t-1} \vAhat_{\omegahat_h} \big) \vBhat_{\omegahat_{t'}} $
$ - \big( \prod_{h=t'+1}^{t-1} \vA_{\omega_h}\big)$ $ (\vBhat_{\omegahat_{t'}} - \vB_{\omega_{t'}})$, we have
\begin{multline}\label{eq_MJSReduction_17}
	{\textstyle\norm{\prod_{h=t'+1}^{t-1} \vA_{\omega_h} \vB_{\omega_{t'}} {-} \prod_{h=t'+1}^{t-1} \vAhat_{\omegahat_h} \vBhat_{\omegahat_{t'}}}}
	\leq \kappa \xi^{t-t'-1} \epsilon_\vB \\
	+ \kappa^2 (t-t'-1) (\kappa \epsilon_\vA + \xi)^{t-t'-2} (\Bbar + \epsilon_\vB) \epsilon_\vA.
\end{multline}
According to Setup \ref{setup_InitializationExcitation}, $\Sigma$ and $\hat{\Sigma}$ have the same initial states and inputs. Then, applying triangle inequality to the difference $\norm{\vx_t - \vxhat_t}$, we have
\begin{multline}
    \norm{\vx_t - \vxhat_t} \leq \kappa^2 t (\kappa \epsilon_\vA + \xi)^{t-1} \norm{\vx_0} \epsilon_\vA \\
    + \kappa^2 \frac{1 + t (\kappa \epsilon_\vA + \xi)^t}{1 - \kappa \epsilon_\vA - \xi} (\Bbar + \epsilon_\vB) \bar{u} \epsilon_\vA + \frac{\kappa}{1 - \xi} \bar{u} \epsilon_\vB,
\end{multline}
% \begin{equation}
% 		\begin{split}
%  			&\norm{\vx_t - \vxhat_t} \\
%  			\leq&
%  			\kappa^2 t (\kappa \epsilon_\vA + \xi)^{t-1} \norm{\vx_0} \epsilon_\vA \\
%  			& + \sum_{t'=0}^{t-2} \bigg( \kappa^2 (t-t'-1) (\kappa \epsilon_\vA + \xi)^{t-t'-2} (\Bbar + \epsilon_\vB) \epsilon_\vA \\
%  			& + \kappa \xi^{t-t'-1} \epsilon_\vB \bigg) \bar{u} 
%  			+ \epsilon_\vB \bar{u}. \\
%  			\leq&
%  			\kappa^2 t (\kappa \epsilon_\vA + \xi)^{t-1} \norm{\vx_0} \epsilon_\vA \\
%  			& + \kappa^2 \frac{1 + t (\kappa \epsilon_\vA + \xi)^t}{1 - \kappa \epsilon_\vA - \xi} (\Bbar + \epsilon_\vB) \bar{u} \epsilon_\vA
%  			+ \frac{\kappa}{1 - \xi} \bar{u} \epsilon_\vB.
%  		\end{split}
% 	\end{equation}
where the following facts are implicitly used: (i) $\kappa \geq 1$ by definition; (ii) $\kappa \epsilon_\vA + \xi < 1$ according to the premise. Finally, note that we assume perturbation $\epsilon_\vA \leq \frac{1-\xi}{2 \kappa}$ and $\epsilon_\vB \leq \Bbar$, we have
$
		\norm{\vx_t - \vxhat_t}
		\leq
		t \xi_0^{t-1} \kappa^2 \norm{\vx_0} \epsilon_\vA
		+ \frac{2(1 + t \xi_0^t) \kappa^2 \Bbar \bar{u}}{1 - \xi_0}  \epsilon_\vA
		+ \frac{\kappa \bar{u}}{1 - \xi} \epsilon_\vB,
$
which concludes the proof.
\end{proof}

\subsection{Proof for Theorem \ref{thrm_MJSReduction_bisimUS} \ref{enum_bisimUS_WasDisAutonomousSys}}\label{appendix_MJSReduction_bisimUS_WasDisZeroEpsT}
To ease the proof exposition, we first define a few notations and concepts.
For the original system $\Sigma$, fixing the initial state $\vx_0$ and input sequence $\vu_{0:t-1}$, there can be at most $\numSys^t$ possible $\vx_t$, each of which correspond to one possible mode switching sequence $\omega_{0:t-1} \in [\numSys]^t$. We use $g \in [\numSys^t]$ to index these states and mode sequences, i.e., mode sequence $\omega_{0:t-1}^{(g)}$ generates state $\vx_t^{(g)}$. 
Then, the reachable set $\Xcal_t$ defined in Section \ref{sec_MJSReduction_approxMetrics} satisfies $\Xcal_t = \bigcup_{g \in [\numSys^t]} \curlybrackets{\vx_t^{(g)}}$. Define probability measure $q_t(g):= \prob(\omega_{0:t-1} = \omega_{0:t-1}^{(g)})$, then we see $p_t(\vx)$ defined in Section \ref{sec_MJSReduction_approxMetrics} satisfies $p_t(\vx) = \sum_{g: \vx_t^{(g)} = \vx} q_t(g)$. 
% Note that if for some $g$ and any $g' \neq g$ such that $\vx_t^{(g')} \neq \vx_t^{(g)}$, then $p_t(\vx_t^{(g)}) = q_t(g)$. 
For the reduced $\hat{\Sigma}$ and for all $\ghat \in [\numCls^t]$, we similarly define notations $\omegahat_{0:t-1}^{(\ghat)}$ for the mode sequence,  $\vxhat_t^{(\ghat)}$ for the state, and $\qhat_t(\ghat):= \prob(\omegahat_{0:t-1} = \omegahat_{0:t-1}^{(\ghat)})$ for the measure. Then, the following holds: $\hat{\Xcal}_t = \bigcup_{\ghat \in [\numCls^t]} \curlybrackets{\vxhat_t^{(\ghat)}}$ and $\phat_t(\vxhat) = \sum_{\ghat: \vxhat_t^{(\ghat)}=\vxhat} \qhat(\ghat)$.
Next, we introduce the following relation regarding mode sequences between $\Sigma$ and $\hat{\Sigma}$.
\begin{definition}[Mode Sequence Synchrony]
 	For any $g \in [\numSys^t], \ghat \in [\numCls^t]$, we say $\omega_{0:t-1}^{(g)}$ is synchronous to $\omegahat_{0:t-1}^{(\ghat)}$ (denoted by $g \triangleright \ghat$) if $\omega_h \in \Omega_{h}$ for all $h = 0, 1, \dots, t-1$. 
\end{definition}
Note that the synchrony definition here coincides with the mode synchrony in Setup \ref{setup_modeSynchrony}. With this synchrony relation, we first present a preliminary result.
\begin{lemma}\label{eq_MJSReduction_PtwiseDist_3}
    For any $\ghat \in [\numCls]^t$, we have
    $
    	\big| \qhat_t(\ghat)  - \sum_{g: g \triangleright \ghat} q_t(g)  \big| 
 		\leq 
 		(t-1) (\bar{\Tcal} + \epsilon_\vT)^{t-2}  \epsilon_\vT.
    $
\end{lemma}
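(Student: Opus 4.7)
The plan is to compare $\qhat_t(\ghat)$ to $\sum_{g \triangleright \ghat} q_t(g)$ by a step-by-step telescoping across the time axis $h = 0, \dots, t-1$, controlling the per-step discrepancy via approximate lumpability.

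First, I introduce notation. Write $a_h := \omegahat_h^{(\ghat)}$ and, for the original chain, the partial marginals
\begin{equation*}
\beta_h(i) := \prob\bigl(\omega_h = i,\ \omega_{h'} \in \Omega_{a_{h'}}\ \forall h' < h\bigr), \quad i \in \Omega_{a_h},
\end{equation*}
together with cluster totals $B_h := \sum_{i \in \Omega_{a_h}} \beta_h(i)$. Setup \ref{setup_InitializationExcitation} gives $B_0 = \prob(\omegahat_0 = a_0)$, and the definition of synchrony gives $B_{t-1} = \sum_{g \triangleright \ghat} q_t(g)$. On the reduced side, the Markov property of $\hat{\Sigma}$ yields $\qhat_t(\ghat) = B_0 \prod_{h=0}^{t-2} \vThat(a_h, a_{h+1})$.

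The key one-step estimate comes from summing the Markov update $\beta_{h+1}(j) = \sum_{i \in \Omega_{a_h}} \beta_h(i)\, \vT(i,j)$ over $j \in \Omega_{a_{h+1}}$, which gives $B_{h+1} = \sum_{i \in \Omega_{a_h}} \beta_h(i)\,\alpha(i,h)$ with $\alpha(i,h) := \sum_{j \in \Omega_{a_{h+1}}} \vT(i,j)$. Since $\vThat(a_h, a_{h+1})$ is exactly the uniform average of $\alpha(i,h)$ over $i \in \Omega_{a_h}$, and \eqref{eq_MJSReduction_approxLump} bounds each pairwise discrepancy $|\alpha(i,h) - \alpha(i',h)|$ on $\Omega_{a_h}$ by $\epsilon_\vT$, it follows that $|\alpha(i,h) - \vThat(a_h, a_{h+1})| \leq \epsilon_\vT$ for every $i \in \Omega_{a_h}$, and hence $|B_{h+1} - \vThat(a_h, a_{h+1})\, B_h| \leq \epsilon_\vT B_h$.

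To conclude, I would telescope
\begin{equation*}
B_{t-1} - B_0 \prod_{h=0}^{t-2} \vThat(a_h, a_{h+1}) = \sum_{h=0}^{t-2} \Bigl(\prod_{l=h+1}^{t-2} \vThat(a_l, a_{l+1})\Bigr)\bigl(B_{h+1} - \vThat(a_h, a_{h+1})\, B_h\bigr),
\end{equation*}
combine with the iterated one-step estimate $B_h \leq \prod_{l=0}^{h-1}\bigl(\vThat(a_l, a_{l+1}) + \epsilon_\vT\bigr)$ (using $B_0 \leq 1$), and bound each of the $t-1$ summands by a product of $t-2$ factors each at most $\bar{\Tcal} + \epsilon_\vT$, times $\epsilon_\vT$. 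The main obstacle is justifying the pointwise bound $\vThat(a_l, a_{l+1}) \leq \bar{\Tcal}$: since entries of $\vThat$ are block row-averages of $\vT$, this does not hold unconditionally and must be established using the specific structure at hand. The remainder of the argument reduces to straightforward but careful index accounting to keep the endpoints $h = 0$ and $h = t-2$ aligned inside the telescoped sum.
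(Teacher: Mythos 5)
Your proof follows essentially the same route as the paper's: both express $\sum_{g\triangleright\ghat}q_t(g)$ and $\qhat_t(\ghat)$ as products of per-step cluster-level transition probabilities, bound each per-step discrepancy by $\epsilon_\vT$ via the observation that a convex combination of the quantities $\sum_{j\in\Omega_{a_{h+1}}}\vT(i,j)$ over $i$ in a cluster differs from their uniform average by at most the maximal pairwise gap, and then propagate the error through the product (you telescope explicitly where the paper invokes its Corollary~\ref{corollary_MJSReduction_scalarMultiplicationPerturbation} on perturbed scalar products). The obstacle you flag --- that $\vThat(a_l,a_{l+1})\le\bar{\Tcal}$ need not hold, since $\bar{\Tcal}=\max_{i,j}\vT(i,j)$ is an entrywise maximum while $\vThat(k,l)$ sums over all $j\in\Omega_l$ --- is genuine, but it is equally unaddressed in the paper's own proof: Corollary~\ref{corollary_MJSReduction_scalarMultiplicationPerturbation} is applied there with $\bar{a}=\bar{\Tcal}$ even though the multipliers $\Ttil_h$ and $\vThat(\sigma_{h-1},\sigma_h)$ are cluster-level probabilities that can exceed $\bar{\Tcal}$ whenever $|\Omega_{\sigma_h}|>1$. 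Both arguments go through verbatim if one replaces $\bar{\Tcal}$ by $\max_{i,l}\sum_{j\in\Omega_l}\vT(i,j)$ (or simply bounds these factors by $1$), at the cost of a correspondingly weaker constant in the lemma and in Theorem~\ref{thrm_MJSReduction_bisimUS}~\ref{enum_bisimUS_WasDisAutonomousSys}.
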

\begin{proof}
    Recall $\zeta_t$ indexes the active cluster of $\Sigma$ at time $t$, i.e., $\zeta_t{=}k$ if and only if $\omega_t {\in} \Omega_k$.
    First observe that $\sum_{g: g \triangleright \ghat'} q_t(g) = $
    $\sum_{g: g \triangleright \ghat} \prob(\omega_{0:t-1} {=} \omega_{0:t-1}^{(g)}) {=} \prob(\omega_{t-1} {\in} \Omega_{\omegahat_{t-1}^{(\ghat)}}{,} \dots{,} \omega_{0} {\in} \Omega_{\omegahat_{0}^{(\ghat)}}) {=}$     
    $\prob(\zeta_{0:t-1} {=} \omegahat_{0:t-1}^{(\ghat)})$. Also note that $\qhat_t(\ghat) {=} \prob(\omegahat_{0:t-1} {=} \omegahat_{0:t-1}^{(\ghat)})$.    
    So, to show the claim, it suffices to show for any $\sigma_{0:t} {\in} [\numCls]^t$,
 	\begin{equation}\label{eq_MJSReduction_1}
	 		|\prob( \omegahat_{0:t} = \sigma_{0:t} ) - \prob(\zeta_{0:t} = \sigma_{0:t})|
	 		\leq t (\bar{\Tcal} + \epsilon_\vT)^{t-1} \epsilon_\vT.
	 	\end{equation}
	For the LHS of \eqref{eq_MJSReduction_1}, we have
	\begin{align}
	    \prob( \omegahat_{0:t} &= \sigma_{0:t} ) = \prob(\omegahat_0 = \sigma_0) \cdot {\textstyle \prod_{h=1}^{t}} \vThat(\sigma_{h-1}, \sigma_h) \label{eq_MJSReduction_4} \\
	    \prob(\zeta_{0:t} &= \sigma_{0:t})= \prob(\omega_0 \in \Omega_{\sigma_0}) \cdot {\textstyle \prod_{h=1}^{t}} \Ttil_h \label{eq_MJSReduction_5}
	\end{align}
 	where $\Ttil_h := \prob(\omega_h \in \Omega_{\sigma_h} \mid \omega_{h-1} \in \Omega_{\sigma_{h-1}}, \dots, \omega_0 \in \Omega_{\sigma_0})$.
    Note that $\zeta_{0:t}$ may not be a Markov process when $\epsilon_\vT \neq 0$, so we cannot drop the past conditional events in \eqref{eq_MJSReduction_5}. 
    
    Let $\alpha_i := \prob(\omega_{h-1}{=}i \mid \omega_{h-2} {\in} \Omega_{\sigma_{h-2}}, \dots, \omega_0 {\in} \Omega_{\sigma_0})$, then    
    $
    \Ttil_h {=} \sum_{i {\in} \Omega_{\sigma_{h-1}}} \big[ \prob(\omega_h {\in} \Omega_{\sigma_h} \mid \omega_{h-1}{=}i) \cdot \prob(\omega_{h-1}{=}i \mid \omega_{h-2} {\in} $        
    $\Omega_{\sigma_{h-2}}, \dots, \omega_0 {\in} \Omega_{\sigma_0}) \big]
    = \sum_{i \in \Omega_{\sigma_{h-1}}} \big[\big(\sum_{j \in \Omega_{\sigma_h}} \vT(i,j)\big) \alpha_i \big]$.

    Let $\beta_i:= |\Omega_{\sigma_{h-1}}|^\inv$.
    For $\vThat(\sigma_{h-1}, \sigma_h)$, by definition in Algorithm \ref{Alg_MJSReduction} and the assumption $\hat{\Omega}_{1:\numCls} = \Omega_{1:\numCls}$, we know
    $
    \vThat(\sigma_{h-1}, \sigma_h) 
    =
    |\Omega_{\sigma_{h-1}}|^\inv $    
    $\cdot \sum_{i \in \Omega_{\sigma_{h-1}}} ( \sum_{j \in \Omega_{\sigma_h}} \vT(i,j) )
    =$
    $
    \sum_{i \in \Omega_{\sigma_{h-1}}} \big[  \big( \sum_{j \in \Omega_{\sigma_h}} \vT(i,j) \big) \beta_i \big].
    $
    
    Then, it follows that the difference
    $
    |\Ttil_h  - \vThat(\sigma_{h-1}, \sigma_h)|
    $    
    $= \big| \sum_{i, i' \in \Omega_{\sigma_{h-1}}} \big[ \big( \sum_{j \in \Omega_{\sigma_h}} \vT(i,j) \big) \alpha_i \beta_{i'} \big]
    - \sum_{i, i' \in \Omega_{\sigma_{h-1}}}
    $    
    $
    \big[ \hspace{-0.1em} \big( \hspace{-0.1em} \sum_{j \in \Omega_{\sigma_h}} \hspace{-0.3em} \vT(i',j) \big) \alpha_i \beta_{i'} \hspace{-0em}  \big] \hspace{-0em} \big|
    \leq
    \sum_{i, i' \in \Omega_{\sigma_{h-1}}} \hspace{-0.2em} \big[ \big| \sum_{j \in \Omega_{\sigma_h}}  \vT(i, j) $    
    $  - \sum_{j \in \Omega_{\sigma_h}} \vT(i', j) \big| \alpha_{i} \beta_{i'} \big] \leq \epsilon_\vT,
    $
    where the first inequality follows from triangle inequality on the absolute values;     
%  	Then, by triangle inequality, we have
%  	\begin{equation}\label{eq_MJSReduction_18}
% 	\begin{split}		
% 		&|\Ttil_h  - \vThat(\sigma_{h-1}, \sigma_h)| \\
% 		=& \bigg| \sum_{i, i' \in \Omega_{\sigma_{h-1}}}  \alpha_i \beta_{i'} \bigg( \sum_{j \in \Omega_{\sigma_h}} \vT(i,j) \bigg) \\ 
% 		& - 
% 		\sum_{i, i' \in \Omega_{\sigma_{h-1}}} \alpha_i \beta_{i'} \bigg( \sum_{j \in \Omega_{\sigma_h}} \vT(i',j) \bigg) \bigg| \\
% 		\leq& \sum_{i, i' \in \Omega_{\sigma_{h-1}}} \alpha_{i} \beta_{i'} \bigg|  \sum_{j \in \Omega_{\sigma_h}} \vT(i, j)  - \sum_{j \in \Omega_{\sigma_h}} \vT(i', j) \bigg| \\
% 		\leq& \epsilon_\vT,
% 	\end{split}
%  	\end{equation}
%  	where 
    the second inequality holds since the definition of perturbation $\epsilon_\vT$ in either Problem \ref{problem_lumpableCase} or \ref{problem_aggregatableCase} gives $| \sum_{j \in \Omega_{\sigma_h}} \vT(i, j)  - \sum_{j \in \Omega_{\sigma_h}} \vT(i', j) | \leq \epsilon_\vT$ for any $i, i' \in \Omega_{\sigma_{h-1}}$.
    
    We have established upper bounds for the differences between each multiplier in \eqref{eq_MJSReduction_4} and \eqref{eq_MJSReduction_5}, by Corollary \ref{corollary_MJSReduction_scalarMultiplicationPerturbation}, we obtain
    $|\prob( \omegahat_{0:t} = \sigma_{0:t} ) - \prob(\zeta_{0:t} = \sigma_{0:t})|
 	\leq
 	t (\bar{\Tcal} + \epsilon_\vT)^{t-1} \epsilon_\vT$
 	which shows \eqref{eq_MJSReduction_1} and concludes the proof.
\end{proof}

\begin{proof}[Main Proof for Theorem \ref{thrm_MJSReduction_bisimUS} \ref{enum_bisimUS_WasDisAutonomousSys}]
To lower bound the Wasserstein distance $W_\ell(p_{t}, \phat_{t})$ defined in the mass transportation problem \eqref{eq_MJSReduction_WassDistDef}, we consider the objective value given by a constrained mass transportation scheme. Recall with measures $q_t$ and $\qhat_t$, we have
$p_t(\vx) = \sum_{g: \vx_t^{(g)} = \vx} q_t(g)$ and
$\phat_t(\vxhat) = \sum_{\ghat: \vxhat_t^{(\ghat)} = \vxhat} \qhat_t(\ghat)$. With these relations, we consider the following transportation scheme in terms of $q_t$ and $\qhat_t$:
for all the mass $q_{t}(g)$ with mode sequence $\omega_t^{(g)}$ synchronous to mode sequence $\omegahat_t^{(\ghat)}$, it is prioritized to be moved to location $\ghat$; if there is surplus, i.e., $\sum_{g: g \triangleright \ghat} q_{t}(g) > \qhat_{t}(\ghat)$, we move the surplus portion $\sum_{g: g \triangleright \ghat} q_{t}(g) - \qhat_{t}(\ghat)$ elsewhere. 

Under this moving scheme, let $\Wbar_\ell(q_{t}, \qhat_{t})$ denote the optimal objective value of the mass transportation problem \eqref{eq_MJSReduction_WassDistDef}. 
Let $\hat{\Gcal}_1 {=} \curlybrackets{\ghat : \ghat {\in} [\numCls^t], \sum_{g: g \triangleright \ghat} q_{t}(g) {\leq} \qhat_{t}(\ghat)}$, $\hat{\Gcal}_2 = [\numCls^t] \backslash \hat{\Gcal}_1$. Then, $\Wbar_\ell(q_{t}, \qhat_{t})$ can be viewed as the optimal objective of the following problem:
\begin{align}
		\min_{f\geq0} \quad & \big( {\textstyle\sum_{g \in [\numSys^t], \ghat \in [\numCls^t]}} \ f(g, \ghat) \norm{\vx_t^{(g)} - \vxhat_t^{(\ghat)}}^{\ell} \big)^{1/\ell} \label{eq_MJSReduction_8} \\
		\text{s.t.} \quad 
		& {\textstyle\sum_{ g \in [\numSys^t] }} \  f(g, \ghat) = \qhat_{t}(\ghat), \forall \ \ghat \nonumber \\
		& {\textstyle\sum_{\ghat \in [\numCls^t]}} \ f(g, \ghat) = q_{t}(g), \forall \ g. \nonumber \\
		& f(g, \ghat) = q_{t}(g), \qquad \quad \; \; \forall \ g \triangleright \ghat, \forall \ \ghat \in \hat{\Gcal}_1 \label{eq_MJSReduction_10} \\
		& {\textstyle\sum_{g \triangleright \ghat}} \ f(g, \ghat) = \qhat_{t}(\ghat), \; \; \; \forall \  \ghat \in \hat{\Gcal}_2 \label{eq_MJSReduction_11}
\end{align}
where constraints \eqref{eq_MJSReduction_10} and \eqref{eq_MJSReduction_11} characterize the moving scheme outlined above. Without them, the problem reduces to \eqref{eq_MJSReduction_WassDistDef}, thus $W_\ell(p_{t}, \phat_{t}) \leq \Wbar_\ell(q_{t}, \qhat_{t})$. To prove the main claim, it suffices to show
\begin{multline}\label{eq_MJSReduction_14}
	\Wbar_\ell (q_{t}, \qhat_{t})
	\leq 
	t \xi_0^{t-1} \kappa^2 \norm{\vx_0} \epsilon_\vA \\
	+
	2 \numCls^2 t \kappa \norm{\vx_0} \numCls^t (\kappa \epsilon_\vA + \xi)^t 
	(\bar{\Tcal} + \epsilon_\vT)^{\frac{t-2}{\ell}} \epsilon_\vT^{\frac{1}{\ell}}.
% 	2\numCls^2 t [\numCls (\bar{\Tcal} + \epsilon_\vT)(\kappa \epsilon_\vA + \xi)]^{t-2} \kappa \norm{\vx_0} \epsilon_\vT.
\end{multline}
For all $\ghat {\in} [\numCls^t]$, define its synchrony set $\Scal(\ghat) {:=} \{g : g {\in} [\numSys^t], $
$g \triangleright \ghat \}$ 
and the asynchrony set $\Scal^c(\ghat) {:=} [\numSys^t] \backslash \Scal(\ghat)$. 
For the synchrony flow, define total flow $F_s {:=} \sum_{\ghat \in [\numCls^t], g \in \Scal(\ghat)} f(g, \ghat)$
and maximum travel distance 
$D_s {:=} \max_{\ghat \in [\numCls^t], g \in \Scal(\ghat)}\|\vx_t^{(g)} $
$- \vxhat_t^{(\ghat)}\|$. 
For the asynchrony flow, similarly define $F_a := $
$\sum_{\ghat \in [\numCls^t], g \in \Scal^c(\ghat)} f(g, \ghat)$ and
$D_a := \max_{\ghat \in [\numCls^t], g \in \Scal^c(\ghat)}$
$ \|\vx_t^{(g)} $
$- \vxhat_t^{(\ghat)} \|$.
Then, we have
$\Wbar_\ell(q_{t}, \qhat_{t}) {\leq} (F_s D_s^\ell + F_a D_a^\ell)^{\frac{1}{\ell}} \leq F_s^{\frac{1}{\ell}} D_s + F_a^{\frac{1}{\ell}} D_a$. We next bound $F_s, D_s, F_a, D_a$ separately.

For the synchrony maximum travel distance $D_s$, since 
$g \triangleright \ghat$, by Theorem \ref{thrm_MJSReduction_bisimUS} \ref{enum_bisimUS_stateDistSync}, we know 
$D_s {\leq} t \xi_0^{t-1} \kappa^2 \norm{\vx_0} \epsilon_\vA$.
For the synchrony total flow $F_s$, we simply bound it with $F_s {\leq} 1$.

Now we consider the asynchrony maximum travel distance $D_a$. First note that for any $g$ and $\ghat$, we have
$
\norm{\vx_t^{(g)} {-} \vxhat_t^{(\ghat)}}
{=}$
$
\|\prod_{h=0}^{t-1} \vA_{\omega_h^{(g)}} \vx_0 {-} \prod_{h=0}^{t-1} \vAhat_{\omegahat_h^{(\ghat)}} \vxhat_0\| \leq$
$
\norm{\prod_{h=0}^{t-1} \vA_{\omega_h^{(g)}} \vx_0} {+}$
$ \norm{\prod_{h=0}^{t-1} \vAhat_{\omegahat_h^{(\ghat)}} \vxhat_0}
\leq
2 \kappa (\kappa \epsilon_\vA {+} \xi)^t \norm{\vx_0}
$, 
where the second inequality follows from Lemma \ref{lemma_MJSReduction_JSRPerturb}.
Then, it follows that $D_a {\leq} 2 \kappa (\kappa \epsilon_\vA + \xi)^t \norm{\vx_0}$.

For the asynchrony total flow $F_a$, define $F_{a,\ghat} {:=} \sum_{g \in \Scal^c(\ghat)}$
$ f(g, \ghat)$, then $F_a = \sum_{\ghat \in [\numCls^t]} F_{a,\ghat}$. By constraints \eqref{eq_MJSReduction_10} and \eqref{eq_MJSReduction_11}, 
$
F_{a,\ghat} = 
\sum_{g \in \Scal^c(\ghat)} f(g, \ghat)
=
\qhat_{t}(\ghat) {-} \sum_{g: g \in \Scal(\ghat)}f(g, \ghat)
$
$=\qhat_{t}(\ghat) {-} \sum_{g: g \triangleright \ghat} f(g, \ghat).
$
Thus,
if $\ghat \in \hat{\Gcal}_2$, $F_{a, \ghat}=0$;
and if $\ghat \in \hat{\Gcal}_1$,
$
F_{a, \ghat}
=
\qhat_{t}(\ghat) - \sum_{g: g \triangleright \ghat} q_{t}(g)> 0$.
For the latter case, according to Lemma \ref{eq_MJSReduction_PtwiseDist_3}, we have $F_{a, \ghat} = |\qhat_{t}(\ghat) - \sum_{g: g \triangleright \ghat} q_{t}(g)| 
\leq 
(t-1) (\bar{\Tcal} + \epsilon_\vT)^{t-2} \epsilon_\vT$, which further implies that $F_a = \sum_{\ghat \in [\numCls^t]} F_{a,\ghat} \leq \numCls^2 t (\numCls(\bar{\Tcal} + \epsilon_\vT))^{t-2} \epsilon_\vT$.

Finally, \eqref{eq_MJSReduction_14} can be shown by plugging the upper bounds for $F_s, D_s, F_a, D_a$ into the relation that 
$\Wbar_\ell(q_{t}, \qhat_{t}) {\leq} F_s^{\frac{1}{\ell}} D_s + F_a^{\frac{1}{\ell}} D_a$, which concludes the proof.	
\end{proof}

\bibliographystyle{IEEEtran}
\bibliography{MJSReduction.bib} 
\end{document}